  \providecommand\BibTeX{{%
    \normalfont B\kern-0.5em{\scshape i\kern-0.25em b}\kern-0.8em\TeX}}}
\setlist{nolistsep}
\definecolor{darkred}{rgb}{0.7,0,0}
\definecolor{darkgreen}{rgb}{0,0.5,0}
\newcommand{\Fig}[1]{Fig.~\ref{fig:#1}\xspace}
\newcommand{\ma}[1] {{\textcolor{red}{MA: #1}}}
\newcommand{\tom}[1] {{\textcolor{red}{tom: #1}}}
\newcommand{\pg}[1] {{\textcolor{black}{{#1}}}}
\newcommand{\todo}[1] {{\textcolor{red}{TODO: {#1}}}}
\newtheorem{theorem}{Theorem}
\newtheorem{lemma}{Lemma}
\newtheorem{definition}{Definition}
\newtheorem{conjecture}{Conjecture}
\newtheorem{proposition}{Proposition}
\newtheorem{corollary}{Corollary}
\newtheorem{sublemma}{Sub Lemma}[lemma]
\newcommand{\cut}[1]{}
\def\compactify{\itemsep=0pt \topsep=0pt \partopsep=0pt \parsep=0pt}
\let\latexusecounter=\usecounter
\renewcommand\footnotetextcopyrightpermission[1]{}
\begin{document}
\settopmatter{printacmref=false}
\fancyhead{}
\fancyfoot{}
\pagestyle{plain}

\title{\Large Optimal Congestion Control for Time-Varying Wireless Links}
\author[]{Prateesh Goyal}
\affiliation{\institution{Microsoft Research}\country{USA}}
\author[]{Mohammad Alizadeh}
\affiliation{\institution{MIT CSAIL}\country{USA}}
\author[]{Thomas E. Anderson}
\affiliation{\institution{University of Washington}\country{USA}}

\renewcommand{\shortauthors}{Goyal et al.}

\begin{abstract}
    Modern networks exhibit a high degree of variability in link rates. Cellular network bandwidth inherently varies with receiver motion and orientation, while class-based packet scheduling in datacenter and service provider networks induces high variability in available capacity for network tenants.  Recent work has proposed numerous congestion control protocols to cope with this variability, offering different tradeoffs between link utilization and queuing delay.  In this paper, we develop a formal model of congestion control over time-varying links, and we use this model to derive a bound on the performance of {\em any} congestion control protocol running over a time-varying link with a given distribution of rate variation. Using the insights from this analysis, we derive an optimal control law that offers a smooth tradeoff between link utilization and queuing delay. We compare the performance of this control law to several existing control algorithms on cellular link traces to show that there is significant room for optimization.

\if 0
The bandwidth of modern high performance cellular network links is inherently variable, affected by receiver motion and orientation. A number of end-to-end congestion control mechanisms have been proposed to cope with this variability, offering different tradeoffs between link utilization and queuing delay.  In this paper, we develop a formal model of time-varying cellular network behavior, and we use this model to derive a bound on the performance of {\em any} congestion control system for time-varying links. Using the insight from this analysis, we derive an optimal control law that offers a smooth tradeoff between link utilization and queuing delay. We compare the performance of our optimal control law to control laws used by existing protocols on cellular link traces to show that there is significant room for optimization. 
\fi

\if 0
In this work, we develop a new framework to analyse congestion control behaviour on wireless links with time-varying link capacity. To capture variations in the capacity, we present three simple markovian chain models for link capacity. For each model, we quantify the best possible performance any congestion control protocol can achieve, i.e., derive the performance bound. We also derive the optimal congestion control loop on these markovian links by posing the problem of deciding the sending rate as a markov decision process (MDP). Our analysis reveals that the performance bound and the optimal control loop in itself are a function of the nature of variability in the link capacity. We demonstrate the performance improvements of our optimal control loop over existing protocols using simulation and emulation over real cellular link traces.   
\fi

\if 0
Combined with path latency, this means that the best that any endpoint system, even in
the absence of cross-traffic, can
do is to guess that variable capacity, with a fundamental tradeoff between
being conservative (leading to underutilization of the capacity) and 
aggressive (leading to higher queueing), with future steps taken
to correct the errors of previous guesses. 
To aid in understanding these
systems, we start by developing an abstract model of time-varying links separate from
any cross-traffic, derive bounds on time-lagged e2e response for those links, 
and then increase the detail of the model for links where we have additional information
that allows us to predict the time-varying behavior (hidden markov model or capacity limit).
To our knowledge there are few
existing techniques to analyze the queuing behavior of these types of networks.
\fi
\end{abstract}

\begin{CCSXML}
<ccs2012>
<concept>
<concept_id>10003033.10003039.10003048</concept_id>
<concept_desc>Networks~Transport protocols</concept_desc>
<concept_significance>500</concept_significance>
</concept>
<concept>
<concept_id>10003033.10003039.10003040</concept_id>
<concept_desc>Networks~Network protocol design</concept_desc>
<concept_significance>300</concept_significance>
</concept>
<concept>
<concept_id>10003033.10003079.10003080</concept_id>
<concept_desc>Networks~Network performance modeling</concept_desc>
<concept_significance>300</concept_significance>
</concept>
</ccs2012>
\end{CCSXML}


\keywords{}

\maketitle
\section{Introduction}
\label{s:intro}
Traditional end-to-end congestion control algorithms for computer networks combine two orthogonal ideas
into a single mechanism. The first is to discover the capacity of the
path to carry traffic through probing~\cite{Jacobson88}, and the second is to respond to congestion in a way that
approximates fairness with respect to how the link is shared~\cite{jain1996congestion}.  Provided that traffic flows
persist over multiple round trips~\cite{homa}, and the network provides timely and regular feedback to endpoints~\cite{xcp,rcp}, 
a number of practical algorithms exist that can achieve both low queuing delay and high throughput~\cite{homa,hpcc,swift}.

\begin{wrapfigure}{r}{0.45\textwidth}
    \vspace{-2mm}
    \includegraphics[width=0.43\textwidth]{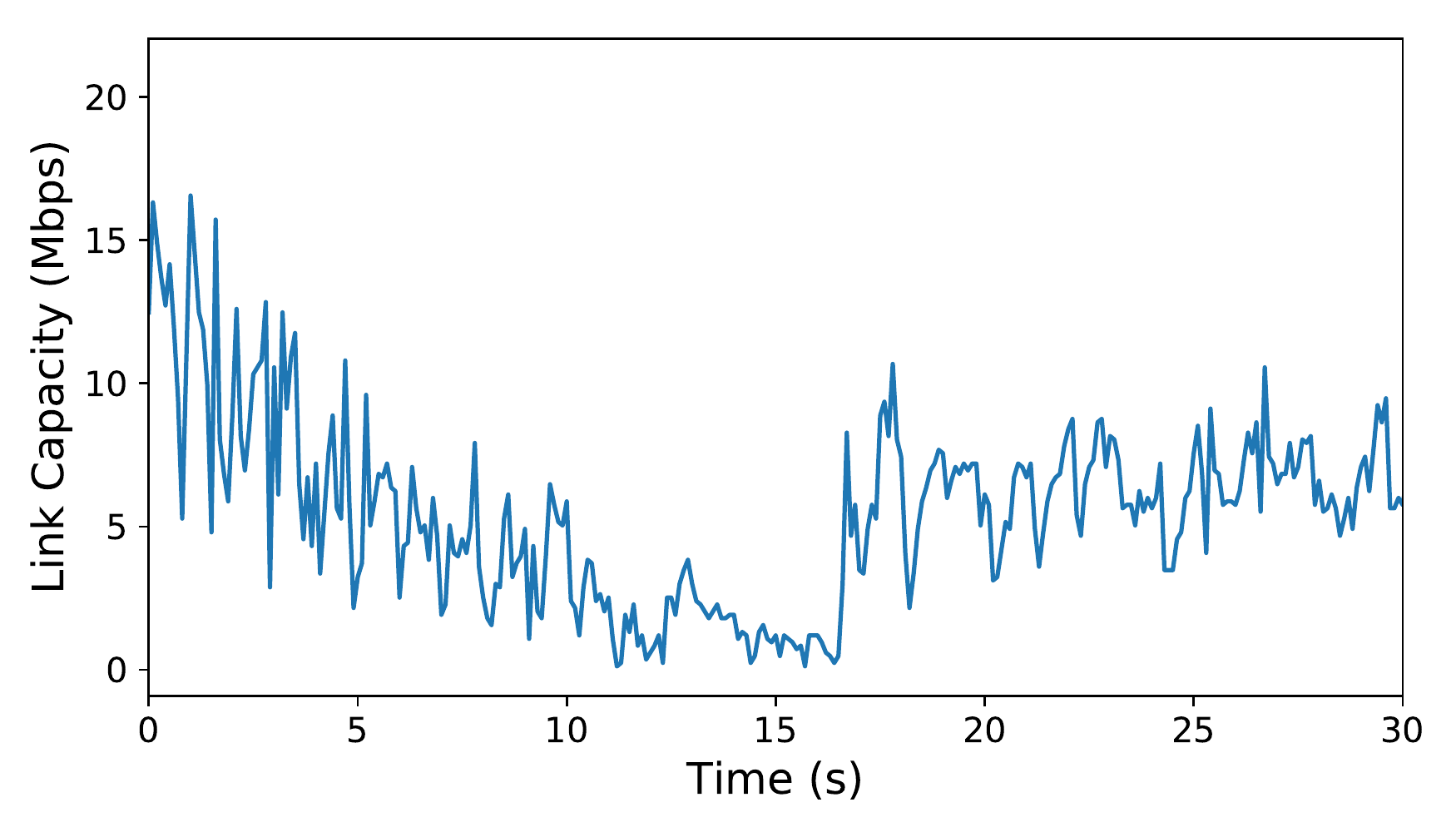}
    \vspace{-4mm}
    \caption{\small{\bf Variations in link capacity for a sample Verizon LTE link.}}
    \label{fig:intro_ts}
    \vspace{-3mm}
\end{wrapfigure}
Increasingly, however, modern networks pose a third challenge: path or link capacity itself is variable 
on a fine-time scale.  For example, in cellular networks, changes in receiver orientation
can cause available bandwidth to oscillate up or down by a factor of two within a few seconds~\cite{sprout}.  \Fig{intro_ts}, a sample
trace from an LTE network, illustrates this effect. Class-based packet scheduling for virtual networks induces similar
behavior.  To isolate different types of customers from each other, network switches can assign traffic from different
virtual networks to different queues, with different scheduling weights or priorities given to each queue.  
Whether traffic in other classes is present or absent can have a large effect on observed network capacity within the virtual network. Because of the inherent delay in transmitting
information about network state, link variability means that the endpoint must {\em guess} the 
capacity of the network in the near future, and will often guess wrong.  This can lead to link underutilization (if the capacity is underestimated) 
or excess queuing (if it is overestimated).

A number of recent research papers have attempted to adapt traditional congestion control
to this new setting~\cite{abc, goyal2017rethinking, verus, sprout, pbe_cc, copa, pcc, bbr}. 
These protocols differ from each other in many aspects; in particular, which \emph{feedback signals} are used by the sender and the exact \emph{control law} to adjust the sending rate. \Fig{intro_perf} shows the performance of various schemes on a
sample Verizon LTE trace, reproduced from earlier work~\cite{abc}. 
While several schemes lie on what appears to be the Pareto frontier, there is no clear winner:
different schemes provide different tradeoffs between underutilization and delay. Thus, we must pick different algorithms
and protocols depending on how much we value each goal. Low queueing reduces short request latency;
high throughput benefits longer requests.

\begin{wrapfigure}{r}{0.45\textwidth}
    \vspace{-2mm}
    \includegraphics[width=0.43\textwidth]{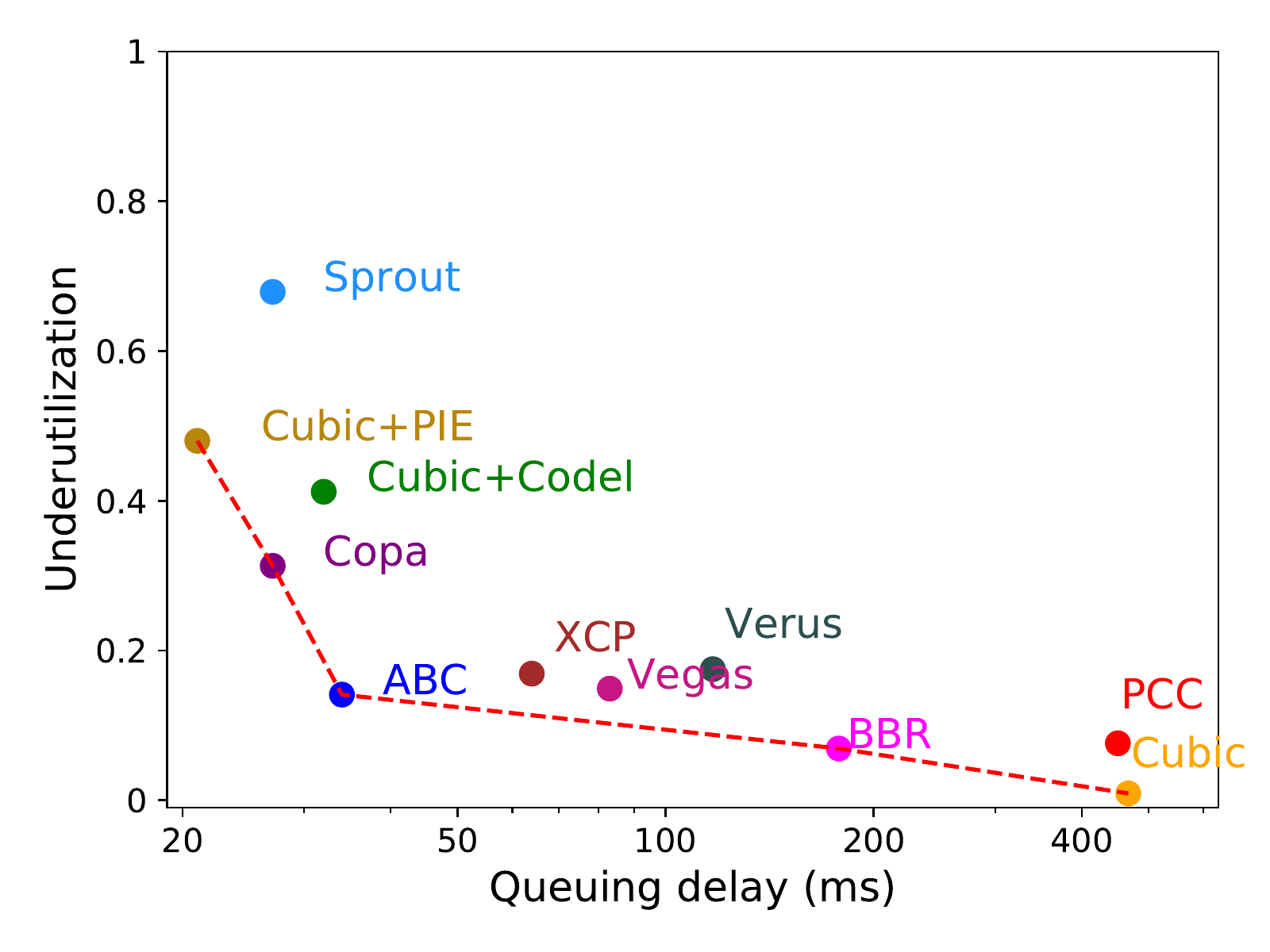}
    \vspace{-4mm}
    \caption{\small{\bf Performance of existing schemes on a Verizon LTE trace.} The experiments consist of a single long-running flow. Numbers are from ~\cite{abc}. The dashed red line represents the Pareto frontier of performance. }
    \label{fig:intro_perf}
    \vspace{-3mm}
\end{wrapfigure}
In this paper, we step back and ask a more fundamental question. Is there a bound on the optimal performance for congestion control
over time varying links \pg{based on the statistical properties of the variation.}  To make this problem tractable, we reduce the setting to the simplest imaginable: a single, long-lived
flow over a time-varying link that changes only once per round trip according to a \pg{simple} Markov process~\cite{howard1960dynamic}\pg{ (with restrictions on the transition probabilities). For each round trip, the link capacity differs from the previous round trip by a multiplicative I.I.D. random variable factor.}
We show that it is possible to derive a bound for this case analytically. 
The bound is a function of the uncertainty in link capacity at the sender: the more 
uncertain the link, the worse the bound.

We can then ask, for this simplified model of real network links, is it possible to achieve this bound? We show that the answer is yes, provided the link variability is small. 
For the more general case, we formulate the congestion control task as a Markov Decision Process~\cite{sutton2018reinforcement} (MDP) to derive an optimal control law for time-varying links. 
Interestingly, the structure of this optimal control law is quite different from the control laws
used by much of the previous work in congestion control for time-varying links.
In particular, the optimal sending rate is unrelated 
to the previous sending
rate, but depends only on the previous link capacity and queue occupancy.  This control structure allows
us to smoothly trade queuing delay for better utilization across a broad range. 
We then extend our analysis to prove an optimal control law for the case where the link
variability is predictable in some manner (e.g., using machine learning or other knowledge). \pg{Finally, we consider scenarios where the link variability has a more general Markov structure with few restrictions on the transitions.}

Note that our models are simplifications, and real wireless links need not follow the models we use. Our goal is not to construct perfect models for link capacity based on the specifics of the link and the physical layer. Instead, we present general and tractable models to capture key components of variability in link capacity on the scale of the round trip delay, draw insights, and show that these insights can translate to performance improvements. We use real cellular traces to validate our findings.

\subsection{Key Findings}
\begin{wrapfigure}{r}{0.45\textwidth}
    \vspace{-15mm}
    \includegraphics[trim={1cm 0 11cm 0},clip, width=0.43\textwidth]{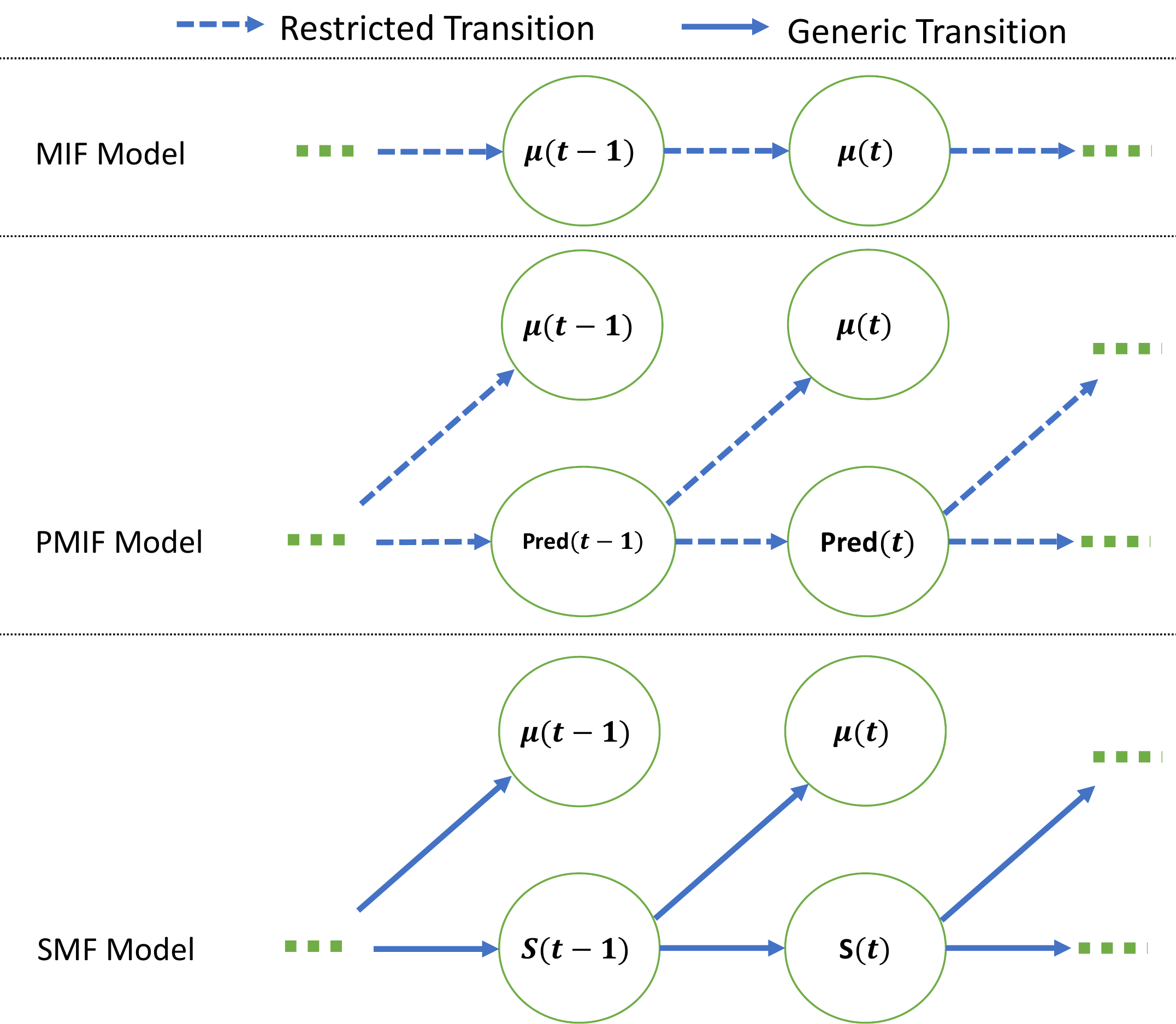}
    \vspace{-3mm}
    \caption{\small{\bf Transition diagram for different models of link capacity.}}
    \label{fig:intro_markovmodel}
    \vspace{-16mm}
\end{wrapfigure}
To capture variations in the link capacity, we develop three discrete-time Markov chain models for the link capacity with increasing complexity. \Fig{intro_markovmodel} shows the schematics for the three link models we consider in this paper.
For each model, we establish that there is a \emph{fundamental} bound on congestion control performance in terms of expected link underutilization and queuing delay
(Theorem~\ref{thm:model1:perfbound}, Corollary~\ref{cor:model2:perfbound} and Theorem~\ref{thm:model3:perfbound}) \pg{, and we derive the optimal congestion control law (Theorem~\ref{thm:model1:mdp}, Corollary~\ref{cor:model2:mdp} and \ref{cor:model3:mdp}).}

\if 0
This analysis suggests how to achieve the bound. For each model, we pose the problem of selecting the sending rate as an MDP and derive the optimal congestion control law (Theorem~\ref{thm:model1:mdp}, Corollary~\ref{cor:model2:mdp} and \ref{cor:model3:mdp}). The optimal control law in each model is governed by the nature of uncertainty in link capacity.
In contrast, existing protocols are largely agnostic to the nature of variability -- in our evaluation, we find that their performance is worse than the derived optimal control law
for some values of the utilization/queuing delay tradeoff. 
\fi 


\if 0
\subsection{Prateesh's version}
Computer networks are increasingly becoming more variable. Modern datacenter networks boast high-speed links and majority of the traffic lasts only a few round trip times (RTTs)~\cite{homa}. As a result, in datacenter networks the traffic competing with a flow can be highly variable~\cite{bfc}. Increasingly, wireless links (e.g., Wi-Fi or cellular) are being used for last-mile delivery of latency-sensitive traffic. Unlike wired links, wireless links can exhibit variations in link capacity with time~\cite{sprout}. For example, in cellular networks, if the cellular receiver is moving, then, the channel quality can change: within a fraction of a second the link capacity can double or reduce to half (see \Fig{intro_ts}).

Congestion control protocols find it difficult to adapt these variations in the network conditions. Congestion control protocols adjust the sending rate (at the sender) based on feedback signals from the network. However, typically the feedback is delayed by one RTT. Within the feedback delay, the network conditions and the appropriate sending rate can change substantially. This uncertainty in the current state of the network conditions can cause the sender to pick an incorrect sending leading to network underutilization or queue build ups.

Recently, congestion control for time-varying wireless links has garnered some attention. Several new protocols that all claim to be an improvement over the state-of-the-art have been proposed in the last decade~\cite{abc, goyal2017rethinking, verus, sprout, pbe_cc, copa, pcc, bbr}. These protocols differ from each other in many aspects; in particular, which \emph{feedback signals} are used by the sender and the exact \emph{control law} to adjust the sending rate. \Fig{intro_perf} shows the performance of various schemes on a Verizon LTE trace, reproduced from earlier work~\cite{abc}. We see that, none of the schemes achieve the ideal goal of zero underutilization and queuing delay. Additionally, there is no clear winner and there are multiple schemes on the pareto frontier, each providing a different trade-off between under-utilization and delay.

A natural question to ask is: Are the schemes on the frontier optimal with no scope for improvement? If not, what is the best possible performance any protocol can achieve? Can we achieve zero underutilization and queuing delay or is there a fundamental performance bound that no scheme can surpass? 
What is the form of the optimal control law and what kind of feedback signals does it need?
How does the nature of variability in link capacity and feedback delay impact the answers to these questions? In this paper, we attempt to answer these questions using theoretical modelling and analysis.  
The key challenge is to keep the analysis \emph{tractable} without oversimplifying pertinent details while modelling the system.
Our main contribution is a mathematical framework for understanding  congestion control performance and control law design on time-varying links. 
To isolate the impact of variability in link capacity, we restrict ourselves to scenarios where there is negligible/little churn in traffic (think cellular networks). Throughput this paper, we consider a single long-lived flow and a single time-varying bottleneck link.\footnote{
In Appendix~\ref{s:app:analysis_datacenter}, we discuss how we can extend our analysis to scenarios with variability in traffic.}

To capture variations in the link capacity, we use three simple discrete-time markov chain models for the link capacity with increasing complexity.
For each model, we establish that there is a \emph{fundamental} bound on the  performance -- with expected link underutilization and queuing delay as the performance metrics\footnote{In Appendix~\ref{s:app:unused_mu}, we show how we can extend our analysis to calculate the bound for other performance metrics.} -- that any congestion control protocol can achieve (Theorem~\ref{thm:model1:perfbound}, Corollary~\ref{cor:model2:perfbound} and \ref{cor:model3:perfbound}). The performance bound establishes that there is fundamental trade-off between link underutilization and queuing delay. This bound is not just because the link capacity is variable but also because there is a feedback delay and the sender cannot know what the current link capacity is deterministically. The performance bound is indeed a function of the uncertainty in link capacity at the sender, the more uncertain the link is the worse the bound.  

Our analysis for the performance bound also provides us insights and a rough sketch of the optimal congestion control law. For each model, we pose the problem of selecting the sending rate as markov decision process (MDP) and derive the optimal congestion control law (Theorem~\ref{thm:model1:mdp}, Corollary~\ref{cor:model2:mdp} and \ref{cor:model3:mdp}). Again, the optimal control law in each model is governed by the nature of uncertainty in link capacity.
In contrast, existing protocols are largely agnostics to the nature of variability -- in our evaluation, we find that their performance is worse than the optimal control law.

Note that real wireless links need not follow the models we use. Our goal is not to construct perfect models for link capacity based on the specifics of the link and the physical layer. Instead, we present general and tractable models to capture key components of variability in link capacity on a RTT scale, draw insights, and show that these insights can translate to performance improvements. We use real cellular traces to validate our findings. 

\subsection{Key Findings}
\fi

\vspace{10mm}
\smallskip
\noindent
\textbf{Multiplicative I.I.D. Factors (MIF) model (\S\ref{s:model1}):} 
We begin by analysing a basic model for variations in link capacity. The  analysis of this model is easy to follow yet the model is still expressive enough to provide insights that improve performance on real-world cellular traces. Later, we will use the basic analysis framework we establish for this model to extend our analysis to more comprehensive models.
In this model, the link capacity at time step $t$ ($\mu(t)$) is dependent on the capacity in the previous round trip ($\mu(t-1)$) as follows,
\begin{align}
    \mu(t) = \mu(t-1) \cdot X_t,
\end{align}
where $X_t$ are I.I.D random variables. 
Under the assumptions of the model, the optimal control law is of the simple form,
\begin{align}
    s(t) = \left(C \cdot \mu(t-1) - \frac{Q(t-1)}{T}\right)^+,    
\end{align}
where $T$ is the base round trip time (RTT), $Q(t)$ is the queue size. $C$ is a positive constant that depends on the desired trade-off between underutilization and queuing delay, and characteristics of the random variable $X$.  

This control law is interesting for several reasons. First, in several end-to-end schemes~\cite{verus, sprout, copa, cubic}, the current link capacity is not signalled back to
the sender. Instead, the sender has to infer the link capacity indirectly. Such inference might have errors~\cite{abc}, and these errors can lead to poor performance.
Second, in 
existing explicit signalling protocols\footnote{Explicit protocols directly signal congestion or rate information to endpoints from routers.}  such as XCP~\cite{xcp} and RCP~\cite{rcp},
the sending rate is derived iteratively based on the sending rate in previous rounds. The prototypical control law adjusts the sending rate up/down by increments that depend on a congestion signal fed back by the routers. By contrast, our analysis shows that the optimal controller does {\em not} depend on the prior sending rate but only on the link capacity and queuing in the previous round.
If in a given round, we guess wrong (as we must from time to time), we should not 
compound that error by basing our new sending rate on the old erroneous one.
Finally, to adjust the trade-off in underutilization and delay, this rule argues for only changing the constant $C$ and keeping the queuing penalty term the same. In contrast, protocols like ABC~\cite{abc} incorrectly treat $C$ as a ``target utilization'', setting it to a value close to one and adjusting the queuing penalty term to achieved a desired performance trade-off. 
We show that is not optimal. In our evaluation over real cellular traces, our optimal control law outperforms these existing schemes.


\smallskip
\noindent
\textbf{Prediction-based Multiplicative I.I.D. Factors (PMIF) model (\S\ref{s:model2}):}
Next, we consider scenarios where the sender has access to  an imperfect prediction ($Pred(t-1)$) about the current link capacity ($\mu(t)$). The uncertainty in link capacity at the sender is governed by the following equations:
\begin{align}
    \mu(t) &= Pred(t-1) \cdot X^p_t, & Pred(t) = Pred(t-1) \cdot X^{pred}_t,
\end{align}
where $X^p_t$ and $X^{pred}_t$ are I.I.D random variables.  In other words, prediction
error is structurally similar to the error induced by feedback delay in the MIF model.
\pg{A corollary from the analysis of the MIF model is that the optimal control law in this new model is of the form,}
\begin{align}
    s(t) = \left(C^p \cdot Pred(t-1) - \frac{Q(t-1)}{T}\right)^+,   
\end{align}
where $C^p$ is a positive constant that depends on the desired performance trade-off and characteristics of $X^p$ and $X^{pred}$. We find that a good predictor for link capacity can substantially improve performance on cellular traces. Thus, a promising future direction is to design algorithms to predict link capacity for time-varying wireless links, e.g., using physical layer information~\cite{predictwifi}.

\smallskip
\noindent
\textbf{State-dependent Multiplicative Factors (SMF) (\S\ref{s:model3}):}
Finally, we relax the constraints in the MIF Model for how capacity changes between each time step. We assume a generic Markov chain for how the underlying link state governing the link capacity (depicted by $S(t)$) evolves. \pg{In particular, the link state is a set that includes the current link capacity and any other quantities that might impact the link capacity in the next round trip. In this model, the uncertainty in link capacity at the sender ($\mu(t)$) is governed by the link state in the previous round trip ($S(t-1)$).
The optimal control law is of the form,}
\begin{align}
    s(t) = \left(C^{S(t-1)} \cdot \mu(t-1) - \frac{Q(t-1)}{T}\right)^+    
\end{align}
\pg{where $\textbf{S}$ represents the space of all possible states $S(t)$, and, $C^k, \forall k \in \textbf{S}$ is a positive constant that depends on the probability transition matrix for the link state. Compared to the MIF Model, this model is more realistic as it does not restrict the relative uncertainty in the link capacity to be the same in every round and allows for the uncertainty to be dependent on the underlying link state. For example, the SMF model (but not the MIF model) allows there to be a minimum and maximum capacity. As expected, on cellular traces, the control law above improves performance over the (simpler) control law based on the MIF model.}


\if 0
In this model, the uncertainty in link capacity is governed by,
\begin{align}
    \mu(t) &= Y_t^{S(t-1)} & S(t) &= Z_{t}^{S(t-1)},
\end{align}
where $Y_t^{S(t-1)}$ and $Z_t^{S(t-1)}$ are I.I.D random variables. \ma{hard to parse the notation. is $S(t)$ in a finite state space? is the superscript meant to denote the distribution of the random variable? When you say $Y_t^{S(t-1)}$ is I.I.D., it's as if $S(t-1)$ doens't matter and the distribution is always the same. anyway, the point is that if you're going to use non-standard definitions, you need to set it up and explain it better. and i think people will struggle to understand why this is a generic markov chain.}

\begin{align}
    s(t) = \left(C^{S(t-1)} \cdot \mu(t-1) - \frac{Q(t-1)}{T}\right)^+    
\end{align}
where $\textbf{S}$ represents the space of all possible states $S(t)$, and, $C^k, \forall k \in \textbf{S}$ is a positive constant that depends on the random variables $\cup_{k \in S}\{Y^k\}$ and $\cup_{k \in S}\{Z^k\}$. Compared to Model 1, this model is more realistic as it does not restrict the relative uncertainty in the link capacity to be the same is every round and allows for the uncertainty to be dependent on the underlying link state. For example, Model 3 (but not Model 1) allows there to be a minimum and maximum capacity. As expected, on cellular traces, the control law above improves performance over the (simpler) control law based on Model 1.
\fi

\cut{
\begin{enumerate}
    \item This provides a way for CC researchers to analyze their control loops and see how close can they get to the optimal. A CC protocol might have additional components to deal with fairness, denoising observations, etc. They might be employing simple heuristics. It's a way for them to evaluate how close they are to the optimal and when should they stop iterating on the CC.
    \item Similarly CC researchers in the past have tried applying prediction techniques, the analysis can be useful for them to evaluate their prediction strategy.
    \item Given the recent popularity of real time applications, the analysis reveals the performance an application can achieve on WiFi or cellular networks.
\end{enumerate}
}



\if 0

\subsection{What results/theorems have we proved?}
We present analysis on performance of congestion control protocols on links with time-varying capacity. Because of the uncertainty in the link capacity, any protocol can only achieve a certain link utilization queuing delay trade-off. the The analysis reveals a bound on the performance a protocol can achieve and also reveals the optimal congestion control protocol. We model the link capacity ($\mu(t)$) as discrete time stochastic process. We consider three stochastic processes.

\textbf{Model 1: Causal congestion control protocols, uncertainty in link capacity is similar in each time step.}
\begin{align*}
    \mu(t) = \mu(t-1) \cdot X
\end{align*}
where $X$ is a random variable with PDF $f$. A key characteristic of this model is the uncertainty in link capacity in every time-step is similar (characterized by $f$). This assumption makes the model easy to analyze and provides a good starting point. We prove a lower bound on the performance in terms of expected queuing delay and link underutilization using stochastic analysis (see \S\ref{ss:model1:performance_bound}) for causal congestion control protocols. 

\textit{MDP.} To find the optimal congestion control strategy we pose the problem as a MDP with achieving high utilization and low queuing delay as the objective function. Using Bellman equations to describe the system, we can find the optimal strategy (see \S\ref{ss:conjecture}). Interestingly, we find that for infinitely long runtime, the optimal strategy is very simple

\begin{align*}
    s(t) = max\left(0, C \cdot \mu(t-1) - \frac{Q(t-1)}{T}\right)
\end{align*}

\textcolor{red}{Not done: Close the gap in the proof above.} We need to make this statement precise.

\textcolor{red}{Not done: Generalizing the MDP.} We can generalize the objective function in the MDP. I think the result holds as long as the objective function is just a function of $r(t)$ and is convex in $r(t)$.

\textit{Application to real traces.} We applied the model on a few cellular traces. The bound on performance on these traces seem real, existing congestion control schemes achieve performance worse than the predicted bound. We also ran the above optimal strategy and found the performance is actually very close to the predicted bound.

\textbf{Model 2: Predictive congestion control protocols,  uncertainty in link capacity is similar in each time step.}
\begin{align*}
    \mu(t) = Pred(t-1) \cdot X^p
\end{align*}
where $Pred(t-1)$ is the link capacity prediction for round $t$. This is a model which assumes that the protocol makes a prediction for the next round and the link capacity varies from the prediction in an uncertain way. The equations governing the system are very similar to the one in Model 1. We prove a bound on performance similar to model 1, depends only on $f^p$ (PDF of $X^p$), see \S\ref{s:model2}.  

\textcolor{red}{Not done: Posing it as a MDP} Posing the problem as a MDP and using bellman equations. I believe this will not reveal a simple optimal strategy as in Model 1. My guess is a strategy of the following form will be revealed as optimal

\begin{align*}
    s(t) = max\left(0, C(Pred(t-1 .. n)) \cdot Pred(t-1) - \frac{Q(t-1)}{T}\right)
\end{align*}

where $C(\mu(t-1), Pred(t-1))$ is given by the solution of the bellman equations. 

Consequently, I believe the following simpler strategy will achieve performance close to the bound.

\begin{align*}
    s(t) = max\left(0, C \cdot Pred(t-1) - \frac{Q(t-1)}{T}\right)
\end{align*}
NOTE: In a finite MDP model, this is the strategy that one should take in the last round, regardless of the value of $\mu(t-1), Pred(t-1)$.

\textcolor{red}{Not done: Applying on traces} We can take a cellular trace and work backwards from $\mu(t)$ to get $Pred(t-1)$ (probabilistically pick a value of $Pred(t-1)$) given a $X^p$ (think of $X^p$ as evaluating a prediction strategy). We can then generate the performance bound and show how the above strategy performs on the trace.

\textbf{Model 3: Causal congestion control protocols, uncertainty in every time step can be different and is governed by a markov process.}

\begin{align*}
    \mu(t) &= \mu(t-1) \cdot X^{S(t-1)}
\end{align*}
Where $S(t)$ is the underlying markov variable. We prove the performance bounds for this model. 

\textcolor{red}{Not done: Posing it as a MDP} Posing the problem as a MDP and using bellman equations. For long run times, I believe this will reveal a strategy of the form

\begin{align*}
    s(t) = max\left(0, C_1(S(t-1)) \cdot \mu(t-1) - \frac{Q(t-1)}{T}\right)
\end{align*}

where $C_1(S(t-1))$ is calculated using the Bellman equations. I also hypothesize the strategy 

\begin{align*}
    s(t) = max\left(0, C_2(S(t-1)) \cdot \mu(t-1) - \frac{Q(t-1)}{T}\right)
\end{align*}

where $C_2(S(t-1))$ is calculated using insights from the lower bound (I can explain how $C_2$ is calculated) will be close to the predicted lower bound.

NOTE: Another way of thinking about this is that in a finite MDP model, this is the strategy that one should take in the last round.

\textcolor{red}{Not done: Applying on traces} Take a celllular trace assuming $S(t) = \mu(t)$ calculate different $X^{S(t-1)}$, and run the above strategy and see close are we get to the lower bound.

\fi

\section{Related Work}
\label{s:related}


\smallskip
\noindent
\textbf{Queueing Theory:} A substantial body of work in the field of queuing theory pertains to analyzing the dynamics of queues in computer systems~\cite{kleinrock1976queueing}. Existing techniques are useful for analysing systems in steady state equilibrium where both the rate of arrival of jobs and the job size can be modeled as probabilistic processes (e.g., as Markovian chains). An important characteristic of our work is that
the queue service rate is variable with time. A branch of queuing theory considers
time-varying behavior as an exception to steady state analysis. 
Some work has focused on modelling time-varying arrival rates, e.g., as a way
to model diurnal customer requests~\cite{massey}. Time-varying service rates have also been considered
for the case where the number of servers scales up to meet increased demand~\cite{tvq}. By contrast,
the service rate of a cellular link is endogenous, with variation that is largely 
independent of the workload demand. Thus, even though the link rate is time varying, 
we can analyse it as a steady state Markov process.
Cecchi and Jacko~\cite{tvwifi} develop a general Markov model, similar
to our SMF model, for representing changes in wireless transmission capacity. Their
focus is on studying multi-user fairness for media access rather than congestion control. 


\smallskip
\noindent
\textbf{Analysis of real-world congestion control protocols:} This is a crowded space with a lot of work spanning theoretical modeling~\cite{chiu1989analysis, mathis1997macroscopic}, experimental analysis~\cite{hock2017experimental, yan2018pantheon}, and recently formal verification~\cite{arun2021toward, nsdi_rw_tcp_bugs}. While prior work can be used for analysing the performance of a particular congestion control protocol, they cannot be used to reason about the performance space of all congestion control laws or to deduce the optimal control law itself. Moreover, the theoretical analysis in existing works is primarily for wired links with relatively fixed capacity.\footnote{Some works do allow for minor variations in link capacity to account for networks components such as token bucket filters~\cite{arun2021toward}.} In contrast, we use Markovian models to capture the time-varying nature of the wireless link. 

\smallskip
\noindent
\textbf{Existing congestion control protocols:} We can classify existing protocols based on the feedback signals used. Cubic~\cite{cubic}, TCP Reno~\cite{newreno}, Sprout~\cite{sprout}, Verus~\cite{verus}, and Copa~\cite{copa} use traditional feedback signals such as end-to-end delay, packet drops and explicit congestion notification (ECN) marks (with various active queue management (AQM) schemes such as CoDel~\cite{CoDel}, PIE~\cite{pie}, and RED~\cite{red} to mark the ECN) as a basis to adjust rates. These signals are good for inferring congestion to reduce the sending rate. However, when the link is underutilized these signals do not offer much information about the current link capacity. In such cases, the sender has to resort to a blind increase which can cause poor performance on time-varying links~\cite{abc}. Explicit congestion control protocols~\cite{xcp,rcp, abc, hpcc} where the bottleneck router explicitly communicates a rate to the sender based on the link capacity can overcome this limitation. These protocols also differ from each other in the control law they use to adjust the rate (see \S\ref{ss:model1:implications}). For cellular networks, some recent work~\cite{xie2015pistream, lu2015cqic, pbe_cc} proposes techniques for leveraging physical layer information at the sender to compute link capacity without any explicit feedback from the base station.



\section{Preliminaries}
\label{s:setup}

\noindent{\bf Model.} We study a discrete-time model of a single long-lived flow transmitting data on a link with time-varying capacity. Each discrete time step (or ``round'') corresponds to one base round trip time, $T$, defined as the minimum possible round trip time (RTT) in the absence of queuing delay. Let $\mu(t)$ and $Q(t)$  denote the link capacity and queue length in time step $t$. At the start of time step $t$, the sender receives feedback about the {\em state} of the system in the previous time step, i.e. $\mu(t-1)$ and $Q(t-1)$. It then selects a non-negative sending rate, $s(t)$, which it uses throughout time step $t$. Consequently, the queue length evolves according to:
\begin{align}
    Q(t) &= (Q(t-1) + s(t) \cdot T- \mu(t) \cdot T)^+,
    \label{eq:setup:queue_size_def}
\end{align}
where $y^+ = max(0, y)$. We assume $Q(0)=0$. 

We make several simplifying assumptions to keep the analysis tractable. First, our model ignores sub-RTT dynamics, since all network feedback and congestion control decisions occur at the beginning of each time step. Second, we assume that the bottleneck router has unlimited buffer space and never drops packets. Third, we focus on {\em explicit} congestion control mechanisms, where the bottleneck router provides direct feedback about its state to the sender in each time step. Studying explicit schemes is natural because our goal is to determine performance bounds on achievable performance for any congestion control algorithm. Since explicit schemes have more information, any bound for these schemes also applies to approaches that must infer link conditions from end-to-end signals like packet loss and delay. 

Although we focus on the single-flow case, our analysis extends to settings in which multiple long-lived flows with the same base round trip time share a bottleneck link. In such cases, $s(t)$ refers to the aggregate sending rate across all the senders. 

\noindent{\bf Metrics.} Define $q(t)$ as the ratio of the queue length to the link capacity in time step $t$. The evolution of $q(t)$ is governed by: 
\begin{align}
    q(t) &\triangleq \frac{Q(t)}{\mu(t)} = \frac{(Q(t-1) + s(t) \cdot T - \mu(t) \cdot T)^+}{\mu(t)}.
    \label{eq:setup:queue_delay_def}
\end{align}
$q(t)$ represents the time that it would take to drain the last packet in the queue at time step $t$, assuming the link capacity stays constant at $\mu(t)$ until the packet departs. Since the link capacity is time-varying, $q(t)$ may differ from the actual delay experienced by the last packet in the queue at time step $t$. Nevertheless, it provides an approximate measure of the instantaneous queuing delay at time $t$, and henceforth we will refer to $q(t)$ as the queuing delay. 

We define the link underutuilization, $U(t)$, as the ratio of the unused capacity to the available link capacity in time step $t$. The link transmits $\min\{s(t)\cdot T + Q(t-1), \mu(t)\cdot T\}$ worth of data in time step $t$. Therefore the underutilization in time step $t$ is given by:  
\begin{align}
    U(t) &\triangleq \frac{(\mu(t) \cdot T - s(t) \cdot T - Q(t - 1))^+}{\mu(t) \cdot T}.
    \label{eq:setup:undeutilization_def}
\end{align}

Finally, we define the average underutilization ($\bar{U}(t))$) and the average queuing delay ($\bar{q}(t)$) as follows:
\begin{align}
    \bar{U}(t) &= \frac{\sum_{i=1}^{t}U(i)}{t}, & \bar{q}(t) = \frac{\sum_{i=1}^{t}q(i)}{t}. 
    \label{eq:setup:qu_multiple_round_def}
\end{align}
We will use $\bar{q}(t)$ and $\bar{U}(t)$ as our main performance metrics in this paper. Specifically, we will assume that the goal for any congestion control protocol is to achieve low average queuing delay and low average underutilization according to the above definitions.

\noindent{\bf Algorithms.}
In our analysis, we will only consider causal congestion protocols.
\begin{definition}{\textit{Causal Congestion Control protocols}---} A congestion control protocol is considered causal if, the sending rate in time step $t$, is only dependent on observations of the system (e.g., link capacity, queue size, etc.) in earlier time steps.
\end{definition}
Causal congestion control protocols decide the sending rate based on past observations. Note that causal congestion control protocols also include randomized protocols where the sending rate is not-deterministic.


\section{A Simple Link Model}
\label{s:model1}
\label{ss:model1:model}
We begin with a simple model where $\mu(t)$ deviates from $\mu(t-1)$ by an I.I.D. multiplicative factor. We refer to this as the Multiplicative I.I.D. Factors (MIF) model. Our analysis of the MIF model forms the basis for studying more complex link models later in the paper. 

Formally, $\mu(t)$ evolves according to: 
\begin{equation}
    \mu(t) = \mu(t - 1) \cdot X_t,
    \label{eq:model1:model}
\end{equation}
where $X_t$ is a I.I.D random variable with probability density function (PDF) $f(\cdot)$. Since we don't want the link capacity to drop below zero, we assume that $X_t$ is positive. 
The link capacity evolution is independent of the sending rate decisions of the congestion control algorithm, formally, $X_t$ is independent of $Q(i)$, $\mu(i)$ and $s(i), \forall i < t$. Additionally, independence of $s(t)$ and $X_t$ follows from the assumption that the congestion control algorithm is causal and does not have advanced knowledge of future capacity variations (although it is permitted to know the {\em distribution} of the variations). 



We will first establish a performance bound in terms of the expected underutilization ($E[\bar{U}(t)]$) and expected queuing delay ($E[\bar{q}(t)]$) that can be achieved by {\em any} causal congestion control algorithm. We use this analysis to draw insights about the optimal congestion controller for this setting (\S\ref{ss:model1:achievable_bound}). Next, we will derive the exact optimal control law by posing the problem of deciding the sending rate as a Markov Decision Process (MDP) (\S\ref{ss:model1:mdp}).
\pg{In \S\ref{s:model2}, we will consider an extension of the MIF model where the sender can predict the current link capacity with certain accuracy. We will derive the performance bound and the optimal control law for this extension.
Finally, in \S\ref{s:implications}, we will discuss the implications of our analysis and validate our findings using real-world cellular traces.}

\subsection{Performance Bound}
\label{ss:model1:performance_bound}

\begin{theorem} 
In the MIF model, for any causal congestion control protocol and any $t \in \mathbb{N}$, the point $(E[\bar{q}(t)]$, $E[\bar{U}(t)])$ always lies in a convex set $\mathbb{C}^f \triangleq \{(x,y) | y \geq g^f(x)\}$, where $g^f(\cdot)$ is defined as follows:
\begin{align}
    x &= T \cdot \int_{0}^{b}\left(\frac{b}{a} - 1 \right) \cdot f(a) \cdot da, & g^f(x) &= \int_{b}^{\infty}\left( 1 - \frac{b}{a}\right) \cdot f(a) \cdot da, 
\label{eq:model1:thm:perfbound}
\end{align}
where $b \in \mathbb{R}^+$.
\label{thm:model1:perfbound}
\end{theorem}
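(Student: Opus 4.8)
The plan is to make a change of variables that decouples the controller's action in each round from the multiplicative noise $X_t$, so that the expected per-round $(q,U)$ pair is pinned to the boundary curve in~\eqref{eq:model1:thm:perfbound}, and then to conclude using convexity of that curve together with Jensen's inequality. For each round $t$ define the normalized offered load
\[
W(t) \triangleq \frac{Q(t-1) + s(t)\cdot T}{\mu(t-1)\cdot T}.
\]
Because the protocol is causal, $s(t)$ — and hence $W(t)$ — is a function only of $\{\mu(i),Q(i),s(i):i<t\}$ and the protocol's internal randomness, all of which are independent of $X_t$; therefore $W(t)\perp X_t$. Substituting $\mu(t)=\mu(t-1)X_t$ into~\eqref{eq:setup:queue_delay_def} and~\eqref{eq:setup:undeutilization_def}, and using $(y-x)^+/x=(y/x-1)^+$ for $x>0$ with $\mu(t)>0$ almost surely, gives the clean forms
\[
q(t) = T\left(\frac{W(t)}{X_t} - 1\right)^{+}, \qquad U(t) = \left(1 - \frac{W(t)}{X_t}\right)^{+}.
\]

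Next I would condition on $W(t)=b$ and integrate over $X_t\sim f$ only. Since $\{W(t)/X_t>1\}=\{X_t<b\}$,
\[
E[q(t)\mid W(t)=b] = T\!\int_0^{b}\!\left(\frac{b}{a}-1\right)f(a)\,da, \qquad E[U(t)\mid W(t)=b] = \int_b^{\infty}\!\left(1-\frac{b}{a}\right)f(a)\,da,
\]
which are exactly $x$ and $g^f(x)$ from~\eqref{eq:model1:thm:perfbound} with the parameter identified as $b$. Thus for any deterministic choice of $W(t)$ the expected per-round performance lies \emph{on} the boundary curve; the point of the argument is now purely about what averaging does.

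For the convexity, write $\phi(b)=E[(b/X-1)^+]$ and $\psi(b)=E[(1-b/X)^+]$, so $x=T\phi(b)$ and $g^f(x)=\psi(b)$. Differentiating under the integral sign (the boundary terms vanish because the integrands are continuous and equal $0$ at $a=b$) gives $\phi'(b)=\int_0^{b} f(a)/a\,da\ge 0$, $\psi'(b)=-\int_b^{\infty} f(a)/a\,da\le 0$, and $\phi''(b)=\psi''(b)=f(b)/b\ge 0$. Hence
\[
\frac{dg^f}{dx} = \frac{\psi'(b)}{T\phi'(b)} \le 0, \qquad \frac{d^2 g^f}{dx^2} = \frac{\psi''(b)\phi'(b)-\psi'(b)\phi''(b)}{T^2\phi'(b)^3} = \frac{f(b)/b}{T^2\phi'(b)^3}\int_0^{\infty}\frac{f(a)}{a}\,da \ge 0,
\]
so $g^f$ is convex and decreasing and $\mathbb{C}^f=\{(x,y):y\ge g^f(x)\}$ is a convex set (the epigraph of a convex function). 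Applying Jensen's inequality to $g^f$ gives $E[U(t)] = E[\psi(W(t))] \ge g^f\!\big(E[T\phi(W(t))]\big) = g^f(E[q(t)])$, i.e.\ $(E[q(t)],E[U(t)])\in\mathbb{C}^f$. Finally, $(E[\bar q(t)],E[\bar U(t)])$ is the arithmetic mean over $i=1,\dots,t$ of the points $(E[q(i)],E[U(i)])$, each of which lies in the convex set $\mathbb{C}^f$, so the mean lies in $\mathbb{C}^f$ too.

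I expect the main obstacle to be the bookkeeping in the convexity step rather than anything conceptually deep: justifying the interchange of expectation and differentiation, and in particular handling the support of $f$. Where $\phi'(b)=0$ (parameter $b$ below the support of $f$) the parametric curve degenerates to a vertical segment at $x=0$, so $g^f$ must be taken as the lower envelope — equivalently one restricts $b$ to the support of $f$; symmetrically, if $X$ is bounded then $g^f$ reaches $0$ at a finite $x$ and is extended by $0$ beyond it. One should also note the degenerate regime $E[1/X]=\infty$, where $\phi(b)$ (hence the expected queuing delay) can be infinite and the statement holds vacuously. Verifying that the region defined by the parametrization is exactly $\mathbb{C}^f$, and that Step~2's conditional expectations remain valid when $W(t)$ lands in these boundary ranges, is the part that needs the most care; everything else follows mechanically once the independence $W(t)\perp X_t$ is in place.
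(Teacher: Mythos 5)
Your proposal is correct and follows essentially the same route as the paper's proof: your $W(t)$ is exactly the paper's ``relative load'' $\rho(t)$, the independence-plus-conditioning step producing the parametric curve is identical, and your Jensen step is just the paper's observation that $(E[q(t)],E[U(t)])$ is a mixture of points on the convex boundary and hence lies in $\mathbb{C}^f$, followed by the same time-averaging argument. The only cosmetic difference is that you establish convexity of $g^f$ via the second derivative, whereas the paper's lemma argues that the slope $dg^f/dx$ is monotonically increasing in $b$.
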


\if 0
\begin{theorem} 
In the MIF model, for any causal congestion control protocol, and any $t \in \mathbb{N}$, there is a fundamental trade-off between $E[\bar{q}(t)]$ and $E[\bar{U}(t)]$. The point \{$E[\bar{q}(t)]$, $E[\bar{U}(t)]$\} always lies in a convex set $\mathbb{C}^f := \{\{x,y\} | y>=g^f(x)\}$. Where $g^f$ is defined as follows,
\begin{align}
    x &= T \cdot \int_{0}^{b}\left(\frac{b}{a} - 1 \right) \cdot f(a) \cdot da\nonumber\\
    g^f(x) &= \int_{b}^{\infty}\left( 1 - \frac{b}{a}\right) \cdot f(a) \cdot da 
\label{eq:model1:thm:perfbound}
\end{align}
where $b \in \mathbb{R}^+$.
\label{thm:model1:perfbound}
\end{theorem}
\fi

The curve $g^f(\cdot)$ forms the boundary of the set $\mathbb{C}^f$. Any causal congestion control strategy cannot achieve better performance in expectation than the feasible set $\mathbb{C}^f$. Note that in a particular run (i.e., for a given $\cup_{i=0}^{t}\{\mu(i)\}$), a scheme might get lucky and achieve performance $(\bar{q}(t), \bar{U}(t))$ that lies outside $\mathbb{C}^f$ (i.e. below $g^f(\cdot)$), however, in expectation the performance is bounded by $\mathbb{C}^f$. 

\if 0
\begin{wrapfigure}{r}{0.48\textwidth}
    \vspace{-2mm}
    \includegraphics[width=0.45\columnwidth]{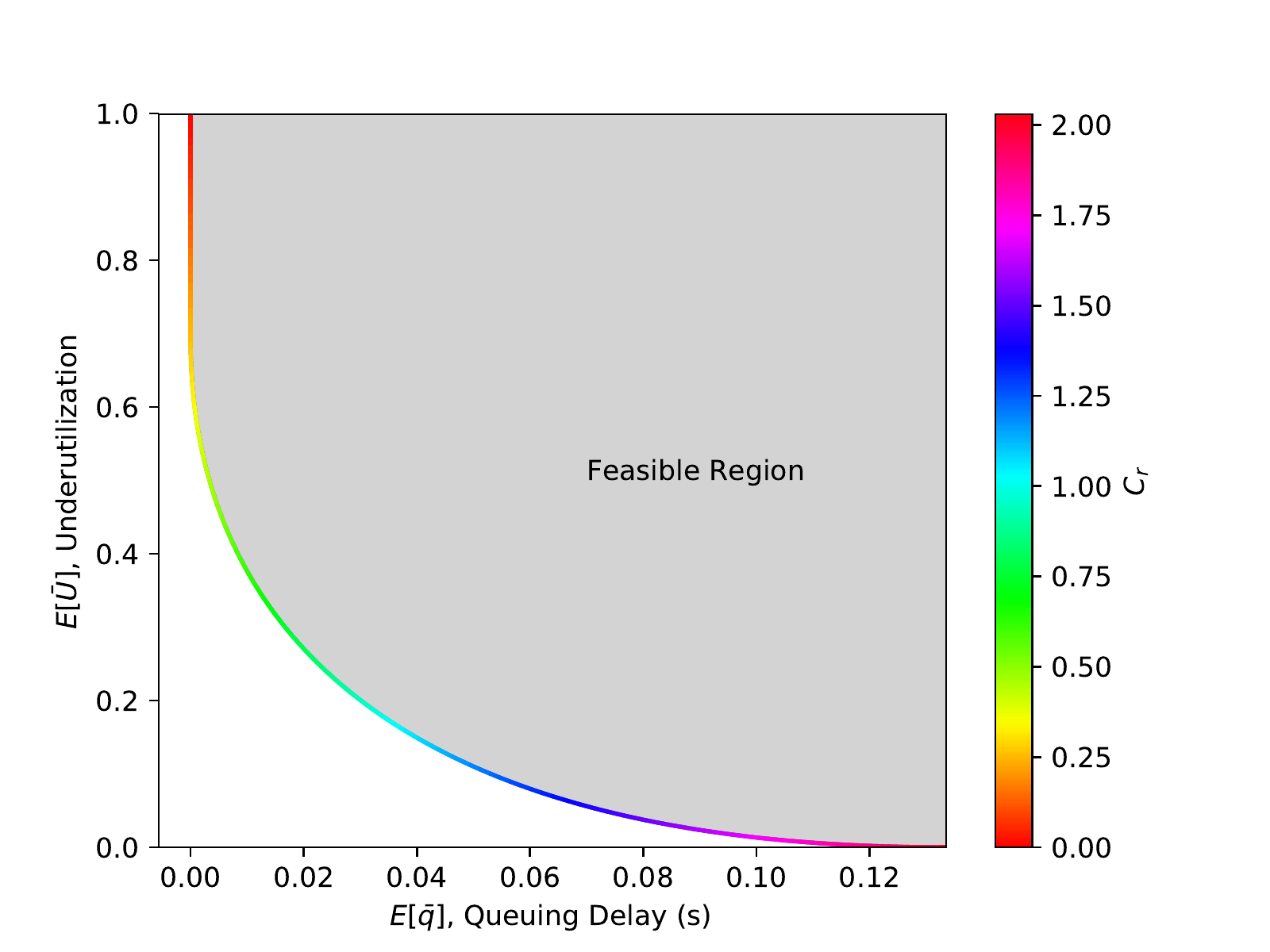}
    \vspace{-4mm}
    \caption{\small {\bf Trade-off between underutilization and queuing delay, for $X \sim \textit{U}(0.27, 2) $, $T = 0.1$s ---} The gray region presents the feasible region of any congestion control protocol. The color of the performance bound ($g^f$) corresponds to different values of $b$ in Eq.~\eqref{eq:model1:thm:perfbound}.}
    \label{fig:model1:feasible_uniform}
    \vspace{-4.5mm}
\end{wrapfigure}

  \begin{figure}[t]
     \centering
    \includegraphics[width=0.6\columnwidth]{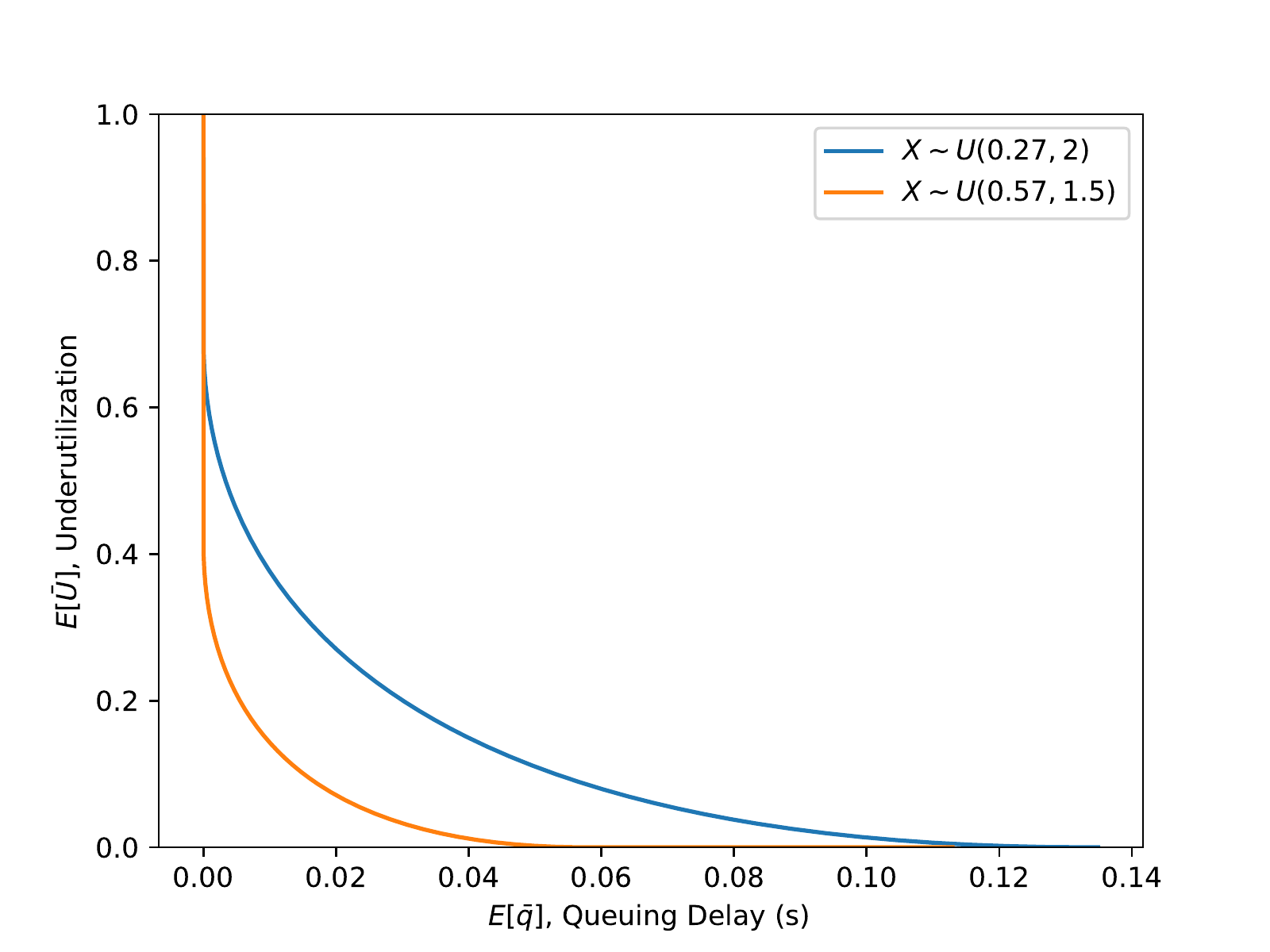}
    \vspace{-4mm}
    \caption{\small {\bf Trade-off is dependent on the PDF $f$ ---} More variable the link is worse the performance bound.}
    \label{fig:model1:f_variation}
    \vspace{-4.5mm}
 \end{figure}
 \fi
 
 \begin{figure}[t]
\begin{minipage}[t]{.48\textwidth}
\centering
    \includegraphics[width=1\textwidth]{images/feasible_uniform.pdf}
    \vspace{-7mm}
    \caption{\small {\bf Trade-off between underutilization and queuing delay, for $X \sim \textit{U}(0.27, 2) $, $T = 0.1$s ---} The grey region presents that feasible region of any congestion control protocol. The color of the performance bound ($g^f(\cdot)$) corresponds to different value of $b$ in Eq.~\eqref{eq:model1:thm:perfbound}.}
    \label{fig:model1:feasible_uniform}
\end{minipage}
\hfill
\begin{minipage}[t]{.48\textwidth}
\centering
    \includegraphics[width=\textwidth]{images/f_variation.pdf}
    \vspace{-7mm}
    \caption{\small {\bf Trade-off is dependent on the PDF $f$ ---} The performance bound becomes worse (further up and to the right) as the extent of link rate variation increases.}
    \label{fig:model1:f_variation}
\end{minipage}
\vspace{-6mm}
\end{figure}
 
The performance bound depends on the distribution $f$ of the multiplicative factors $X_t$. \Fig{model1:feasible_uniform} shows this performance bound and the feasible region for a particular distribution. The distribution $f$ captures the {\em uncertainty} about the future link rate at each time step. If $X_t$ is highly variable, then the link rate $\mu(t)$ is highly uncertain (even with knowledge of $\mu(t-1)$). Our bound quantifies how this uncertainty imposes a limit on the achievable performance of any congestion control algorithm. For example, \Fig{model1:f_variation} shows the performance bound for two uniform distributions with different ranges. The more variable distribution imposes a worse bound on performance, shifting it upwards and to the right.

\subsubsection{Proof\\}
\label{ss:model1:proof_perfbound}
We first define a new quantity $\rho(t)$ in terms of $Q(t-1)$, $s(t)$ and $\mu(t-1)$ as follows,
\begin{align}
  \rho(t) =  \frac{\frac{Q(t-1)}{T} + s(t)}{\mu(t - 1)}.  
\label{eq:model1:proof_perfbound:rt}
\end{align}

In time step $t$, the total amount of data available (that can be serviced) at the link is $Q(t-1) + T \cdot s(t)$. Thus, $\rho(t)$ can be thought of as the ``load'' on the link in time step $t$ relative to the link capacity in previous time step $\mu(t-1)$.  With this new quantity, using $\mu(t) = \mu(t-1) \cdot X_t$, we can simplify Equations~\eqref{eq:setup:queue_delay_def} and ~\eqref{eq:setup:undeutilization_def} as follows: 
\begin{align}
        q(t) &= \frac{(Q(t-1) + s(t) \cdot T - \mu(t) \cdot T)^+}{\mu(t)}= T \cdot \left(\frac{\rho(t)}{X_t} - 1 \right)^+,\nonumber\\
        U(t) &= \frac{(\mu(t) \cdot T - s(t) \cdot T - Q(t - 1))^+}{\mu(t) \cdot T} = \left(1 - \frac{\rho(t)}{X_t}\right)^+.
    \label{eq:model1:proof_perfbound:qU_rt}
\end{align}
Notice that both $q(t)$ and $U(t)$ depend only on $\rho(t)$ and the value of the random variable $X_{t}$. Moreover, $\rho(t)$ itself is a random variable dependent on $Q(t-1)$, $\mu(t-1)$, and the congestion control strategy (which selects $s(t)$). Crucially, $\rho(t)$ and $X_t$ are independent. We now use this independence to establish a bound on the expected underutuilization and queuing delay for a single time step $t$. But first, we need the following Lemma showing that the function $g^f$ is convex. 


\begin{lemma}
The function $g^f(\cdot)$ and the set $\mathbb{C}^f$ are convex for any positive PDF function $f$, where $f(x)=0$ for all $x \leq 0$.
\label{lemma:model1:convexity_of_bound}
\end{lemma}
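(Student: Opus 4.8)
The plan is to derive convexity directly from the parametric description of $g^f$ in \eqref{eq:model1:thm:perfbound}. First observe that $\mathbb{C}^f$ is, by definition, the epigraph of the one–variable function $g^f$, so it is a convex set if and only if $g^f$ is a convex function; hence it suffices to prove the latter. The curve $x = x(b)$, $y = g^f(x(b)) = y(b)$ is traced out by the parameter $b \in (0,\infty)$, and I will show that its slope $dy/dx$ is non-decreasing along the curve, which — once we check that $x(b)$ is increasing, so that $y$ is genuinely a function of $x$ — is exactly convexity of $g^f$.

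Concretely, write $D(b) = \int_0^b \frac{f(a)}{a}\,da$ and $N(b) = \int_b^\infty \frac{f(a)}{a}\,da$. Differentiating the two defining integrals with respect to $b$ (Leibniz rule), the boundary terms are $(b/b - 1)f(b) = 0$ and $(1 - b/b)f(b) = 0$ respectively, so they drop out and one obtains
\[
\frac{dx}{db} \;=\; T\,D(b)\;\ge\;0, \qquad \frac{dy}{db} \;=\; -\,N(b)\;\le\;0 .
\]
Thus $x(b)$ is non-decreasing and $y(b)$ non-increasing; assuming $f>0$ on $(0,\infty)$ we have $D(b)>0$ for $b>0$, so $x(b)$ is strictly increasing and
\[
\frac{dy}{dx} \;=\; \frac{dy/db}{dx/db} \;=\; -\frac{1}{T}\cdot\frac{N(b)}{D(b)} .
\]
As $b$ grows, $N(b)$ is non-increasing and $D(b)$ is non-decreasing (the integrand $f(a)/a$ being non-negative), so $N(b)/D(b)$ is non-increasing and therefore $dy/dx$ is non-decreasing in $b$, hence non-decreasing in $x$. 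This shows $g^f$ is convex, and consequently $\mathbb{C}^f = \{(x,y) : y \ge g^f(x)\}$ is convex.

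The computation itself is routine; the points that need care are (i) the Leibniz-rule bookkeeping, and the happy cancellation of both boundary terms, and (ii) checking that the parametrization is non-degenerate so that $dy/dx$ is well defined, i.e. $D(b)>0$. The genuinely degenerate cases are easy to dispose of separately: if $f$ vanishes on an initial interval $(0,b_\ast)$, then the curve has a vertical segment at $x=0$ and the argument above applies verbatim to the remaining portion $b>b_\ast$; and if $E[1/X]=\int_0^\infty \frac{f(a)}{a}\,da = \infty$, then the only point of $\mathbb{C}^f$ with finite $x$ is $(0,1)$, so convexity is vacuous. An equivalent, more structural way to see the result (which I would note in passing) is that $\mathbb{C}^f$ equals the intersection over $b>0$ of the half-planes $\big\{\,y + \tfrac{N(b)}{T D(b)}\,x \ge y(b) + \tfrac{N(b)}{T D(b)}\,x(b)\,\big\}$, whose boundaries are precisely the supporting lines of the curve; being an intersection of half-planes, it is manifestly convex.
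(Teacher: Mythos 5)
Your proof is correct and follows essentially the same route as the paper: differentiate the parametric description of $g^f$ with respect to $b$ via the Leibniz rule, observe that the slope $-\frac{1}{T}\int_b^\infty \frac{f(a)}{a}\,da \big/ \int_0^b \frac{f(a)}{a}\,da$ is non-decreasing in $b$, and conclude convexity of the curve and hence of its epigraph $\mathbb{C}^f$. Your extra attention to the degenerate cases (vanishing $f$ near $0$, infinite $E[1/X]$) and the supporting-half-plane remark go slightly beyond what the paper records, but the substance is the same.
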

\begin{proof}
See Appendix~\ref{app:model1:perfbound}.
\end{proof}


\if 0
\begin{align}
        q(t) &= \frac{(Q(t-1) + s(t) \cdot T - \mu(t) \cdot T)^+}{\mu(t)}= T \cdot \left(\frac{\frac{Q(t-1)}{T} + s(t)}{\mu(t - 1) \cdot X_t} - 1 \right)^+\nonumber\\
        U(t) &= \frac{(\mu(t) \cdot T - s(t) \cdot T - Q(t - 1))^+}{\mu(t) \cdot T} = \left( 1 - \frac{\frac{Q(t-1)}{T} + s(t)}{\mu(t - 1) \cdot X_t}\right)^+
    \label{eq:model1:proof_perfbound:system}
\end{align}
Let's define a new variable $\rho(t)$ in terms of $Q(t-1)$, $s(t)$ and $\mu(t-1)$,
\begin{align}
  \rho =  \frac{\frac{Q(t-1)}{T} + s(t)}{\mu(t - 1)}  
\label{eq:model1:proof_perfbound:rt}
\end{align}

Intuitively, $\rho(t)$ can be thought of as the load on the link in time step $t$ relative to the link capacity in previous time step $\mu(t-1)$. Rewriting Eq. ~\eqref{eq:model1:proof_perfbound:system},
\begin{align}
    q(t) &= T \cdot \left(\frac{\rho}{X_t} - 1 \right)^+ &U(t) = \left(1 - \frac{\rho}{X_t}\right)^+
    \label{eq:model1:proof_perfbound:qU_rt}
\end{align}
\fi

To compute $E[q(t)]$ and $E[U(t)]$, we will first condition on the value of $\rho(t)$: 
\begin{align}
    E[q(t)|\rho(t)=b] &= E[T \cdot \left(\frac{b}{X_t} - 1 \right)^+] = T \cdot \int_{0}^{b}\left(\frac{b}{a} - 1 \right) \cdot f(a) \cdot da,\nonumber\\
    E[U(t)|\rho(t)=b] &= E[\left( 1 - \frac{b}{X_t}\right)^+] = \int_{b}^{\infty}\left( 1 - \frac{b}{a}\right) \cdot f(a) \cdot da.
\label{eq:model1:proof_perfbound:single_round_givenrt}
\end{align} 
Notice that Eq.~\eqref{eq:model1:proof_perfbound:single_round_givenrt} is identical to the curve $g^f$ from Eq.~\eqref{eq:model1:thm:perfbound}, i.e., the point $(E[q(t)|\rho(t)=b], E[U(t)|\rho(t)=b])$ lies on the curve $g^f$. Indeed, as we vary $b \in \mathbb{R}^+$, this point traces the entirety of $g^f$.

\if 0

\begin{figure}[t]
     \centering
    \includegraphics[width=0.8\columnwidth]{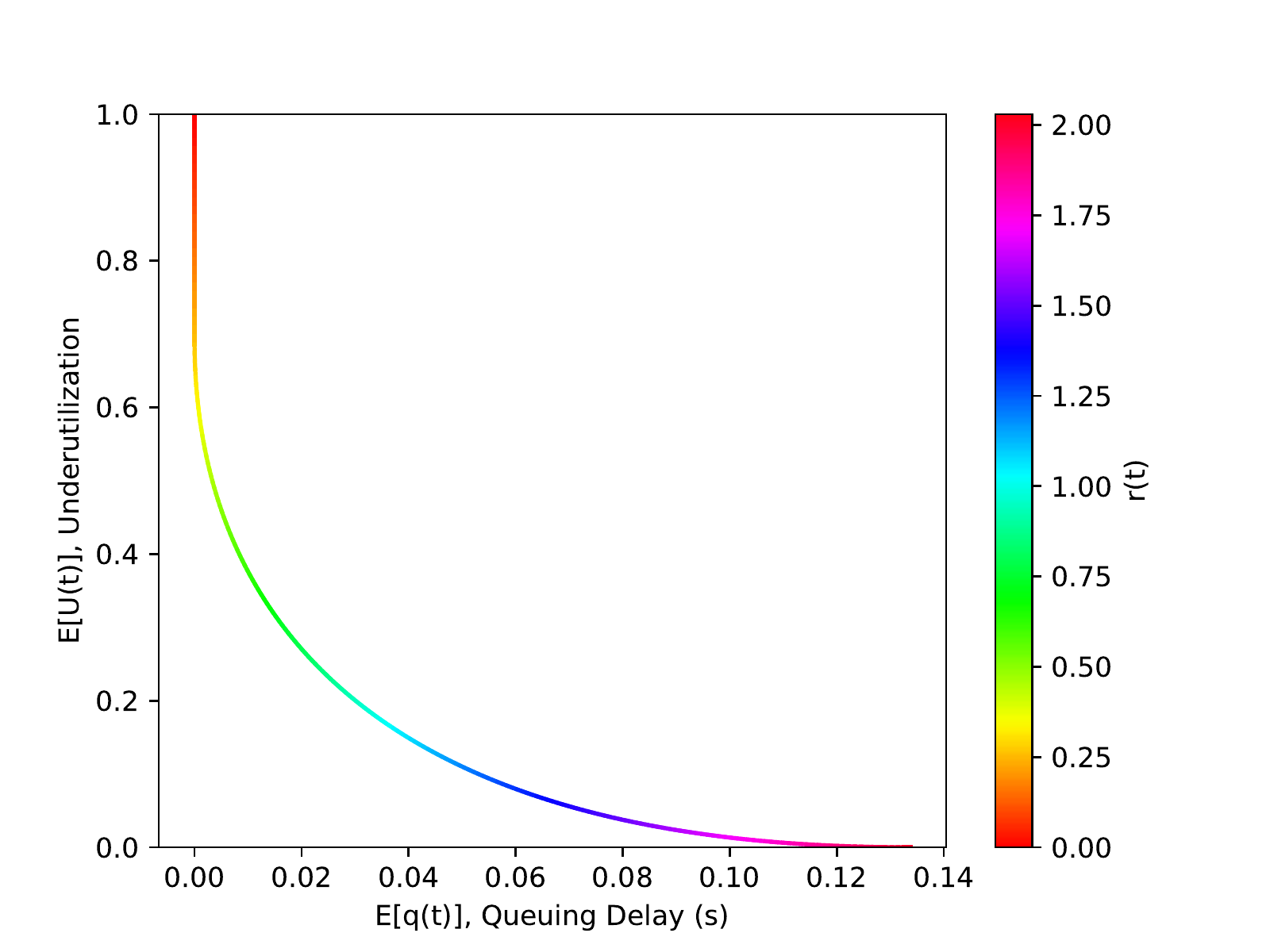}
    \vspace{-4mm}
    \caption{\small Trade-off between conditional expectation in a given time step t. $X \sim \textit{U}(0.27, 2) $, $T = 0.1$s. The color of the line corresponds to different values of $\rho=y$.}
    \label{fig:roundt_uniform}
    \vspace{-4.5mm}
 \end{figure}
\Fig{roundt_uniform} shows this trade-off curve for same distribution as \Fig{feasible_uniform}. 
\fi

Let the PDF of $\rho$ be $f_{r_t}$ in time step $t$. This distribution depends on the congestion control strategy and can be different for different time steps.  We have:
\begin{align}
    E[q(t)] &= E_{\rho(t)}[E[q(t)|\rho(t)]] = \int_{b=0}^{\infty}E[q(t)|\rho(t)=b]\cdot f_{r_t}(b) \cdot db,\nonumber\\
    E[U(t)] &= E_{\rho(t)}[E[U(t)|\rho(t)]] = \int_{b=0}^{\infty}E[U(t)|\rho=b] \cdot f_{r_t}(b) \cdot db.
\label{eq:model1:single_round}
\end{align}

Eq.~\eqref{eq:model1:single_round} shows that the point $(E[q(t)], E[U(t)])$ is a weighted average of the points $(E[q(t)|\rho(t)=b], E[U(t)|\rho(t)=b])$ for different values of $b$, all of which lie on the curve $g^f$. The actual value of $(E[q(t)], E[U(t)])$ depends on the distribution $f_{r_t}$, which determines the weight given to each point on $g^f$. But regardless of $f_{r_t}$, the point $(E[q(t)], E[U(t)])$ will lie in the set $\mathbb{C}^f$ since $g^f$ is convex.


Finally, let us consider the expected queuing delay and underutilization over multiple time steps: 
\begin{align}
    E[\bar{q}(t)] &= E\left[\frac{\sum_{i=1}^{t}q(i)}{t}\right] = \frac{\sum_{i=1}^{t}E[q(i)]}{t}, & E[\bar{U}(t)] &= E\left[\frac{\sum_{i=1}^{t}U(i)}{t}\right] = \frac{\sum_{i=1}^{t}E[U(i)]}{t}.
\end{align}
The point $(E[\bar{q}(t)], E[\bar{U}(t)])$ is the average of the points $(E[q(i)], E[U(i)])$ for $1 \leq i \leq t$. Since all these points belong to the convex set $\mathbb{C}^f$, their average must also lie in the set $\mathbb{C}^f$. This completes the proof. \qed

 \subsection{Is the Performance Bound Achievable?}
 \label{ss:model1:achievable_bound}
The analysis in the previous section provides insight into how to achieve a particular point on performance bound. The key observation is that we need to maintain the same ``load'' $\rho(t)$ in every time step. Recall that $\rho(t)$ determines the operating point $(E[q(t)], E[U(t)])$ achieved on the curve $g^f$ in each time step. If $\rho(t)$ changes from round to round, then over multiple time steps we will end up at a sub-optimal point in the interior of $\mathbb{C}^f$. However, if the congestion control algorithm can keep $\rho(t)$ constant over time, it can achieve an optimal tradeoff between underutilization and queuing delay.

Setting $\rho(t) = C$ (for some constant $C$) in Eq.~\eqref{eq:model1:proof_perfbound:rt} suggests the following control law: 
\begin{align}
    s(t) &= C \cdot \mu(t-1) - \frac{Q(t-1)}{T}, \quad \forall t \in \mathbb{N}.
    \label{eq:simple-rule}
\end{align}
The first term selects a sending rate proportional to the current link capacity $\mu(t-1)$. This is an intuitive choice since the next capacity $\mu(t)$ is also proportional to $\mu(t-1)$ (albeit with an unknown multiplicative factor). The second term deducts a rate based on the current queue backlog from the sending rate. This deduction corresponds to the rate necessary to drain the current queue backlog within one time step. The parameter $C$ controls the tradeoff between underutilization and queuing delay. As $C$ increases, the congestion control sends more aggressively, favoring high utilization at the expense of queuing delay. This corresponds to moving to the right on the curve $g^f$ (see Fig.~\ref{fig:model1:feasible_uniform}).

However, it may not always be possible to strictly follow the above rule. In Eq.~\eqref{eq:simple-rule}, $s(t)$ can become negative, but of course the sending rate must be non-negative. For example, it might so happen by chance that the link capacity drops rapidly over a number of consecutive time steps, leading to a large queue build up. If in any time step $t$, $Q(t-1)$ exceeds $T \cdot C \cdot \mu(t-1)$, then the sender will have to pick a negative sending rate to follow $\rho(t) = C$, which is not possible. 

This argument shows that depending on the distribution $f$ of the link variation and the value $C$, it may not be possible to achieve some  points on the curve $g^f$. This is more likely for high values of $C$, which makes the controller more aggressive, and for distributions $f$ where the link capacity can drop significantly each time step. The following proposition gives sufficient necessary conditions on $C$ such that the sender can always follow the rule in Eq.~\eqref{eq:simple-rule} and enforce $\rho(t)=C$.

\begin{proposition}
In the MIF model, it is possible to follow the strategy $\rho(t) = C$,  $\forall t \in \mathbb{N}$ iff
\begin{align}
    C &\leq \frac{X_{min}}{1 - X_{min}},
    \label{eq:prop:model1:achievable_bound}
\end{align}
where $X_{min}$ is the minimum value of the random variable $X$ ($P(X<X_{min}) = 0$). 
\label{prop:model1:achievable_bound}
\end{proposition}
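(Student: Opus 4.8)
The plan is to turn the question ``can the sender maintain $\rho(t)=C$ forever?'' into a single inequality on each multiplicative factor $X_t$, solve that inequality for $C$, and then optimize over the support of $X$.

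\textbf{Step 1 (forced dynamics).} If the sender insists on $\rho(t)=C$ for every $t$, then Eq.~\eqref{eq:model1:proof_perfbound:rt} forces $s(t)=C\mu(t-1)-Q(t-1)/T$. Since $Q(0)=0$ we get $s(1)=C\mu(0)\ge 0$, and substituting the forced $s(t)$ into Eq.~\eqref{eq:setup:queue_size_def} while using $Q(t-1)+s(t)\,T = C\mu(t-1)\,T$ gives the clean recursion
\[
\frac{Q(t)}{T} \;=\; \bigl(C\mu(t-1)-\mu(t)\bigr)^{+} \;=\; \mu(t-1)\,(C-X_t)^{+}.
\]
Plugging this back into $s(t+1)=C\mu(t)-Q(t)/T$ and using $\mu(t)=\mu(t-1)X_t$ yields $s(t+1)=\mu(t-1)\bigl[C X_t-(C-X_t)^{+}\bigr]$. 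Hence the strategy is realizable (all $s(t)\ge 0$) if and only if $C X_t \ge (C-X_t)^{+}$ for every realized value of $X_t$.

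\textbf{Step 2 (solve the inequality).} If $X_t\ge C$ the inequality is trivial; if $X_t<C$ it reads $CX_t\ge C-X_t$, i.e.\ $X_t\ge C/(C+1)$. Since $C/(C+1)<C$, the two cases merge into the single requirement $X_t\ge C/(C+1)$, equivalently $C\le X_t/(1-X_t)$. Because $x\mapsto x/(1-x)$ is increasing on $(0,1)$, imposing this for all $X_t$ in the support of $X$ is equivalent to imposing it at the smallest attained threshold $X_{min}$, giving $C\le X_{min}/(1-X_{min})$. For the ``if'' direction, this condition makes $C X_t-(C-X_t)^{+}\ge 0$ deterministically, so a one-line induction on the recursion above shows $s(t)\ge 0$ for all $t$. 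For the ``only if'' direction, if $C>X_{min}/(1-X_{min})$ then $C/(C+1)>X_{min}$, so by the definition of $X_{min}$ we have $P\!\left(X_1<C/(C+1)\right)>0$; on that event the forced value $s(2)=\mu(0)\bigl[X_1(C+1)-C\bigr]$ is negative, so $\rho(t)=C$ cannot be maintained with probability one.

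The step to be careful about is the converse: one must check that $C/(C+1)>X_{min}$ really produces a positive-probability violation rather than a null event --- this is exactly what $P(X<X_{min})=0$, together with $X_{min}$ being the \emph{largest} threshold with this property, buys us --- and that the $(\cdot)^{+}$ truncations in the queue recursion do not disturb the invariant $Q(t)/T=\mu(t-1)(C-X_t)^{+}$ used in the induction. I would also briefly remark on the degenerate regime $X_{min}\ge 1$ (the link never shrinks), where no constraint on $C$ arises and the bound should be read as $X_{min}/(1-X_{min})=+\infty$.
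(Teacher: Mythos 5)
Your proposal is correct and follows essentially the same route as the paper's proof: both compute the queue forced by $\rho(t)=C$ (the paper as the inductive invariant $Q(t)\le T\,C\,\mu(t)$, you as the exact identity $Q(t)/T=\mu(t-1)(C-X_t)^{+}$), reduce nonnegativity of the next sending rate to the threshold condition $X_t\ge C/(C+1)$, and obtain the converse from the positive-probability event $X_t<C/(C+1)$ when $C>X_{min}/(1-X_{min})$. Your explicit handling of the essential-infimum role of $X_{min}$ and the degenerate case $X_{min}\ge 1$ is a small but welcome sharpening of the paper's terser converse.
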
 
\begin{proof}
Please see Appendix~\ref{app:model1:prop}
\end{proof}

\if 0
\begin{proof}

\ma{I think this proof could go into the appendix. This proposition isn't central to the story.} \pg{Will move.}

We will prove this by induction. We will show that if the above condition holds then with the policy $\rho(t) = C$, $Q(t)  \leq T \cdot C \cdot \mu(t)$ $\forall t \in \mathbb{N}$.  

The base case holds because $Q(0) = 0$. Lets assume that $Q(t - 1) 
 \leq T \cdot C \cdot \mu(t-1)$, then we can pick a non negative $s(t)$, such that $\rho(t) = C$. Then in the next round $t$,
 \begin{align}
     Q(t) &= (Q(t-1) + s(t) \cdot T- \mu(t) \cdot T)^+,\nonumber\\
          &= (T \cdot C \cdot \mu(t - 1) - T \cdot \mu(t))^+,\nonumber\\
          &=  T \cdot C \cdot \mu(t) \cdot (\frac{1}{X_t} - \frac{1}{C})^+, \nonumber\\
          &\leq T \cdot C \cdot \mu(t) \cdot (\frac{1}{X_{min}} - \frac{1}{C})^+,\nonumber\\
          &\leq T \cdot \mu(t) \cdot C,
 \end{align}
 where in the last step we used Eq.~\eqref{eq:prop:model1:achievable_bound}.
\end{proof}
\fi

This analysis shows us that the performance bound need not be tight. For any $f(\cdot)$, the points $(x, g^f(x))$ where $E[q(t)|\rho(t)=C] = x$ but $C$ does not meet the condition above might not be achievable. 

\subsection{Optimal Control Law}
\label{ss:model1:mdp}

What happens when the above condition is not met and $Q(t-1)$ can exceed $T \cdot C \cdot \mu(t-1)$? A natural question to ask is: in the event of excessive queuing, should the sender just pick the minimum value for $\rho(t)$ such that $s(t)$ is non-negative? This corresponds to $\rho(t) = \max(C, \frac{q(t-1)}{T})$, which implies a control law of the following form:
\begin{align}
    s(t) &= \left(C \cdot \mu(t-1) - \frac{Q(t-1)}{T}\right)^+, \quad \forall t \in \mathbb{N}.
\end{align}

We now show that the above rule can indeed achieve an optimal tradeoff between underutilization and queuing delay. To this end, we formulate the congestion control task as a Markov Decision Process~\cite{sutton2018reinforcement} (MDP) and show that the above rule is an optimal policy for this MDP. 

The MDP is defined as follows. The {\em state} at time step $t$ is given by $(q(t-1), \mu(t-1))$. The congestion control ``agent'' observes this state and selects an {\em action}\,---\,the sending rate $s_t \geq 0$. The state then transitions to the next state $(q(t), \mu(t))$, and the agent incurs a weighted {\em cost} $(w\cdot q(t) + U(t))$ for this time step, depending on the relative importance of low queuing delay versus low underutilization. To see that this is an MDP, recall that the transitions occur according to:
\begin{align}
    \mu(t)&=\mu(t-1)\cdot X_t \quad \text{(per the MIF model)}, \nonumber\\
    q(t)&=T\cdot(\rho(t)/X_t-1)^+ \quad \text{(per Eq.~\eqref{eq:model1:proof_perfbound:qU_rt})}, \nonumber
\end{align} 
Therefore the next state is determined by the current state and the congestion control action, since $\rho(t) =  \frac{q(t-1)\mu(t-1)+T\cdot s(t)}{T\cdot \mu(t-1)}$  (Eq.~\eqref{eq:model1:proof_perfbound:rt}).

The agent uses a control policy $\pi$ mapping the current state to a probability distribution over possible actions (i.e., sending rates $s(t)\geq 0$). The policy can be deterministic or stochastic. The goal of the agent is to minimize the following objective function over all policies:
\begin{align}
     J(\pi) = E\left[\sum_{t=1}^{\infty}\gamma^{t-1} \Big( w \cdot q(t) + U(t)] \Big)\right].
\label{eq:model1:mdp:objective}
\end{align}
Here we use a standard discounted cost formulation, with discount factor $\gamma \in (0,1)$.

\begin{theorem}
    In the MDP defined above, the optimal control policy that minimizes $J(\pi)$ takes the form:  
    \begin{align}
        s(t) = \left( C(w, \gamma) \cdot \mu(t-1) - \frac{Q(t-1)}{T} \right)^+,  
        \label{eq:thm:model1:mdp}
    \end{align}
where $C(w,\gamma)$ is a constant. 
    \label{thm:model1:mdp}
\end{theorem}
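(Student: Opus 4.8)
The plan is to reparametrize the control variable, collapse the state to a single scalar, run value iteration, and use convexity to read off the form of the optimal policy.

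\smallskip
\noindent\emph{Step 1 (reduce to a one-dimensional MDP in $\rho$).} Fix a state $(q(t-1),\mu(t-1))$. By Eq.~\eqref{eq:model1:proof_perfbound:rt}, choosing a sending rate $s(t)\ge 0$ is the same as choosing a ``load'' $\rho(t)=q(t-1)/T+s(t)/\mu(t-1)$, and as $s(t)$ ranges over $[0,\infty)$ the reachable values of $\rho(t)$ form the set $[\,q(t-1)/T,\infty)$, which does \emph{not} depend on $\mu(t-1)$. By Eq.~\eqref{eq:model1:proof_perfbound:qU_rt}, the one-step cost $w\,q(t)+U(t)=w\,T\,(\rho(t)/X_t-1)^+ + (1-\rho(t)/X_t)^+$ and the next queuing delay $q(t)=T\,(\rho(t)/X_t-1)^+$ depend on the history only through $\rho(t)$ and the fresh draw $X_t$, while $\mu(t)=\mu(t-1)X_t$ evolves autonomously and never enters the cost. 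Hence the optimal value function cannot depend on $\mu$, and the problem is equivalent to a scalar MDP with state $q\ge 0$, action $\rho\in[\,q/T,\infty)$, expected one-step cost $h(\rho)=E_X[\,w\,T\,(\rho/X-1)^+ + (1-\rho/X)^+\,]$, and transition $q\mapsto T\,(\rho/X-1)^+$. Assuming $E[1/X]<\infty$ (so expected delays are finite), $h$ is finite, convex (for each fixed $X$, both $(\rho/X-1)^+$ and $(1-\rho/X)^+$ are piecewise-linear convex in $\rho$, and $E_X$ preserves convexity), and coercive, $h(\rho)\to\infty$ as $\rho\to\infty$.

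\smallskip
\noindent\emph{Step 2 (value iteration preserves convexity and monotonicity).} Let $\mathcal{B}$ be the Bellman operator, $(\mathcal{B}V)(q)=\min_{\rho\ge q/T}\{\,h(\rho)+\gamma\,E_X[\,V(T(\rho/X-1)^+)\,]\,\}$, and put $V_0\equiv 0$, $V_{n+1}=\mathcal{B}V_n$. I would show by induction that each $V_n$ is non-negative, convex, and non-decreasing on $[0,\infty)$. For the inductive step: $\rho\mapsto T(\rho/X-1)^+$ is convex and non-decreasing, so its composition with the convex non-decreasing $V_n$ is convex and non-decreasing, and $E_X$ preserves this; adding the convex $h$ shows that $\Phi_n(\rho):=h(\rho)+\gamma\,E_X[V_n(T(\rho/X-1)^+)]$ is convex and coercive, hence attains its minimum at some $C_n\in[0,\infty)$. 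Because $\Phi_n$ depends on $q$ only through the constraint $\rho\ge q/T$, convexity of $\Phi_n$ gives $V_{n+1}(q)=\Phi_n(\max(C_n,q/T))$; this is constant (equal to $\Phi_n(C_n)$) on $[0,TC_n]$ and equals the convex non-decreasing map $\Phi_n(q/T)$ on $[TC_n,\infty)$, the two pieces meeting continuously with a non-negative jump in slope, so $V_{n+1}$ is convex and non-decreasing. Since one-step costs are non-negative and $\mathcal{B}$ is monotone, $V_n\uparrow V$ pointwise; $V$ is finite (bounded above by the discounted cost of any fixed finite-cost policy), inherits convexity and monotonicity, and satisfies $V=\mathcal{B}V$, i.e., it is the optimal value function.

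\smallskip
\noindent\emph{Step 3 (read off the policy).} Set $\Phi(\rho):=h(\rho)+\gamma\,E_X[V(T(\rho/X-1)^+)]$; by Step 2 it is convex and coercive, so it has a minimizer $C=C(w,\gamma)\in[0,\infty)$ (depending also on the distribution of $X$). The minimizer of a convex function over $[\,q/T,\infty)$ is $\max(C,q/T)$, so an optimal action in state $q$ is $\rho^\star(q)=\max(C,\,q/T)$. Unwinding the change of variables, $s(t)=\mu(t-1)\big(\rho^\star(q(t-1))-q(t-1)/T\big)=\mu(t-1)\big(C-q(t-1)/T\big)^+$, and since $Q(t-1)=q(t-1)\mu(t-1)$ this equals $\big(C(w,\gamma)\,\mu(t-1)-Q(t-1)/T\big)^+$, the claimed form.

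\smallskip
\noindent\emph{Main obstacle.} The crux is Step 2. Since the per-step cost is unbounded in the action, the standard Banach-contraction route to the value function does not apply directly; the optimal value function has to be obtained as the monotone limit of value iteration (non-negative-cost / ``negative'' dynamic programming), and one must verify carefully that $\mathcal{B}$ propagates convexity and monotonicity through both the $(\cdot)^+$ nonlinearity and the state-dependent action constraint, and that the limit is finite and solves the Bellman optimality equation. Once convexity of the value function is in hand, the reparametrization, coercivity, and the $\max(C,q/T)$ form of the constrained minimizer are routine.
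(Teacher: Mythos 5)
Your proposal is correct and follows essentially the same route as the paper's proof: reparametrize the action as the load $\rho$, show by a finite-horizon/value-iteration induction that the value function is independent of $\mu$, convex and non-decreasing in $q$, establish convexity of the one-step helper function $W(\rho)$ (your $\Phi$), and conclude the optimal policy is $\rho(t)=\max(C,q(t-1)/T)$, i.e., $s(t)=\left(C(w,\gamma)\mu(t-1)-Q(t-1)/T\right)^+$. If anything, your treatment of finiteness and of the passage to the infinite horizon (monotone convergence of value iteration and verification of the Bellman equation) is more careful than the paper's, which simply restricts attention to distributions $f$ for which $V$ exists and asserts the limit.
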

\begin{proof}
Define the optimal value function~\cite{sutton2018reinforcement} for our MDP as:
\begin{align}
    V(q,\mu) = \min_\pi E\left[\sum_{t=1}^{\infty}\gamma^{t-1} \Big( w \cdot q(t) + U(t)] \Big) \;\middle|\; q(0) = q, \mu(0) = \mu, s(t) \sim \pi(\cdot|q(t-1),\mu(t-1))\right]. 
\end{align}
This is the minimum possible discounted expected sum of costs, starting from initial state $(q,\mu)$ and following policy $\pi$, over all policies $\pi$.
\footnote{We only consider $f(\cdot)$ for which $V(q,\mu)$ exists.} 

The optimal value function satisfies the following Bellman Equation:
\begin{align}
    V(q,\mu) = \min_{s(1) \geq 0} E \left[ w\cdot q(1) + U(1) + \gamma V(q(1), \mu(1)) \;\middle|\; q(0) = q, \mu(0) = \mu \right]. 
\end{align}
The condition $s(1) \geq 0$ is equivalent to $\rho(1) \geq q/T$. Rewriting,  
\begin{align}
    V(q,\mu) = \min_{\rho(1) \geq \frac{q}{T}} E \left[ w\cdot q(1) + U(1) + \gamma V(q(1), \mu(1)) \;\middle|\; q(0) = q, \mu(0) = \mu \right]. 
\end{align}

\begin{lemma}
$V(q,\mu)$ is a convex function of $q$ and does not depend on $\mu$. 
\label{lemma:model1:combined_mdp_lemma}
\end{lemma}
\begin{proof}
See Appendix~\ref{app:model1:combined_mdp_lemma}.
\end{proof}
Since $V$ does not depend on $\mu$, henceforth we will write the optimal value function as $V(q)$. To complete the proof, we exploit the fact that $q(t)$ and $U(t)$ only depend on the value of $\rho(t)$ and define a helper function $W(\cdot)$ as follows,
\begin{align}
    W(\rho) &= E\left[ w \cdot q(1) + U(1) + \gamma \cdot V(q(1)) \;\middle|\; \rho(1) = \rho\right], & V(q) &= \min_{\rho \geq \frac{q}{T}} (W(\rho)).
    \label{eq:model1:mdp_helper_def}
\end{align}
$W(\rho)$ is the minimum expected cost when we restrict the first action to $\rho(1) = \rho$ and then act optimally from that step onward.  

\begin{lemma}
    $W(\rho)$ is a convex function in $\rho$.
\end{lemma}
\begin{proof}
The proof directly follows from proof of Sub lemma~\ref{sublemma:model1_helper} in Appendix~\ref{app:model1:combined_mdp_lemma}.
\end{proof}

Let the minimum value of $W(\cdot)$ occur at $\rho^*$. Then, the convexity of $W(\cdot)$ combined with Eq.~\eqref{eq:model1:mdp_helper_def} establishes that the optimal control law is of the form $\rho(t) = \max(\rho^*, q(t-1)/T)$. To see this, note that the minimizer of $W(\rho)$ in Eq.~\eqref{eq:model1:mdp_helper_def} is $\rho = q/T$ if $\rho^* < q/T$, and $\rho = \rho^*$ otherwise. Furthermore, note that $\rho^*$ depends on $\gamma$ and $w$. Defining $C(w, \gamma) = \rho^*$ and using Eq.~\eqref{eq:model1:proof_perfbound:rt} we get the optimal control law from the Theorem. 

\end{proof}

\subsection{Extension: Incorporating Prediction for Link Capacity}
\label{s:model2}
\label{ss:model2:model}
It is natural to ask what if the sender could predict the link capacity with some degree of accuracy? 
Like the MIF model, does there exist a fundamental performance bound? How can the sender use the prediction and what is the optimal control law? To answer these questions, we will now extend the MIF model to include prediction. We refer to this new model as Prediction-based Multiplicative I.I.D. Factors (PMIF) model. In this model, at the start of every time step, the sender has access to a prediction of the link capacity (e.g., based on past behavior or underlying link information such as signal strength). We denote the link capacity prediction for time step $t$ by $Pred(t-1)$ .

\if 0
\pg{Do we need this? We should just consider this as causal.}
\begin{definition}{\textit{Predictive Congestion Control protocols}---} A predictive congestion control protocol decides the sending rate in time step $t$ based on both observations in earlier time steps along with the prediction for the link capacity. I.e., $s(t)$ is dependent on $\cup_{i=0}^{t-1}\{\mu(i)\}$, $\cup_{i=0}^{t-1}\{U(i)\}$, $\cup_{i=0}^{t-1}\{q(i)\}$, and $Pred(t-1)$.
\end{definition}
\fi 

In the PMIF model, the predicted link capacity can deviate from the true link capacity. We use random variable $X^p$ to represent how far is the prediction relative to the true link capacity. In other words, the random variable $X^p$ is a measure of uncertainty in the true link capacity given the prediction. 
Formally, the uncertainty in the link capacity is governed by,
\begin{equation}
    \mu(t) = Pred(t - 1) \cdot X_{t}^p,
    \label{eq:model2:model}
\end{equation}
where  $X_{i}^p$ is an I.I.D random variable with $f^p(\cdot)$ as its probability density function (PDF) and $P(X_{i}^p <= 0) = 0$. 



Note that, we do not attempt to come up with a predictor ourselves; rather, we only analyse the impact of prediction error on the design of congestion control for time-varying links.

\subsubsection{Performance Bound}
\label{ss:model2:performance_bound}

\begin{corollary}

In the PMIF model, for any causal congestion control protocol, for any $t \in \mathbb{N}$, the point ($E[\bar{q}(t)]$, $E[\bar{U}(t)]$) always lies in a convex set $\mathbb{C}^{f^p} \triangleq \{(x,y) | y>=g^{f^p}(x)\}$, where $g^{f^p}(\cdot)$ is defined by Eq.~\eqref{eq:model1:thm:perfbound}.
\label{cor:model2:perfbound}
\end{corollary}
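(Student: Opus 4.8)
The plan is to reduce Corollary~\ref{cor:model2:perfbound} to Theorem~\ref{thm:model1:perfbound} by observing that the PMIF model has exactly the same algebraic structure as the MIF model, with the predicted capacity $Pred(t-1)$ playing the role that $\mu(t-1)$ played there. Concretely, I would define the ``prediction-relative load'' analogous to $\rho(t)$ from Eq.~\eqref{eq:model1:proof_perfbound:rt}:
\begin{align}
  \rho^p(t) = \frac{\frac{Q(t-1)}{T} + s(t)}{Pred(t-1)}. \nonumber
\end{align}
Substituting $\mu(t) = Pred(t-1)\cdot X^p_t$ into the definitions of $q(t)$ and $U(t)$ (Eqs.~\eqref{eq:setup:queue_delay_def} and \eqref{eq:setup:undeutilization_def}), exactly as in the MIF proof, yields
\begin{align}
  q(t) = T\cdot\left(\frac{\rho^p(t)}{X^p_t} - 1\right)^+, \qquad U(t) = \left(1 - \frac{\rho^p(t)}{X^p_t}\right)^+, \nonumber
\end{align}
which is Eq.~\eqref{eq:model1:proof_perfbound:qU_rt} with $X^p_t$ in place of $X_t$.

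The key point that makes this work is the independence of $\rho^p(t)$ and $X^p_t$. Here I would invoke the causality assumption: $s(t)$, $Q(t-1)$, and $Pred(t-1)$ are all functions of information available at the start of time step $t$ (past capacities, past queues, and the current prediction), while $X^p_t$ governs the deviation of $\mu(t)$ from $Pred(t-1)$ and is I.I.D.\ and independent of everything observed through the start of step $t$. Hence $\rho^p(t) \perp X^p_t$, mirroring the $\rho(t)\perp X_t$ fact used in Theorem~\ref{thm:model1:perfbound}. One subtlety worth stating explicitly: the prediction process itself (governed by $X^{pred}_t$ in the PMIF description) only affects the \emph{distribution} of $\rho^p(t)$ through $Pred(t-1)$; it does not couple $\rho^p(t)$ to $X^p_t$, so the argument is unaffected by how $Pred$ evolves.

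Given these two facts — the same functional form for $(q(t), U(t))$ in terms of $(\rho^p(t), X^p_t)$, and the independence $\rho^p(t)\perp X^p_t$ — the remainder of the proof is verbatim the proof of Theorem~\ref{thm:model1:perfbound}: condition on $\rho^p(t) = b$ to get that $(E[q(t)\mid \rho^p(t)=b], E[U(t)\mid \rho^p(t)=b])$ lies on the curve $g^{f^p}$ from Eq.~\eqref{eq:model1:thm:perfbound} (with $f$ replaced by $f^p$); use Lemma~\ref{lemma:model1:convexity_of_bound} to conclude $g^{f^p}$ is convex and $\mathbb{C}^{f^p}$ is a convex set; take expectations over $\rho^p(t)$ to place $(E[q(t)], E[U(t)])$ in $\mathbb{C}^{f^p}$; and finally average over $t$ time steps, using convexity once more, to place $(E[\bar q(t)], E[\bar U(t)])$ in $\mathbb{C}^{f^p}$.

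I do not expect a genuine obstacle here — the corollary is deliberately a ``plug-in'' consequence of the MIF analysis. The only thing requiring care is the bookkeeping around independence: one must be precise that $Pred(t-1)$ is part of the sender's information set (so that $\rho^p(t)$ is measurable with respect to it) while $X^p_t$ is not, and that the distribution of $X^p_t$ does not depend on the realized $Pred(t-1)$ (the I.I.D.\ assumption on $X^p$). Once that is pinned down, invoking Theorem~\ref{thm:model1:perfbound}'s argument with $(Pred(t-1), X^p_t, \rho^p(t), f^p)$ substituted for $(\mu(t-1), X_t, \rho(t), f)$ completes the proof, so I would write it as a short argument that sets up the substitution and then says ``the rest is identical to the proof of Theorem~\ref{thm:model1:perfbound}.''
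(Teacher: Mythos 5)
Your proposal is correct and follows essentially the same route as the paper: define the prediction-relative load $\rho^p(t)$, rewrite $q(t)$ and $U(t)$ in the form of Eq.~\eqref{eq:model1:proof_perfbound:qU_rt} with $X^p_t$ in place of $X_t$, and then reuse the conditioning, convexity, and averaging steps of Theorem~\ref{thm:model1:perfbound} verbatim. Your explicit remarks on the independence of $\rho^p(t)$ and $X^p_t$ (and on the irrelevance of the $X^{pred}_t$ dynamics to this argument) are a slightly more careful statement of what the paper leaves implicit, but the substance is identical.
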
 

The performance bound is dependent on the characteristics of the random variable $X^p$ and consequently the accuracy of prediction. The better the prediction, the better the performance bound.

\begin{proof}
To prove this corollary, we will use the techniques from the proof of Theorem ~\ref{thm:model1:perfbound}. Using $\mu(t) = Pred(t-1) \cdot X^p_t$, we can rewrite $q(t)$ and $U(t)$ as follows,

\begin{align}
    q(t) &= T \cdot \left(\frac{\frac{Q(t-1)}{T} + s(t)}{Pred(t - 1) \cdot X_{t}^p} - 1 \right)^+, &  U(t) = \left( 1 - \frac{\frac{Q(t-1)}{T} + s(t)}{Pred(t - 1) \cdot X_{t}^p}\right)^+.\nonumber\\
    \label{eq:prediction_system}
\end{align}

Similar to $\rho(t)$ from Eq.~\eqref{eq:model1:proof_perfbound:rt}, we define a new quantity $\rho^p(t)$ as follows,
\begin{align}
    \rho^p(t) &=  \frac{\frac{Q(t-1)}{T} + s(t)}{Pred(t - 1)}.
    \label{eq:prediction_rt}
\end{align}

Intuitively, $\rho^p(t)$ can be thought of as the load on link in time step $t$ relative to the prediction of the link capacity ($Pred(t-1)$). We can rewrite $q(t)$ and $U(t)$ in terms of $\rho^p(t)$,

\begin{align}
    q(t) &= T \cdot \left(\frac{\rho^p(t)}{X_{t}^p} - 1 \right)^+, &U(t) &= \left(1 - \frac{\rho^p(t)}{X_{t}^p}\right)^+.
    \label{eq:prediction_qU_rt}
\end{align}

Notice that the equation above is analogous to Eq.~\eqref{eq:model1:proof_perfbound:qU_rt}. Thus, we can apply the same techniques as in the proof for Theorem ~\ref{thm:model1:perfbound} to show that the point $(E[q(t)|\rho^p(t)=b], E[U(t)|\rho^p(t)=b])$ lies on the curve $g^{f^p}(\cdot)$. Both the points $(E[q(t)], E[U(t)])$ and $(E[\bar{q}(t)], E[\bar{U}(t)])$ lie in the convex set $\mathbb{C}^{f^p}$. 
\end{proof}

\if 0
\subsection{Is the Performance Bound Achievable?}
\label{ss:model2:achievable_bound}

The analysis in the previous section shows us that we can achieve a particular point $(x, g^{f^p}(x))$ on the performance bound 
by following a strategy of the form,
\begin{align}
    \rho^p(t) &= C^p, &
    s(t) &= C^p \cdot Pred(t-1) - \frac{Q(t-1)}{T}, & \forall t \in \mathbb{N},
    \label{eq:model2:ab_guess_control}
\end{align}
where $E[q(t)|\rho^p(t)=C^p] = x$.

The key is keeping the link load relative to the prediction of the link capacity the same in every time step, regardless of the state of the system. Again, following this strategy might not be possible for all values of $C^p$. In the event of excessive queuing ($Q(t-1) > T \cdot C^p \cdot Pred(t-1)$), for $\rho^p(t)= C^p$, the sender needs to pick a negative value for $s(t)$ (from Eq. \eqref{eq:prediction_rt}) which is not possible.
We will now establish the necessary conditions under which the sender can follow the strategy $\rho^p(t) = C^p$ exactly. 

\begin{proposition}
In the PMIF model, it is possible to follow the strategy $\rho^{p}(t) = C^p$, $\forall t \in \mathbb{N}$ iff
\begin{align}
    C^p \cdot \left(1 - \frac{Pred(t)}{Pred(t-1)}\right) &\leq  X^p_{min}, & \forall t \in \mathbb{N},
    \label{eq:prop:model2:achievable_bound}
\end{align}
\label{prop:model2:achievable_bound}
\end{proposition} 
where $X^{p}_{min}$ is the minimum value of $X^p$ ($P(X^{p}<X^{p}_{min}) = 0$)
\begin{proof}
 Please see Appendix~\ref{app:model2:achievable bound} for the proof.
\end{proof}

Interestingly, this condition depends on the ratio of $Pred(t)$ and $Pred(t-1)$. 
The lower the ratio, the more stringent the constraint on $C^p$. 
If the prediction for the link capacity decreases, then it is more likely for the link capacity to decrease enough for the sender to build up an excessive queue. Again, this condition establishes that the performance bound might not be tight. 

\ma{I suggest cutting this subsection entirely. It's pretty repetitive with the previous model, and the result is a bit weird. Earlier the condition was crisply defined in terms of the input distributions of the problem. Here it depends on a time-varying ratio of the predictions. It doesn't seem that useful. You could probably improve it by replacing the time-varying ratio with the max of the ratio over time. But still think it's better to just cut it and come to the optimal control law more quickly. You could start the next section simply as follows}

\fi

\subsubsection{Optimal Control Law\\}
\label{ss:model2:mdp}
The analysis in the previous section shows us that we can achieve a particular point $(x, g^{f^p}(x))$ on the performance bound by following a strategy of the form,
\begin{align}
    \rho^p(t) &= C^p, &
    s(t) &= C^p \cdot Pred(t-1) - \frac{Q(t-1)}{T}, & \forall t \in \mathbb{N},
    \label{eq:model2:ab_guess_control}
\end{align}
where $E[q(t)|\rho^p(t)=C^p] = x$. The key is keeping the link load relative to the prediction of the link capacity the same in every time step, regardless of the state of the system. However, similar to \S\ref{ss:model1:achievable_bound}, following this strategy might not be possible for all values of $C^p$. In the event of excessive queuing ($Q(t-1) > T \cdot C^p \cdot Pred(t-1)$), for $\rho^p(t)= C^p$, the sender needs to pick a negative value for $s(t)$ (from Eq. \eqref{eq:prediction_rt}) which is not possible. Is simply following $s(t) = (C^p \cdot Pred(t-1) - \frac{Q(t-1)}{T})^+$ optimal in just situations?


Similar to the MIF model, we can formulate the congestion control task as a MDP and show that the above control law is optimal. The key change in the MDP from the MIF model is that the state in time step $t$ is now given by ($q(t-1), \mu(t-1), Pred(t-1)$). The optimal control law for the MDP depends on the dynamics of the predictions. 
Here, we only derive the exact optimal control law for a simple model governing the evolution of $Pred$. In this MDP, the evolution of $Pred$ is analogous to evolution of $\mu$ from the MIF model. Formally,
\begin{align}
    Pred(t) = Pred(t-1) \cdot X^{pred}_{t},    
\end{align}
where  $X_{i}^{pred}$ is an I.I.D random variable with $f^{pred}$ is its probability density function (PDF) and $P(X_{i}^{pred} \leq 0) = 0$. Other state transitions are governed by,
\begin{align}
    \mu(t)&=\mu(t-1)\cdot X^p_t \quad \text{(per the PMIF model)}, \nonumber\\
    q(t)&=T\cdot(\rho^p(t)/X^p_t-1)^+ \quad \text{(per Eq.~\eqref{eq:prediction_qU_rt})}, \nonumber
\end{align}

The goal of the agent is to minimize the objective function ($J(\cdot)$ from Eq.~\eqref{eq:model1:mdp:objective}) over all policies ($\pi$).

\if 0
\begin{align}
     J^p(\pi^p) = E\left[\sum_{t=1}^{\infty}\gamma^{t-1} \Big( w \cdot q(t) + U(t)] \Big)\right], 
\end{align}
where $\gamma \in (0,1)$ is the discount factor. 
\fi

\begin{corollary}
    In the MDP defined above, the optimal control policy that minimizes $J(\pi)$ takes the form:  
    \begin{align}
        s(t) = \left( C^p(w, \gamma) \cdot Pred(t-1) - \frac{Q(t-1)}{T} \right)^+,  
        \label{eq:cor:model2:mdp}
    \end{align}
where $C^p(w,\gamma)$ is a constant. 
    \label{cor:model2:mdp}
\end{corollary}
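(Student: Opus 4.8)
The plan is to reduce this MDP to exactly the form of the MIF-model MDP of Theorem~\ref{thm:model1:mdp} via a change of state variable, and then replay the argument used there. The key observation is that, under the assumed dynamics, $Pred(t)$ plays precisely the role $\mu(t)$ played in the MIF model: it evolves by an i.i.d.\ multiplicative factor $X^{pred}_t$, and the true capacity $\mu(t)=Pred(t-1)\cdot X^p_t$ is a (different) i.i.d.\ multiple of it. So I would first introduce the normalized queue $\tilde q(t)\triangleq Q(t)/(T\cdot Pred(t))$ — the backlog measured in units of the \emph{predicted} drain time rather than the true drain time — and re-parametrize the action by $\rho^p(t)$ from Eq.~\eqref{eq:prediction_rt}. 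Since $s(t)\geq 0 \iff \rho^p(t)\geq \tilde q(t-1)$, since the per-step cost $w\cdot q(t)+U(t)$ depends only on $\rho^p(t)$ and $X^p_t$ (Eq.~\eqref{eq:prediction_qU_rt}), and since the queue recursion together with the $Pred$ dynamics gives $\tilde q(t)=(\rho^p(t)-X^p_t)^+/X^{pred}_t$, the triple $(q(t-1),\mu(t-1),Pred(t-1))$ enters the problem only through the single scalar $\tilde q(t-1)$. Hence the optimal value function is a function $V(\tilde q)$ alone, satisfying the Bellman equation $V(\tilde q)=\min_{\rho^p\geq \tilde q}W(\rho^p)$ with $W(\rho^p)=c(\rho^p)+\gamma\,E\!\left[V\!\left((\rho^p-X^p)^+/X^{pred}\right)\right]$ and $c(\rho^p)=E\!\left[wT(\rho^p/X^p-1)^+ + (1-\rho^p/X^p)^+\right]$.

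Next I would establish, by the same value-iteration induction used for Lemma~\ref{lemma:model1:combined_mdp_lemma} and Sub Lemma~\ref{sublemma:model1_helper} in Appendix~\ref{app:model1:combined_mdp_lemma}, that $V(\tilde q)$ is convex and non-decreasing in $\tilde q$. The induction step carries over because: (i) $c(\rho^p)$ is convex in $\rho^p$ — for each fixed value $a$ of $X^p$, $wT(\rho^p/a-1)^+$ is convex and non-decreasing while $(1-\rho^p/a)^+$ is convex and non-increasing, and expectations of convex functions are convex; (ii) for each fixed $a=X^p$, $c=X^{pred}$ the map $\rho^p\mapsto(\rho^p-a)^+/c$ is convex and non-decreasing, so composing it inside a convex non-decreasing $V$ and taking expectation keeps $W(\rho^p)$ convex; and (iii) $V(\tilde q)=\min_{\rho^p\geq \tilde q}W(\rho^p)$ is convex and non-decreasing in $\tilde q$ whenever $W$ is convex (partial minimization of a convex function over the half-line $\{\rho^p\geq \tilde q\}$, which shrinks as $\tilde q$ grows). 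This is the same bookkeeping as in the MIF proof, with the transition $q(t)=T(\rho(t)/X_t-1)^+$ simply replaced by $\tilde q(t)=(\rho^p(t)-X^p_t)^+/X^{pred}_t$, which has the same convexity and monotonicity in the action.

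Finally, with $W$ convex, let $\rho^{p*}$ be its unconstrained minimizer and set $C^p(w,\gamma)\triangleq\rho^{p*}$; this constant depends only on $w$, $\gamma$, and the distributions $f^p,f^{pred}$. The constrained minimizer in the Bellman equation is then $\rho^p(t)=\max\!\left(C^p(w,\gamma),\ \tilde q(t-1)\right)$. Translating back through $s(t)=\rho^p(t)\cdot Pred(t-1)-Q(t-1)/T$ and $\tilde q(t-1)=Q(t-1)/(T\cdot Pred(t-1))$: when $C^p\geq \tilde q(t-1)$ this yields $s(t)=C^p\cdot Pred(t-1)-Q(t-1)/T\geq 0$, and otherwise it yields $s(t)=0$; combining the two cases gives $s(t)=\left(C^p(w,\gamma)\cdot Pred(t-1)-Q(t-1)/T\right)^+$, which is exactly Eq.~\eqref{eq:cor:model2:mdp}.

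The step that needs the most care is the reduction in the first paragraph — verifying that the value function genuinely collapses onto the single variable $\tilde q$ (so that neither $\mu(t-1)$ nor $Pred(t-1)$ survives individually) and that the correct normalization is by $Pred$ rather than by $\mu$. Once the dynamics are rewritten in terms of $\tilde q$ and $\rho^p$, the rest is a routine re-run of the MIF argument, so I would keep it brief and defer the convexity/monotonicity induction to Appendix~\ref{app:model1:combined_mdp_lemma}.
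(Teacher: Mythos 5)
Your proposal is correct and follows essentially the same route as the paper: the paper likewise augments the state with the ratio $Q/Pred$, shows the value function depends only on that scalar and is convex (via the finite-horizon induction with the composed map $T(\rho^p/a-1)^+\cdot a/b$, which is your $(\rho^p-a)^+/c$ up to the factor $T$), defines the helper function $W^p$, and reads off $\rho^p(t)=\max(\rho^{p*},Q(t-1)/(T\cdot Pred(t-1)))$. Your normalization $\tilde q = Q/(T\cdot Pred)$ versus the paper's $Q/Pred$ is an immaterial rescaling.
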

\begin{proof}
Please see Appendix~\ref{app:model2:cor_mdp}
\end{proof}

\section{Implications and Validation}
\label{s:implications}
In this section, we discuss some of the implications of the formal analysis given above.

\subsection{MIF Model}
\label{ss:model1:implications}
 First, we validate our work against real cellular traces (Verizon LTE Uplink and Downlink) taken from earlier work~\cite{mahimahi}. Recall that the cellular traces need not follow our model. For comparison, we also generate a synthetic trace based on the MIF model with $X \sim e^{\textit{U(-1, 1})}$.
For these three traces, we plot the expected underutilization ($E[\bar{U}]$) versus the expected normalized queuing delay ($E(\bar{q})$). We compare the performance bound from Theorem~\ref{thm:model1:perfbound} and the optimal control law from Theorem~\ref{thm:model1:mdp}. 
We use the empirical distribution of values $\frac{\mu(t)}{\mu(t-1)}$ from the traces as our PDF $f$ to compute the performance bound and we simulate the optimal control law for different values of $C(w, \gamma)$.

For the cellular traces, we also show the performance for various previously proposed congestion control algorithms. 
We consider Cubic+Codel~\cite{cubic, CoDel}, XCP~\cite{xcp}, ABC~\cite{abc, goyal2017rethinking}, Sprout~\cite{sprout}, and Copa~\cite{copa} as representing the best performing alternatives. 
For the existing algorithms, we use
Mahimahi to emulate the traces~\cite{mahimahi}.\footnote{The dynamic range of the link capacity in the synthetic trace is not supported by Mahimahi, and so we omit the comparison with prior work for that trace.}
For Copa and Sprout, we use the authors' UDP implementations.
For the Cubic endpoints, we use the Linux TCP implementation.
For Codel, XCP, and ABC, we use the implementation and configuration parameters from ~\cite{abc}. 
To make these systems comparable to the derived bounds, we compute the queuing delay once per baseline round trip.
\if 0
For these traces, we plot the expected underutilization ($E[\bar{U}]$) versus the expected normalized queuing delay ($E(\bar{q})$). We compare the performance bound from Theorem~\ref{thm:model1:perfbound}, the optimal control law from Theorem~\ref{thm:model1:mdp}, and various previously proposed congestion control algorithms. 
For the cellular traces, we use the empirical distribution of values $\frac{\mu(t)}{\mu(t-1)}$ from the traces as our PDF $f$ to compute the performance bound; we simulate the optimal control law for different values of $C(w, \gamma)$; and we consider Cubic+Codel~\cite{cubic, CoDel}, XCP~\cite{xcp}, ABC~\cite{abc, goyal2017rethinking}, Sprout~\cite{sprout}, and Copa~\cite{copa} as representing the best performing alternatives. We emulate the traces using the Mahimahi network emulation tool~\cite{mahimahi}. For Copa and Sprout, we use the authors' UDP implementations.
For the Cubic endpoints, we use the Linux TCP implementation.
For Codel, XCP, and ABC, we use the implementation and configuration parameters from ~\cite{abc}. 
To make these systems comparable to the derived bound, we compute the queuing delay once per baseline round trip.
\fi
%

\Fig{model1:basic} shows our main result: it is possible to achieve a smooth tradeoff across a large range of
underutilization and queuing delay. There is a significant gap between the five 
existing congestion control algorithms and the performance of the optimal control law on the Verizon traces.
Note that the formatting differs from \Fig{intro_perf} in that the x-axis is linear rather than log scale. 
For the cellular
traces, there is only a small gap between the predicted performance bound and the optimal control law.  This difference
is larger in the case of the synthetic trace. To achieve very low levels of underutilization on the synthetic trace, 
the control law must be very aggressive, sometimes resulting in large queues. As discussed in \S\ref{ss:model1:achievable_bound}, achieving the bound would require a negative sending rate to drain these queues in one time step, which isn't possible.  For the cellular traces, this rarely happens. 




  \begin{figure}[tb]
 \centering
    \begin{subfigure}[h]{0.32\textwidth}
        \includegraphics[width=\textwidth]{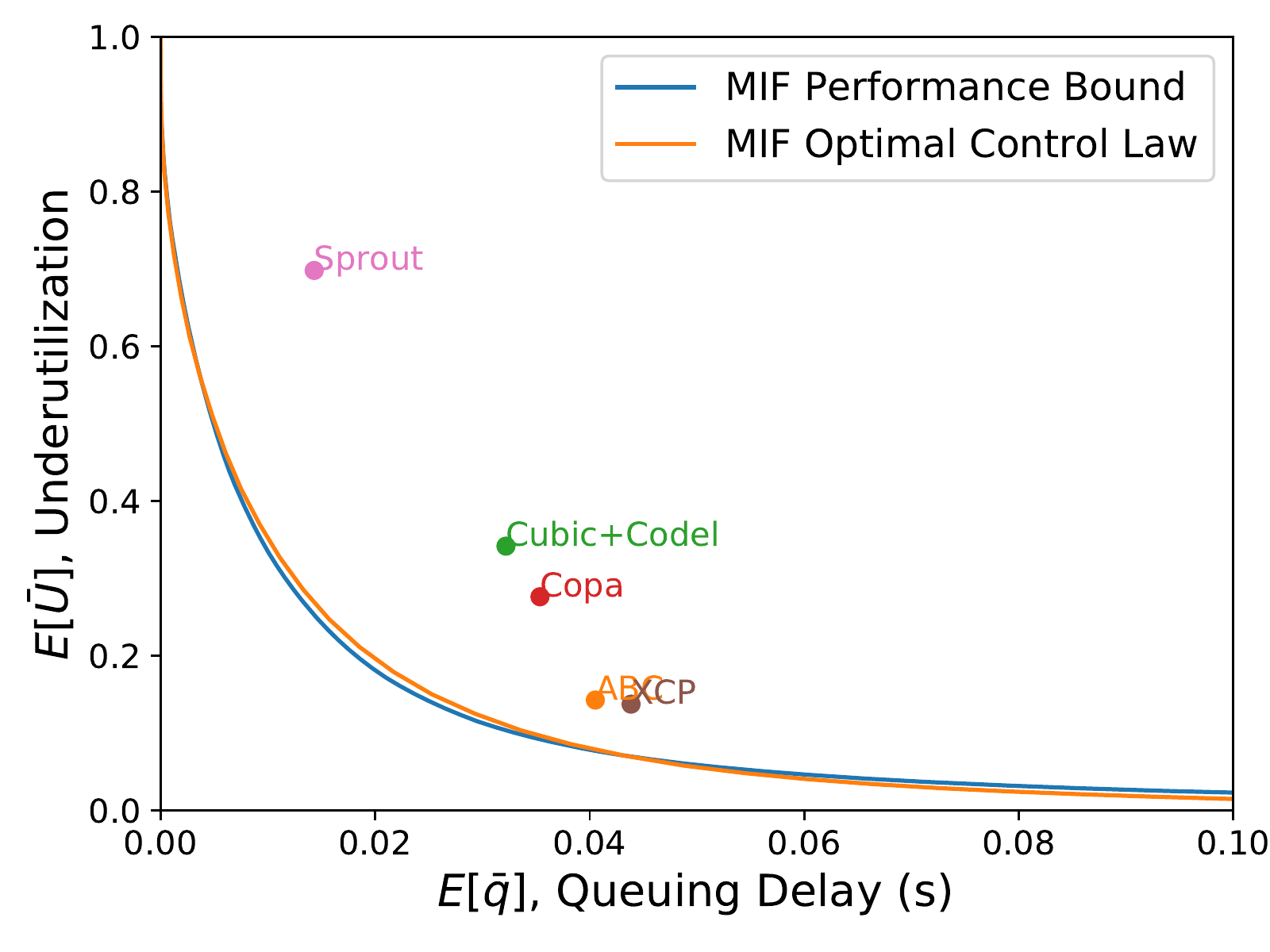}
        \vspace{-6mm}
        \caption{Verizon Downlink}
        \label{fig:model1:basic:verizon_down}
    \end{subfigure}
    \begin{subfigure}[h]{0.32\textwidth}
        \includegraphics[width=\textwidth]{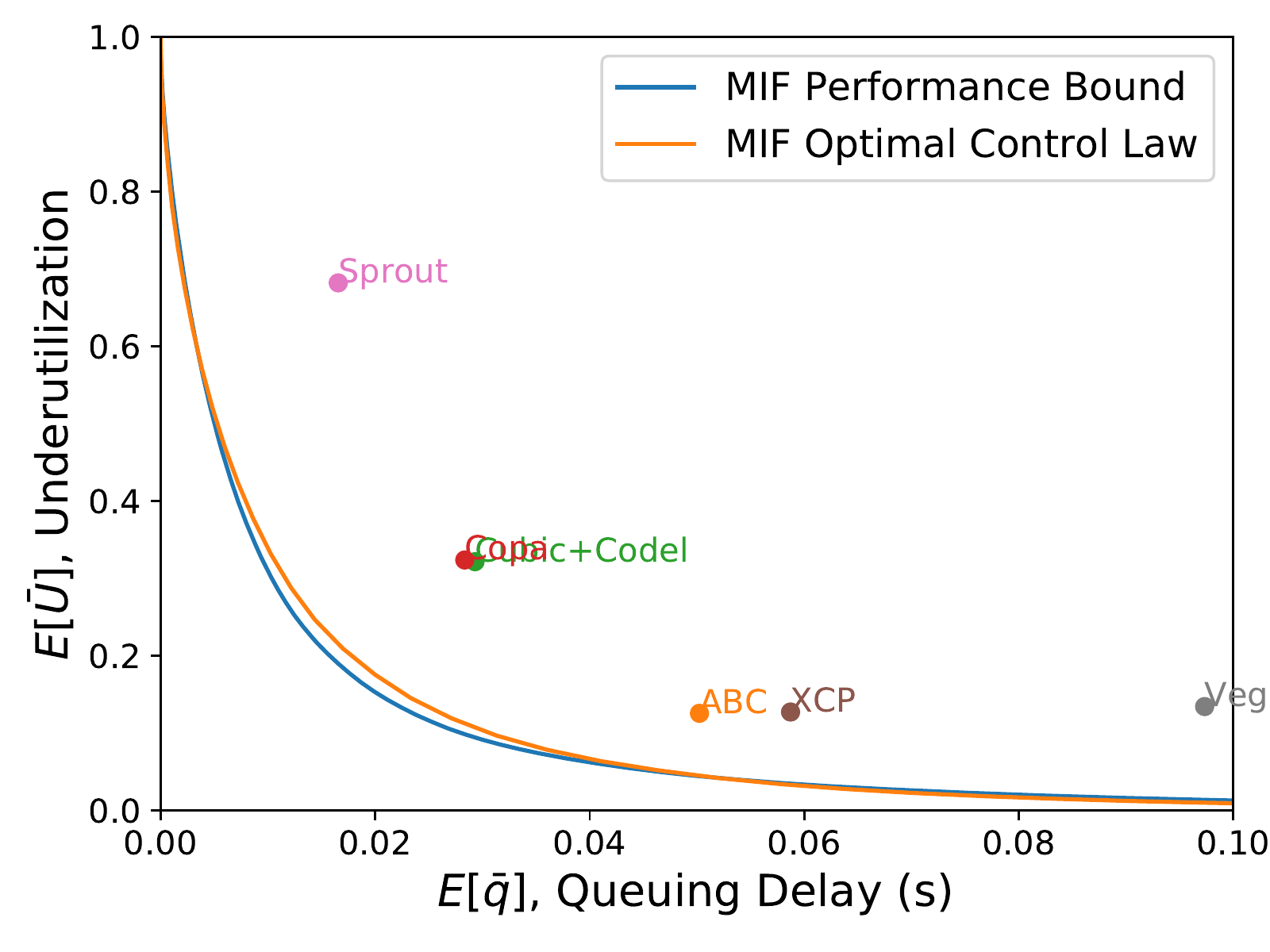}
        \vspace{-6mm}
        \caption{Verizon Uplink}
        \label{fig:model1:basic:verizon_up}
    \end{subfigure}
    \begin{subfigure}[h]{0.32\textwidth}
        \includegraphics[width=\textwidth]{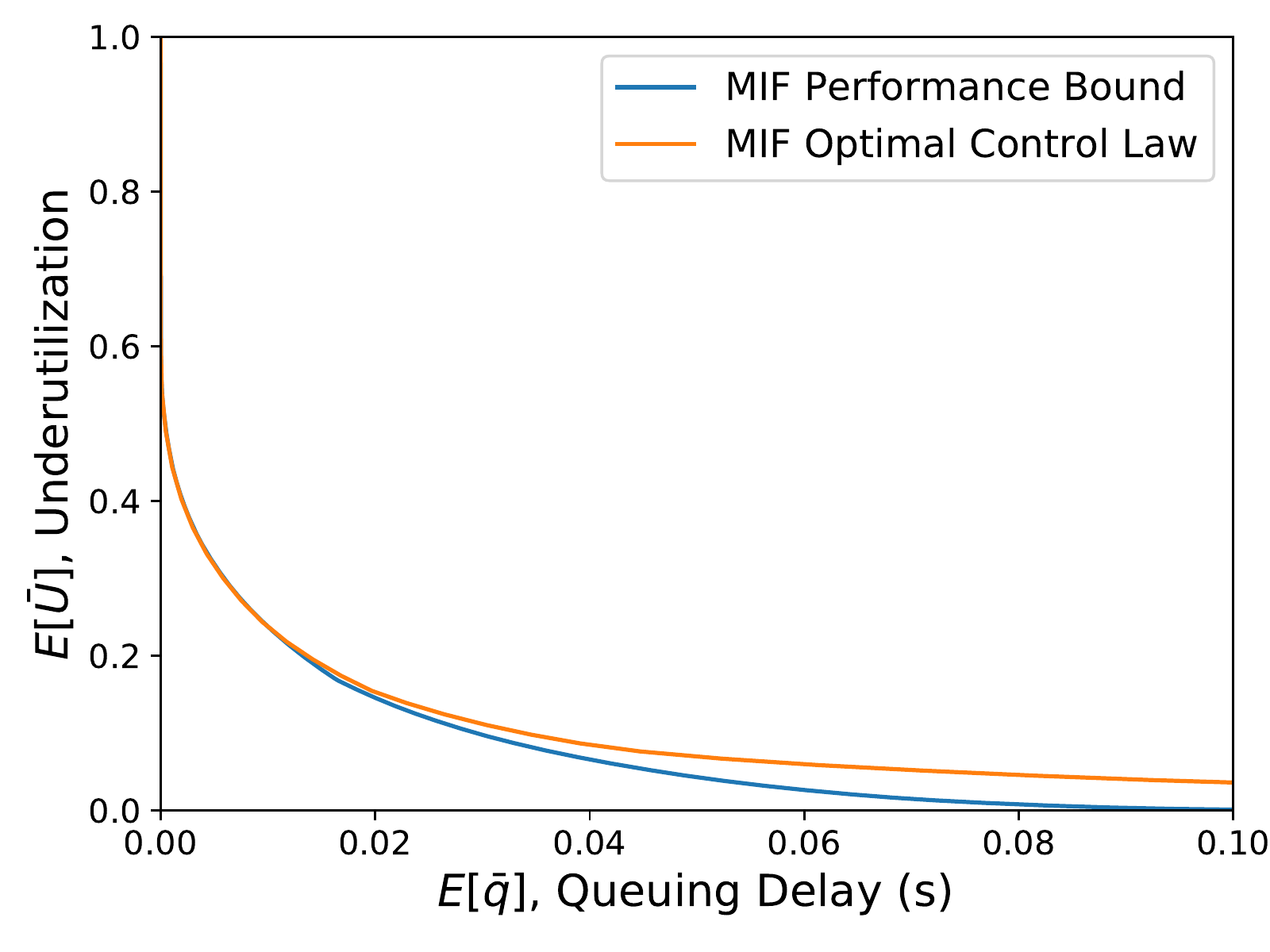}
         \vspace{-6mm}
        \caption{Synthetic Trace}
        \label{fig:model1:basic:generated}
    \end{subfigure}
    \vspace{-4mm}
    \caption{\small {\bf MIF model performance bound and performance curve for the optimal control law.}}
    \label{fig:model1:basic}
    \vspace{-4mm}
\end{figure} 

\smallskip
\noindent
\textbf{Explicit versus implicit feedback:} 
To compute the optimal sending rate, we need the link capacity ($\mu(t-1)$) and queue size ($Q(t-1)$) in the previous time step. Both these quantities are known at the link in time step $t-1$. 
The link can communicate these quantities explicitly to the sender for computing the optimal sending rate. However, explicit feedback is hard to deploy in many scenarios and hasn't seen much adoption in wireless networks~\cite{abc}. Without explicit feedback, the sender needs to infer the link capacity ($\mu(t-1)$) indirectly. This inference is straightforward when there is no link underutilization (the packet arrival rate at the receiver can serve as a measure for the link capacity). However, when the link is underutilized, it can be harder to infer the link capacity at either endpoint~\cite{abc}. 
Because implicit feedback introduces uncertainty in the link capacity ($\mu(t)$) at the sender, the resulting performance may be sub-optimal. We see this effect in \Fig{model1:basic}. Explicit schemes like ABC and XCP which compute feedback based on the link capacity outperform end-to-end schemes like Copa and Sprout that infer the link capacity indirectly. 



   \begin{figure}[tb]
 \centering
     \begin{subfigure}[h]{0.32\textwidth}
        \includegraphics[width=\textwidth]{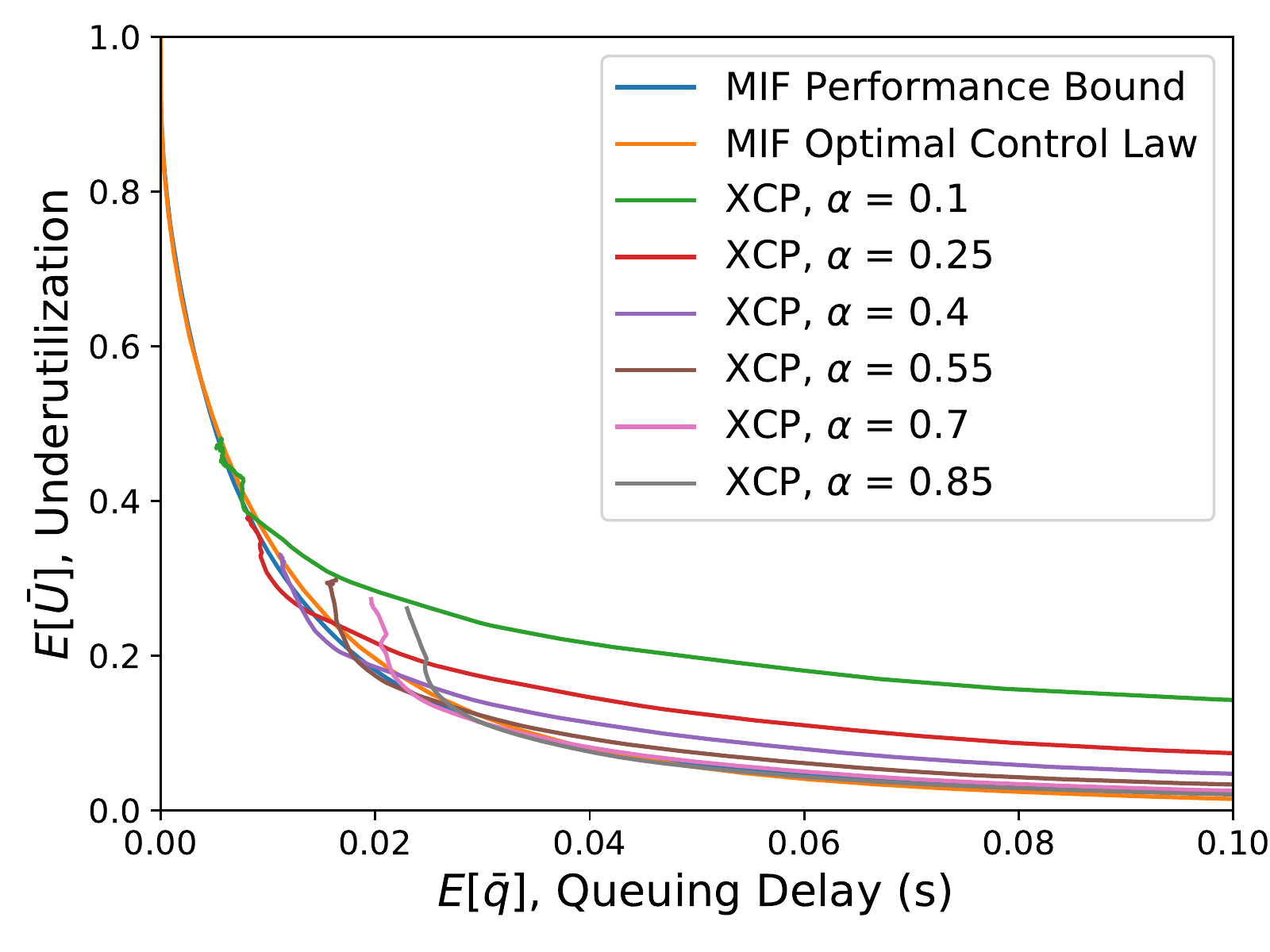}
        \vspace{-6mm}
        \caption{Verizon Downlink}
        \label{fig:model1:xcpcomp:verizon_down}
    \end{subfigure}
     \begin{subfigure}[h]{0.32\textwidth}
        \includegraphics[width=\textwidth]{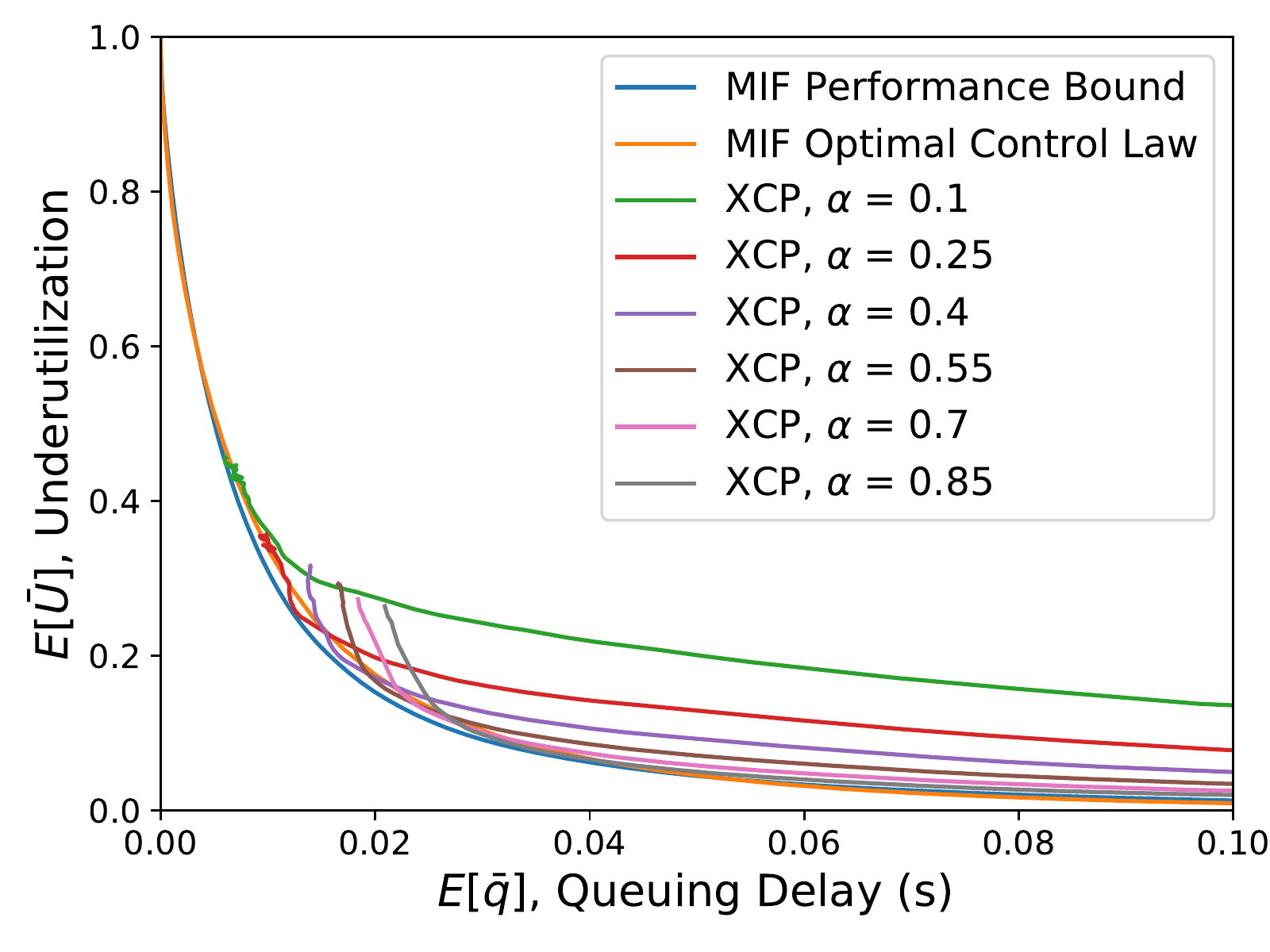}
        \vspace{-6mm}
        \caption{Verizon Uplink}
        \label{fig:model1:xcpcomp:verizon_up}
    \end{subfigure}
    \begin{subfigure}[h]{0.32\textwidth}
        \includegraphics[width=\textwidth]{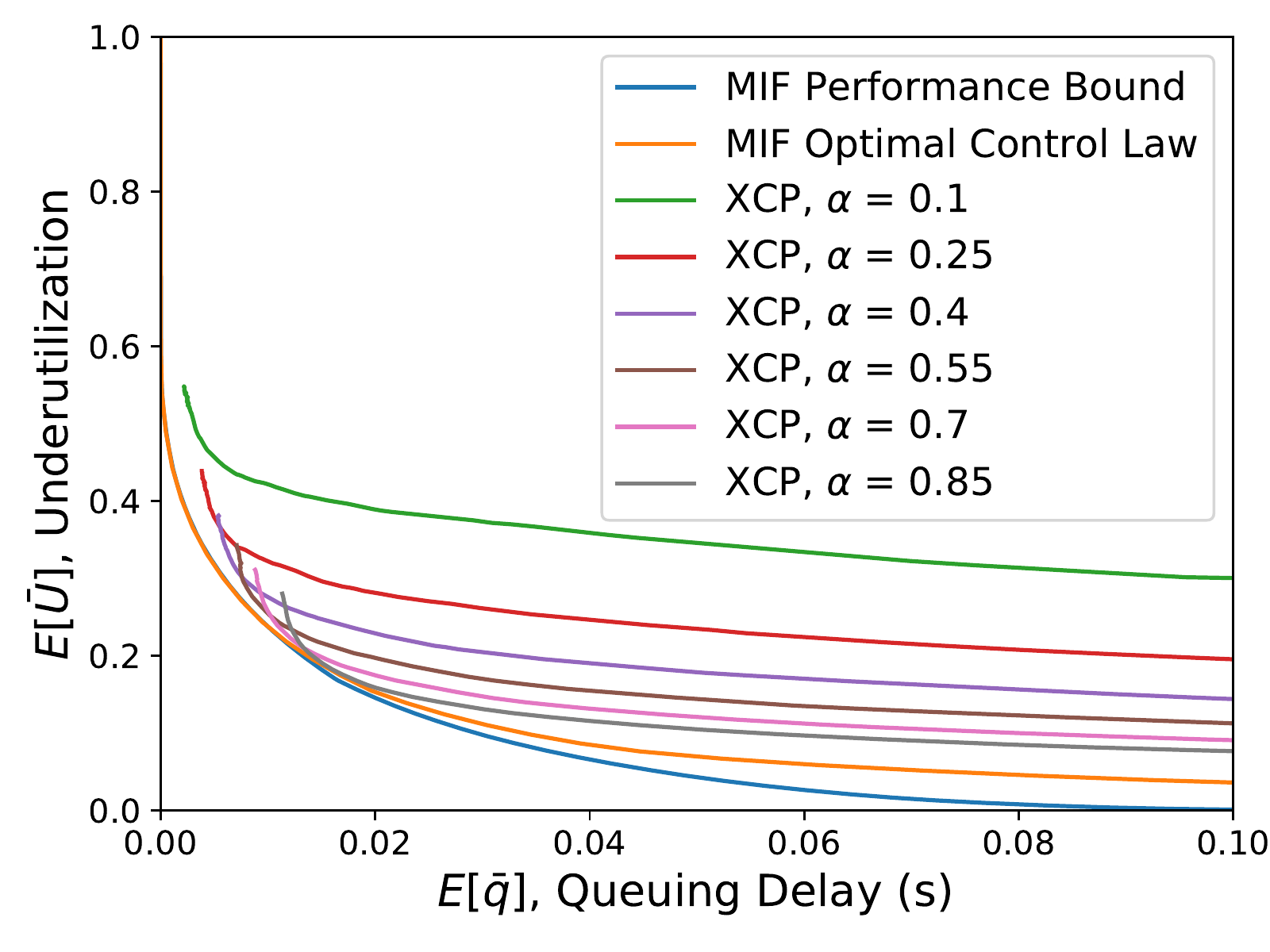}
         \vspace{-6mm}
        \caption{Synthetic Trace}
        \label{fig:model1:xcpcomp:custom}
    \end{subfigure}
    \vspace{-4mm}
    \caption{\small {\bf Comparison with XCP/RCP style control law ---} Each XCP curve shows performance for a fixed value of $\alpha$ and variable value of $\beta$ (0 to 10).}
    \label{fig:model1:xcpcomp}
    \vspace{-4mm}
\end{figure} 

\smallskip
\noindent
\textbf{Set sending rate based only on queue size and link capacity:} Some explicit congestion protocols, such as XCP and RCP (for a single sender), set the sending rate according to the following rule\footnote{RCP uses a variant of the rule in which it adjusts the rate in multiplicative  (rather than  additive) increments.}:
\begin{align}
    s(t) = \left( s(t-1) + \alpha \cdot (\mu(t-1) - s(t-1)) - \beta \cdot \frac{Q(t-1)}{T}\right)^+,
\end{align}
where $\alpha$ and $\beta$ are smoothing parameters (typically $< 1$)  that govern stability, how quickly the protocol adapts to spare capacity ($\mu(t-1) - s(t-1))$,
and how aggressively it drains the queue. Unlike our optimal control law, the sending rate in these schemes is dependent on the {\em sending rate} chosen in the previous time step. This does not attempt to keep the relative link load $\rho(t)$ fixed in every time step. Instead, $\rho(t)$ is dependent on congestion control decisions in the previous time steps, and any error in guessing the 
link rate is thus compounded.

Our analysis argues that the performance of such control laws might be sub-optimal on links where the link capacity can vary widely over time.
To demonstrate this, we simulate an XCP/RCP style control law for various values of $\alpha$ and $\beta$. \Fig{model1:xcpcomp} shows the result for the three traces.  For certain values of $\alpha$ and $\beta$, the performance gets close to the optimal strategy curve. However, for most values the performance deviates from the optimal, and in order to achieve the optimal, we need to tune both $\alpha$ and $\beta$.  For any fixed value of $\alpha$, varying $\beta$ is not optimal across the entire range.

\smallskip
\noindent
\textbf{Do not change the queuing penalty term to adjust the performance trade-off:} Some explicit schemes, such as ABC, follow a different control law of the form:
\begin{align}
    s(t)= \left( \eta \cdot \mu(t-1) - \beta \frac{Q(t-1)}{T}\right)^+,    
\end{align}
where $\eta$ is the target utilization and $\beta$ is a scaling factor for queuing penalty.

This control law is similar to the optimal control law. However, in such control laws the parameter $\eta$ (analogous to $C(w, \gamma)$ in the optimal control law) is incorrectly viewed as the target utilization and is usually set to a value just below one~\cite{abc}. In order to adjust the trade-off between utilization and queuing delay such schemes vary $\beta$. The optimal control law argues that the sender should instead use a fixed value of $\beta = 1$ and vary the parameter $C(w, \gamma)$ (or $\eta$) instead; setting it much less than or greater than one depending on the desired trade-off between queueing delay and underutilization. 

We simulate the ABC style control law for a fixed value of $\eta = 1$ and different values of $\beta$. \Fig{model1:fcvq} shows the result. As expected, the ABC control law deviates from optimal performance, particularly for low values of queueing delay.

   \begin{figure}[tb]
 \centering
    \begin{subfigure}[h]{0.32\textwidth}
        \includegraphics[width=\textwidth]{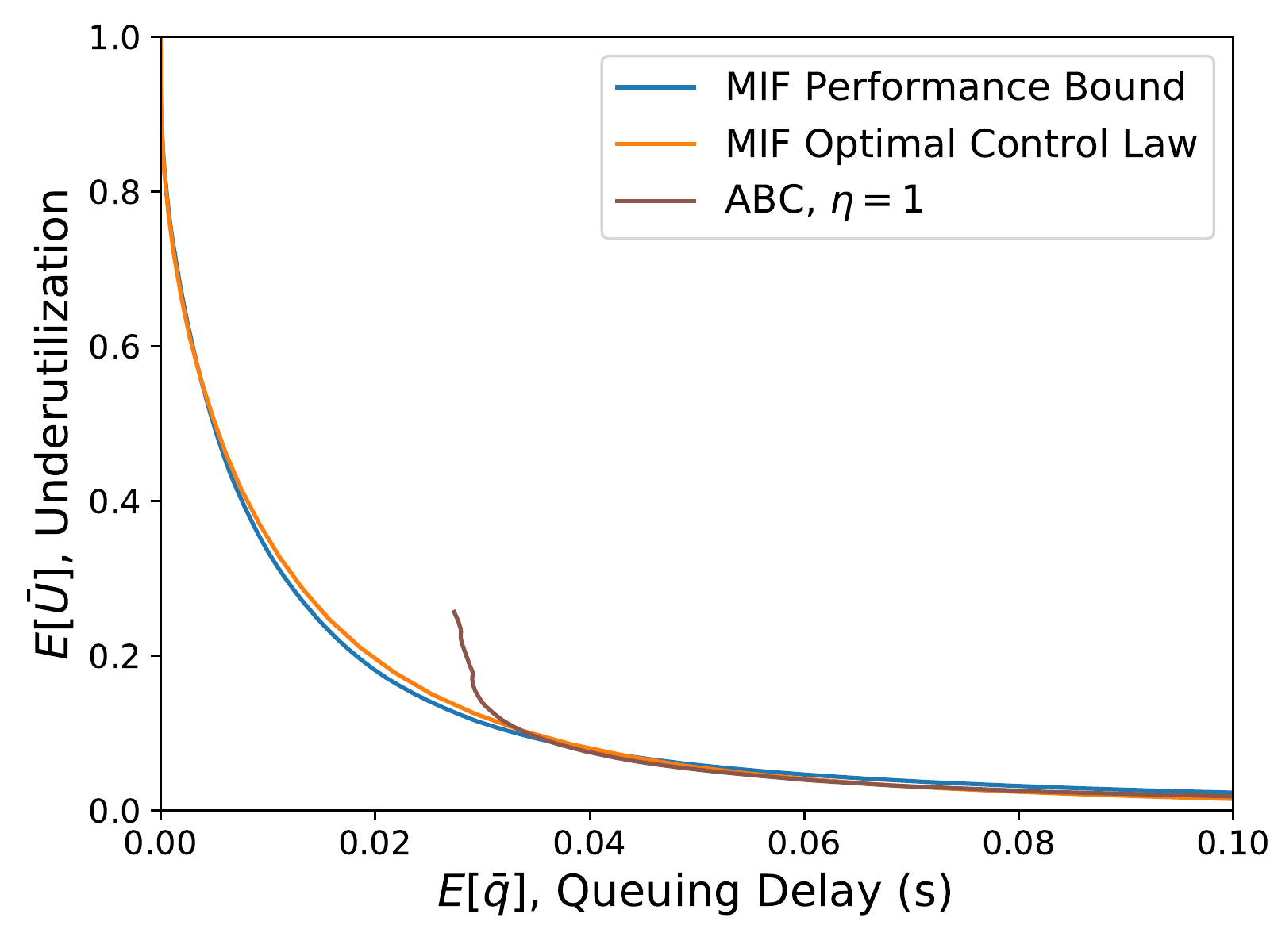}
        \vspace{-6mm}
        \caption{Verizon Downlink}
        \label{fig:fcvq:verizon_down}
    \end{subfigure}
     \begin{subfigure}[h]{0.32\textwidth}
        \includegraphics[width=\textwidth]{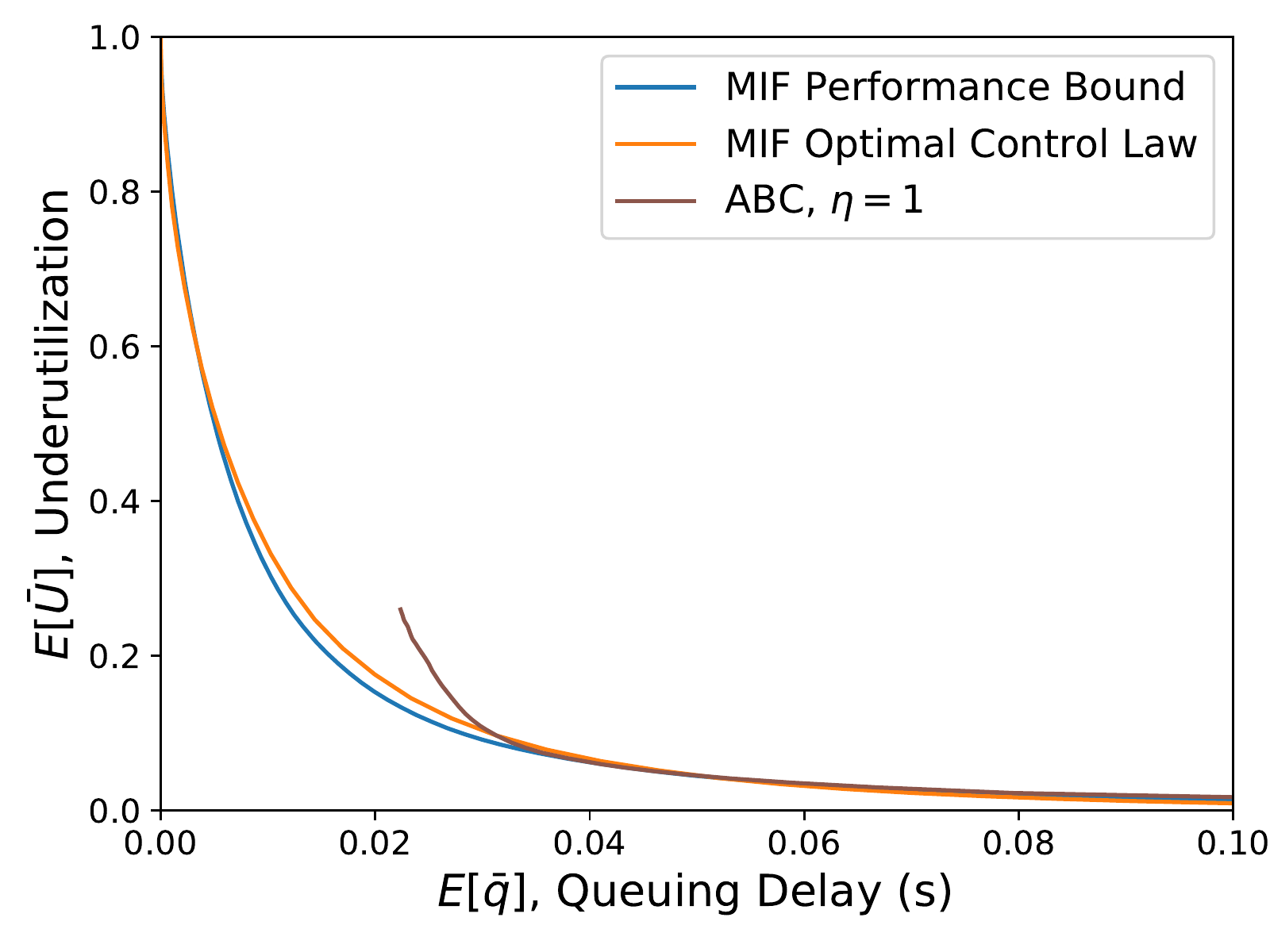}
        \vspace{-6mm}
        \caption{Verizon Uplink}
        \label{fig:fcvq:verizon_up}
    \end{subfigure}
    \begin{subfigure}[h]{0.32\textwidth}
        \includegraphics[width=\textwidth]{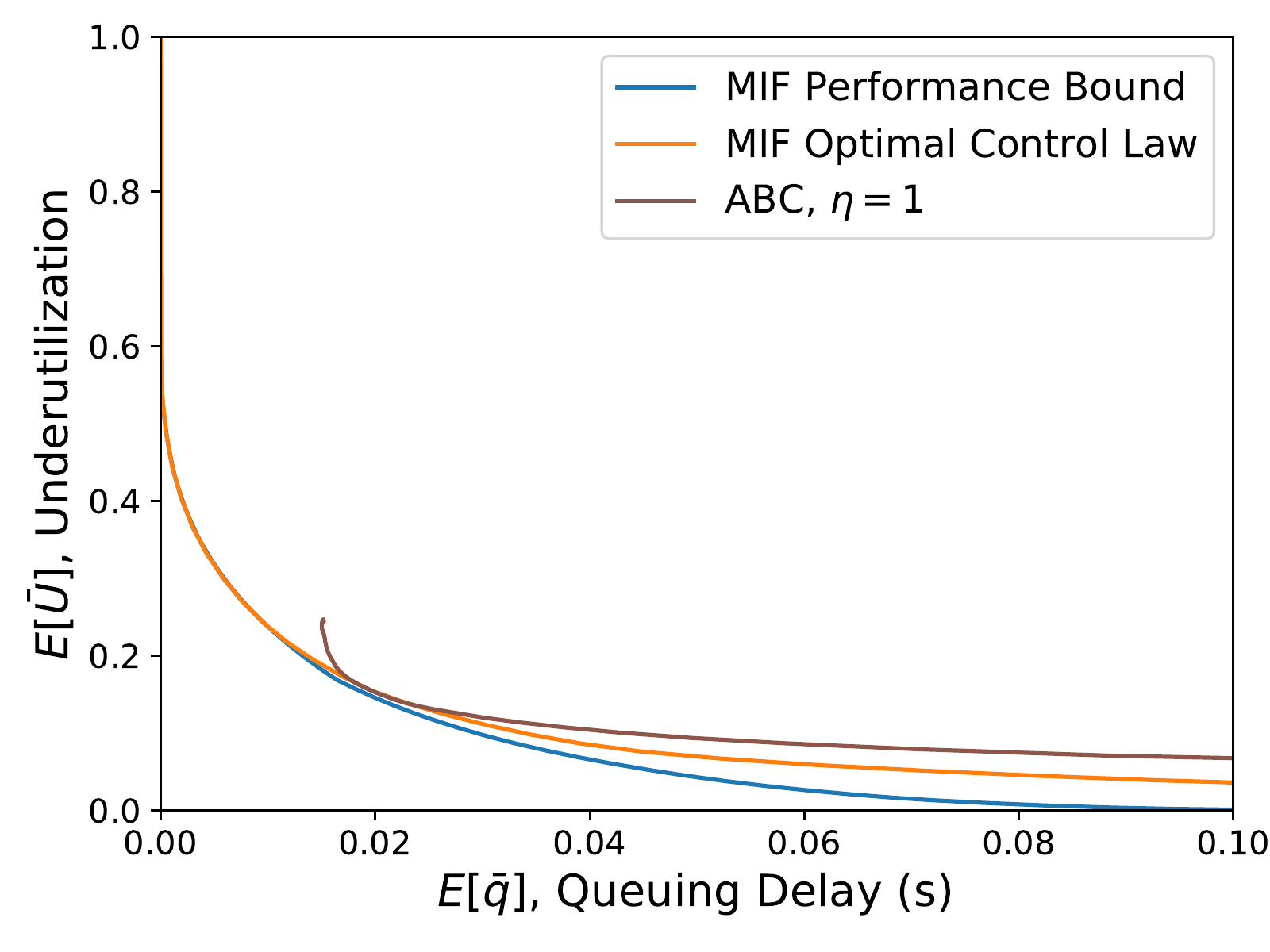}
         \vspace{-6mm}
        \caption{Synthetic Trace}
        \label{fig:fcvq:generate}
    \end{subfigure}
    \vspace{-4mm}
    \caption{\small {\bf Comparison with ABC style control law.}}
    \label{fig:model1:fcvq}
    \vspace{-4mm}
\end{figure} 

\if 0
\smallskip
\noindent
\textbf{Achieving the desired performance trade-off:} Depending on the application, a sender might have different requirements for latency and throughput. Given a performance objective ($w$), and a certain discount factor($\gamma$) the sender can calculate $C(w, \gamma)$ in the optimal control law by solving the Bellman Equation.

Alternatively, \Fig{model1:basic} shows that the performance bound ($g^f$ from \S\ref{ss:model1:performance_bound}) and the optimal control law performance curves are close, and we might be able to use the curve $g^f(\cdot)$ to approximate $C(w, \gamma)$. In particular, we can pick a value of $b$, such that $g^f{'}(x) = -w$ and $x = E[q(t)|\rho(t)=b]$, and correspondingly set $C(w, \gamma)$ to  $b$($=C(w, 0)$) in the optimal control law. \Fig{model1:diff} shows the deviation in performance between this approach and the optimal control law for different values of $b$ and $C(w, \gamma)$. For small values of $b$ and $C(w, \gamma)$, the deviation is small and $b$ is a good approximation for $C(w, \gamma)$. As we increase the value $b$ and $C(w, \gamma)$, the deviation in performance increases. This is because, for higher values of $C(w, \gamma)$, it is more likely that the sender builds an excessive queue ($Q(t-1) > C(w, \gamma) \cdot T \cdot \mu(t-1)$). \pg{This is very weird, Mohammad if possible take a look. The whole business with $\gamma$ makes it a little messy.}\tom{fwiw i'd drop this paragraph and the figure. It doesn't add much that we don't already know - the basic bound is very close in practice.  The divergence of b and C(w, gamma) doesn't seem that important to explain.}\pg{Agreed, it's only going to confuse people}

 \begin{figure}[tb]
 \centering
    \begin{subfigure}[h]{0.32\textwidth}
        \includegraphics[width=\textwidth]{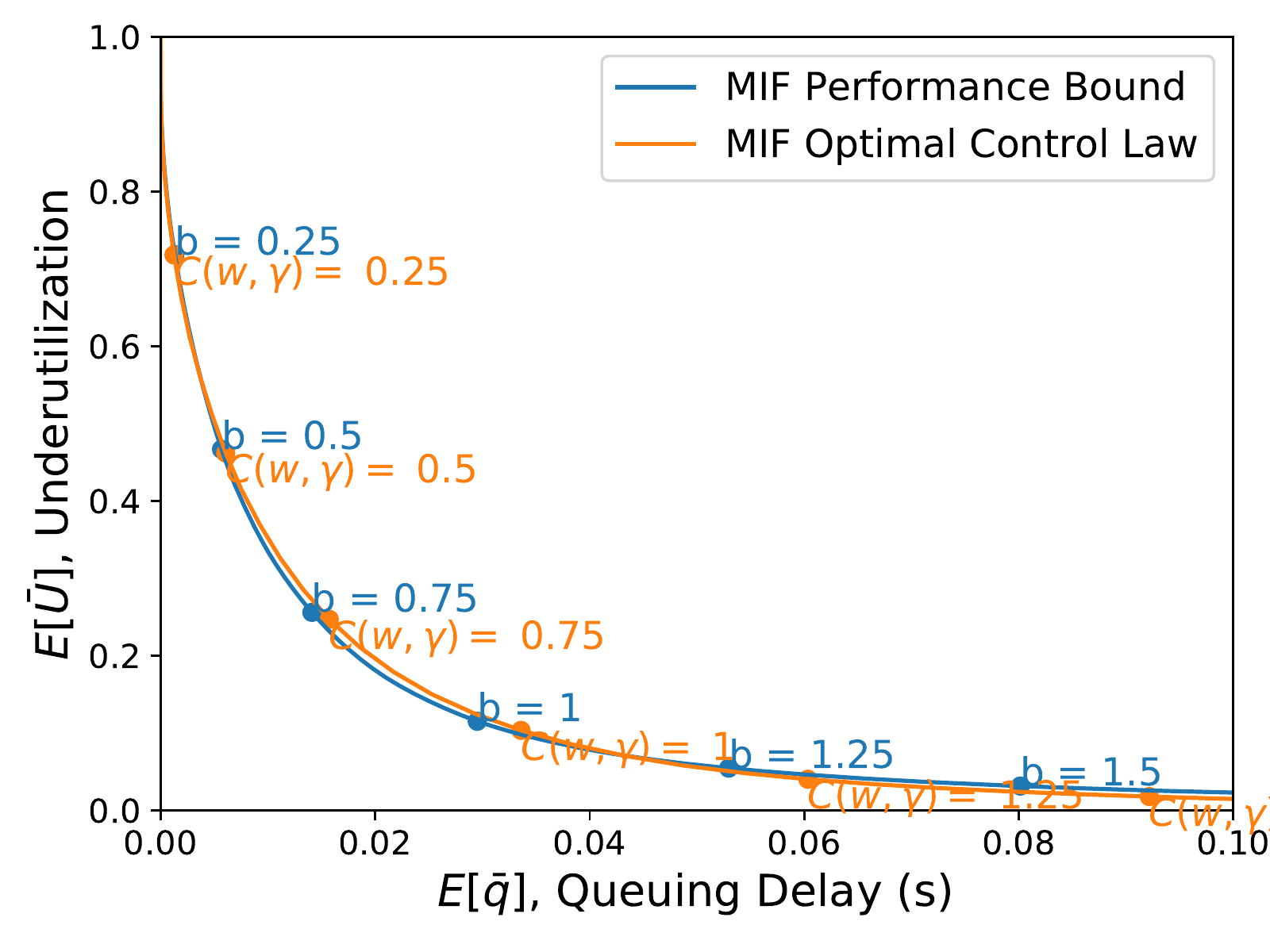}
        \vspace{-6mm}
        \caption{Verizon Downlink}
        \label{fig:model1:diff:verizon_down}
    \end{subfigure}
    \begin{subfigure}[h]{0.32\textwidth}
        \includegraphics[width=\textwidth]{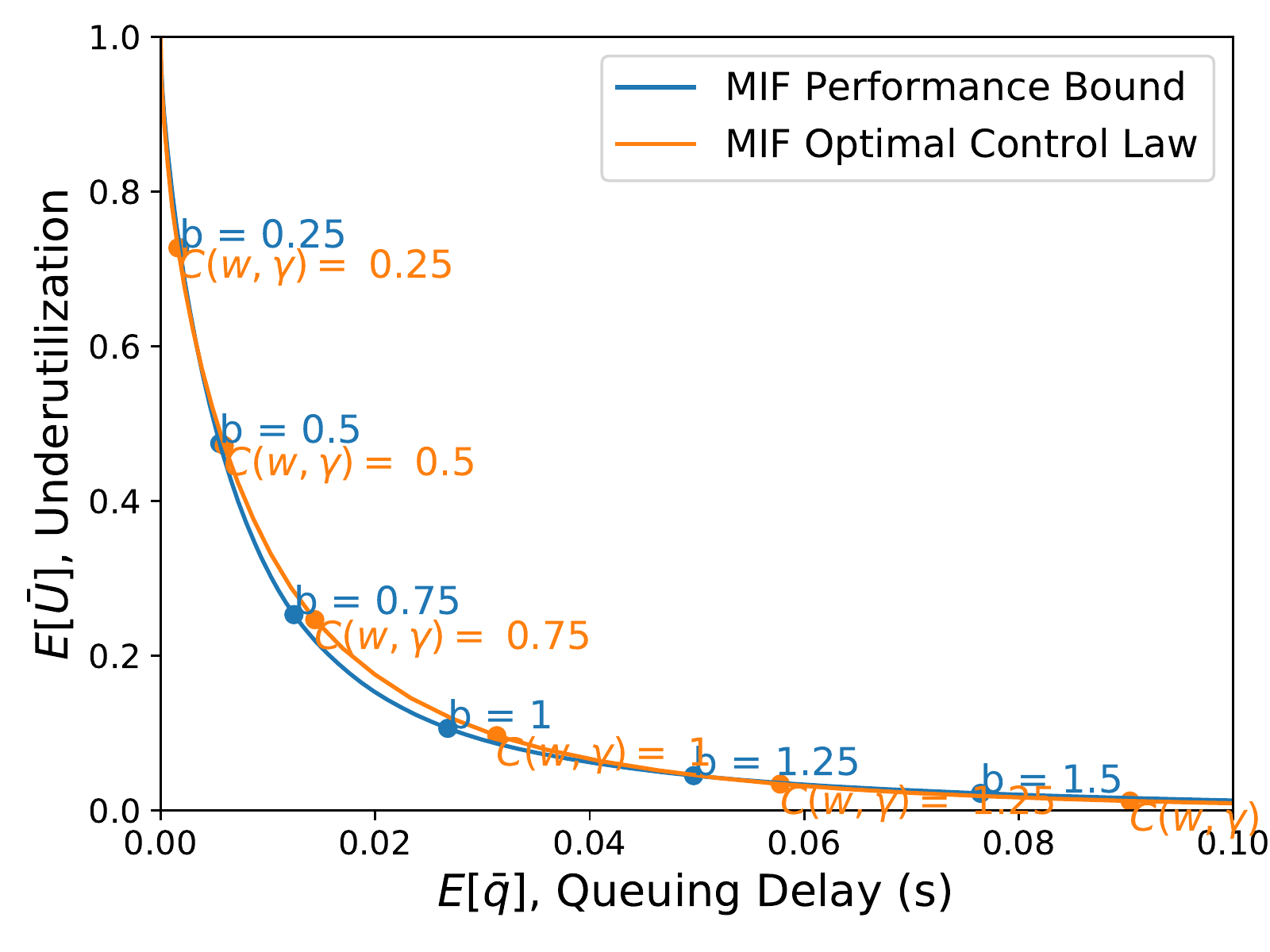}
        \vspace{-6mm}
        \caption{Verizon Uplink}
        \label{fig:model1:diff:verizon_up}
    \end{subfigure}
    \begin{subfigure}[h]{0.32\textwidth}
        \includegraphics[width=\textwidth]{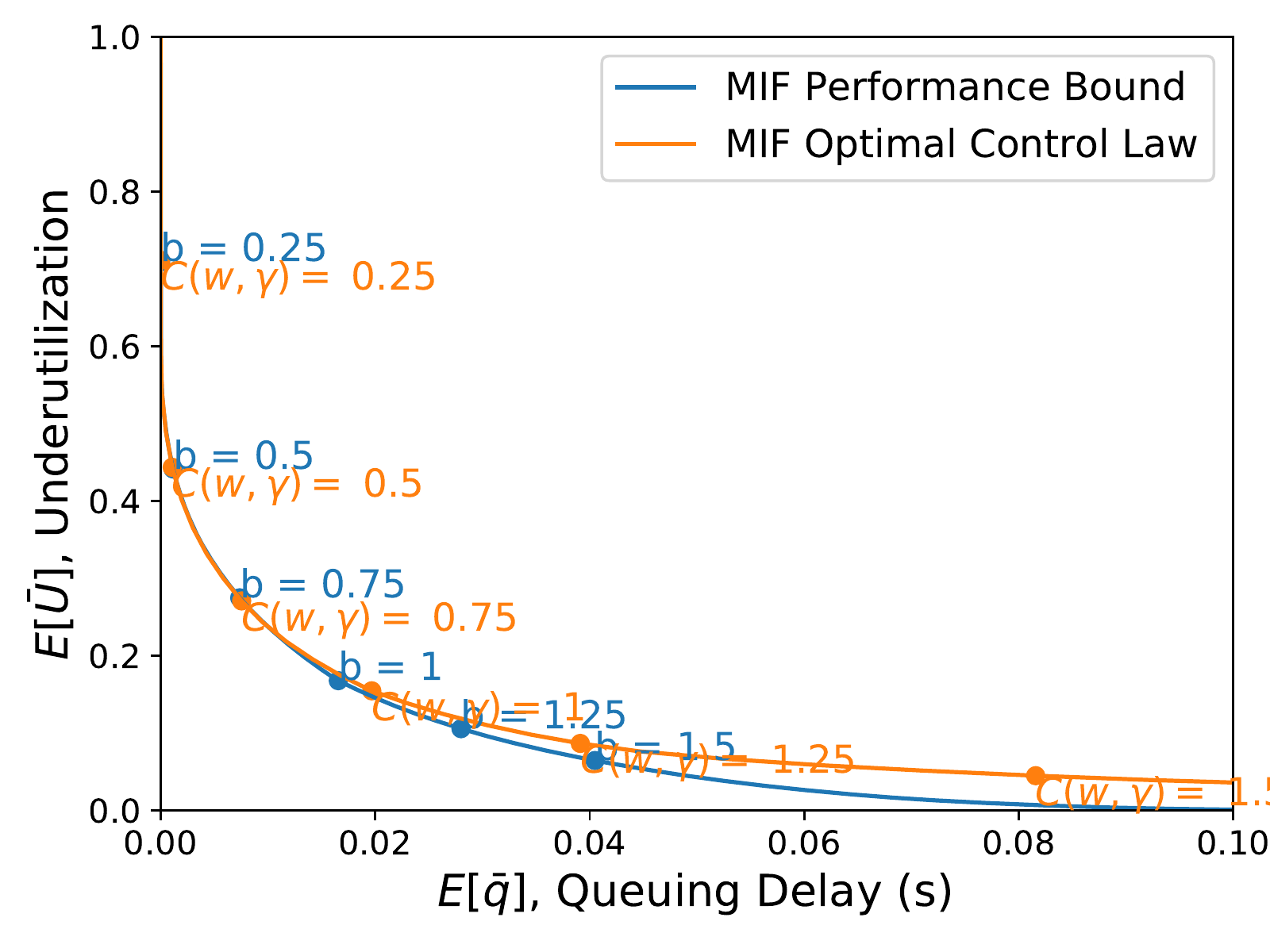}
         \vspace{-6mm}
        \caption{Synthetic Trace}
        \label{fig:model1:diff:generated}
    \end{subfigure}
    \vspace{-4mm}
    \caption{\small {\bf Deviation between the performance bound and the optimal control law performance.}}
    \label{fig:model1:diff}
    \vspace{-5mm}
\end{figure} 
\fi 

\smallskip
\noindent
\textbf{Implications for Link Layer Design:} Our analysis also has implications for link layer design (see Appendix~\ref{app:implications_linklayer}). For example, our analysis reveals that variability and uncertainty in link capacity restricts achievable performance. For cellular networks, designing schedulers with a smoother allocation for link capacity can improve performance. Additionally, knowledge about the achievable performance can enable cellular operators to provide guaranteed service level agreements for latency-sensitive applications such as video conferencing.

 \if 0
  \begin{figure}
 \centering
    \begin{subfigure}[tbh]{0.4\textwidth}
        \includegraphics[width=\textwidth]{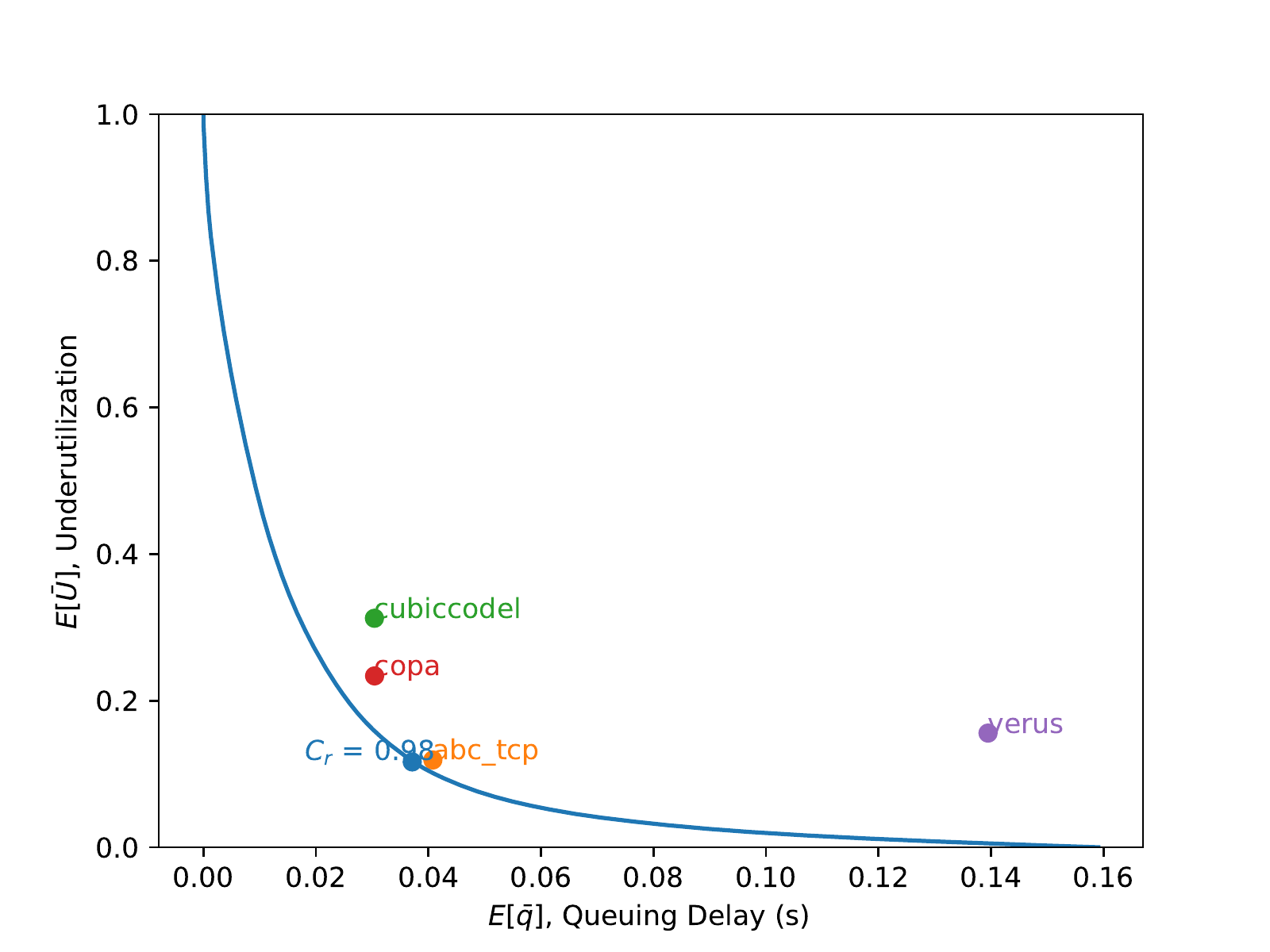}
         \vspace{-7mm}
        \caption{Trade-off}
        \label{fig:verizon_down:trade-off}
    \end{subfigure}
    \begin{subfigure}[tbh]{0.36\textwidth}
        \includegraphics[width=\textwidth]{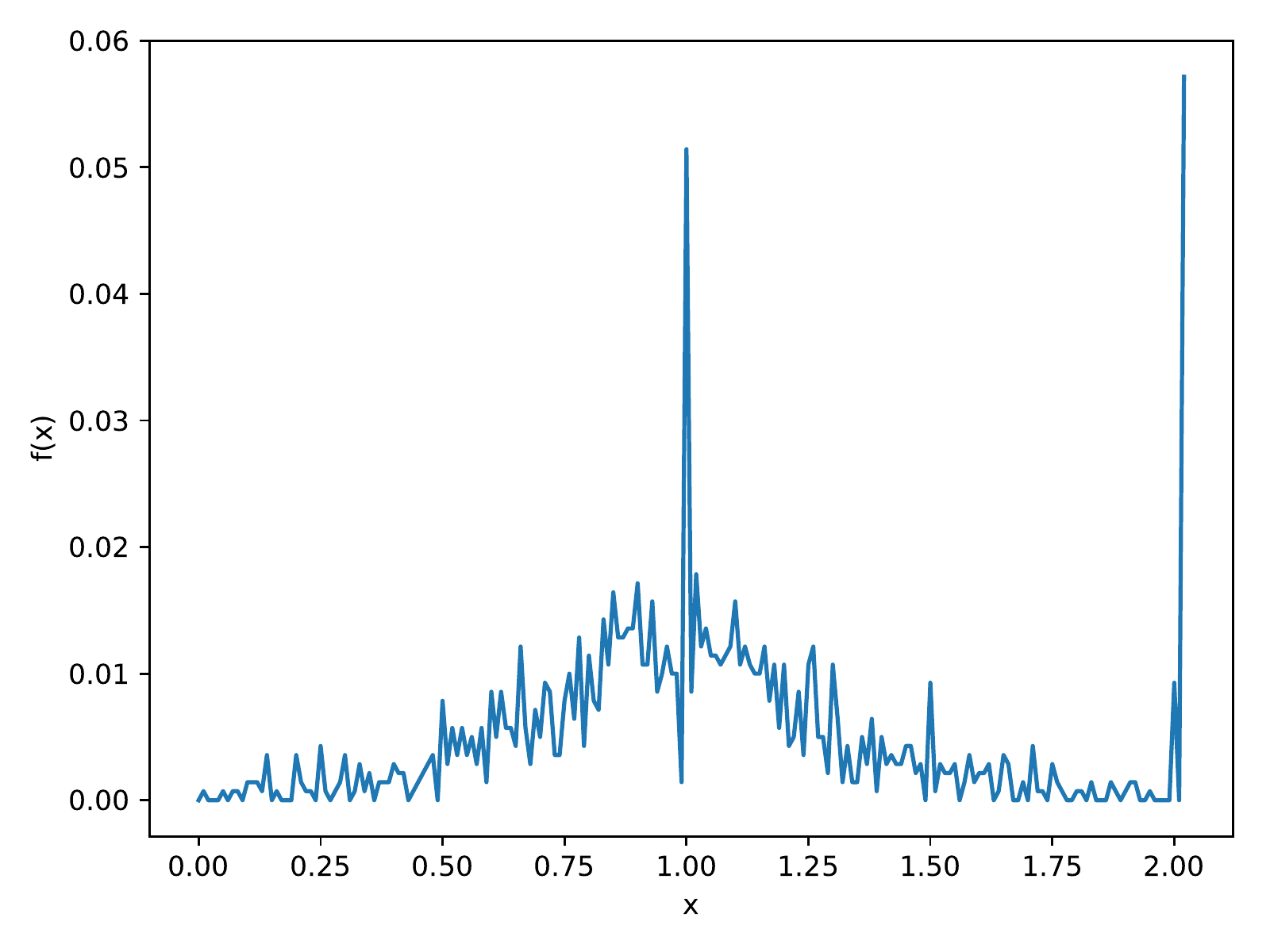}
        \vspace{-7mm}
        \caption{PDF}
        \label{fig:verizon_down:pdf}
    \end{subfigure}
    \vspace{-3mm}
    \caption{\small Verizon Downlink}
    \label{fig:verizon_down}
    \vspace{-4mm}
\end{figure}

 \begin{figure}
 \centering
    \begin{subfigure}[tbh]{0.4\textwidth}
        \includegraphics[width=\textwidth]{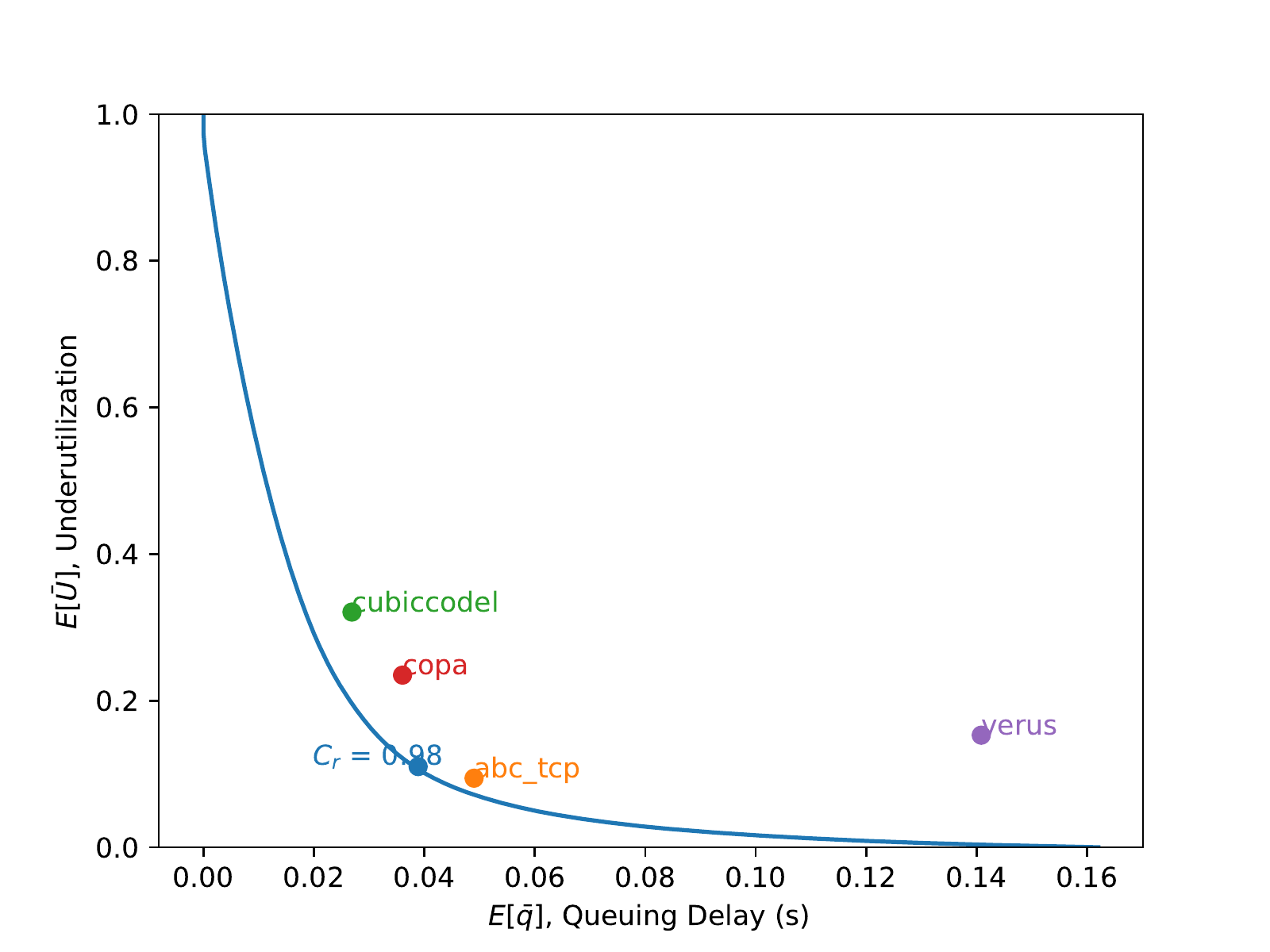}
         \vspace{-7mm}
        \caption{Trade-off}
        \label{fig:verizon_up:trade-off}
    \end{subfigure}
    \begin{subfigure}[tbh]{0.36\textwidth}
        \includegraphics[width=\textwidth]{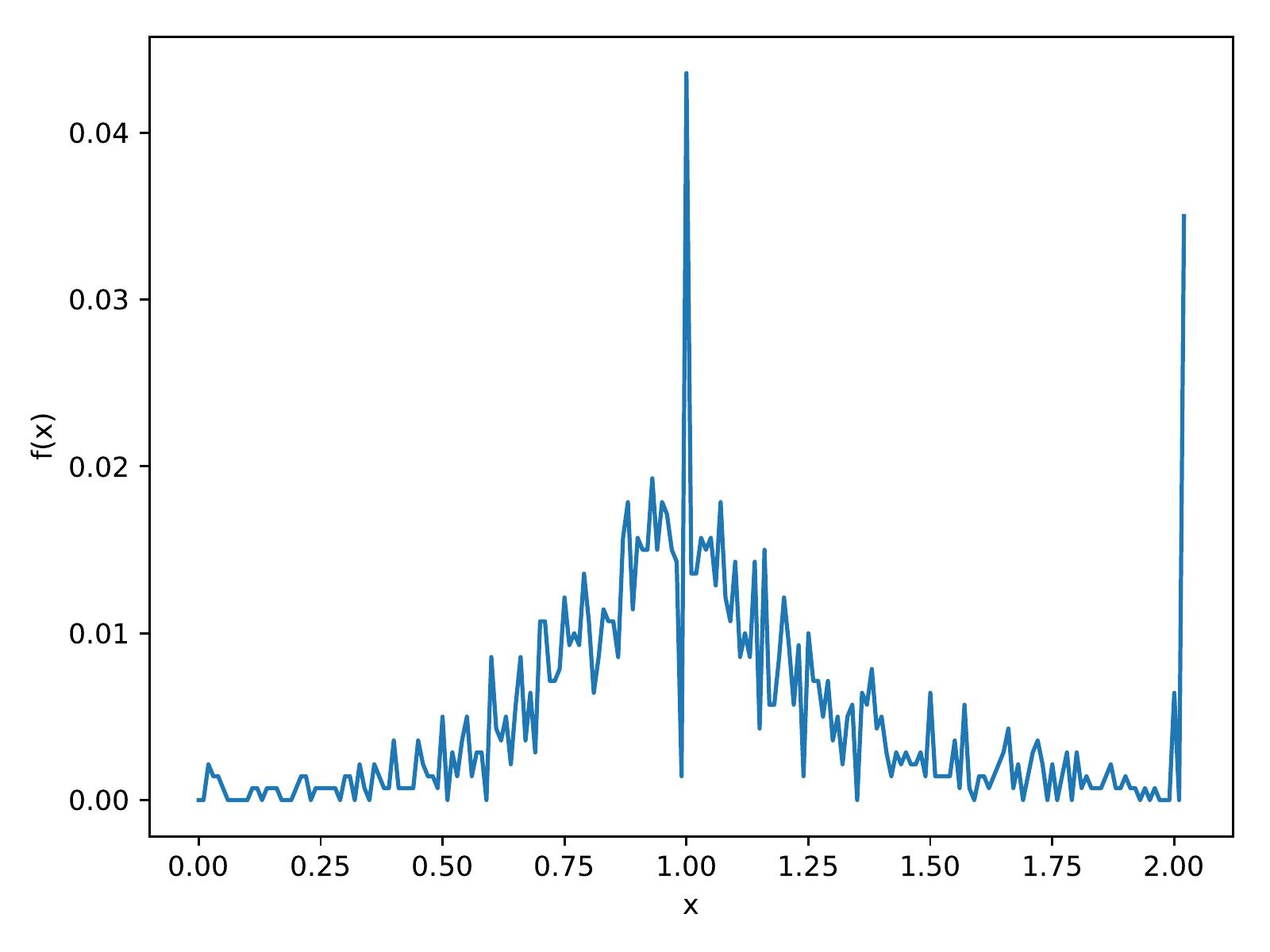}
        \vspace{-7mm}
        \caption{PDF}
        \label{fig:verizon_up:pdf}
    \end{subfigure}
    \vspace{-3mm}
    \caption{\small Verizon Uplink}
    \label{fig:verizon_up}
    \vspace{-4mm}
\end{figure}

\subsection{Applying the model to Cellular traces:}
\Fig{verizon_down} and \Fig{verizon_up} shows the trade-off for two Verizon cellular traces. To generate the trade-off we empirically calculated the PDF $f$ from the trace. The figures also show performance of a number of schemes on these traces in an emulation (using Mahimahi). ABC is an explicit scheme that comes closest to the frontier. \tom{one sentence description of ABC's rule needed here.} In fact, the ABC sender's rate closely corresponds to $\rho(t) = 0.98$ in each round. The ABC point lies close to $C_r = 0.98$ point. Note that, the condition in Eq. ~\ref{eq:boundary_condition} is not met with $C_r = 0.98$, however in both the trace probability $X < \frac{1}{\frac{1}{C_r} + 1}$ or $X < 0.5$is fairly low, as a result excessive queue buildup that hinders the ability to pick a fixed $\rho(t) = C_r$ is rare.

The other congestion control schemes in the figure are purely end-to-end, and as a result the sender might not know $\mu(t-1)$ in round $t$ if the link was underutilized in round $t-1$. As a result, these schemes \textit{will} be unable to pick a sending rate corresponding to a $\rho(t) = C_r$ in such rounds, and the trade-off will thus deviate from the boundary.

 \begin{figure}[t]
     \centering
    \includegraphics[width=0.8\columnwidth]{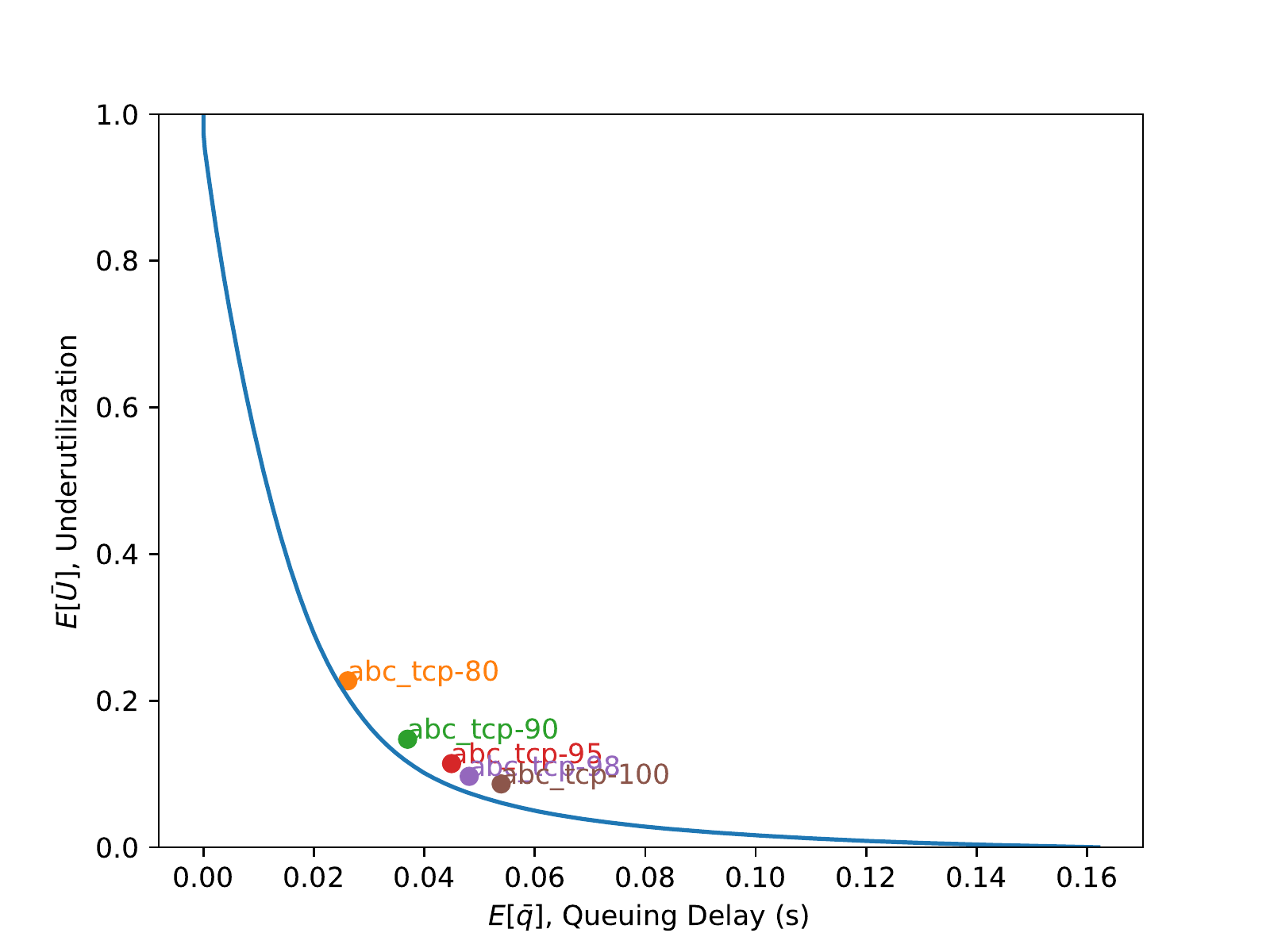}
    \vspace{-4mm}
    \caption{\small Verizon Uplink. Performance of ABC with different value of $\eta$. abc\_tcp\_x implies ABC with x/100 as $\eta$. With $\eta = x$, ABC is very similar to a policy $\rho(t) = C_r = x$.}
    \label{fig:verizon_up_eta}
    \vspace{-4.5mm}
 \end{figure}
 
 \Fig{verizon_up_eta} shows performance of ABC with different values of $\eta$. With different value of $\eta$m ABC achieves a different throughput queuing delay trade-off. All the points sit very close to the boundary.
 \fi

\subsection{PMIF Model}
\label{ss:model2:implications}

To demonstrate that good prediction can improve performance substantially this model, we use the traces from \S\ref{ss:model1:implications}. At the start of each time step $t$, we synthetically supply a prediction ($Pred(t-1)$) to the sender. To demonstrate improvements over the optimal control law (without prediction) from the MIF model, we intentionally pick a prediction with good accuracy ($X^p \sim U (0.8, 1.2)$) such that for each trace $g^{f^p}$ is better than $g^f$. Using $X^p$ and $\mu(t)$, we sample the value for $Pred(t-1)$ with Eq.~\eqref{eq:model2:model} as follows:
\begin{align}
    Pred(t-1) = \frac{\mu(t)}{X^p}.
\end{align}

\Fig{model2:basic} shows the performance bound ($g^{f^p}(\cdot)$) obtained in Corollary~\ref{cor:model2:perfbound}. The figure also shows the performance curve when following the optimal control law from Corollary~\ref{cor:model2:mdp} for different values of $C^p(w, \gamma)$. We also include the performance bound ($g^f(\cdot)$) and the performance curve of the optimal control law based on the MIF model. Indeed, the optimal control law with prediction  outperforms that based on the MIF model.
Additionally, the optimal control law achieves performance close to the performance bound. 

This result shows that that improving link rate prediction could provide large performance gains. This could come from better modelling of the statistical properties of wireless links, or from physical layer information
such as signal strength~\cite{predictwifi, pbe_cc, xie2015pistream}.

   \begin{figure}[tb]
 \centering
     \begin{subfigure}[h]{0.32\textwidth}
        \includegraphics[width=\textwidth]{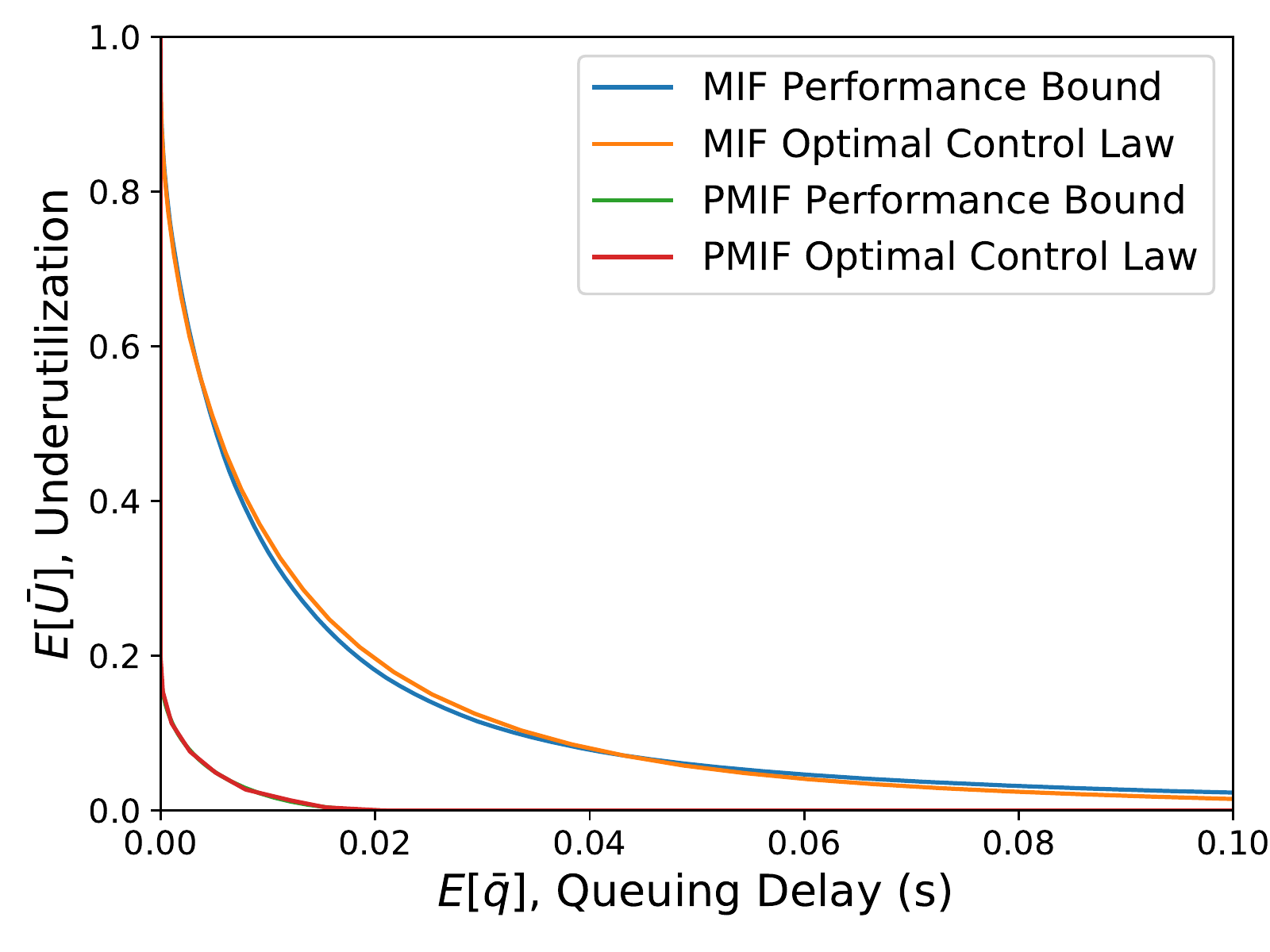}
        \vspace{-6mm}
        \caption{Verizon Downlink}
        \label{fig:model2:basic:verizon_down}
    \end{subfigure}
     \begin{subfigure}[h]{0.32\textwidth}
        \includegraphics[width=\textwidth]{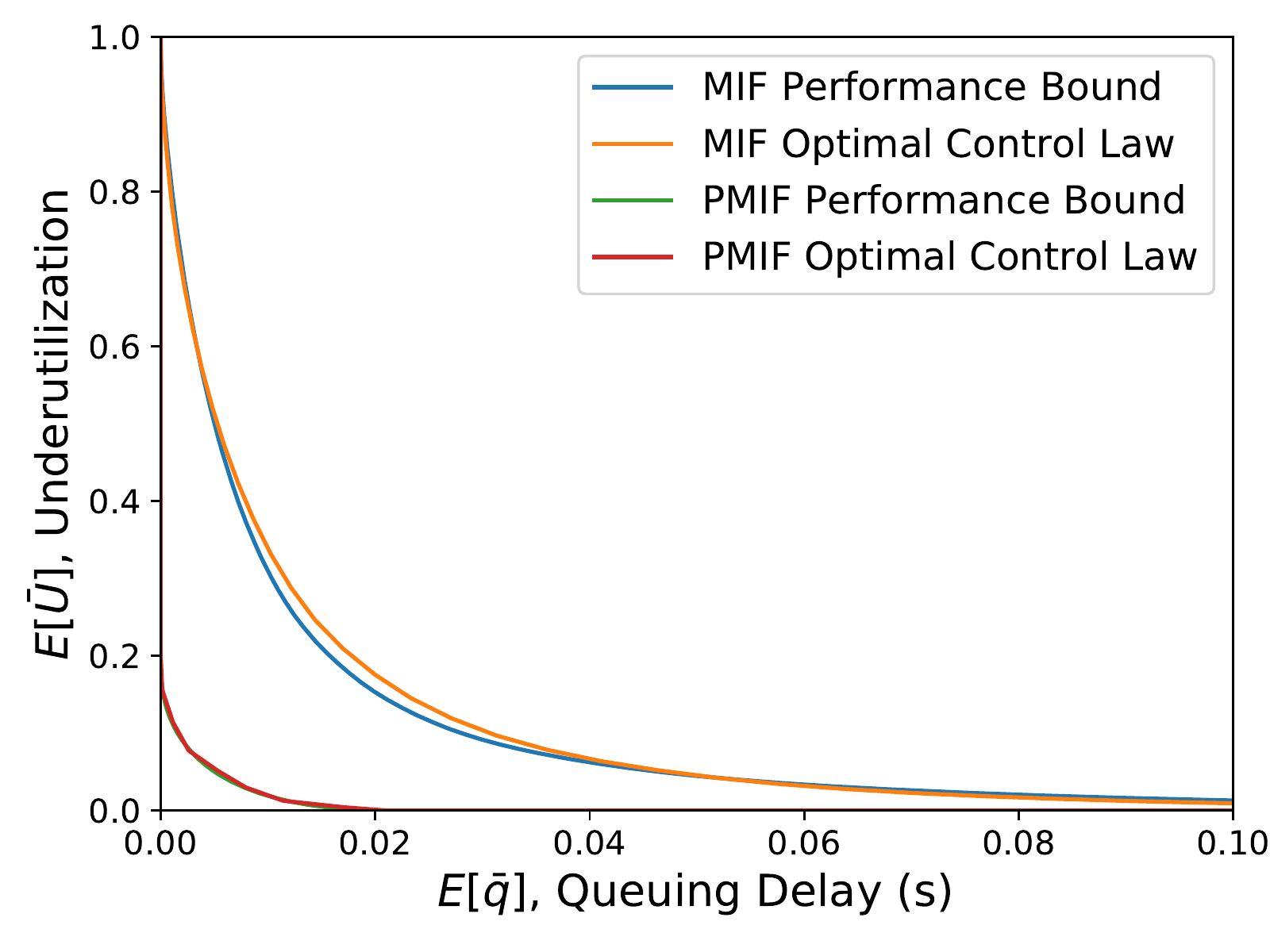}
        \vspace{-6mm}
        \caption{Verizon Uplink}
        \label{fig:model2:basic:verizon_up}
    \end{subfigure}
    \begin{subfigure}[h]{0.32\textwidth}
        \includegraphics[width=\textwidth]{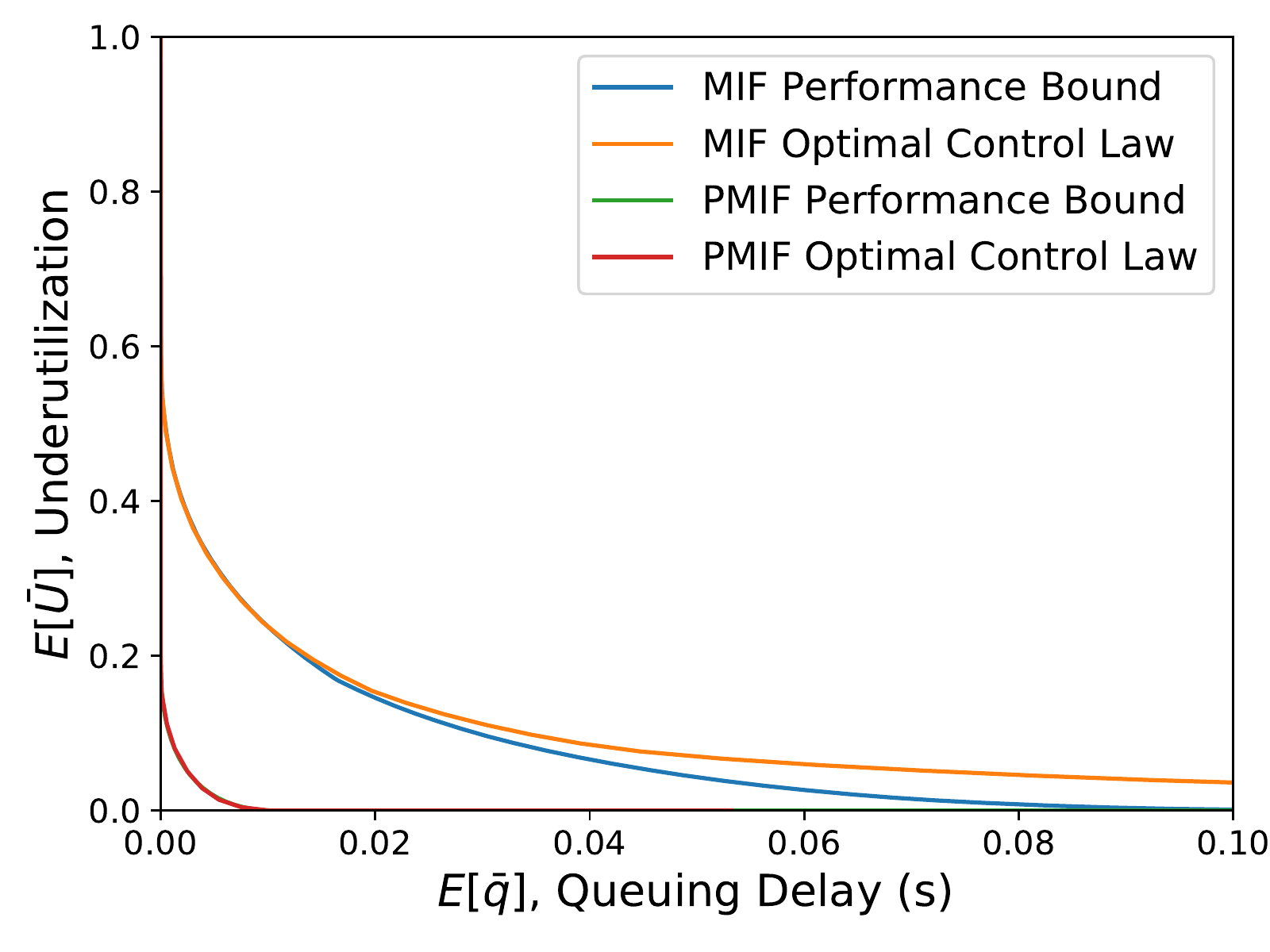}
         \vspace{-6mm}
        \caption{Synthetic Trace(MIF model)}
        \label{fig:model2:basic:custom}
    \end{subfigure}
    \vspace{-4mm}
    \caption{\small {\bf Impact of incorporating prediction ---} Incorporating prediction can substantially improve performance.}
    \label{fig:model2:basic}
    \vspace{-4mm}
\end{figure} 

\if 0
\smallskip
\noindent
\textbf{Prediction based only on past link capacity:} For the traces from \S\ref{ss:model1:implications}, can we predict the current link capacity based on recent observations of the link capacity? For example, we may not have access to physical layer information such as signal strength. To this end, we built a simple predictor based on the link capacity over the last five time steps. Formally, $Pred(t-1) (= \sum_{i = 1}^{5} w_i \cdot \mu(t-i))$ is a weighted average of the link capacity in the time steps $t-5$ to $t-1$. Given a particular set of weights, we can empirically compute the random variable $X^p$ associated with this predictor using the trace.\footnote{We use the distribution of $\frac{\mu(t)}{Pred(t-1)}$ from the trace as the PDF for $X^p$ ($f^p$).} We use a simple heuristic, and pick the weights with the minimum value of standard deviation for the random variable $log(X^p)$ for this predictor.
\Fig{model2:sep} shows the performance curve for the PMIF model optimal control law with this predictor -- it does not provide any benefit relative to the simpler MIF model. 
\ma{Seems weak. So what that a moving average predictor didn't work? Maybe the model sucks, or maybe there isn't enough correlation (but you can't conclude that). I'd cut this. Instead, end 6.3 with a sentence like: This result shows that that improving link rate prediction could provide large performance gains. Besides better statistical modeling, is likely that the largest gains could result from better signals from the link layer and  wireless schedulers. (Cite any work that exists on this) }


   \begin{figure}[tb]
 \centering
     \begin{subfigure}[h]{0.32\textwidth}
        \includegraphics[width=\textwidth]{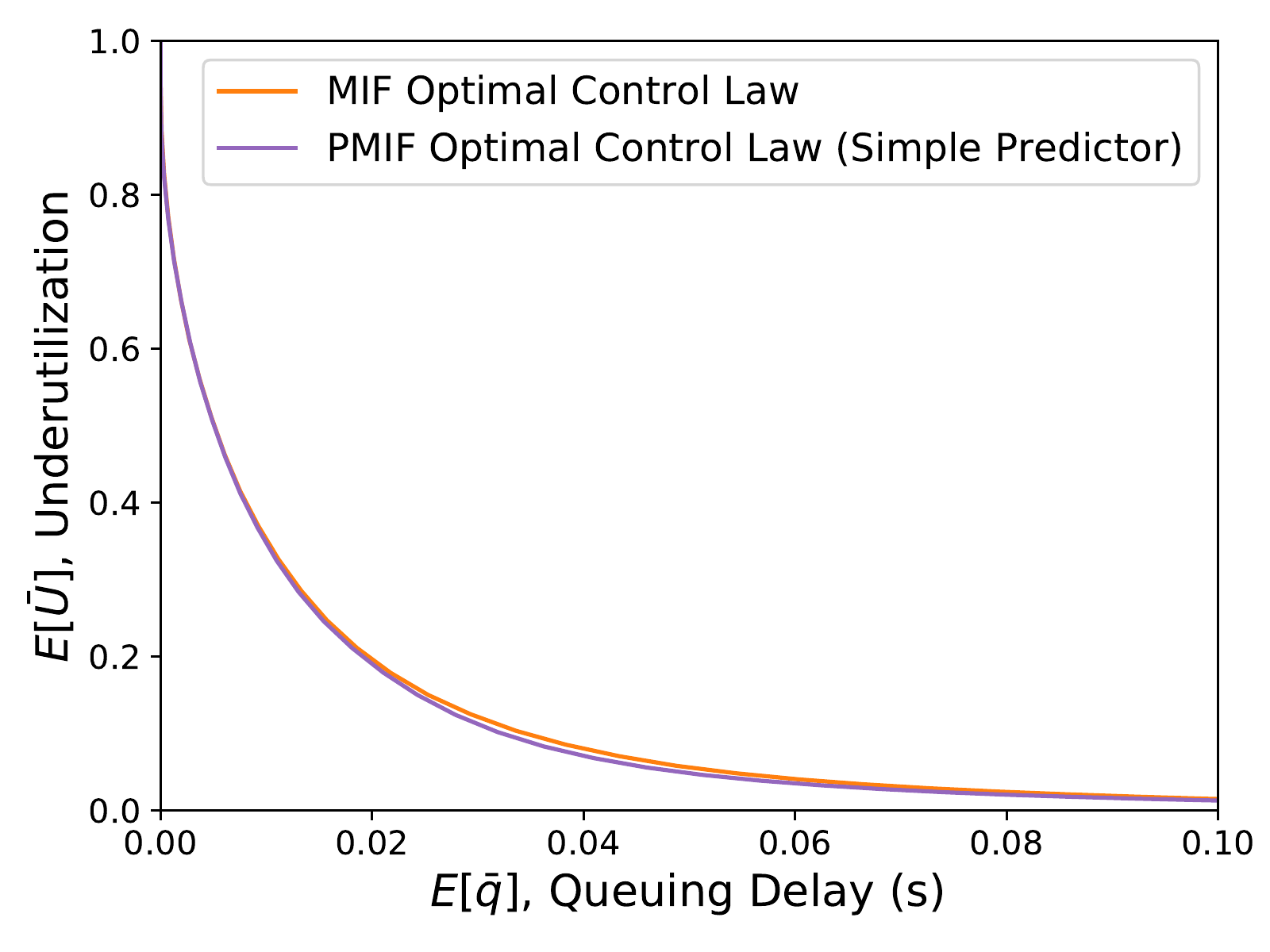}
        \vspace{-6mm}
        \caption{Verizon Downlink}
        \label{fig:model2:sep:verizon_down}
    \end{subfigure}
     \begin{subfigure}[h]{0.32\textwidth}
        \includegraphics[width=\textwidth]{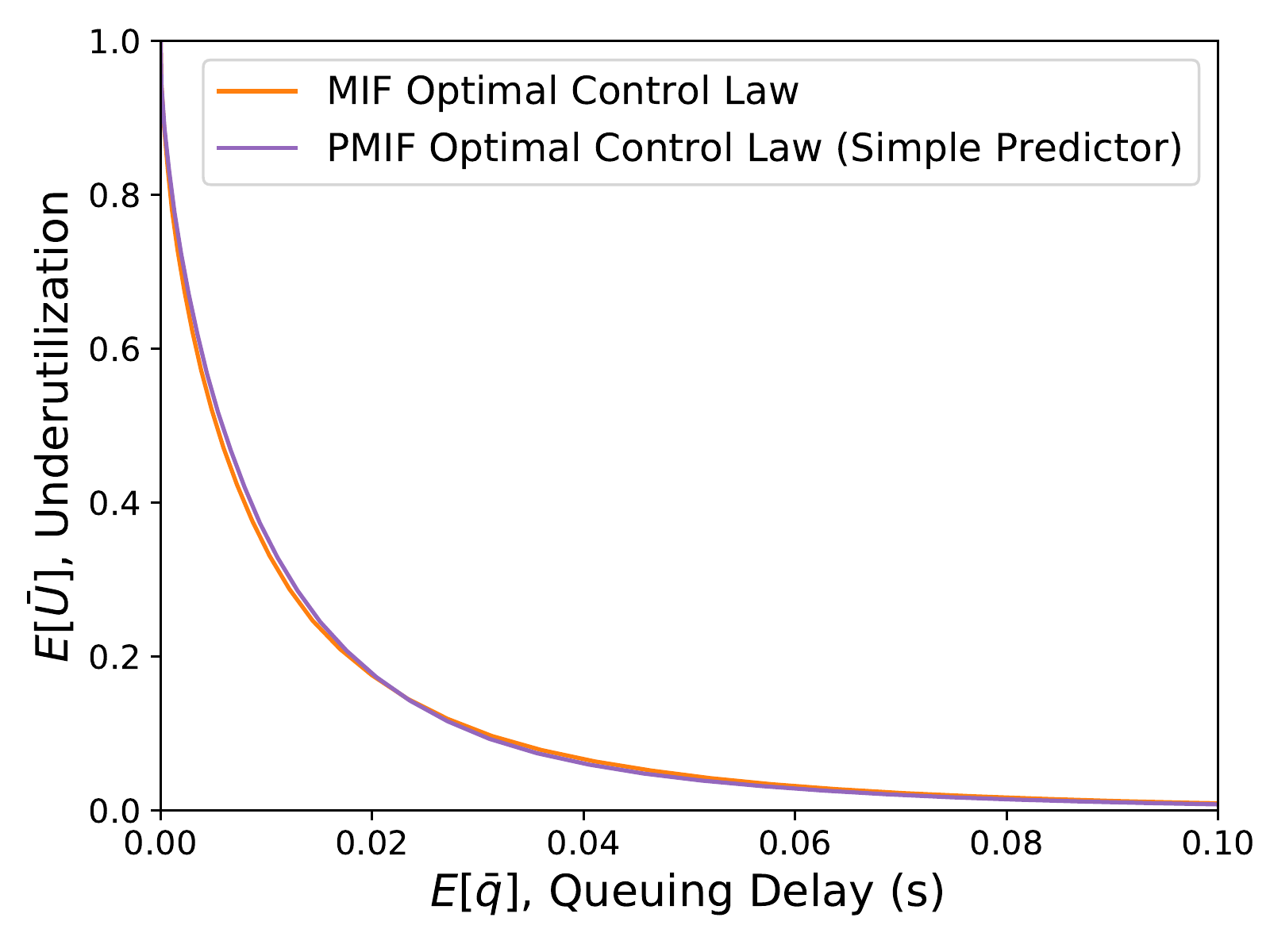}
        \vspace{-6mm}
        \caption{Verizon Uplink}
        \label{fig:model2:sep:verizon_up}
    \end{subfigure}
    \begin{subfigure}[h]{0.32\textwidth}
        \includegraphics[width=\textwidth]{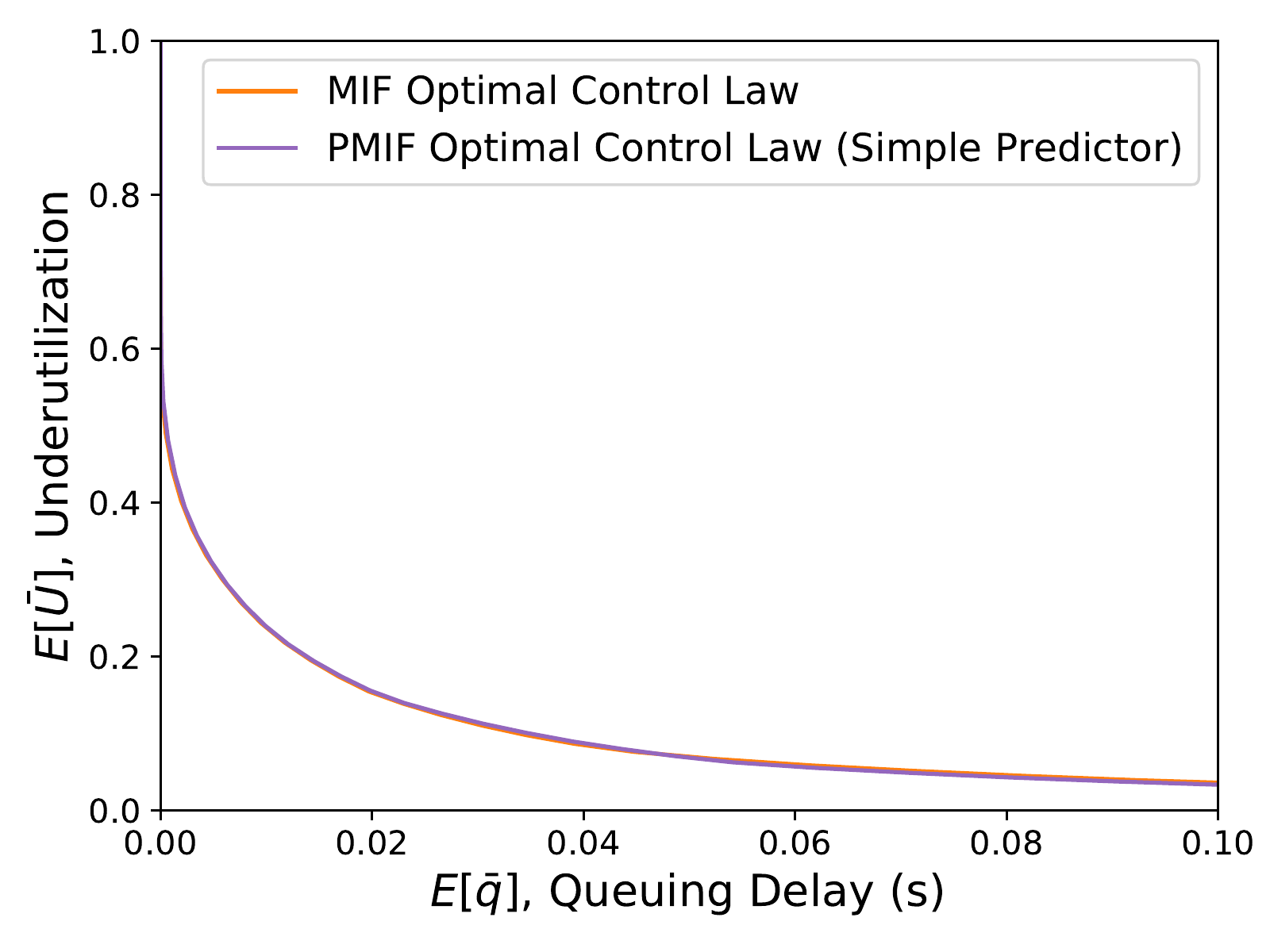}
         \vspace{-6mm}
        \caption{Synthetic Trace(MIF model)}
        \label{fig:model2:sep:custom}
    \end{subfigure}
    \vspace{-4mm}
    \caption{\small {\bf Simple predictor based only on past link capacity observations---} $Pred(t-1) = \sum_{i = 1}^{5} w_i \cdot \mu(t-i)$.}
    \label{fig:model2:sep}
    \vspace{-4mm}
\end{figure} 
\fi
\section{A General Markovian Link Model}
\label{s:model3}
\label{ss:generic_markov}
\label{ss:model3:model}
In the MIF model, tne link capacity varies independently from step to step and the relative variation is the same for every time step. In a more realistic setting, the variability may instead depend on the \emph{state of the link}. 
\pg{For example, cellular links have a maximum rate, and the likelihood of relative increase in capacity is likely to be higher at lower link rates compared to higher link rates.}  In this section, we extend the MIF model to allow for a more expressive Markovian process governing the link capacity. We refer to this extension as the {\em State-dependent Multiplicative Factors (SMF)} model. 



Let $S(t) \in \textbf{S}$ represent the state of the link in time step $t$, where $\textbf{S}$ denotes the set of all possible link states. \pg{The link state includes the link capacity $\mu(t)$ and any other quantity that can impact how the link capacity varies in the next time step.}
For example, in cellular networks where multiple users compete at a base station, the number of users at the base station might be part of the link state. 
To capture variations in the link state, we model the link state itself as a Markov chain. The link state in time step $t-1$ ($S(t-1)$) governs the probability distribution of the link state in time step $t$ ($S(t)$). $S(t-1)$ implicitly also governs the probability distribution of link capacity in the next time step ($\mu(t)$).  


The number of distinct states in the set $\textbf{S}$ can be infinite. For ease of presentation, we will assume that the number of such states is finite, but our analysis holds for continuous state spaces as well.  Note that the above model includes as a special case any Markovian link rate process governed by a probability transition function $P(\mu(t+1)|\mu(t))$; the link state is simply $S(t) = \mu(t)$ in this case. 

Let $\mathbb{A}$ be the probability transition matrix for the link state. This matrix defines the probability $P^{A}(S(t)=k_1|S(t-1)=k_2)$. For ease of analysis, we first convert this model to a form analogous to Eq. ~\eqref{eq:model1:model}. The Markov chain for the link state can be described in an alternate form as follows. Let $S(t)$ conditioned on the event $\{S(t-1) = k\}$ have distribution $H^k$. We can write: $S(t) = Z_t^{k}$ if $S(t-1)=k$, where  $Z_{t}^{k}$ is an I.I.D random variable with distribution $H^k$ ($\forall k \in \textbf{S}, \forall t \in \mathbb{N}$). In short:
\begin{align}
    S(t) = Z_{t}^{S(t-1)}.
\end{align}
Since $\mu(t)$ is determined by $S(t)$, we can similarly write $\mu(t)$ as follows,
\begin{align}
    \mu(t) &= Y_{t}^{S(t-1)},
\end{align}
where $\forall k \in \textbf{S}, \forall t \in \mathbb{N}$, $Y_{t}^{k}$ is an I.I.D random variable. Similarly, since $\mu(t-1)$ is given by $S(t-1)$, we can rewrite the above equation as follows,
\begin{align}
    \mu(t) &= \mu(t-1) \cdot X_{t}^{S(t-1)},
    \label{eq:generic_markov_mu_def}
\end{align}
where $\forall k \in \textbf{S}, X_{t}^{k}$ is an I.I.D random variable. 

Let $f^k(\cdot)$ be the PDF associated with random variable $X^{k}$. $f^k$ can be calculated from the transition matrix $\mathbb{A}$. The equation above (governing the uncertainty in the link capacity at the sender) is now analogous to Eq.~\eqref{eq:model1:model}. We make the following assumptions:

\textit{Assumptions:} (1) Link state transitions are independent of the decisions of the congestion control protocol, i.e., the sending rate and queuing delay do not affect how the link state changes. 
(2) Link state has a stationary distribution in the steady state given by the function $\lambda(\cdot)$, i.e. $P(S(t)=k) \to \lambda(k)$, and $t \to \infty$.
(3) The sender has access to the link state information in the previous time step ($S(t-1)$) for deciding the sending rate in time step $t$.
(4) The starting state ($S(0)$) has the same distribution as the stationary distribution $\lambda$. This implies, $P(S(t)=k) = \lambda(k), \forall t \in \mathbb{N}$.\footnote{\pg{This assumption is merely for convenience of analysis. In the more general case of starting from any arbitrary initial state, the performance bound we provide will hold after a sufficient ``mixing'' time has passed so that the link state distribution is close the stationary distribution. The optimal control law we derive holds regardless of the initial state.}} 
5) The link capacity is always positive, i.e., we only consider $\mathbb{A}$ such that $P(\mu(t)<=0) = 0, \forall t \in \mathbb{N}$.

Similar to our analysis for the MIF model, we derive a performance bound between $E[\bar{q}(t)]$ and $E[\bar{U}(t)]$ (\S\ref{ss:model3:tperfbound}), use it to draw insights about how the optimal congestion controller (\S\ref{ss:model3:achievable_bound}), and derive the optimal control law by posing the task as a MDP (\S\ref{ss:model3:mdp}). Finally, we discuss the implications of our analysis and validate our findings (\S\ref{ss:model3:implications}). 

\subsection{Performance Bound}
\label{ss:model3:tperfbound}

\begin{theorem}
In the SMF model, for any causal congestion control protocol, for any $t \in N$,
the point $(E[\bar{q}(t)], E[\bar{U}(t)])$ lies in a convex set $\mathbb{D}^A$. The point $(x,y) \in \mathbb{D}^A$ iff
\begin{align}
    x &= \sum_{k \in \textbf{S}} \lambda(k) \cdot x^k, & y = \sum_{k \in \textbf{S}} \lambda(k) \cdot y^k,
    \label{eq:thm:model3:perfbound}
\end{align}
where $\forall k \in \textbf{S}$, the point $(x^k, y^k) \in \mathbb{C}^{f^k}$. The set $\mathbb{C}^{f^k}$ is as defined in Theorem ~\ref{thm:model1:perfbound} with $f^k(\cdot)$ as the PDF 
\label{thm:model3:perfbound}
\end{theorem}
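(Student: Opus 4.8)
The plan is to replay the proof of Theorem~\ref{thm:model1:perfbound} essentially verbatim, but now conditioning on the link state of the previous round in addition to the relative load. As in Eq.~\eqref{eq:model1:proof_perfbound:rt}, define $\rho(t) = (Q(t-1)/T + s(t))/\mu(t-1)$. Substituting $\mu(t) = \mu(t-1)\cdot X_t^{S(t-1)}$ (Eq.~\eqref{eq:generic_markov_mu_def}) into Eqs.~\eqref{eq:setup:queue_delay_def} and~\eqref{eq:setup:undeutilization_def} gives, exactly as in Eq.~\eqref{eq:model1:proof_perfbound:qU_rt},
\begin{align}
  q(t) = T\left(\frac{\rho(t)}{X_t^{S(t-1)}} - 1\right)^+, \qquad U(t) = \left(1 - \frac{\rho(t)}{X_t^{S(t-1)}}\right)^+. \nonumber
\end{align}

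The first step is to argue that, conditioned on $\{S(t-1)=k\}$, the random variable $X_t^k$ is independent of $\rho(t)$. Indeed, $\rho(t)$ is a function of $Q(t-1)$, $\mu(t-1)$ and $s(t)$; the first two are determined by the history through round $t-1$, and by causality $s(t)$ may depend only on that history together with $S(t-1)$ (Assumption 3). Since the family $\{X_t^k\}$ is I.I.D.\ across time and independent of everything through round $t-1$ (link-state transitions are exogenous by Assumption 1), conditioning on $\{S(t-1)=k\}$ leaves the law of $X_t^k$ unchanged and keeps it independent of $\rho(t)$. This is the one place where the structure of the SMF model really enters, and it is the crux of the argument.

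Given this independence, conditioning further on $\{\rho(t)=b\}$ and repeating the computation of Eq.~\eqref{eq:model1:proof_perfbound:single_round_givenrt} with $f^k$ in place of $f$ shows that $(E[q(t)\mid S(t-1)=k,\rho(t)=b],\, E[U(t)\mid S(t-1)=k,\rho(t)=b])$ lies on the curve $g^{f^k}$. Averaging over $b$ against the conditional law of $\rho(t)$ given $\{S(t-1)=k\}$ and using convexity of $\mathbb{C}^{f^k}$ (Lemma~\ref{lemma:model1:convexity_of_bound}), exactly as in Eq.~\eqref{eq:model1:single_round}, yields $(x^k_t, y^k_t) \triangleq (E[q(t)\mid S(t-1)=k],\, E[U(t)\mid S(t-1)=k]) \in \mathbb{C}^{f^k}$. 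Now take expectations over $S(t-1)$: by Assumption 4, $P(S(t-1)=k)=\lambda(k)$ for every $t\in\mathbb{N}$, so
\begin{align}
  E[q(t)] = \sum_{k\in\textbf{S}}\lambda(k)\,x^k_t, \qquad E[U(t)] = \sum_{k\in\textbf{S}}\lambda(k)\,y^k_t, \nonumber
\end{align}
with $(x^k_t,y^k_t)\in\mathbb{C}^{f^k}$; by the definition in Eq.~\eqref{eq:thm:model3:perfbound} this is precisely the statement $(E[q(t)],E[U(t)])\in\mathbb{D}^A$. Finally, to pass to the time-averaged metrics I would observe that $\mathbb{D}^A$ is convex: it is the Minkowski combination $\sum_{k}\lambda(k)\,\mathbb{C}^{f^k}$ of the convex sets $\mathbb{C}^{f^k}$, and scaling and Minkowski sums preserve convexity. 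Hence $(E[\bar q(t)],E[\bar U(t)]) = \tfrac{1}{t}\sum_{i=1}^{t}(E[q(i)],E[U(i)])$, being an average of points of $\mathbb{D}^A$, lies in $\mathbb{D}^A$, which completes the proof.

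I expect the main obstacle to be making the conditional-independence claim fully rigorous: one must set up the probability space so that the per-state driving sequences $\{X_t^k\}_{t,k}$ are jointly independent of one another and of $S(0)$, and then carefully verify that conditioning on the $\sigma$-algebra generated by the round-$(t-1)$ history — which contains $S(t-1)$ but not the fresh factor $X_t^{S(t-1)}$ — does not perturb the law of $X_t^{S(t-1)}$. Everything downstream is a state-indexed copy of the Theorem~\ref{thm:model1:perfbound} argument together with the elementary fact that a $\lambda$-weighted Minkowski combination of convex sets is convex; one might isolate this last fact as a short lemma mirroring Lemma~\ref{lemma:model1:convexity_of_bound}.
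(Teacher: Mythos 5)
Your proposal is correct and follows essentially the same route as the paper's proof: condition on $S(t-1)=k$ to reduce each state to a copy of the Theorem~\ref{thm:model1:perfbound} argument, combine the per-state points via the stationary distribution $\lambda$, and use convexity of $\mathbb{D}^A$ (as a $\lambda$-weighted combination of the convex sets $\mathbb{C}^{f^k}$) to handle the time average. The only difference is that you spell out explicitly the conditional independence of $X_t^k$ and $\rho(t)$ given $S(t-1)=k$, which the paper leaves implicit when it invokes the MIF-model techniques.
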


The performance bound is dependent on the probability distribution of the link state ($\lambda$) and how each link state $S(t-1)$ impacts the relative variability in link capacity in the next time step ($f^{S(t-1)}$). The set $\mathbb{D}^A$ is a weighted average of the sets $\cup_{k \in \textbf{S}} \{\mathbb{C}^{f^k}\}$. $\mathbb{C}^{f^k}$ in itself is the performance bound when the relative variability in link capacity is the same in each time step according to link state $k$ (i.e., $\mu(t) = \mu(t-1) \cdot X^k_t, \forall t \in \mathbb{N}$).

\begin{proof}
 Since link state ($S(t-1)$) governs the link capacity in the next time step ($\mu(t)$), the queuing delay and underutilization in time step $t$ also depend on $S(t-1)$. We can calculate $E[q(t)]$ and $E[U(t)]$ using the conditional expected values $E[q(t)|S(t-1)=k]$ and $E[U(t)|S(t-1)=k]$ as follows
\begin{align}
    E[q(t)] &= E_{S(t-1)}[E[q(t)|S(t-1)=k]] = \sum_{k \in \textbf{S}} \lambda(k) \cdot E[q(t)|S(t-1)=k],\nonumber\\
    E[U(t)] &= E_{S(t-1)}[E[U(t)|S(t-1)=k]] = \sum_{k \in \textbf{S}} \lambda(k) \cdot E[U(t)|S(t-1)=k].
    \label{eq:model3:generic_markov_equ}
\end{align}

To prove our theorem, we will prove the following lemma about the conditional expected values of queuing delay and underutilization,
\begin{lemma}
 $\forall k \in \textbf{S}$, the point $(E[q(t)|S(t-1)=k],E[U(t)|S(t-1)=k]) \in \mathbb{C}^{f^k}$.
\end{lemma}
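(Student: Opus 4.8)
The plan is to replay the proof of Theorem~\ref{thm:model1:perfbound} verbatim, but carried out conditionally on the event $\{S(t-1)=k\}$, with the distribution $f$ replaced everywhere by $f^k$. Fix $k \in \textbf{S}$. On the event $\{S(t-1)=k\}$, Eq.~\eqref{eq:generic_markov_mu_def} reads $\mu(t)=\mu(t-1)\cdot X_t^k$ with $X_t^k$ having PDF $f^k$. Reusing the quantity $\rho(t) = \big(\tfrac{Q(t-1)}{T}+s(t)\big)/\mu(t-1)$ from Eq.~\eqref{eq:model1:proof_perfbound:rt}, the identical algebra behind Eq.~\eqref{eq:model1:proof_perfbound:qU_rt} gives $q(t) = T\big(\rho(t)/X_t^k - 1\big)^+$ and $U(t) = \big(1-\rho(t)/X_t^k\big)^+$ on this event.

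The crux is to argue that, conditioned on $\{S(t-1)=k\}$, the random variables $\rho(t)$ and $X_t^k$ are independent. This follows from causality together with Assumption (1): $\rho(t)$ is a function of $Q(t-1)$, $\mu(t-1)$ and $s(t)$, each of which is determined by the system history through time step $t-1$ (the sender observes $S(t-1)$, hence $\mu(t-1)$, together with $Q(t-1)$, plus any internal randomization), while $X_t^k$ is drawn I.I.D.\ and is independent of that history once $S(t-1)$ is fixed. Hence I can condition further on $\rho(t)=b$ and, exactly as in Eq.~\eqref{eq:model1:proof_perfbound:single_round_givenrt}, obtain
\begin{align*}
E[q(t)\mid S(t-1)=k,\ \rho(t)=b] &= T\int_0^b\Big(\tfrac{b}{a}-1\Big) f^k(a)\,da, \\
E[U(t)\mid S(t-1)=k,\ \rho(t)=b] &= \int_b^\infty\Big(1-\tfrac{b}{a}\Big) f^k(a)\,da,
\end{align*}
which is precisely the point $(x, g^{f^k}(x))$ on the curve $g^{f^k}$ from Eq.~\eqref{eq:model1:thm:perfbound}.

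Finally I would integrate over the (strategy- and $k$-dependent) conditional law of $\rho(t)$ given $\{S(t-1)=k\}$: the pair $(E[q(t)\mid S(t-1)=k],\, E[U(t)\mid S(t-1)=k])$ is a weighted average of points lying on $g^{f^k}$, and since $\mathbb{C}^{f^k}$ is convex by Lemma~\ref{lemma:model1:convexity_of_bound} applied with PDF $f^k$, this average lies in $\mathbb{C}^{f^k}$, proving the Lemma. Plugging this into Eq.~\eqref{eq:model3:generic_markov_equ} writes $(E[q(t)], E[U(t)])$ as the $\lambda$-weighted combination $\sum_{k}\lambda(k)(x^k,y^k)$ with $(x^k,y^k)\in\mathbb{C}^{f^k}$, i.e.\ a point of $\mathbb{D}^A$; averaging over $i=1,\dots,t$ and using that $\mathbb{D}^A$ is convex (being a Minkowski-type combination of the convex sets $\mathbb{C}^{f^k}$ with fixed weights $\lambda$) then yields $(E[\bar q(t)], E[\bar U(t)])\in\mathbb{D}^A$.

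I expect the only genuine subtlety to be the conditional-independence claim: one must check that conditioning on $\{S(t-1)=k\}$ does not covertly couple $\rho(t)$ with the fresh draw $X_t^k$. It does not, because $\rho(t)$'s dependence on the link enters only through $S(t-1)$ (and earlier states), all of which are independent of $X_t^k$ once $S(t-1)$ is fixed, by the Markov structure and Assumption (1). Everything else is a line-by-line transcription of the MIF-model argument with $f\mapsto f^k$, so no new estimates are needed.
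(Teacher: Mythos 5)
Your proposal is correct and follows essentially the same route as the paper: rewrite the conditional $q(t)$ and $U(t)$ in terms of $\rho(t)$ and $X_t^k$, condition on $\rho(t)=b$ to land on the curve $g^{f^k}$, and then use convexity of $\mathbb{C}^{f^k}$ to conclude after averaging over the law of $\rho(t)$. The only difference is that you spell out the conditional independence of $\rho(t)$ and $X_t^k$ given $\{S(t-1)=k\}$, which the paper uses implicitly; that is a welcome clarification, not a deviation.
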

\begin{proof}
The conditional queuing delay $(q(t)|(S(t-1)=k))$ and underutilization $(U(t)|S(t-1) = k)$ can be calculated as follows,
\begin{align}
    q(t)|(S(t-1)=k) &= T \cdot \left(\frac{\frac{Q(t-1)}{T} + s(t)}{\mu(t - 1) \cdot X_{t}^{k}} - 1 \right)^+ = T \cdot \left( \frac{\rho(t)}{X_{t}^k} -1 \right)^+,\nonumber\\    
    U(t)|(S(t-1)=k) &= \left( 1 - \frac{\frac{Q(t-1)}{T} + s(t)}{\mu(t - 1) \cdot X_{t}^{k}}\right)^+ = \left(1 -  \frac{\rho(t)}{X_{t}^k} \right)^+, 
    \label{eq:model3:proof_perfbound:generic_markov_system}
\end{align}
where $\rho(t)$ is as defined by Eq.~\eqref{eq:model1:proof_perfbound:rt}.  

Given the link state in the previous time step ($S(t-1)$), the relative link load ($\rho(t)$) solely governs the probability distribution of $q(t)$ and $U(t)$. The equation above is in fact analogous to Eq.~\eqref{eq:model1:proof_perfbound:qU_rt} from the MIF model. Thus, we can apply the same techniques as in the proof of Theorem ~\ref{thm:model1:perfbound}. The point $(E[q(t)|S(t-1)=k, \rho(t) = b], E[U(t)|S(t-1)=k, \rho(t) = b])$ lies on the curve $g^{f^k}(\cdot)$. Consequently the point $(E[q(t)|S(t-1)=k], E[U(t)|S(t-1)=k])$ lies
in the convex set $\mathbb{C}^{f^k}$.
\end{proof}

The Lemma along with Eq.~\eqref{eq:model3:generic_markov_equ} establishes that the point $(E[q(t)], E[U(t)]) \in \mathbb{D}^A$. The point $(E[\bar{q}(t)], E[\bar{U}(t)])$ also belongs to $\mathbb{D}^A$ (using the same technique as the multi-time step proof of Theorem ~\ref{thm:model1:perfbound}). Since $\mathbb{C}^{f^k}$ is convex $\forall k \in \textbf{S}$, the weighted average of $\cup_{k \in \textbf{S}}\{\mathbb{C}^{f^k}\}$ is convex, and $\mathbb{D}^A$ is convex. 
\end{proof}

Next, we will show a simple method for computing the set $\mathbb{D}^A$. This method will provide us insights into the potential form of the optimal control law. 

\subsubsection{How to compute the set $\mathbb{D}^A$?\\}

To compute $\mathbb{D}^A$, we will calculate it's boundary function $g^A(\cdot)$. Point $(x,y)$ belongs to $\mathbb{D}^A$ iff $y \geq g^A(x)$. The boundary point $(x,g^A(x))$ must be a weighted average of points on the curves $\cup_{k \in \textbf{S}}\{g^{f^k}(\cdot)\}$. Formally,

\begin{align}
    x &= \sum_{k \in \textbf{S}} \lambda(k) \cdot x^k, & y = \sum_{k \in \textbf{S}} \lambda(k) \cdot g^{f^k}(x^k).
\end{align}
This is because for a given $x^k$, the min value of $y^k$, such that $(x^k,y^k) \in \mathbb{C}^{f^k}$ occurs at $y^k=g^{f^k}(x^k)$. 

\begin{proposition}
The point $(x,y) \in \mathbb{D}^A$, given by 
\begin{align}
    x &= \sum_{k \in \textbf{S}} \lambda(k) \cdot x^k, & y = \sum_{k \in \textbf{S}} \lambda(k) \cdot g^{f^k}(x^k),\nonumber\\
\end{align}
is on the boundary ($y=g^A(x)$) iff,
\begin{align}
    g^{f^i}{'}(x^i) = g^{f^j}{'}(x^j) &, \forall i, j \in \textbf{S}.  
\end{align}
\label{prop:model3:compute_da}
\end{proposition}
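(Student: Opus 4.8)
The plan is to recognize the boundary curve $g^A$ as the optimal value of a one-parameter family of convex programs and then read off the stated condition from first-order optimality. As already observed just above the proposition, a point on the boundary of $\mathbb{D}^A$ with abscissa $x$ has the form $\big(x,\ \sum_{k}\lambda(k)\,g^{f^k}(x^k)\big)$ with $\sum_k\lambda(k)x^k=x$ and each $x^k$ in the domain $[0,\infty)$ of $g^{f^k}$, and $y=g^A(x)$ holds exactly when this expression is as small as possible. Hence
\begin{align}
 g^A(x) \;=\; \min\Big\{\,\sum_{k\in\textbf{S}}\lambda(k)\,g^{f^k}(x^k)\ :\ \sum_{k\in\textbf{S}}\lambda(k)\,x^k = x,\ x^k\ge 0\,\Big\}.
\end{align}
Since each $g^{f^k}$ is convex (Lemma~\ref{lemma:model1:convexity_of_bound}) and the weights $\lambda(k)$ are nonnegative, the objective is convex, while the constraints are a single affine equality together with box constraints. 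So this is a convex program, and its optimality conditions pin down the boundary.

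The next step is to form the Lagrangian $L(\{x^k\},\nu)=\sum_k\lambda(k)\,g^{f^k}(x^k)-\nu\big(\sum_k\lambda(k)\,x^k-x\big)$. From the explicit formula in Eq.~\eqref{eq:model1:thm:perfbound} one sees that the slope $g^{f^k}{'}$ sweeps $(-\infty,0)$ as the parameter $b$ ranges over $\mathbb{R}^+$, so $g^{f^k}$ has infinite one-sided slope at $x^k=0$; consequently an optimal $x^k$ is interior whenever $x>0$, since pushing a coordinate off the boundary $x^k=0$ strictly lowers the objective. Stationarity of $L$ in $x^k$ then gives $\lambda(k)\big(g^{f^k}{'}(x^k)-\nu\big)=0$, i.e. $g^{f^k}{'}(x^k)=\nu$ for every $k\in\textbf{S}$, which is the ``only if'' direction. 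Conversely, if a feasible tuple $\{x^k\}$ satisfies $g^{f^i}{'}(x^i)=g^{f^j}{'}(x^j)$ for all $i,j$, denote the common value by $\nu$; then $\{x^k\}$ and $\nu$ satisfy the KKT conditions of the convex program above, which are sufficient for global optimality, so $\sum_k\lambda(k)g^{f^k}(x^k)=g^A\big(\sum_k\lambda(k)x^k\big)$ and the point lies on the boundary. Validity of KKT/strong duality is immediate because the only equality constraint is affine and the feasible set has nonempty relative interior.

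Geometrically this is the familiar fact that a line supporting the weighted Minkowski sum $\mathbb{D}^A$ at a boundary point is simultaneously, with the same slope, a supporting line of each summand $\mathbb{C}^{f^k}$ at $(x^k,g^{f^k}(x^k))$; equating the slopes of the boundary curves $g^{f^k}$ is precisely equating these supporting-line slopes. It also shows that as $\nu$ ranges over $(-\infty,0)$ the induced abscissa $x=\sum_k\lambda(k)x^k$ increases monotonically over $(0,\infty)$, so the boundary is swept out once, which is a convenient way to actually compute $\mathbb{D}^A$.

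The main obstacle is technical rather than conceptual: $g^{f^k}$ is guaranteed convex but not a priori differentiable, and in principle an optimal $x^k$ could sit at $x^k=0$. The clean remedy is to state the optimality condition with subdifferentials --- the sets $\partial g^{f^k}(x^k)$ share a common element $\nu$, together with the usual sign condition on the multiplier of $x^k\ge 0$ --- which specializes to ``derivatives all equal'' at points of differentiability. Alternatively one adds a mild smoothness hypothesis on the $f^k$ (satisfied by the uniform and empirical distributions used in the evaluation), under which Eq.~\eqref{eq:model1:thm:perfbound} shows each $g^{f^k}$ is continuously differentiable on $(0,\infty)$ and the boundary case at $x^k=0$ is ruled out as above. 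Finally, finiteness of $\textbf{S}$ is used only to write finite sums; for a continuous state space the identical Lagrangian argument goes through with sums replaced by integrals against $\lambda$, the condition becoming that $g^{f^k}{'}(x^k)$ equals a constant for $\lambda$-almost every $k$.
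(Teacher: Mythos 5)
Your proposal is correct, and it reaches the result by a genuinely different route than the paper. You recast $g^A(x)$ as the value of the convex program $\min\{\sum_k \lambda(k) g^{f^k}(x^k) : \sum_k \lambda(k) x^k = x,\ x^k \ge 0\}$ and read both directions off Lagrangian/KKT conditions: stationarity gives the equal-slope condition (necessity), and KKT sufficiency for convex problems gives boundary membership (sufficiency), with the corner case $x^k=0$ excluded via the observation from Eq.~\eqref{eq:model1:proof_perfbound:slope} that the one-sided slope of $g^{f^k}$ blows up to $-\infty$ at the left end of its domain. The paper instead argues by hand: for necessity it performs the discrete exchange perturbation (shift mass $\epsilon/\lambda(i)$ into coordinate $i$ and $\epsilon/\lambda(j)$ out of coordinate $j$, lowering $y$ at fixed $x$ when slopes differ), which is exactly the finite-difference version of your stationarity condition; for sufficiency it shows no other common-slope decomposition with the same $x$ can have smaller $y$, relying on the derivative of each $g^{f^k}$ being strictly increasing. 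Your packaging buys a few things: the ``if'' direction needs only convexity (no strict monotonicity of the slopes), the supporting-line/Minkowski-sum interpretation and the monotone sweep in the multiplier $\nu$ give a clean recipe for actually computing $\mathbb{D}^A$, and the subdifferential formulation honestly covers nondifferentiable $g^{f^k}$, a case the paper's proof silently assumes away (it both differentiates and uses strictly increasing slope without comment). The paper's argument, in turn, is more elementary and self-contained, requiring no appeal to KKT or constraint qualifications. One small caution on your write-up: the claim that the slope ``sweeps'' all of $(-\infty,0)$ is stronger than needed and requires continuity of the slope in $b$; for the interiority argument you only need that the slope tends to $-\infty$ at the left endpoint while being finite at any $x^j>0$, which is what the formula actually gives.
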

\begin{proof}
Please see Appendix~\ref{app:prop:compute_da} for proof details. 
\end{proof}

This proposition states that the point $(x,g^{A}(x))$ is a weighted average of points $\cup_{k \in \textbf{S}} \{(x^k, g^{f^k}(x^k))\}$ with the {\em same slope}. In other words, we are averaging points such that the marginal trade-off between expected queuing delay and underutilization is same on the curves $\cup_{k \in \textbf{S}}\{g^{f^k}(\cdot)\}$.




\if 0
\pg{marker to skip. Regarding the remainder of the proof/statements, I want to talk first before making further progress.}

The question we now ask is what is the final trade-off boundary between under-utilization and queuing delay (given by curve $g$). Let $h(i)$ be the steady probability that $P(I(t-1) = i)$ for $t \to \infty$. The points in the feasible region of the new curve will be given by weighted average (using $h$) of points in the feasible region of curves $g^i$ where $i \in \textbf{I}$.  So curve $g$ is the boundary of points formed by taking points on the boundary of curves $g_i$. Let $\cup_{i \in \textbf{I}}(x^i, g^i(x^i))$ be the set of such points, then the final trade-off point ($(x,y)$) can be given by
\begin{align}
    x &= \sum_{i \in \textbf{I}} h(i) \cdot x^i\nonumber\\
    y &= \sum_{i \in \textbf{I}} h(i) \cdot g^i(x^i)
\end{align}

The point is considered on boundary if there is no other point $(x, y^1)$ such that $y^1 < y$. The condition for $(x,y)$ to be on the boundary ($y = g(x)$) is 
\begin{align}
    g^j '(x^j) &= g^k '(x^k) & \forall j,k \in \textbf{I}
\end{align}
i.e., by combining points with the same slope. To see why, consider that this is not the case for two states  $j, k$. Lets assume $g^j '(x^j) < g^k '(x^k)$, we can pick a new point $(x^1, y^1)$ corresponding to points $\cup_{i \in \textbf{I} - \{j,k\}} (x^i, g^i(x^i)) \cup (x^j + \frac{\epsilon}{h(j)}, g^j(x^j + \frac{\epsilon}{h^j})) \cup (x^k - \frac{\epsilon}{h(k)}, g^k(x^k - \frac{\epsilon}{h^k})$. If $\epsilon \to 0$, then the final trade-off point $(x^1, y^1)$ will be
\begin{align}
    x^1 &= \left(\sum_{i \in \textbf{I}}h(i) \cdot x^i\right) + h(j) \cdot \frac{\epsilon}{h(j)} - h(k) \cdot \frac{\epsilon}{h(k)} \nonumber\\
       &= x \nonumber\\
    y^1 &= \left(\sum_{i \in \textbf{I}}h(i) \cdot g^i(x^i)\right) + h(j) \cdot \frac{\epsilon}{h(j)} \cdot g^j '(x^j) - h(k) \cdot \frac{\epsilon}{h(k)} \cdot g^k '(x^k)\nonumber\\
     &= y + \epsilon \cdot (g^j '(x^j) - g^k '(x^k))
\end{align}
Since $x^1 = x$ and $y^1 < y$, the point $(x,y)$ cannot be on the boundary.

So by picking points on the trade-off curves $(x^i,g^i(x^i)$ with same slopes we can get different trade-off points $(x,y)$. The points $(x^i,g^i(x^i)$ will correspond to a fixed optimal strategy $\rho(t) = C_r^{I(t-1)}$. Note that with this model it is possible that $C_r^{I(t-1)} \neq C_r^{I(t)}$, as a result ABC (which closely emulates a fixed $\rho(t) = C_r$ regardless on the link state) will be sub-optimal. This can aid us in designing better congestion control schemes where we know a bit more about the system.
\fi

\if 0
\begin{corollary}
 $g^A$ is convex.
\end{corollary}
Consider a point $\{x_1,g^A(x_1))\}$ formed by combining $\cup_{k \in \textbf{S}}\{x_1^k, g^{f^k}(x_1^k)\}$. 

Consider another point $\{x_2, g^A(x_2)\}$ formed by points $\cup_{k \in \textbf{S}}\{x_2^k, g^{f^k}(x_2^k)\}$. To prove, $g^A$ is convex, we need to show if $g^A'(x_1) < g^A'(x_2)$, then $x_1 < x_2$.
\begin{align}
    g^A'(x_1) &< g^A'(x_2) \nonumber\\
    \Rightarrow g^{f^k} '(x_1^k) &< g^{f^k} '(x_2^k) & \forall k \in \textbf{S}
\end{align}
Since all the $g^{f^k}$ curves are convex with increasing slopes.
\begin{align}
    g^{f^{k}} '(x_k^i) &< g^{f^k} '(x_2^k) & \forall k \in \textbf{S} \nonumber\\
    \Rightarrow x_1^k &< x_2^k & \forall k \in \textbf{S} \nonumber\\
    \sum_{k \in \textbf{S}} \lambda(k) \cdot x_1^k &< \sum_{k \in \textbf{S}} \lambda(k) \cdot x_2^k & \nonumber\\
    x_1 &< x_2 & 
\end{align}
This implies that the curve $g^A$ is convex with increasing slope. This implies, the set $\mathbb{D}^A$ is convex.
\fi

\subsection{Is the Performance Bound Achievable?}
\label{ss:model3:achievable_bound}

Recall, the point $(E[q(t)|S(t-1)=k, \rho(t) = b], E[U(t)|S(t-1)=k, \rho(t) = b])$ lies on the curve $g^{f^k}(\cdot)$. The proposition in the previous subsection establishes that we can achieve a particular point ($x, g^{A}(x)$) on the performance bound where $x = \sum_{k \in \textbf{S}}(\lambda(k) \cdot x^k)$ and 
$g^{f^k}{'}(x^k) = g^{A}{'}(x), \forall k \in \textbf{S}$ by following a strategy of the form,

\if 0
\begin{align}
    \rho(t)|(S(t-1)=k) &= C^{simp}(g^{A}{'}(x), k) &, \forall t \in \mathbb{N}, \forall k \in \textbf{S}\nonumber\\
    s(t)|(S(t-1)=k) &= C^{simp}(g^{A}{'}(x), k) \cdot \mu(t-1) - \frac{Q(t-1)}{T} &, \forall t \in \mathbb{N}, \forall k \in \textbf{S} \nonumber\\
    x^k &= E[q(t)|S(t-1)=k, \rho(t) = C^{simp}(g^{A}{'}(x), k)]) &, \forall k \in \textbf{S}
    \label{eq:model3:simplified_strategy}
\end{align}
\fi

\if 0
\begin{align}
    \rho(t)|(S(t-1)=k) &= C^A(k), & \forall t \in \mathbb{N}, \forall k \in \textbf{S},\nonumber\\
    s(t)|(S(t-1)=k) &= C^A(k)\cdot \mu(t-1) - \frac{Q(t-1)}{T}, & \forall t \in \mathbb{N}, \forall k \in \textbf{S},\nonumber\\
    x^k &= E[q(t)|S(t-1)=k, \rho(t) = C^A(k)]), & \forall t \in \mathbb{N}, \forall k \in \textbf{S}.
    \label{eq:model3:simplified_strategy}
\end{align}
\fi
\begin{align}
    \rho(t)|(S(t-1)=k) &= C^A(k) \Longrightarrow  s(t)|(S(t-1)=k) = C^A(k)\cdot \mu(t-1) - \frac{Q(t-1)}{T}, &&  \forall k \in \textbf{S},
    \label{eq:model3:simplified_strategy}
\end{align}
where $\forall k \in \textbf{S}, C^A(k)$ satisfies $x^k = E[q(t)|S(t-1)=k, \rho(t) = C^A(k)])$.


In simple terms, the above control law is arguing for setting the relative link load based solely on the link state in the previous time step. Intuitively this is as expected because the probability distribution of $q(t)$ and $U(t)$ is only dependent on $\rho(t)$ and $S(t-1)$ (Eq.~\eqref{eq:model3:proof_perfbound:generic_markov_system}).
Again, following this strategy might not always be possible. In particular, in the event of excessive queuing ($Q(t-1) > T \cdot C^A(S(t-1)) \cdot \mu(t-1)$), for $\rho(t) = C^A(S(t-1)$, the sender needs to pick a negative value for $s(t)$ (from Eq.~\eqref{eq:model1:proof_perfbound:rt}) which is not possible. 

We will now establish the necessary conditions under which the sender can follow the strategy $\rho(t) = C^A(S(t-1))$ exactly. 

\begin{proposition}

In the SMF model , it is possible to follow the strategy $\rho(t) = C^A(S(t-1))$  $\forall t \in \mathbb{N}$ iff
\if 0
\begin{align}
    \frac{C^{S(t-1)}}{X_{min}^{S(t-1)}} - 1 &\leq C^{S(t)} & \forall S(t-1), S(t) \in \textbf{S}
\end{align}
\fi
\begin{align}
    \frac{C^A(k_1)}{X_{min}^{k_1}} - 1 &\leq C^A(k_2), & \forall k_1, k_2 \in \textbf{S},
\end{align}
\label{prop:model3:achievable_bound}
where $\forall k \in \textbf{S}, X^k_{min}$ is the minimum value of the random variable $X^k$ ($P(X^k<X^k_{min}) = 0$). 
\end{proposition}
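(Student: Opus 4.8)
The plan is to adapt the proof of Proposition~\ref{prop:model1:achievable_bound}, replacing the single constant $C$ with the state-dependent constant $C^A(S(t))$. The central object is the invariant
\[
    Q(t) \le T \cdot C^A(S(t)) \cdot \mu(t), \qquad \forall t \ge 0 .
\]
By Eq.~\eqref{eq:model1:proof_perfbound:rt}, this invariant at step $t$ is exactly the statement that, at the next step $t+1$, the sender can choose the non-negative rate $s(t+1) = C^A(S(t))\,\mu(t) - Q(t)/T$ realizing $\rho(t+1) = C^A(S(t))$. Hence ``the strategy $\rho(t) = C^A(S(t-1))$ is followable for all $t$'' is equivalent to ``the invariant holds for all $t$'', and it suffices to characterize when the invariant holds.

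For the ``if'' direction I would induct on $t$. The base case is immediate: $Q(0) = 0 \le T\,C^A(S(0))\,\mu(0)$ since $C^A(\cdot)\ge 0$ and $\mu(0) > 0$. For the inductive step, assume the invariant at $t-1$; then the sender can enforce $\rho(t) = C^A(S(t-1))$, and combining Eq.~\eqref{eq:setup:queue_size_def} with $\mu(t) = \mu(t-1)\,X_t^{S(t-1)}$ (Eq.~\eqref{eq:generic_markov_mu_def}) gives
\[
    Q(t) = T\,\mu(t)\left(\frac{C^A(S(t-1))}{X_t^{S(t-1)}} - 1\right)^{+}
         \le T\,\mu(t)\left(\frac{C^A(S(t-1))}{X^{S(t-1)}_{min}} - 1\right)^{+}
         \le T\,\mu(t)\,C^A(S(t)),
\]
where the first inequality uses $X_t^{S(t-1)} \ge X^{S(t-1)}_{min}$ and the second uses the hypothesized condition with $k_1 = S(t-1)$, $k_2 = S(t)$, together with $C^A(S(t)) \ge 0$ to absorb the $(\cdot)^{+}$. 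This closes the induction, so the strategy is followable whenever the condition holds.

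For the ``only if'' direction I would argue the contrapositive. Suppose the condition fails, say $C^A(k_1)/X^{k_1}_{min} - 1 > C^A(k_2)$ for some $k_1, k_2 \in \textbf{S}$ that the link chain realizes in one step (with $\lambda(k_1) > 0$ and $P(S(t)=k_2\mid S(t-1)=k_1) > 0$). Condition on the positive-probability event $\{S(0)=k_1\}$; since $Q(0)=0$ the sender can set $\rho(1)=C^A(k_1)$, after which $Q(1) = T\,\mu(1)\,(C^A(k_1)/X_1^{k_1} - 1)^{+}$. Because $X^{k_1}_{min}$ is the essential infimum of $X^{k_1}$ ($P(X^{k_1} < X^{k_1}_{min}+\epsilon) > 0$ for all $\epsilon > 0$) and $x \mapsto C^A(k_1)/x - 1$ is continuous, there is a positive-probability sub-event on which $X_1^{k_1}$ is close enough to $X^{k_1}_{min}$ and $S(1) = k_2$ that $Q(1) > T\,\mu(1)\,C^A(k_2)$; on it the invariant fails at $t=1$, so at step $t=2$ no non-negative $s(2)$ achieves $\rho(2)=C^A(S(1))=C^A(k_2)$. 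The hard part will be making this sub-event precise and of positive probability: this requires care about the joint law of $(X_t^{S(t-1)}, S(t))$ given $S(t-1)$ and about which one-step transitions $\mathbb{A}$ actually permits. In particular, if the transition $k_1 \to k_2$ is impossible, the corresponding constraint is vacuous, so in full generality the statement should be read over the realizable one-step transitions of the link chain; the ``if'' direction, by contrast, is a routine induction, essentially that of Proposition~\ref{prop:model1:achievable_bound} with the current link state carried into the bounding constant.
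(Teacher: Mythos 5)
Your proposal follows essentially the same route as the paper: the sufficiency direction is the paper's induction on the invariant $Q(t) \le T\,C^A(S(t))\,\mu(t)$ (your algebra is the paper's computation up to factoring out $C^A(S(t))$), and the necessity direction is the paper's observation that a near-worst-case capacity drop makes the queue exceed this threshold with positive probability, so the next step's load target cannot be met with a non-negative rate. If anything you are more careful than the paper on the converse, which it dispatches in one sentence: your caveats about the joint law of $(X_t^{S(t-1)}, S(t))$ and about constraints for one-step transitions that $\mathbb{A}$ assigns zero probability being vacuous are genuine issues the paper's proof glosses over.
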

\begin{proof}
Please see Appendix~\ref{app:model3:achievable bound} for the proof. 
\end{proof}
This condition restricts excessive queue build up for all possible transitions in the link state $S(t-1) = k_1$ to $S(t) = k_2$. Again, the performance bound need not be tight.

\if 0
\begin{align}
    Q(t) &\leq T \cdot C_r^{S(t)} \cdot \mu(t) & \forall S(t) \in \textbf{S}\nonumber\\
    (C_r^{S(t-1)} \cdot \mu(t-1) - \mu(t))^+ &\leq C_r^{S(t)} \cdot \mu(t) & \forall S(t-1), S(t) \in \textbf{S} \nonumber\\
    (\frac{C_r^{S(t-1)} }{X^{S(t-1)}} - 1)^+ &\leq C_r^{I(t)} & \forall S(t-1), S(t) \in \textbf{S} \nonumber\\
    \frac{C_r^{S(t-1)} }{X_{min}^{S(t-1)}} - 1) &\leq C_r^{S(t)} & \forall S(t-1), S(t) \in \textbf{S}
\end{align}
\fi

\subsection{Optimal Control Law}
\label{ss:model3:mdp}

What happens when the above condition is not met? What is the optimal control law? 
Will the following variant of the strategy in Proposition~\ref{prop:model3:achievable_bound} -- $s(t)|(S(t-1)=k) = \left(C^A(k) \cdot \mu(t-1) - \frac{Q(t-1)}{T}\right)^+$, where $C^A(k)$ is as defined in Eq. ~\eqref{eq:model3:simplified_strategy} -- be optimal? 

To answer this question, as before, we formulate an MDP. The MDP is similar to that in the MIF model but the state at time step $t$ is given by $(q(t-1), \mu(t-1), S(t-1))$. 
The state transitions occur according to:
\begin{align}
    \mu(t)&=\mu(t-1)\cdot X^{S(t-1)}_t  \quad \text{(per the SMF model)}, \nonumber\\
    S(t) &= Z^{S(t-1)}_t \quad \text{(per the SMF model)}, \nonumber\\
    q(t)&=T\cdot(\rho(t)/X^{S(t-1)}_t-1)^+  \quad \text{(per Eq.~\eqref{eq:model3:proof_perfbound:generic_markov_system})}. \nonumber
\end{align}

The goal of the agent is to minimize the objective function ($J(\cdot)$ from Eq.~\eqref{eq:model1:mdp:objective}) over all policies ($\pi$):
\if 0
The goal of the agent is to minimize the following objective function over all policies ($\pi^A$).
\begin{align}
     J^{A}(\pi^A) = E\left[\sum_{t=1}^{\infty}\gamma^{t-1} \Big( w \cdot q(t) + U(t)] \Big)\right], 
\end{align}
where $\gamma \in (0,1)$ is the discount factor. 
\fi
\begin{corollary}
    In the MDP defined above, the optimal control policy that minimizes $J(\pi)$ takes the form:  
    \begin{align}
        s(t) = \left( C^{A}(w, \gamma, S(t-1)) \cdot \mu(t-1) - \frac{Q(t-1)}{T} \right)^+,  
        \label{eq:cor:model3:mdp}
    \end{align}
where $\forall k \in \textbf{S}, C^A(w,\gamma,k)$ is a constant. 
    \label{cor:model3:mdp}
\end{corollary}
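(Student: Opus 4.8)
The plan is to follow the dynamic-programming argument of Theorem~\ref{thm:model1:mdp} essentially verbatim, carrying the extra link-state coordinate through. First I would reparametrize the action by the relative load $\rho(t)$ instead of $s(t)$ via Eq.~\eqref{eq:model1:proof_perfbound:rt}, under which the constraint $s(t)\ge 0$ becomes $\rho(t)\ge q(t-1)/T$. By Eq.~\eqref{eq:model3:proof_perfbound:generic_markov_system}, conditioned on $S(t-1)=k$ the pair $(q(t),U(t))$ has a distribution depending only on $\rho(t)$ (through $X^{k}_t$), while $S(t)=Z^{k}_t$ with a transition that does not reference $\mu$, $q$, or the action; the stage cost $w\,q(t)+U(t)$ also does not reference $\mu$. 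Hence nothing downstream of the $(q,S)$ coordinates depends on $\mu$, so---exactly as in Lemma~\ref{lemma:model1:combined_mdp_lemma}---the optimal value function may be written $V(q,k)$, satisfying the Bellman equation
\[
V(q,k)=\min_{\rho\ge q/T}\; E\Big[\,w\,q(1)+U(1)+\gamma\,V\big(q(1),S(1)\big)\ \Big|\ q(0)=q,\ S(0)=k\Big],
\]
where $q(1)=T(\rho/X^{k}_1-1)^+$, $U(1)=(1-\rho/X^{k}_1)^+$, and $S(1)=Z^{k}_1$.

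The crux is to show that $V(\cdot,k)$ is convex and non-decreasing for every $k\in\textbf{S}$. I would prove this by value iteration: set $V_0\equiv 0$ and define $V_{n+1}$ by the Bellman recursion with $V_n$ in place of $V$. Assume inductively that each $V_n(\cdot,k)$ is convex and non-decreasing. For a fixed realization of $X^{k}_1$, the map $\rho\mapsto q(1)$ is convex, so $V_n(q(1),Z^{k}_1)$ is convex in $\rho$ (a convex non-decreasing function composed with a convex map); $w\,q(1)$ and $U(1)$ are each convex in $\rho$; and taking the expectation over $X^{k}_1$ and $Z^{k}_1$ preserves convexity, since this is a non-negative mixture (a finite convex combination over $\textbf{S}$, or an integral in the continuous case) of convex functions. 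Thus $W^{k}_n(\rho):=E[\,w\,q(1)+U(1)+\gamma V_n(q(1),Z^{k}_1)\mid \rho(1)=\rho,\,S(0)=k]$ is convex in $\rho$. Finally $V_{n+1}(q,k)=\min_{\rho\ge q/T}W^{k}_n(\rho)$ is the partial minimization of the jointly convex function $W^{k}_n(\rho)$ over the convex region $\{\rho\ge q/T\}$, hence convex in $q$, and non-decreasing because shrinking the feasible set can only raise the minimum. This is the SMF analogue of Sub~Lemma~\ref{sublemma:model1_helper}. The induction closes, and since $\gamma<1$ the $V_n$ converge to $V$, which inherits convexity and monotonicity.

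With $W^{k}(\rho):=E[\,w\,q(1)+U(1)+\gamma V(q(1),Z^{k}_1)\mid \rho(1)=\rho,\,S(0)=k]$ convex in $\rho$, let $\rho^{*k}$ be its minimizer. Then $V(q,k)=\min_{\rho\ge q/T}W^{k}(\rho)=W^{k}\!\big(\max(\rho^{*k},q/T)\big)$, so the optimal action in state $(q,\mu,k)$ is $\rho(t)=\max(\rho^{*k},q(t-1)/T)$. Setting $C^{A}(w,\gamma,k):=\rho^{*k}$ and inverting Eq.~\eqref{eq:model1:proof_perfbound:rt}, namely $s(t)=\mu(t-1)\rho(t)-Q(t-1)/T$ and $q(t-1)/T=Q(t-1)/(T\mu(t-1))$, gives $s(t)=\big(C^{A}(w,\gamma,S(t-1))\cdot\mu(t-1)-Q(t-1)/T\big)^+$, which is the claimed control law; time-homogeneity of the SMF dynamics guarantees the same $C^{A}(w,\gamma,k)$ is optimal in every time step.

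I expect the main obstacle to be the technical underpinning rather than the structure. I would need (i) existence and finiteness of $V$: the stage cost $q(1)\propto(\rho/X^{k})^+$ can have infinite mean if $X^{k}$ places too much mass near $0$, so---mirroring the footnote to Theorem~\ref{thm:model1:mdp}---the statement should be restricted to transition matrices $\mathbb{A}$ for which $V$ is well-defined; (ii) care in the convexity-propagation step when $\textbf{S}$ is infinite, where the finite convex combination becomes an integral requiring a measurability/dominated-convergence argument, plus a compactness argument for the existence of the minimizer $\rho^{*k}$; and (iii) a standard appeal to the theory of infinite-horizon discounted MDPs to conclude that the stationary policy assembled from the $\rho^{*k}$ is optimal among all causal (history-dependent, possibly randomized) policies. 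Step (ii)---propagating convexity cleanly through the state-coupled Bellman operator---is where I expect most of the care to go, though it is a direct extension of the MIF argument already carried out in Appendix~\ref{app:model1:combined_mdp_lemma}.
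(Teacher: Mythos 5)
Your proposal is correct and follows essentially the same route as the paper's own proof: reparametrize the action as the relative load $\rho$, show the value function is independent of $\mu$ and can be written $V(q,k)$, prove convexity (and monotonicity) in $q$ by induction on value-iteration/finite-horizon iterates using the convexity of $E[q(1)\mid\rho]$, $E[U(1)\mid\rho]$ and the composition with $\rho\mapsto T(\rho/a-1)^+$, then minimize the convex helper function $W^{k}(\rho)$ over $\rho\ge q/T$ to obtain the threshold policy $\rho(t)=\max(\rho^{*k},q(t-1)/T)$, i.e., $s(t)=\left(C^{A}(w,\gamma,S(t-1))\,\mu(t-1)-Q(t-1)/T\right)^+$. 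The only differences are cosmetic (value iteration from $V_0\equiv 0$ versus the paper's finite-horizon truncation with zero terminal cost, and your slightly cleaner partial-minimization argument for convexity of $V$), and your technical caveats mirror the paper's own existence assumption on the value function.
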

\begin{proof}
Please see Appendix~\ref{app:model3:cor_mdp}
\end{proof}

\if 0
 In particular, our goal is to find a causal congestion control strategy, s.t., we minimize the value of the following objective function,
\begin{align}
 \sum_{i=t}^{n}\gamma^{i-t} \Big( w \cdot E[q(i)] + E[U(i)] \Big), && \forall t \leq n,
\end{align}
where $\gamma$ is the discount factor and $\gamma \in (0,1)$.

We will now write the Bellman Equation governing this optimization problem. Where $V_{t-1, S(t-1)}(\cdot)$\cut{$q(t-1),\mu(t-1))$}  is the value function at time step $t-1$.
\begin{align}
    V_{t-1, S(t-1)}(q(t-1),\mu(t-1)) &= \min_{s(t) \geq 0} \Bigg( w \cdot E[q(t)] + E[U(t)] \nonumber\\
    &+ \gamma \cdot \sum_{S(t) \in \textbf{S}} P(S(t)|S(t-1) \cdot E[V_{t, S(t)}(q(t),\mu(t))])\Bigg),  & \forall t \leq n, S(t-1) \in \textbf{S},\nonumber\\
   V_{t, S(t)}(q(t),\mu(t)) &= 0, & t=n, \forall S(t) \in \textbf{S}.
\end{align}

Similar to (\S\ref{ss:model1:mdp}), we can describe the function $V_{t-1, S(t-1)}(\cdot)$ using only $q(t)$. Rewriting

\begin{align}
    V_{t-1, S(t-1)}(q(t-1)) &= \min_{\rho(t) \geq \frac{q(t-1)}{T}} \Bigg( w \cdot E[q(t)] + E[U(t)] \nonumber\\ 
    &+ \gamma \cdot \sum_{S(t) \in \textbf{S}} P(S(t)|S(t-1))\Big( \int_{0}^{\infty} f^{S(t-1)}(a) \cdot V_{t, S(t)}(T \cdot (\frac{\rho(t)}{a} - 1)^+) \cdot da \Big) \Bigg), & \forall t\leq n.\nonumber\\
\end{align}

To help our analysis, we define a new helper function  $V_{t, S(t-1)}^2(\cdot)$ as follows,
\begin{align}
    V_{t-1, S(t-1)}(\frac{Q(t-1)}{Pred(t-1)}) &= \min_{\rho(t) \geq \frac{q(t)}{T}} (V_{t, S(t-1)}^2(\rho(t))).
\end{align}

\begin{lemma}
$\forall t \leq n, S(t-1) \in \textbf{S}$, both $V_{t-1, S(t-1)}(\cdot)$ and $V_{t, S(t-1)}^2(\cdot)$ are convex.
\label{lemma:model3:mdp_convex}
\end{lemma}
The proof for this lemma is analagous to proof of Lemma~\ref{lemma:model1:mdp_convex} but with a minor change in the induction. We show that if  $V_{t, k}(\cdot)$ is convex $\forall k \in \textbf{S}$, then, $V_{t-1, k}(\cdot)$ is convex $\forall k \in \textbf{S}$.

This Lemma establishes that the best strategy in any given round $t$ is of the following form: 
\begin{align}
    \rho(t) &= max(C^{S(t-1)}(w, t), \frac{q(t-1)}{T}), & s(t) = \left(C^{S(t-1}(w, t) \cdot \mu(t-1) - \frac{Q(t-1)}{T}\right)^+,
\end{align}
where the minimum value of $V^2_{t, S(t-1)}(\cdot)$ occurs at $\rho(t) = C^{S(t-1)}(w, t)$ or $V_{t-1, S(t-1)}(0) = V^2_{t, S(t-1)}(C^{S(t-1)}(w, t))$. We can calculate the exact value of $C^{S(t-1)}(w, t)$ by calculating functions $V_{t, S(t-1)}(\cdot) and V^2_{t, S(t-1)}(\cdot)$.

Next, we consider this MDP over infinite horizon.


\begin{conjecture}
    As $n \to \infty$, $V_{t-1, k}(x) \to V_{0, k}(x)$ and $V_{t, k}^2(x) \to V_{1, k}^2(x), \forall t << n$, $x \in \mathbb{R}^+, k \in \textbf{S}$.
    \label{conj:model3:value_function}
\end{conjecture}
\pg{can we prove the conjecture if not give some intuition behind why we think the conjecture is true.}

\begin{corollary}
    In the model~\S\ref{ss:model3:model}, for the MDP~\S\ref{ss:model3:mdp}, if Conjecture~\ref{conj:model3:value_function} holds, then $n \to \infty \forall t << n$, the following causal congestion control protocol is optimal. 
    \begin{align}
        s(t) = \left( C^{S(t-1)}(w) \cdot \mu(t-1) - \frac{Q(t-1)}{T} \right)^+,  
        \label{eq:cor:model3:mdp}
    \end{align}
    \label{cor:model3:mdp}
\end{corollary}
where $V_{0, k}(0) = V^2_{1, k}(C^k(w)), \forall k \in \textbf{S}$.

Proof: Directly follows from Conjecture~\ref{conj:model3:value_function}.

The optimal sending rate is indeed dependent on the link state in the previous round which governs the distribution of the current link capacity. The corollary also establishes that for all rounds, the sender should try to keep the relative link load constant given the link state in the previous round ($\rho(t) = C^{S(t-1)}(w)$). In the event of excessive queuing ($Q(t-1) > T \cdot C^{S(t-1)}(w) \cdot \mu(t-1))$), the sender should pick the minimum possible value for relative link load ($\rho(t) = \frac{q(t-1)}{T}$ or $s(t) = 0$).
\fi 

\subsection{Implications and Validation}
\label{ss:model3:implications}
We use the traces from \S\ref{ss:model1:implications} to demonstrate the performance of the optimal control law based on the SMF model. Since we only have access to link capacity information in the traces,
we treat link capacity as the link state ($S(t) = \mu(t)$). In other words, the relative variation in link capacity in a time step in itself is dependent on the link capacity in the previous time step. Similar to \S\ref{ss:model1:implications} we use the link capacity from traces to generate the PDFs $\cup_{k \in \textbf{S}}\{f^{k}(\cdot)\}$. 
\Fig{model3} shows the performance bound ($g^A(\cdot)$) and the performance curve for the optimal control law\footnote{\pg{
For different values of $w$, we approximate the value of $C^A(w, \gamma, k)$ using the curve $g^A(\cdot)$ (Eq.~\eqref{eq:model3:simplified_strategy}). In particular, in the MDP formulation, $\gamma \to 0$ (or $J(\pi) = E[w \cdot q(1) +U(1)]$) corresponds to minimizing the cost in a single time step. Therefore, assuming $\gamma \to 0$, we can use the curve $g^A$ to compute $C^A(w, \gamma, k)$, formally, $g^{f^k}{'}([E[q(t)|S(t-1)=k, \rho(t)=C^A(w, \gamma, k)]) = -w, \forall k \in \textbf{S}$. 
}}.  As expected, on cellular traces, both the performance bound and the performance curve for the optimal control law based on the SMF model outperform those based on the MIF model. Compared to the MIF model, the SMF model provides more degrees of freedom to model the variations in link capacity. 
For example, unlike the MIF model, when the link capacity is low it is more likely to increase than if
the link capacity is high.
With access to the physical layer information, we might be able to use the SMF model to better capture the nuances of variations in the link capacity and improve performance even further.
Note also that since the relative variation in link capacity for the synthetic trace (generated using the MIF model in \S\ref{ss:model1:implications}) is same in every time step, 
there is no improvement from using the SMF optimal control law over the MIF model on those traces. 

   \begin{figure}[t]
 \centering
      \begin{subfigure}[h]{0.32\textwidth}
        \includegraphics[width=\textwidth]{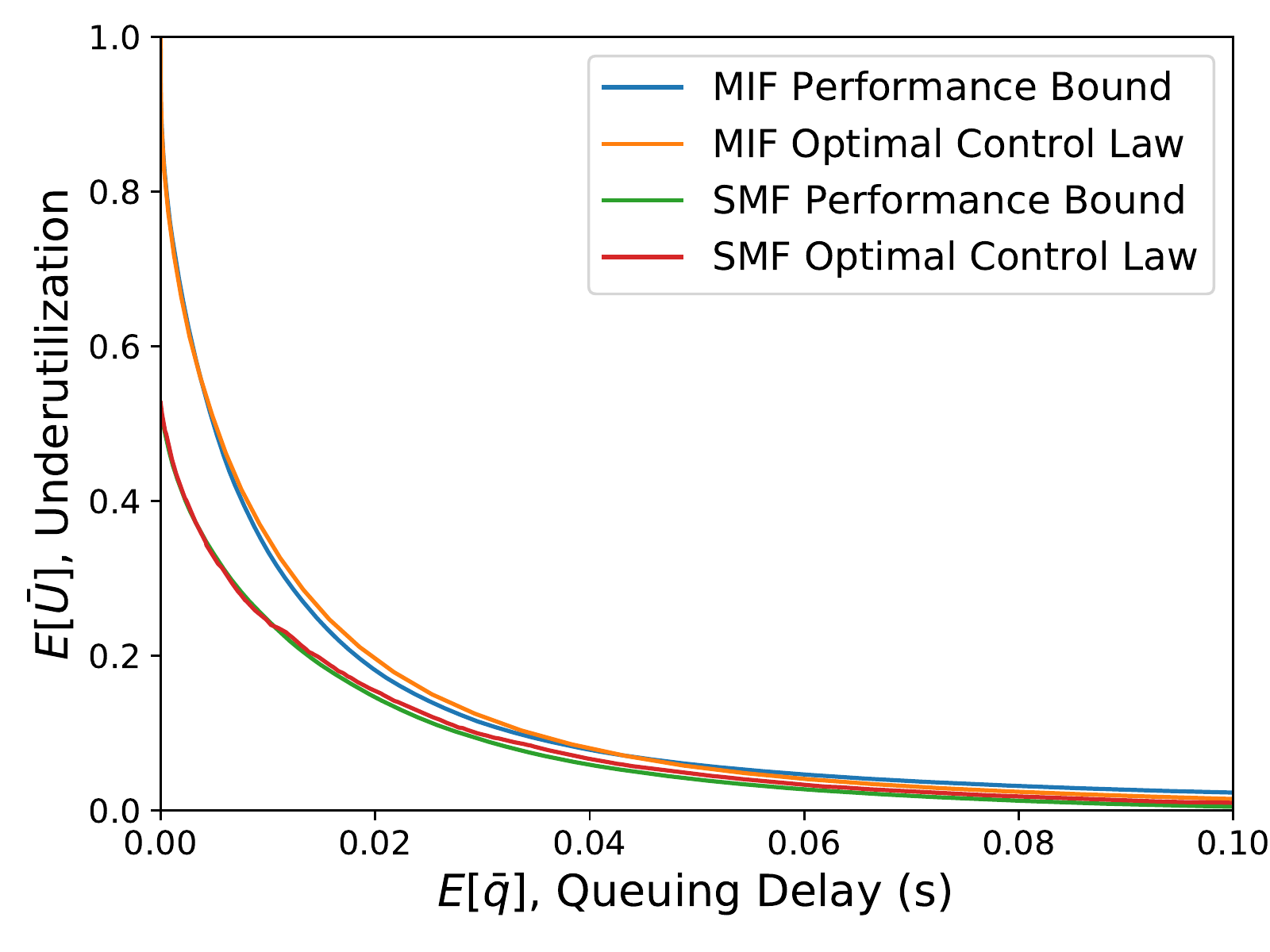}
        \vspace{-6mm}
        \caption{Verizon Downlink}
        \label{fig:model3:verizon_down}
    \end{subfigure}
     \begin{subfigure}[h]{0.32\textwidth}
        \includegraphics[width=\textwidth]{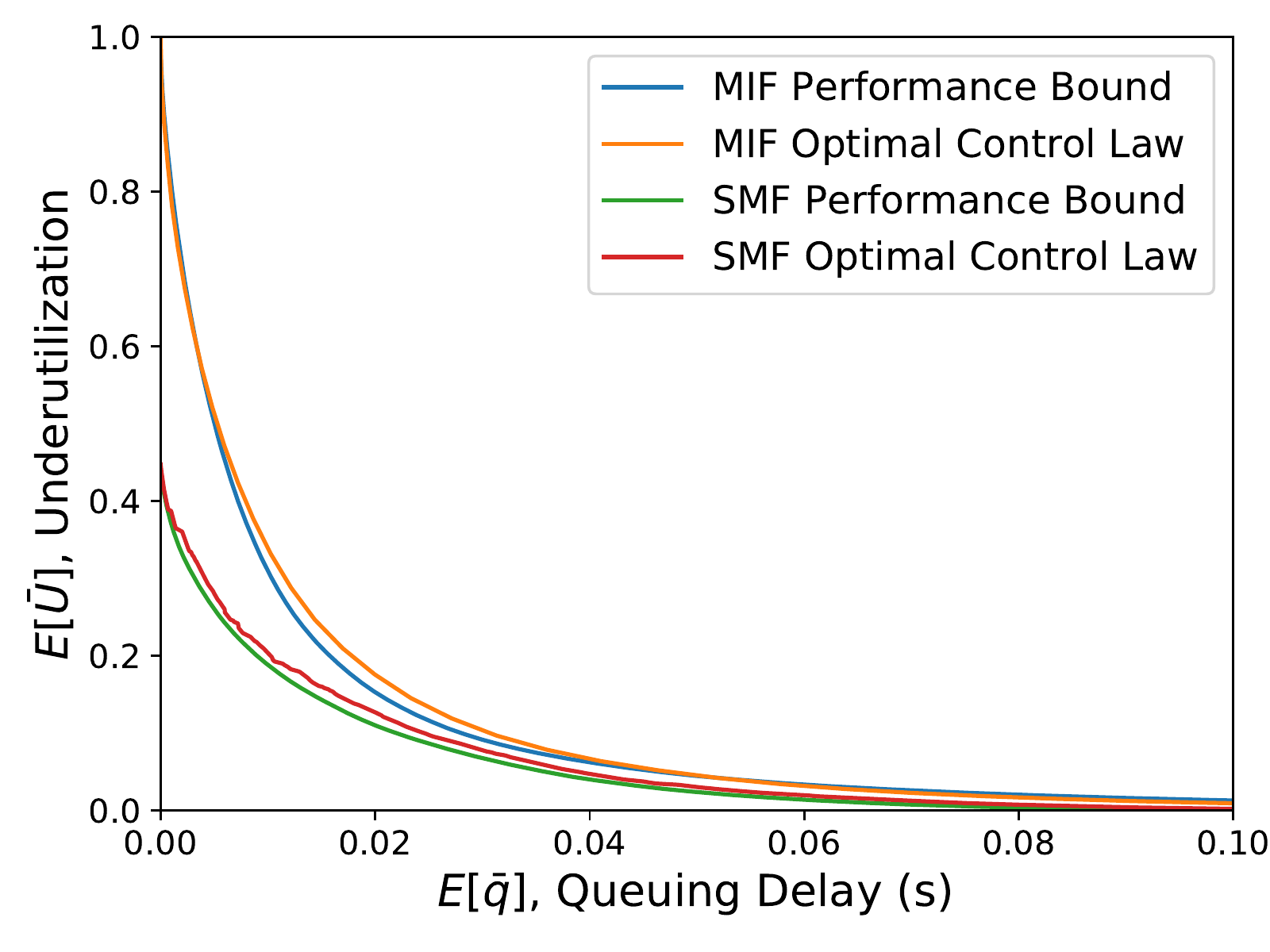}
        \vspace{-6mm}
        \caption{Verizon Uplink}
        \label{fig:model3:verizon_up}
    \end{subfigure}
    \begin{subfigure}[h]{0.32\textwidth}
        \includegraphics[width=\textwidth]{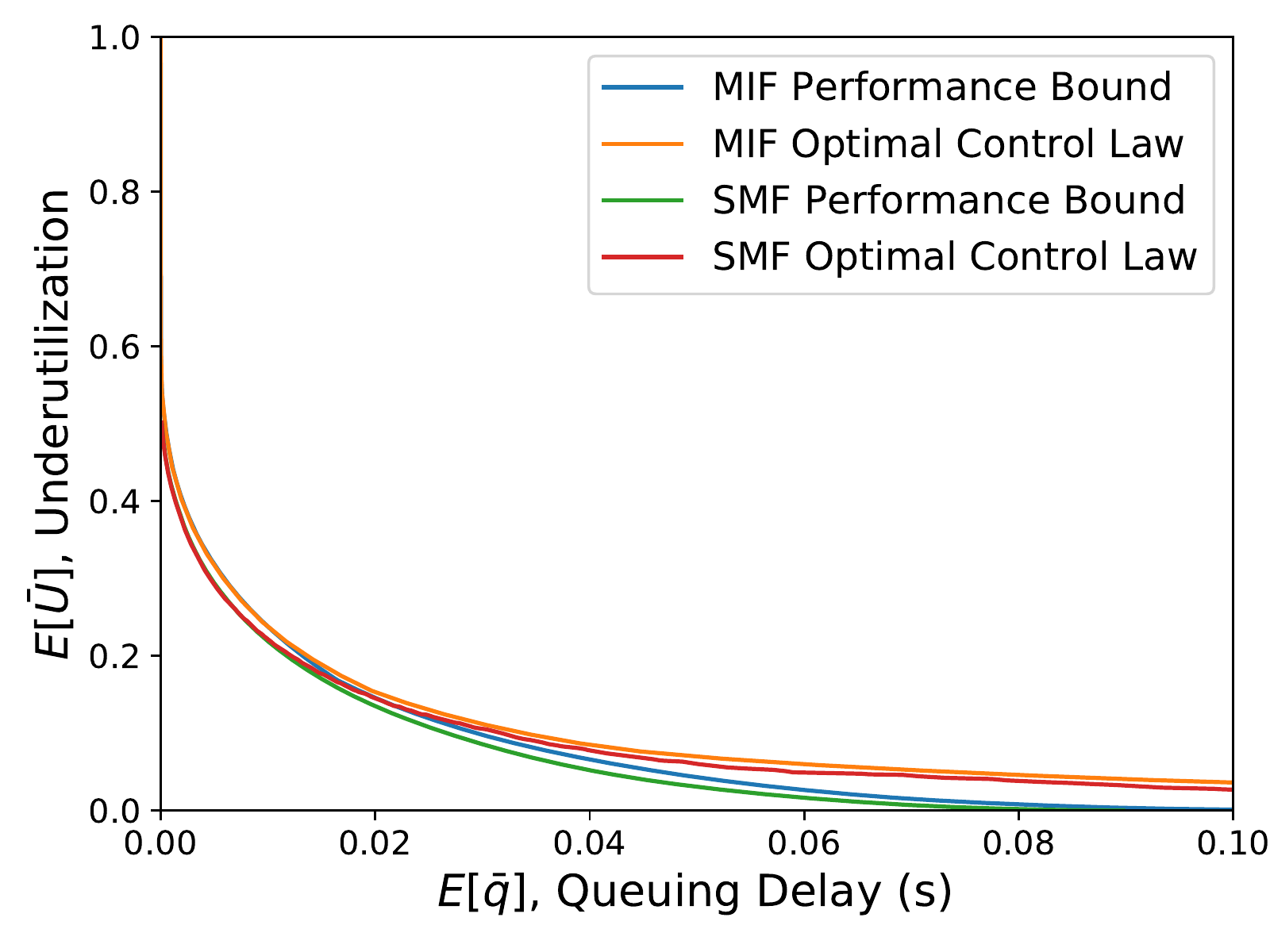}
         \vspace{-6mm}
        \caption{Synthetic trace (MIF model)}
        \label{fig:model3:custom}
    \end{subfigure}
    \vspace{-4mm}
    \caption{\small {\bf Performance bound and performance curve for the optimal control law based on the SMF model.}}
    \label{fig:model3}
    \vspace{-4mm}
\end{figure} 
\section{Discussion}
\label{s:discussion}

\noindent
\textbf{Continuous-time Domain.} 
In our analysis -- for our models and system equations -- we used the discrete-time domain. However, obviously wireless links operate in the  continuous-time domain. Thus, we might miss sub-RTT effects. We leave extending our analysis to the continuous-time domain as future work. Such an analysis is likely to be substantially more complex than the one we presented here. That said, are the optimal control laws from our analysis \emph{stable} in the continuous-time domain? We show that these control laws are indeed globally asymptotically stable. In particular, we show that if the link capacity stops changing after a time instant $t_0$, regardless of the initial conditions, the sending rate and the queuing delay at the link will converge to certain values $s^{*}$ and $q^{*}$ that depend on the control law parameters. Please see Appendix~\ref{app:disc:stability} for details.

\if 0
Let $\mu_{I}(t), s_{I}(t), q(t)_{I}(t), Q_{I}(t)$ be the instantaneous link capacity, sending rate, queuing delay and queue size respectively. Assuming $\mu_{I}(t) = \mu$ for $t > t_0$, then for $t > t_0 + T$, the optimal control law in all our models will be of the form,
\begin{align}
    s_{I}(t) = \left(C  - \frac{Q_{I}(t-T)}{T}\right)^+,
    \label{eq:disc:stability_stdef}
\end{align}
where $C$ is a positive constant depending on the $\mu$ and the model.

\begin{theorem}
    For a single bottleneck time-varying link, the control law from Eq.~\eqref{eq:disc:stability_stdef} is globally asymptotically stable, $\forall C \in \mathbb{R}^+$.
    \label{thm:disc:stability}
\end{theorem}
\begin{proof}
Please see Appendix~\ref{app:disc:stability} for details of the proof. 
\end{proof}
\fi



\smallskip
\noindent
\textbf{Implications for Learning-based Congestion Control.}
Our analysis reveals the optimal form of the control law for various link models. If the link rate generation process is known, one can derive the exact parameter values in the control law by solving the Bellman equations of the corresponding MDPs. In practice, the transition probabilities for the link capacity are typically not known apriori and they may change over time. In such cases, we envision an online learning system that continuously adjusts the control law in accordance with the performance objective. We believe that compared to existing learning-based approaches for congestion control~\cite{jay2019deep, li2018qtcp, yan2018pantheon}, knowledge about the form of the optimal control law can reduce reduce sample complexity and enable faster adaptation. 
\if 0
\subsection{Miscellaneous}
\pg{Maybe not necessary, the results are mehish}
We tried a simple multi-armed bandit strategy for achieving a particular tradeoff for minimizing $w*E[\bar{q}] + E[\bar{U}]$. We pick parameter values in each round based on performance in the last few seconds while maintaining a constant rate of exploration. We try both Model3/Model1 like strategy and XCP.

   \begin{figure}[ht]
 \centering
    \begin{subfigure}[h]{0.45\textwidth}
        \includegraphics[width=\textwidth]{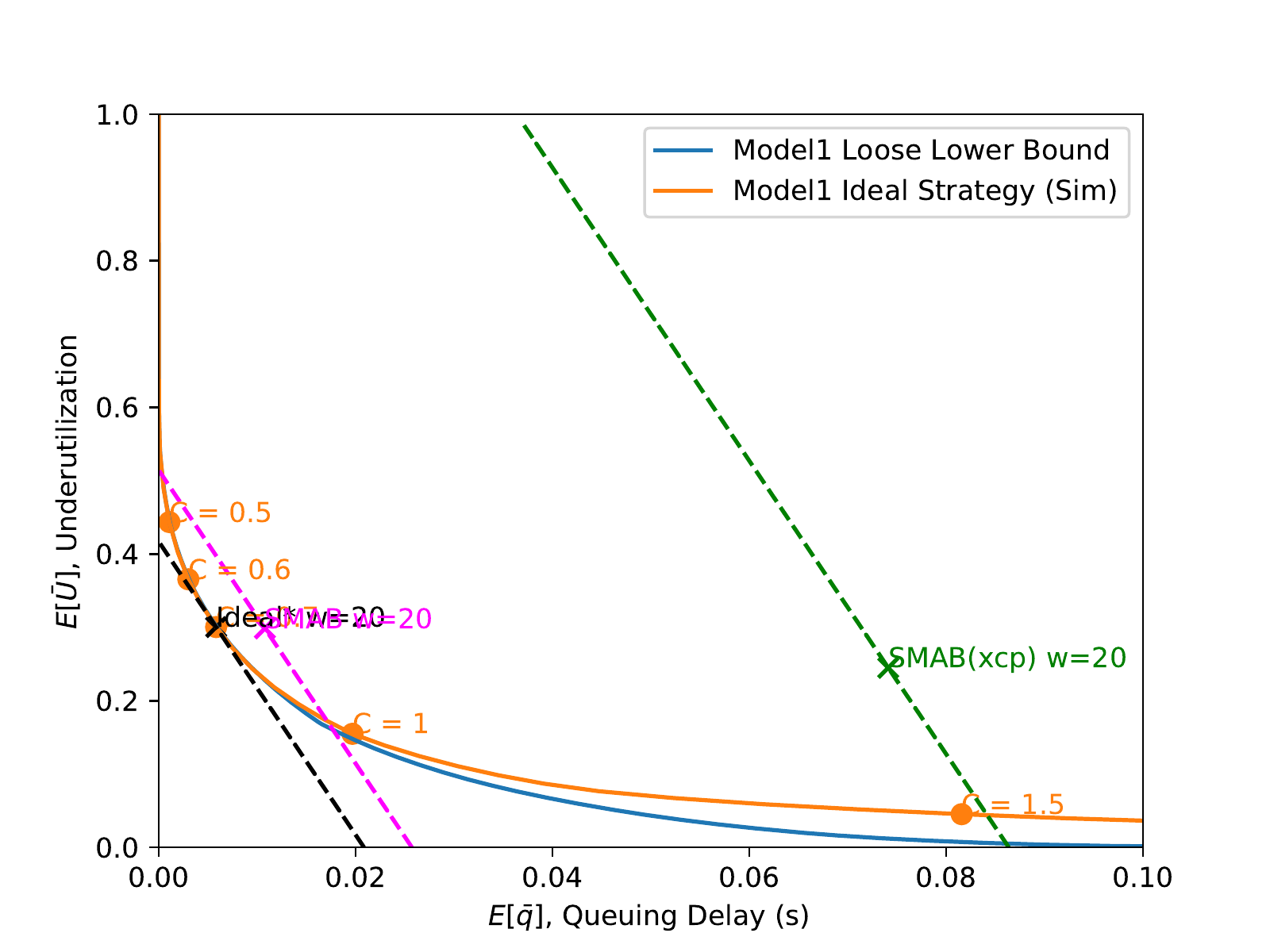}
         \vspace{-7mm}
        \caption{Model1 Trace}
        \label{fig:smab:custom}
    \end{subfigure}
    \begin{subfigure}[h]{0.45\textwidth}
        \includegraphics[width=\textwidth]{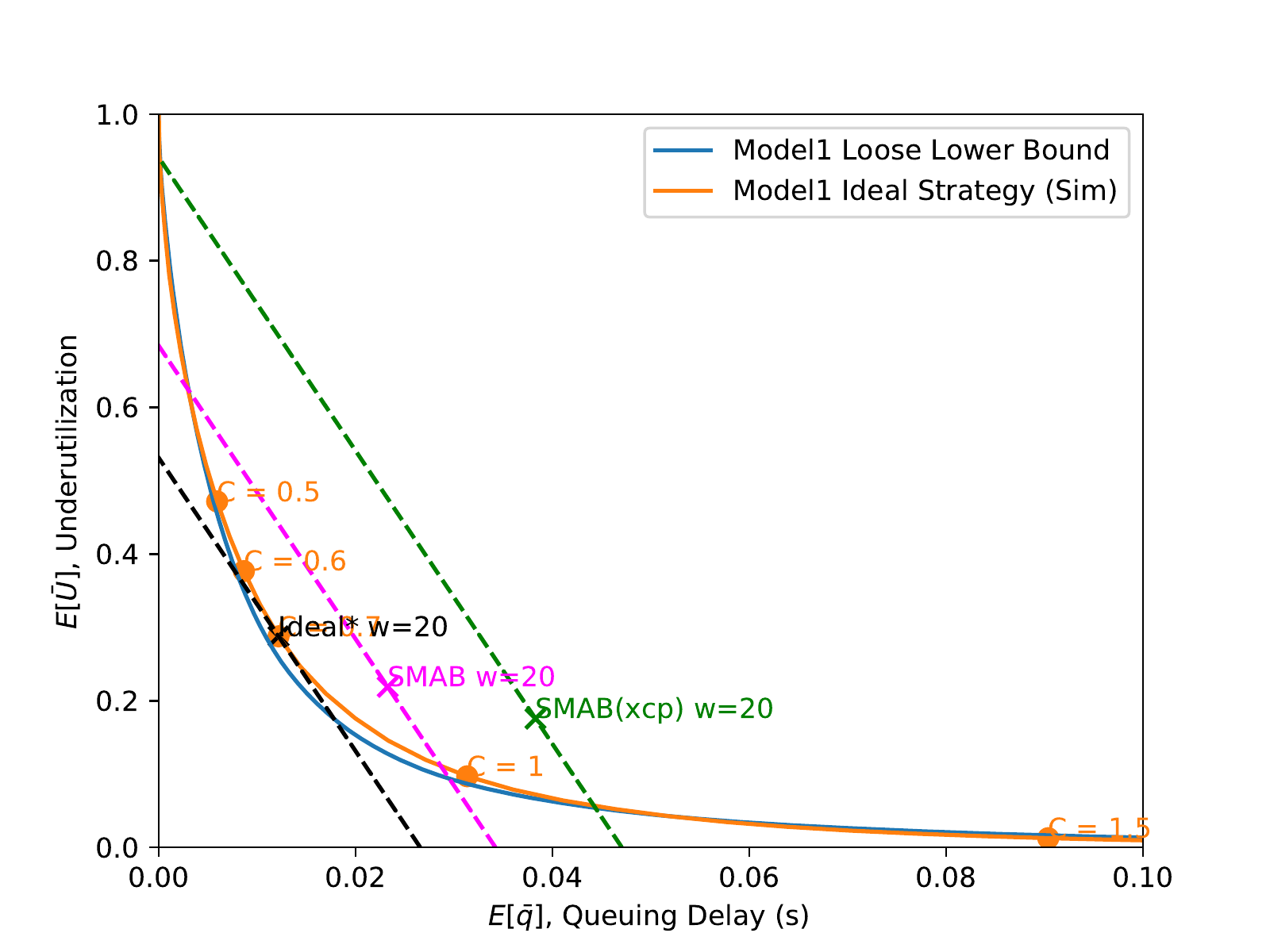}
        \vspace{-7mm}
        \caption{Verizon Uplink (short)}
        \label{fig:smab:verizon_up}
    \end{subfigure}
    \vspace{-3mm}
    \caption{\small Simple multi armed bandit style online learning.}
    \label{fig:smab}
    \vspace{-4mm}
\end{figure} 

\smallskip
\noindent
\pg{LEAVE out?}\textbf{Analysis for metrics expressed not in expectation?}
What if we want to calculate 99th percentile queuing delay? Maybe given a (optimal) control loop, we can construct a Dicrete time markov chain model to analyse various statistics of the control loop.
\fi
\section{Conclusion}
In this paper, we present a theoretical framework for analysing congestion control behaviour on time-varying links. For tractability, we model the link capacity using three simple yet general discrete-time Markov Chains that capture the key components of variability in the capacity. For these models, we show that there is a fundamental bound on the possible performance for \emph{any} causal congestion control protocol. 
This bound is a consequence of the fact that because of feedback delay, the sender does not know the current link capacity, and it must guess to make congestion control decisions.
Finally, we derive the optimal control law for each model by posing the congestion control task as a Markov Decision Process. The optimal control law differs from that used by existing protocols and depends on the characteristics of variability in the link capacity.
We demonstrate the performance improvement of our optimal control laws over existing protocols using real-world cellular traces.

\def\bibfont{\normalfont}
\bibliographystyle{abbrv}
\bibliography{limitscc}

\begin{thebibliography}{10}

\bibitem{arun2021toward}
V.~Arun, M.~T. Arashloo, A.~Saeed, M.~Alizadeh, and H.~Balakrishnan.
\newblock Toward formally verifying congestion control behavior.
\newblock In {\em Proceedings of the 2021 ACM SIGCOMM 2021 Conference}, pages
  1--16, 2021.

\bibitem{copa}
V.~Arun and H.~Balakrishnan.
\newblock Copa: Practical delay-based congestion control for the internet.
\newblock In S.~Banerjee and S.~Seshan, editors, {\em 15th {USENIX} Symposium
  on Networked Systems Design and Implementation, {NSDI} 2018, Renton, WA, USA,
  April 9-11, 2018}, pages 329--342. {USENIX} Association, 2018.

\bibitem{bbr}
N.~Cardwell, Y.~Cheng, C.~S. Gunn, S.~H. Yeganeh, and V.~Jacobson.
\newblock {BBR: Congestion-Based Congestion Control}.
\newblock {\em ACM Queue}, 14(5):50:20--50:53, Oct. 2016.

\bibitem{tvwifi}
F.~Cecchi and P.~Jacko.
\newblock Scheduling of users with markovian time-varying transmission rates.
\newblock In {\em Proceedings of the ACM SIGMETRICS/International Conference on
  Measurement and Modeling of Computer Systems}, SIGMETRICS '13, page
  129–140, New York, NY, USA, 2013. Association for Computing Machinery.

\bibitem{chiu1989analysis}
D.-M. Chiu and R.~Jain.
\newblock Analysis of the increase and decrease algorithms for congestion
  avoidance in computer networks.
\newblock {\em Computer Networks and ISDN systems}, 17(1):1--14, 1989.

\bibitem{pcc}
M.~Dong, Q.~Li, D.~Zarchy, P.~B. Godfrey, and M.~Schapira.
\newblock {PCC:} re-architecting congestion control for consistent high
  performance.
\newblock In {\em 12th {USENIX} Symposium on Networked Systems Design and
  Implementation, {NSDI} 15, Oakland, CA, USA, May 4-6, 2015}, pages 395--408.
  {USENIX} Association, 2015.

\bibitem{red}
S.~Floyd and V.~Jacobson.
\newblock {Random Early Detection Gateways for Congestion Avoidance}.
\newblock {\em {IEEE/ACM Trans. on Networking}}, 1(4):397--413, 1993.

\bibitem{abc}
P.~Goyal, A.~Agarwal, R.~Netravali, M.~Alizadeh, and H.~Balakrishnan.
\newblock {ABC:} {A} simple explicit congestion controller for wireless
  networks.
\newblock In R.~Bhagwan and G.~Porter, editors, {\em 17th {USENIX} Symposium on
  Networked Systems Design and Implementation, {NSDI} 2020, Santa Clara, CA,
  USA, February 25-27, 2020}, pages 353--372. {USENIX} Association, 2020.

\bibitem{goyal2017rethinking}
P.~Goyal, M.~Alizadeh, and H.~Balakrishnan.
\newblock Rethinking congestion control for cellular networks.
\newblock In {\em Proceedings of the 16th ACM Workshop on Hot Topics in
  Networks}, pages 29--35, 2017.

\bibitem{bfc}
P.~Goyal, P.~Shah, K.~Zhao, G.~Nikolaidis, M.~Alizadeh, and T.~E. Anderson.
\newblock Backpressure flow control.
\newblock In {\em 19th USENIX Symposium on Networked Systems Design and
  Implementation (NSDI 2022)}.

\bibitem{cubic}
S.~Ha, I.~Rhee, and L.~Xu.
\newblock {CUBIC: A New TCP-Friendly High-Speed TCP Variant}.
\newblock {\em {ACM SIGOPS Operating System Review}}, 42(5):64--74, July 2008.

\bibitem{predictwifi}
D.~Halperin, W.~Hu, A.~Sheth, and D.~Wetherall.
\newblock Predictable 802.11 packet delivery from wireless channel
  measurements.
\newblock In S.~Kalyanaraman, V.~N. Padmanabhan, K.~K. Ramakrishnan, R.~Shorey,
  and G.~M. Voelker, editors, {\em Proceedings of the {ACM} {SIGCOMM} 2010
  Conference on Applications, Technologies, Architectures, and Protocols for
  Computer Communications, New Delhi, India, August 30 -September 3, 2010},
  pages 159--170. {ACM}, 2010.

\bibitem{hock2017experimental}
M.~Hock, R.~Bless, and M.~Zitterbart.
\newblock Experimental evaluation of bbr congestion control.
\newblock In {\em 2017 IEEE 25th International Conference on Network Protocols
  (ICNP)}, pages 1--10. IEEE, 2017.

\bibitem{newreno}
J.~C. Hoe.
\newblock {Improving the Start-up Behavior of a Congestion Control Scheme for
  TCP}.
\newblock In {\em {SIGCOMM}}, 1996.

\bibitem{howard1960dynamic}
R.~A. Howard.
\newblock Dynamic programming and markov processes.
\newblock 1960.

\bibitem{Jacobson88}
V.~Jacobson.
\newblock {Congestion Avoidance and Control}.
\newblock In {\em SIGCOMM}, 1988.

\bibitem{jain1996congestion}
R.~Jain.
\newblock Congestion control and traffic management in atm networks: Recent
  advances and a survey.
\newblock {\em Computer Networks and ISDN systems}, 28(13):1723--1738, 1996.

\bibitem{jay2019deep}
N.~Jay, N.~Rotman, B.~Godfrey, M.~Schapira, and A.~Tamar.
\newblock A deep reinforcement learning perspective on internet congestion
  control.
\newblock In {\em International Conference on Machine Learning}, pages
  3050--3059. PMLR, 2019.

\bibitem{xcp}
D.~Katabi, M.~Handley, and C.~Rohrs.
\newblock {Congestion Control for High Bandwidth-Delay Product Networks}.
\newblock In {\em SIGCOMM}, 2002.

\bibitem{kleinrock1976queueing}
L.~Kleinrock.
\newblock {\em Queueing systems, volume 2: Computer applications}, volume~66.
\newblock wiley New York, 1976.

\bibitem{swift}
G.~Kumar, N.~Dukkipati, K.~Jang, H.~M.~G. Wassel, X.~Wu, B.~Montazeri, Y.~Wang,
  K.~Springborn, C.~Alfeld, M.~Ryan, D.~Wetherall, and A.~Vahdat.
\newblock Swift: Delay is simple and effective for congestion control in the
  datacenter.
\newblock In {\em Proceedings of the Annual Conference of the ACM Special
  Interest Group on Data Communication on the Applications, Technologies,
  Architectures, and Protocols for Computer Communication}, SIGCOMM '20, page
  514–528, New York, NY, USA, 2020. Association for Computing Machinery.

\bibitem{li2018qtcp}
W.~Li, F.~Zhou, K.~R. Chowdhury, and W.~Meleis.
\newblock Qtcp: Adaptive congestion control with reinforcement learning.
\newblock {\em IEEE Transactions on Network Science and Engineering},
  6(3):445--458, 2018.

\bibitem{hpcc}
Y.~Li, R.~Miao, H.~H. Liu, Y.~Zhuang, F.~Feng, L.~Tang, Z.~Cao, M.~Zhang,
  F.~Kelly, M.~Alizadeh, et~al.
\newblock {HPCC}: high precision congestion control.
\newblock In {\em Proceedings of the ACM Special Interest Group on Data
  Communication}, pages 44--58. ACM, 2019.

\bibitem{lu2015cqic}
F.~Lu, H.~Du, A.~Jain, G.~M. Voelker, A.~C. Snoeren, and A.~Terzis.
\newblock Cqic: Revisiting cross-layer congestion control for cellular
  networks.
\newblock In {\em Proceedings of the 16th International Workshop on Mobile
  Computing Systems and Applications}, pages 45--50. ACM, 2015.

\bibitem{massey}
W.~Massey.
\newblock The analysis of queues with time-varying rates for telecommunications
  models.
\newblock {\em Telecommunication Systems}, 21(2-4):173--204, 2002.

\bibitem{mathis1997macroscopic}
M.~Mathis, J.~Semke, J.~Mahdavi, and T.~Ott.
\newblock The macroscopic behavior of the tcp congestion avoidance algorithm.
\newblock {\em ACM SIGCOMM Computer Communication Review}, 27(3):67--82, 1997.

\bibitem{homa}
B.~Montazeri, Y.~Li, M.~Alizadeh, and J.~Ousterhout.
\newblock Homa: A receiver-driven low-latency transport protocol using network
  priorities.
\newblock In {\em Proceedings of the 2018 Conference of the ACM Special
  Interest Group on Data Communication}, pages 221--235. ACM, 2018.

\bibitem{mahimahi}
R.~Netravali, A.~Sivaraman, S.~Das, A.~Goyal, K.~Winstein, J.~Mickens, and
  H.~Balakrishnan.
\newblock {Mahimahi: Accurate Record-and-Replay for HTTP}.
\newblock In {\em USENIX Annual Technical Conference}, 2015.

\bibitem{CoDel}
K.~Nichols and V.~Jacobson.
\newblock {Controlling Queue Delay}.
\newblock {\em ACM Queue}, 10(5), May 2012.

\bibitem{pie}
R.~Pan, P.~Natarajan, C.~Piglione, M.~Prabhu, V.~Subramanian, F.~Baker, and
  B.~VerSteeg.
\newblock {PIE: A lightweight control scheme to address the bufferbloat
  problem}.
\newblock In {\em Intl. Conf. on High Performance Switching and Routing
  (HPSR)}, 2013.

\bibitem{nsdi_rw_tcp_bugs}
W.~Sun, L.~Xu, S.~Elbaum, and D.~Zhao.
\newblock {Model-Agnostic} and efficient exploration of numerical state space
  of {Real-World} {TCP} congestion control implementations.
\newblock In {\em 16th USENIX Symposium on Networked Systems Design and
  Implementation (NSDI 19)}, pages 719--734, Boston, MA, Feb. 2019. USENIX
  Association.

\bibitem{sutton2018reinforcement}
R.~S. Sutton and A.~G. Barto.
\newblock {\em Reinforcement learning: An introduction}.
\newblock MIT press, 2018.

\bibitem{rcp}
C.~Tai, J.~Zhu, and N.~Dukkipati.
\newblock {Making Large Scale Deployment of RCP Practical for Real Networks}.
\newblock In {\em INFOCOM}, 2008.

\bibitem{tvq}
W.~Whitt.
\newblock Time-varying queues.
\newblock {\em Queueing Models and Service Management}, 1(2):079--164, 2018.

\bibitem{sprout}
K.~Winstein, A.~Sivaraman, and H.~Balakrishnan.
\newblock {Stochastic Forecasts Achieve High Throughput and Low Delay over
  Cellular Networks}.
\newblock In {\em NSDI}, 2013.

\bibitem{xie2015pistream}
X.~Xie, X.~Zhang, S.~Kumar, and L.~E. Li.
\newblock pistream: Physical layer informed adaptive video streaming over lte.
\newblock In {\em Proceedings of the 21st Annual International Conference on
  Mobile Computing and Networking}, pages 413--425. ACM, 2015.

\bibitem{pbe_cc}
Y.~Xie, F.~Yi, and K.~Jamieson.
\newblock Pbe-cc: Congestion control via endpoint-centric, physical-layer
  bandwidth measurements.
\newblock In {\em Proceedings of the Annual conference of the ACM Special
  Interest Group on Data Communication on the applications, technologies,
  architectures, and protocols for computer communication}, pages 451--464,
  2020.

\bibitem{yan2018pantheon}
F.~Y. Yan, J.~Ma, G.~D. Hill, D.~Raghavan, R.~S. Wahby, P.~A. Levis, and
  K.~Winstein.
\newblock Pantheon: the training ground for internet congestion-control
  research.
\newblock In H.~S. Gunawi and B.~Reed, editors, {\em 2018 {USENIX} Annual
  Technical Conference, {USENIX} {ATC} 2018, Boston, MA, USA, July 11-13,
  2018}, pages 731--743. {USENIX} Association, 2018.

\bibitem{yorke1970asymptotic}
J.~A. Yorke.
\newblock Asymptotic stability for one dimensional differential-delay
  equations.
\newblock {\em Journal of Differential equations}, 7(1):189--202, 1970.

\bibitem{verus}
Y.~Zaki, T.~P{\"o}tsch, J.~Chen, L.~Subramanian, and C.~G{\"o}rg.
\newblock Adaptive congestion control for unpredictable cellular networks.
\newblock In {\em ACM SIGCOMM Computer Communication Review}, volume~45, pages
  509--522. ACM, 2015.

\end{thebibliography}

\appendix
\section{Analysis for Unused Link Capacity} 
\label{s:app:unused_mu}
We presented our analysis with queuing delay and underutilization as our performance metrics. Our analysis can also be extended to other performance metrics. For example, we can derive the optimal control law when we use \textit{unused link capacity} as the performance metrics instead of link underutilization. Unused link capacity in time step $t$ ($L(t)$) can be given by,



\begin{align}
     L(t) &\triangleq \frac{(\mu(t) \cdot T - s(t) \cdot T - Q(t - 1))}{T}^+.
\end{align}

Assuming the MIF model for variations in link capacity (i.e., the relative variation in link capacity is same in every time step), we can write an equation analogous to Eq.~\eqref{eq:model1:proof_perfbound:single_round_givenrt} for conditional expected value of $L(t)$ as follows,
\begin{align}
     E[L(t)| r(t)=b] &= \mu(t-1) \cdot \int_{b}^{\infty}\left( a - b\right) \cdot f(a) \cdot da. 
\end{align}
 
Notice that unlike underutilization, the conditional expected value of unused link capacity in time step $t$ depends on the link capacity in the previous time step ($\mu(t-1)$). Thus, we need to use the analysis from the SMF model to calculate a performance bound and the optimal control law with $E[\bar{L(t)}]$ and $E[\bar{q(t)}]$ as the performance metrics. In particular, we treat link capacity as the link state ($S(t-1) = \mu(t-1)$). The optimal control law is of the following form in this case, 
\begin{align}
    s(t) = \left( C^{\mu(t-1)}(w) \cdot \mu(t-1) - \frac{Q(t-1)}{T} \right)^+.
\end{align}

\Fig{lost_tput} compares the performance curves for the optimal control law based on link underutilization and unused link capacity. We find that the control law based on $L(t)$ outperforms the one based on $U(t)$. In particular, we find that the optimal control law based on $L(t)$ uses a lower value of $C^{\mu(t-1)}(w)$ when $\mu(t-1)$ is lower. This is expected, as when $\mu(t-1)$ is low, varying $r(t)$  has a relatively lower impact on lost throughput compared to queuing delay. 

  \begin{figure}[t]
 \centering
      \begin{subfigure}[h]{0.4\textwidth}
        \includegraphics[width=\textwidth]{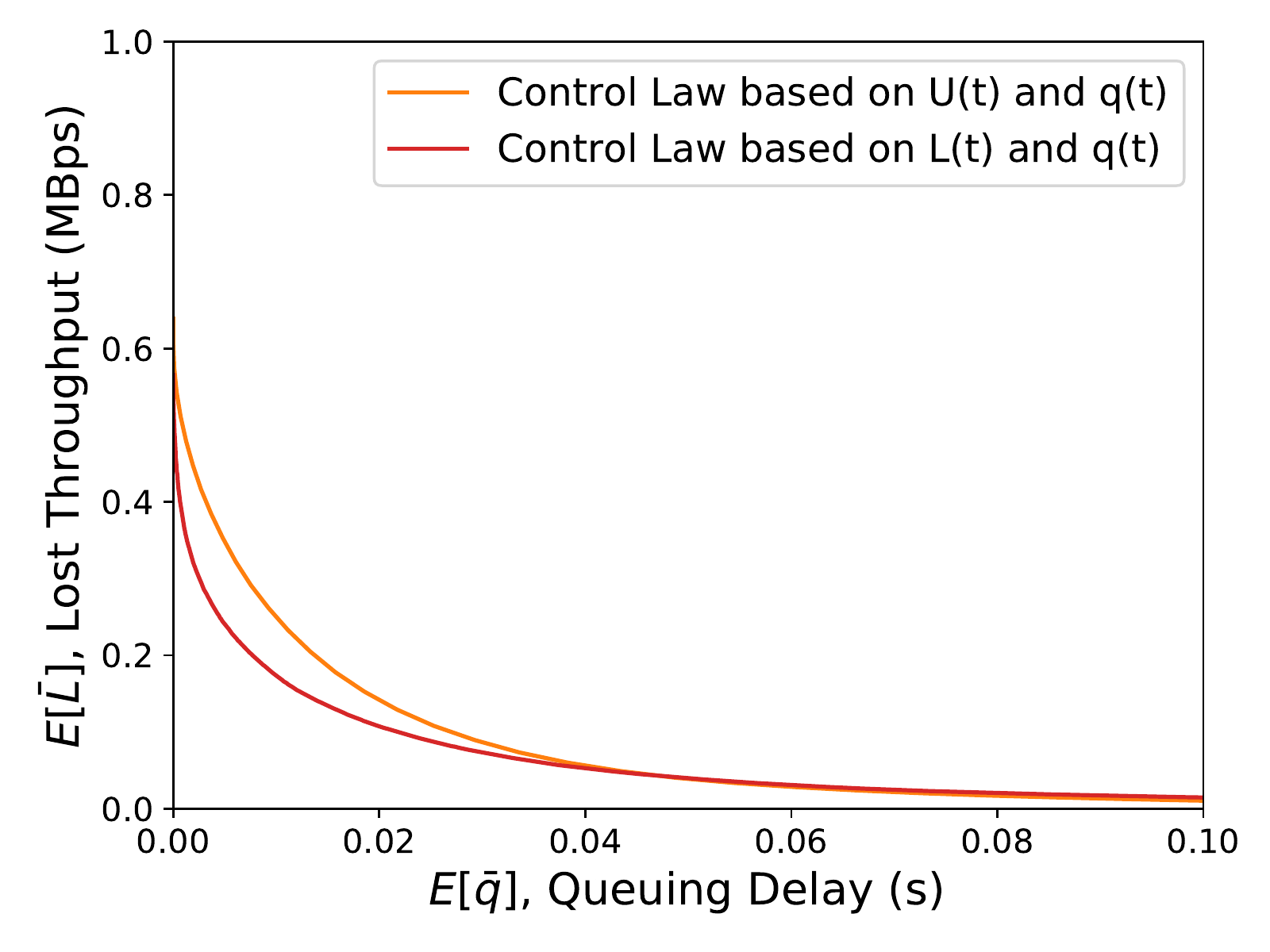}
        \vspace{-7mm}
        \caption{Verizon Downlink}
        \label{fig:lost_tpu:verizon_down}
    \end{subfigure}
     \begin{subfigure}[h]{0.4\textwidth}
        \includegraphics[width=\textwidth]{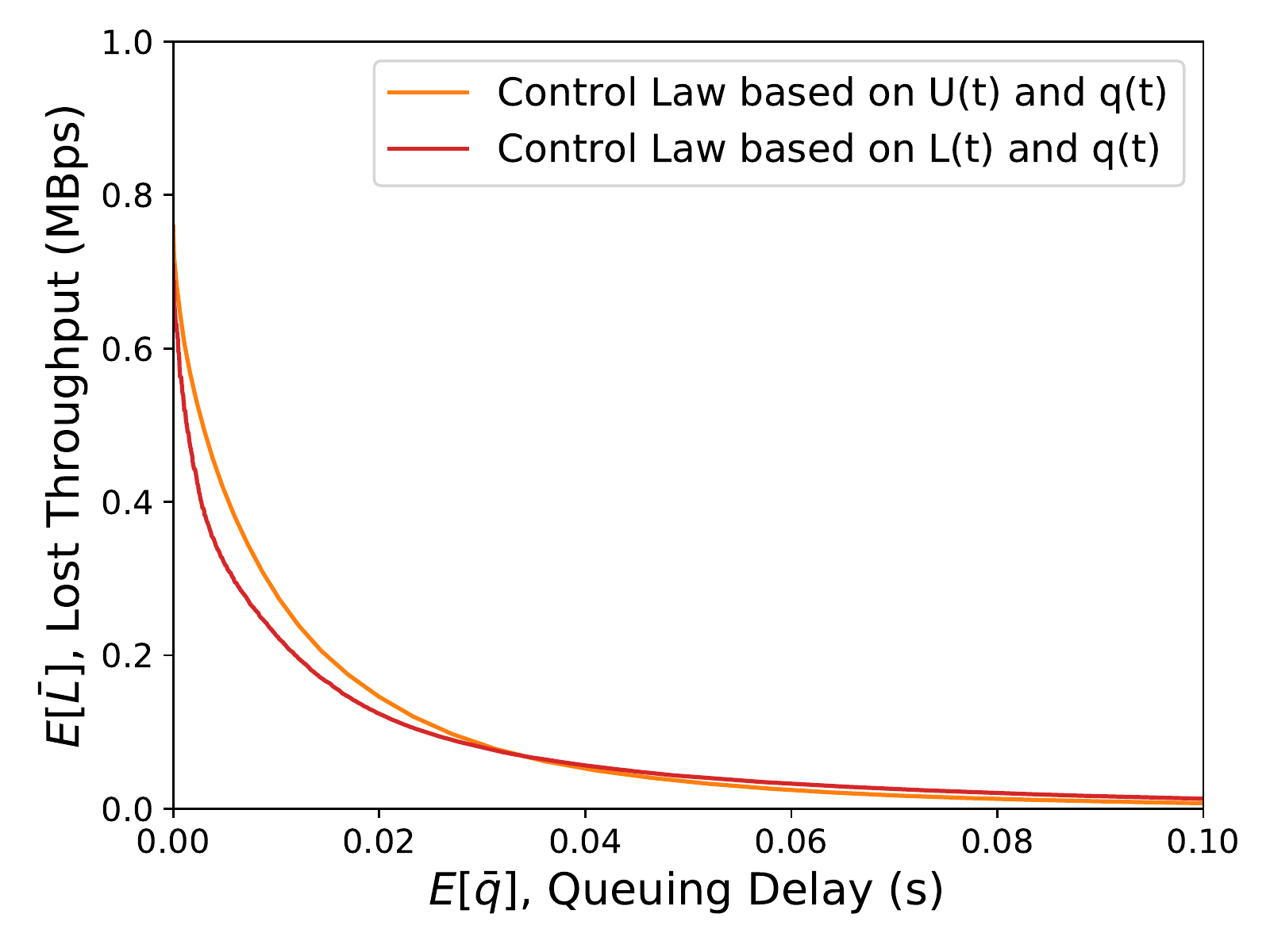}
        \vspace{-7mm}
        \caption{Verizon Uplink}
        \label{fig:lost_tput:verizon_up}
    \end{subfigure}
    \vspace{-4mm}
    \caption{\small {\bf Performance curve of the optimal control law based on unused link capacity ($L(t)$).}}
    \label{fig:lost_tput}
    \vspace{-4mm}
\end{figure}

\section{Analysis for Datacenter Networks}
\label{s:app:analysis_datacenter}
Datacenter networks have fixed capacity wired links that do not exhibit variations in the link capacity. However, datacenter networks are still highly variable: majority of the traffic is contributed by short flows that last only a few RTTs. There is significant churn in the number of competing flows at a link~\cite{bfc}. The optimal sending rate for a particular flow is a function of the number of competing flows at the bottleneck at the current time ($N(t)$). Because of the feedback delay, there is an uncertainty in the value of $N(t)$ at a sender at time $t$. We can potentially analyse such networks -- where the cause of variability is the churn in traffic -- by modeling $N(t)$ as a markov chain.
We leave extending our analysis to such network environments as future work.

\section{Implications for Link Layer Design}
\label{app:implications_linklayer}


Our analysis establishes that there is a fundamental performance bound which depends on the variability in the link capacity. The more variable the link is the worse the achievable performance.
To our knowledge, wireless link layer protocols largely ignore this principle. For example, when arbitrating access to shared
media, scheduling algorithms that provide a relatively smoother link capacity might be more desirable for latency-sensitive applications. Additionally, cellular network operators may be able to offer more accurate 
service level agreements (SLAs) to users based on our performance bounds.

\smallskip
\noindent \textbf{Impact of reducing $T$:}
With 5G, cellular operators are trying to reduce non-congestion related delay incurred in the radio access network. Content providers are also striving to move their servers closer to the receiver and reduce delays.
One would expect that any reductions in the base RTT will reduce the end-to-end delay linearly. Indeed, our analysis for the performance bound also suggests that for a fixed value of underutilization, queuing delay is linear in $T$ assuming that the characteristics of link variability does not change with $T$. However, this assumption need not hold. \Fig{impact_t} shows the performance curve for the optimal control law from the MIF model on a Verizon LTE trace for different values of $T$. We see that the gains in performance are not linear and reducing $T$ has diminishing returns. This is because, when we apply MIF model to this trace, we find that reducing $T$ makes the link capacity more variable on the base RTT time scale and the performance gains are sub-linear. 
\if 0
\begin{wrapfigure}{r}{0.45\textwidth}
    \vspace{-5mm}
    \includegraphics[width=0.43\textwidth]{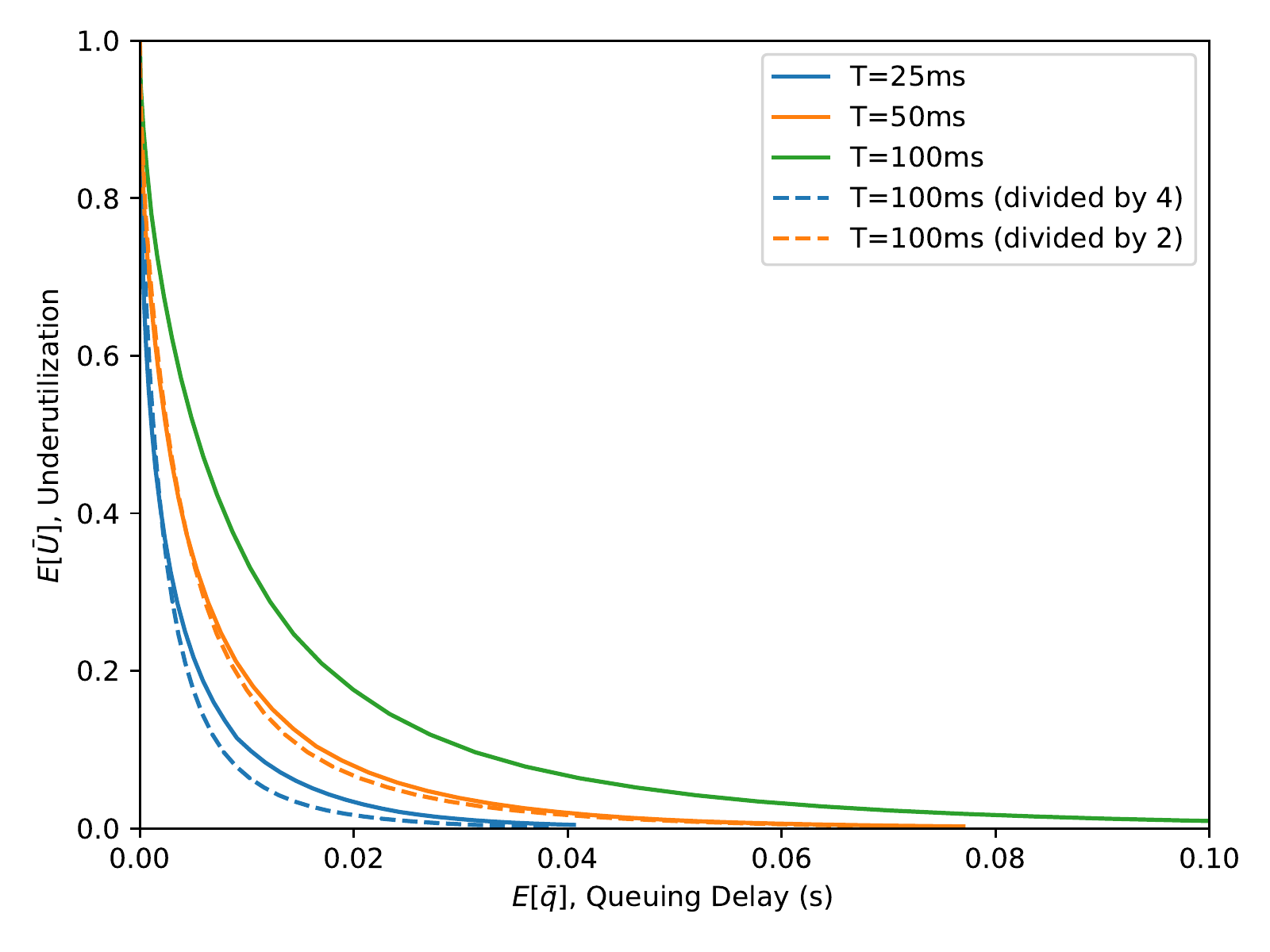}
    \vspace{-4mm}
    \caption{\small {\bf Performance curve for optimal control law from the MIF model for different values of $T$---} The difference in solid and the dashed line for a particular color shows that reducing $T$ has diminishing returns.}
    \label{fig:impact_t}
    \vspace{-5mm}
\end{wrapfigure}
\fi

\begin{figure}[t]
    \centering
    \includegraphics[width=0.43\textwidth]{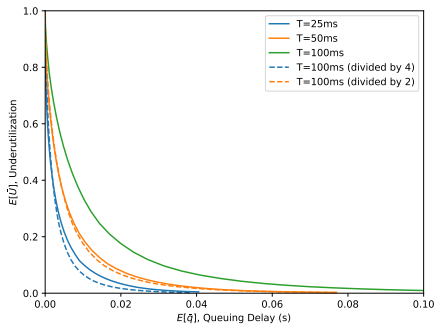}
    \vspace{-4mm}
    \caption{\small {\bf Performance curve for optimal control law from the MIF model for different values of $T$---} The difference in solid and the dashed line for a particular color shows that reducing $T$ has diminishing returns.}
    \label{fig:impact_t}
    \vspace{-5mm}
\end{figure}

\section{MIF Model Proofs}

This section includes proofs from main body for the MIF model (\S\ref{s:model1}). Assuming the terminology established  in \S\ref{s:setup} and \S\ref{s:model1}.

\subsection{Proof of Lemma~\ref{lemma:model1:convexity_of_bound}}
\label{app:model1:perfbound}
\begin{proof}

We will first calculate the slope of the function $g^f(\cdot)$. Differentiating $x$ and $g^f(x)$ with respect to $b$ using the Leibniz integral rule,
\begin{align}
    \frac{dx}{db} &= T \cdot \frac{d\left(\int_{0}^{b}\left(\frac{b}{a} - 1 \right) \cdot f(a) \cdot da\right)}{db},\nonumber\\
                                        &= T \cdot \left(\int_{0}^{b}\frac{f(a)}{a} \cdot dx \right),\nonumber\\
    \frac{d(g^f(x))}{db} &= \frac{d\left(\int_{b}^{\infty}\left(1 -\frac{b}{a}\right) \cdot f(a) \cdot da\right)}{db},\nonumber\\
                                         &= -\int_{b}^{\infty}\frac{f(a)}{a
    } \cdot da.
\end{align}
Combining the equations above we get,
\begin{align}
    \frac{d(g^f(x))}{dx} &= - \frac{\int_{b}^{\infty}\frac{f(a)}{a
    } \cdot da}{T \cdot \left(\int_{0}^{b}\frac{f(a)}{a} \cdot da \right)}.
    \label{eq:model1:proof_perfbound:slope}
\end{align}
Note that in the above equation the denominator is increasing with $b$ while the absolute value of the numerator is decreasing. This implies that as we increase $b$ (and consequently $x$) the absolute slope of the function $g^f(\cdot)$ is decreasing (or the slope is increasing) with $x$.  Therefore, the function is convex irrespective of the PDF $f$. Since the curve $g^f(\cdot)$ forms the boundary of $\mathbb{C}^f$, $\mathbb{C}^f$is convex.
\end{proof}

\subsection{Proof of Proposition~\ref{prop:model1:achievable_bound}}
\label{app:model1:prop}
\begin{proof}


We will first show that if the Proposition condition holds then with the policy $\rho(t) = C$, $Q(t)  \leq T \cdot C \cdot \mu(t)$ $\forall t \in \mathbb{N}$.  We will prove this by induction.

The base case holds because $Q(0) = 0$. Lets assume that $Q(t - 1) 
 \leq T \cdot C \cdot \mu(t-1)$, then we can pick a non negative $s(t)$, such that $\rho(t) = C$. Then in the next time step $t$,
 \begin{align}
     Q(t) &= (Q(t-1) + s(t) \cdot T- \mu(t) \cdot T)^+,\nonumber\\
          &= (T \cdot C \cdot \mu(t - 1) - T \cdot \mu(t))^+,\nonumber\\
          &=  T \cdot C \cdot \mu(t) \cdot \left(\frac{1}{X_t} - \frac{1}{C}\right)^+, \nonumber\\
          &\leq T \cdot C \cdot \mu(t) \cdot \left(\frac{1}{X_{min}} - \frac{1}{C}\right)^+,\nonumber\\
          &\leq T \cdot \mu(t) \cdot C,
 \end{align}
 where in the last step we used Eq.~\eqref{eq:prop:model1:achievable_bound}.
 
 The above equation also shows that if the Proposition condition does not hold then, $Q(t)$ will exceed $T \cdot C \cdot \mu(t)$ with non-zero probability($\geq P(X_t<C/(C+1)$). Consequently, the sender will not always be able follow $\rho(t+1) = C$ in time step $t+1$. Proposition proved.

\end{proof}

\subsection{Proof of Lemma~\ref{lemma:model1:combined_mdp_lemma}}
\label{app:model1:combined_mdp_lemma}

\begin{proof}

To prove this Lemma, we will first reconsider a variant of the MDP over a finite number of time steps $n$.
For the finite time steps version, we can define the optimal value function at the start of time step $t$ ($V_{t}(\cdot)$) as follows, 

\begin{align}
    V_{t}(q,\mu) = \min_\pi E\left[\sum_{i=t}^{n}\gamma^{i-t} \Big( w \cdot q(i) + U(i)] \Big) \Bigg| q(t-1) = q, \mu(t-1) = \mu, s(t) \sim \pi \right]. 
\end{align}

We will show that $V_t(q, u)$ is independent of $\mu$ (Lemma~\ref{lemma:model1:round_value_function_independence}) and convex as a function of $q$ (Lemma~\ref{lemma:model1:mdp_convex}). We close the proof by considering this MDP over infinite horizon ($n \to \infty$). By definition, as $n \to \infty, V_{t}(q,\mu) \to V(q,u)$ and Lemma proved.
\end{proof}

The bellman equation for the finite time steps MDP is as follows,
\begin{align}
    V_{t}(q,\mu) &= \min_{s(t) \geq 0} E \left[ w \cdot q(t) + U(t) + \gamma \cdot V_{t+1}(q(t),\mu(t)) \big| q(t-1) = q, \mu(t-1) = \mu \right] \nonumber, & \forall t \leq n,\\
   V_{t}(q,\mu) &= 0, & t=n+1.
\end{align}

The condition $s(t)\geq 0$ is equivalent to $\rho(t)\geq\frac{q(t-1)}{T}$. Rewriting,
\begin{align}
    V_{t}(q,\mu) &= \min_{\rho(t) \geq \frac{q}{T}} E\left[ w \cdot q(t) + U(t) + \gamma \cdot V_{t+1}(q(t),\mu(t)) \big| q(t-1) = q, \mu(t-1) = \mu\right] \nonumber, & \forall t \leq n,\\
     \label{eq:model1:conjecture}
\end{align}

\begin{lemma}
    $V_t(q, u)$ is independent of $\mu$, $\forall t \leq n$.
    \label{lemma:model1:round_value_function_independence}
\end{lemma}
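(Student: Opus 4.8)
The plan is a straightforward backward induction on the time index $t$, running from $t=n+1$ down to $t=1$, exploiting the reparametrization of the control from $s(t)$ to $\rho(t)$ that is already in place in Eq.~\eqref{eq:model1:conjecture}. The key point is that, once the action is expressed through $\rho(t)$, neither the instantaneous cost nor the $q$-component of the next state carries any dependence on $\mu(t-1)$: by Eq.~\eqref{eq:model1:proof_perfbound:qU_rt} we have $q(t) = T(\rho(t)/X_t - 1)^+$ and $U(t) = (1-\rho(t)/X_t)^+$, and $\mu$ enters the transition only through $\mu(t) = \mu(t-1)\cdot X_t$. Since $X_t$ is i.i.d.\ with density $f$ and independent of the state (MIF-model assumptions), the distribution of $X_t$ appearing in the Bellman expectation is likewise $\mu$-free. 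Finally, the feasible set for the action, $\{\rho(t) \ge q(t-1)/T\}$, does not involve $\mu$ either --- this is exactly the equivalence between $s(t)\ge 0$ and $\rho(t)\ge q(t-1)/T$ noted just before Eq.~\eqref{eq:model1:conjecture}, which is valid because $\mu>0$.

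For the base case $t = n+1$ we have $V_{n+1}(q,\mu) = 0$ identically, which trivially does not depend on $\mu$. For the inductive step, suppose $V_{t+1}(q,\mu)$ is independent of its second argument, and write it simply as $V_{t+1}(q)$. Substituting $q(t) = T(\rho/X_t - 1)^+$, $U(t) = (1-\rho/X_t)^+$, and $\mu(t) = \mu X_t$ into Eq.~\eqref{eq:model1:conjecture}, and using the inductive hypothesis to drop the second argument of $V_{t+1}$, gives
\begin{align}
    V_{t}(q,\mu) = \min_{\rho \geq q/T}\ \int_{0}^{\infty}\!\left[\, w\,T\Big(\tfrac{\rho}{a}-1\Big)^{+} + \Big(1-\tfrac{\rho}{a}\Big)^{+} + \gamma\, V_{t+1}\!\Big(T\big(\tfrac{\rho}{a}-1\big)^{+}\Big)\right] f(a)\, da. \nonumber
\end{align}
The right-hand side contains no occurrence of $\mu$: the feasible set $\{\rho \ge q/T\}$ depends only on $q$, and both the integrand and the measure $f(a)\,da$ are $\mu$-free. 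Hence $V_t(q,\mu)$ depends only on $q$, which completes the induction and proves the lemma.

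I do not anticipate a genuine obstacle here; the only points that need a little care are (i) checking that every $\rho \ge q/T$ is realized by a legitimate nonnegative rate $s(t) = \mu(t-1)(\rho - q/T)$, which is immediate since $\mu(t-1) > 0$, and (ii) that $\mu(t) = \mu(t-1)\,X_t$ is again strictly positive so that the recursion is well-posed, which follows from $P(X_t > 0) = 1$. Both are consequences of the standing positivity assumptions of the MIF model. (Convexity of $V_t$ in $q$ is handled separately in Lemma~\ref{lemma:model1:mdp_convex} and is not needed for this step.)
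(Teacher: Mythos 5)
Your proof is correct and follows essentially the same route as the paper's: a backward induction from $V_{n+1}\equiv 0$, using the fact that after reparametrizing the action as $\rho(t)$ the stage costs, the next $q$, and the feasible set $\{\rho \ge q/T\}$ are all $\mu$-free. Your version simply spells out the integral form of the Bellman recursion and the positivity caveats more explicitly than the paper does.
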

\begin{proof}
In the Eq.~\eqref{eq:model1:conjecture}, the first two terms on the right hand side ($w \cdot q(t)$ and $U(t)$) depend only on the value of $\rho(t)$. Since $q(t)$ only depends on $\rho(t)$, this implies that if $V_{t+1}(q(t), \mu(t))$ is only a function of $q(t)$ and independent of $\mu(t)$, then $V_{t}(q, \mu)$ is also only a function of $q$. Since $V_{n+1}(\cdot)$ is independent of $\mu(n)$, then, $\forall t \leq n$ ,we can describe the function $V_{t}(q, \mu)$ using only $q$. Lemma proved.

\end{proof}

Using the above Lemma, we can rewrite Eq.~\eqref{eq:model1:conjecture},  

\begin{align}
    V_{t}(q) &= \min_{\rho(t) \geq \frac{q}{T}} E\left[ w \cdot q(t) + U(t) + \gamma \cdot V_{t+1}(q(t)) \big| q(t-1) = q\right], & \forall t \leq n.
    \label{eq:model1:v_def}
\end{align}




Next, we will show that $V_{t}(\cdot)$ is convex.
\begin{lemma}
$\forall t \leq n$, $V_{t}(\cdot)$ is convex.
\label{lemma:model1:mdp_convex}
\end{lemma}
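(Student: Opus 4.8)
The plan is a backward induction on $t$, from the terminal step $t=n+1$ down to $t=1$, but with a slightly strengthened hypothesis: I will show that for every $t \le n+1$ the function $V_t(\cdot)$ is \emph{both convex and non-decreasing} on $[0,\infty)$. Carrying monotonicity alongside convexity is essential, because the Bellman recursion in Eq.~\eqref{eq:model1:v_def} contains the term $V_{t+1}(q(t))$, and $q(t) = T(\rho(t)/X_t-1)^+$ is a convex but \emph{non-affine} function of the control, so $V_{t+1}\!\circ q(t)$ is convex in $\rho(t)$ only if $V_{t+1}$ is convex \emph{and} non-decreasing. The base case is immediate: $V_{n+1}(\cdot)\equiv 0$ is convex and non-decreasing.

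For the inductive step, assume $V_{t+1}$ is convex and non-decreasing. Introduce the finite-horizon analogue of the helper function $W$ from the main text,
\begin{align}
    W_t(\rho) = E\left[\, w\cdot q(t) + U(t) + \gamma\, V_{t+1}(q(t)) \;\middle|\; \rho(t)=\rho \,\right],
\end{align}
so that $V_t(q) = \min_{\rho \ge q/T} W_t(\rho)$. I first argue $W_t$ is convex in $\rho$ (this is exactly the content of the Sub Lemma referenced as Sub Lemma~\ref{sublemma:model1_helper}). Conditioned on $X_t=a$, we have $q(t) = T(\rho/a - 1)^+$ and $U(t)=(1-\rho/a)^+$ by Eq.~\eqref{eq:model1:proof_perfbound:qU_rt}; each is a pointwise maximum of an affine function of $\rho$ and $0$, hence convex and piecewise-linear, and $\rho\mapsto T(\rho/a-1)^+$ is non-decreasing. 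Composing this non-decreasing convex inner function with the convex non-decreasing $V_{t+1}$ yields a convex function of $\rho$; summing with the other two convex terms keeps the conditional integrand convex in $\rho$ for every $a$. Taking expectation over $X_t$ (integrating against $f$) preserves convexity, so $W_t$ is convex. (Here one uses the mild regularity assumption, already made in the paper, that $f$ is such that these expectations and the value function are well defined.)

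Next I pass the partial minimization through. Writing $x=q/T$, the extended-real-valued map $(x,\rho)\mapsto W_t(\rho)$ if $\rho\ge x$ and $+\infty$ otherwise is jointly convex, being the sum of the convex function $W_t(\rho)$ and the indicator of the convex set $\{(x,\rho):\rho\ge x\}$. Partial minimization of a jointly convex function is convex, so $\tilde g(x) := \min_{\rho\ge x} W_t(\rho)$ is convex in $x$; it is also non-decreasing, since enlarging $x$ shrinks the feasible set and can only increase the minimum. Finally $V_t(q) = \tilde g(q/T)$ is the composition of $\tilde g$ with the linear map $q\mapsto q/T$, hence convex and non-decreasing in $q$, which closes the induction. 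Concretely, if $\rho^*_t$ is a minimizer of $W_t$ over $[0,\infty)$, convexity of $W_t$ gives the explicit form $V_t(q) = W_t(\rho^*_t)$ for $q\le T\rho^*_t$ and $V_t(q)=W_t(q/T)$ for $q> T\rho^*_t$ — constant then convex-increasing with a non-decreasing kink at $q=T\rho^*_t$ — from which both properties can also be read directly.

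The only genuinely delicate point is the composition $V_{t+1}(q(t))$: convex-of-convex is not convex in general, so the argument only goes through because $q(t)$ is convex and non-decreasing in $\rho$ and $V_{t+1}$ is convex and non-decreasing. This is precisely why the induction hypothesis must be strengthened to carry monotonicity, and it is the one place where the proof would break if one tried to induct on convexity alone. Everything else — expectation preserving convexity, partial minimization of a jointly convex function, composition with a linear map — is routine.
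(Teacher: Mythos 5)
Your proof is correct and follows essentially the same route as the paper's: backward induction carrying convexity together with monotonicity of $V_{t+1}$, convexity of the helper $W_t$ via composing the convex non-decreasing map $\rho\mapsto T(\rho/a-1)^+$ with $V_{t+1}$ and taking expectation over $X_t$, and then convexity of $V_t$ from the constrained minimization $V_t(q)=\min_{\rho\ge q/T}W_t(\rho)$. The only differences are organizational: where you invoke the standard partial-minimization-of-a-jointly-convex-function fact, the paper verifies the same step by the explicit construction $V_t(q)=W_t(\max(q/T,\rho^*_t))$, and it records the monotonicity of $V_t$ (feasible set shrinks as $q$ grows) as a separate sub lemma rather than folding it into the induction hypothesis.
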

\begin{proof}

We give an induction based proof for the Lemma. We will show that if $V_{t+1}(\cdot)$ is convex then $V_t(\cdot)$ is convex. Our proof exploits the fact that $E[q(t)|\rho(t)]$ and $E[U(t)|\rho(t)]$ are both convex as a function of $\rho(t)$.

\if 0
The convexity property establishes that best strategy in any given round $t$ is of the following form,
\begin{align}
\rho(t) &= max(C(w,t), \frac{q(t-1)}{T}), & s(t) = \left(C(w, t) \cdot \mu(t-1) - \frac{Q(t-1)}{T}\right)^+, 
\end{align}
where the minimum value of $V^2_{t}(\cdot)$ occurs at $\rho(t) = C(w, t)$ or $V_{t-1}(0) = V^2_{t}(C(w, t))$. 
We can calculate the exact value of $C(w, t)$ by solving the Bellman equation to calculate the functions $V_{t-1}(\cdot), V^2_t(\cdot)$ starting from round $n$ and working backwards.


\todo{Feedback NEEDED. To show that the best strategy in every round is the same we need to show that the value function in every round is the same.} 

Next, we consider this MDP over infinite horizon.

\begin{conjecture}
    As $n \to \infty$, $V_{t_1-1}(x) \to V_{0}(x)$ and $V_{t_1}^2(x) \to V_{1}^2(x), \forall t_1 << n$ and $x \in \mathbb{R}^+$.
    \label{conj:model1:value_function}
\end{conjecture}

\pg{How to prove this conjecture? Do we need to? Is giving an intuition like the following enough?} 

Informally, the conjecture if arguing that the value functions converge as $n \to \infty$. The intuition for this conjecture is as follows.  Consider the strategy ($CS$), where 
\begin{align}
    \rho(t) &= max(C(w,t + (t_1-1)),\frac{q(t-1)}{T}), & \forall t \in [1, n-(t_1-1)], \nonumber\\
    \rho(t) &= max(C(w,t), \frac{q(t-1)}{T}), &\forall t \in (n-(t_1-1), n].
\end{align}
. 

If the sender follows this strategy instead of the optimal, then we get,
\begin{align}
    V_{0}(x) \leq V_{t_1-1}(x) + \gamma^{n-t_1+1} E_{CS, x}[V_{n-(t_1-1)}(q(n-(t_1-1)))].
\end{align}

Similarly consider the following strategy,
\begin{align}
    \rho(t) &= max(C(w,t + (1-t_1)),\frac{q(t-1)}{T}), & \forall t \in [t_1, n].
\end{align}
If the sender follows this strategy instead of the optimal, then we get,
\begin{align}
    V_{t_1-1}(x) \leq V_{0}(x).
\end{align}

As $n \to \infty$, $\gamma^{n-t_1+1} \to 0$. If $E_{CS, x}[V_{n-(t_1-1)}(q(n-(t_1-1)))]$ is finite then we get,  $V_{t_1-1}(x) \to V_{0}(x)$.

\begin{theorem}
    In the model~\S\ref{ss:model1:model}, for the MDP~\S\ref{ss:model1:mdp}, if Conjecture~\ref{conj:model1:value_function} holds, then as $n \to \infty \forall t << n$, the following causal congestion control protocol is optimal. 
    \begin{align}
        s(t) = \left( C(w) \cdot \mu(t-1) - \frac{Q(t-1)}{T} \right)^+,  
        \label{eq:thm:model1:mdp}
    \end{align}
    \label{thm:model1:mdp}
\end{theorem}
where $V_{0}(0) = V^2_{1}(C(w))$.

Proof: Directly follows from Conjecture~\ref{conj:model1:value_function}.

This theorem establishes that for all rounds, the sender should try to keep the relative link load constant ($\rho(t) = C(w)$). In the event of excessive queuing ($q(t-1) > T \cdot C(w))$), the sender should pick the minimum possible value for relative link load ($\rho(t) = \frac{q(t-1)}{T}$ or $s(t) = 0$).
\subsection{Proof of Lemma~\ref{lemma:model1:mdp_convex}}
\label{app:model1:mdp_convex_lemma}
x
We will prove the Lemma using induction. 
\begin{sublemma}
$\forall t \leq n$, if  $V_{t}(\cdot)$ is convex, then, $V_{t-1}(\cdot)$ is convex.
\label{model1:sub_lemma:bellman_induction}
\end{sublemma}
\fi
To prove this Lemma, we will first prove three other Sub Lemmas. To help our analysis, we define a new helper function  $W_{t}(
\cdot)$ as follows,
\begin{align}
    W_{t}(\rho) &= E\left[ w \cdot q(t) + U(t) + \gamma \cdot V_{t+1}(q(t)) \big| \rho(t) = \rho\right] \nonumber, & \forall t \leq n,\nonumber\\
    V_{t}(q) &= \min_{\rho \geq \frac{q}{T}} (W_{t}(\rho)), & \forall t \leq n.\\
\end{align}

Expanding $V_{t+1}(q(t))$ in the equation above, we can rewrite $W_{t}$ as follows,

\begin{align}
    W_{t}(\rho) &= ( w \cdot E[q(t)|\rho(t)=\rho] + E[U(t)|\rho(t)=\rho]) + \gamma \cdot \int_{0}^{\infty} \left(f(a) \cdot V_{t+1}\left(T \cdot \left(\frac{\rho}{a} - 1\right)^+\right) \cdot da\right), & \forall t \leq n.
    \label{eq:model1:w_def_round}
\end{align}

\begin{sublemma}
$\forall t \leq n$, if $W_{t}(\cdot)$ is convex then $V_{t}(\cdot)$ is convex.
\end{sublemma}

\begin{proof}
To prove this Sub Lemma we will show 
\begin{align}
    V_{t}(\alpha \cdot x + (1 - \alpha) \cdot y) &\leq \alpha \cdot V_{t}(x) + (1 - \alpha) V_{t}(y), & \forall \alpha \in [0,1], x,y\in\mathbb{R}^+, x<y.
\end{align}

Let's assume that minima of $W_{t}(\cdot)$ occurs at $\rho=C$ (need not be unique). Since $W_{t}(\cdot)$ is convex, 
\begin{align}
W_{t}(x)&\geq W_{t}(C), &\forall x\geq C,\nonumber\\
\Rightarrow   V_{t}(T \cdot x) &= W_{t}(x), &\forall x\geq C,\nonumber\\
W_{t}(x)&\leq W_{t}(C), &\forall x< C,\nonumber\\
\Rightarrow   V_{t}(T \cdot x) &= W_{t}(C), &\forall x< C.
\end{align}

Lets define a new function $h(\cdot)$,
\begin{align}
    h(x) &= \frac{x}{T}, &\forall x \geq C \cdot T,\nonumber\\
    h(x) &= \frac{C}{T}, &\forall x < C \cdot T,\nonumber\\
     \Rightarrow V_{t}(x) &= W_{t}(h(x)).
\end{align}
Note that $h(\cdot)$ is convex. Since, $h(x) \geq C \forall x \in \mathbb{R}^+$ and $W_{t}(x) \geq W_{t}(y), \forall x \geq y \geq C$, we can write,
\begin{align}
    h(\alpha \cdot x + (1 - \alpha) \cdot y) &\leq \alpha \cdot h(x) + (1-\alpha) \cdot h(y), \nonumber\\
    \Rightarrow W_{t}(h(\alpha \cdot x + (1 - \alpha) \cdot y)) &\leq W_{t}(\alpha \cdot h(x) + (1-\alpha) \cdot h(y)).
\end{align}

Using this,
\begin{align}
    V_{t}(\alpha \cdot x + (1 - \alpha) \cdot y) &= W_{t}(h(\alpha \cdot x + (1 - \alpha) \cdot y)), \nonumber\\
     &\leq W_{t}(\alpha \cdot h(x) + (1-\alpha) \cdot h(y)),\nonumber\\
    &\leq \alpha \cdot W_{t}(h(x)) + (1-\alpha) \cdot W_{t}(h(y)),\nonumber\\
    &= \alpha \cdot V_{t}(x) + (1-\alpha) \cdot V_{t}(y).
\end{align}

Sub Lemma proved.
\end{proof}
\if 0
\textit{Case 1:} $\alpha \cdot x + (1-\alpha) \cdot y \leq C$,
\begin{align}
    V_{t-1}^1(x) &= V_{t}^2(C) \nonumber\\
    V_{t-1}^1(\alpha \cdot x + (1 - \alpha) \cdot y) &= V_{t}^2(C) \nonumber\\
    V_{t-1}^1(y) &\geq V_{t-1}^1(x)\nonumber\\
                 &\geq V_{t}^2(C) \nonumber\\
    \Rightarrow V_{t-1}^1(\alpha \cdot x + (1 - \alpha) \cdot y) &= \alpha \cdot V_{t}^2(C) + (1 - \alpha) V_{t}^2(C)\nonumber\\
    &= \alpha \cdot V_{t-1}^1(x) + (1 - \alpha) V_{t}^2(C)\nonumber\\
    & \leq \alpha \cdot V_{t-1}^1(x) + (1 - \alpha) V_{t-1}^1(y)
\end{align}
\textit{Case2:} $x\geq C$,
\begin{align}
    V_{t-1}^1(x) &= V_{t}^2(x)  \nonumber\\
    V_{t-1}^1(\alpha \cdot x + (1 - \alpha) \cdot y) &= V_{t}^2(\alpha \cdot x + (1 - \alpha) \cdot y) \nonumber\\
    V_{t-1}^1(y) &= V_{t}^2(y) 
\end{align}
Since $V_t^2$ is convex,
\begin{align}
    V_{t}^2(\alpha \cdot x + (1 - \alpha) \cdot y) &\leq \alpha \cdot V_{t}^2(x) + (1-\alpha) \cdot V_{t}^2(y) \nonumber\\
    V_{t-1}^1(\alpha \cdot x + (1 - \alpha) \cdot y) &\leq \alpha \cdot V_{t-1}^1(x) + (1 - \alpha) V_{t-1}^1(y)
\end{align}
\textit{Case 3:} $x \leq C \leq \alpha \cdot x + (1-\alpha) \cdot y$,
\begin{align}
    V_{t-1}^1(x) &= V_{t}^2(C)  \nonumber\\
    V_{t-1}^1(\alpha \cdot x + (1 - \alpha) \cdot y) &= 
    V_{t}^2(\alpha \cdot x + (1 - \alpha) \cdot y) \nonumber\\
    V_{t-1}^1(y) &= V_{t}^2(y) \nonumber\\
    \alpha \cdot x + (1 - \alpha) \cdot y &\leq \alpha \cdot C + (1 - \alpha) \cdot y\nonumber\\
    \Rightarrow V_{t-1}^1(\alpha \cdot x + (1 - \alpha) \cdot) y &\leq V_{t-1}^1(\alpha \cdot C + (1 - \alpha) \cdot y)\nonumber\\
    V_{t-1}^1(\alpha \cdot C + (1 - \alpha) \cdot y) &= V_{t}^2(\alpha \cdot C + (1 - \alpha) \cdot y)\nonumber\\
    &\leq \alpha \cdot V_{t}^2(C) + (1-\alpha) \cdot V_{t}^2(y) \nonumber\\
    &\leq \alpha \cdot V_{t-1}^1(x) + (1 - \alpha) V_{t-1}^1(y)
\end{align}
\fi

\begin{sublemma}
$\forall t \leq n$, $V_{t}(\cdot)$ is non-decreasing, i.e., if $x \leq y$, then, $V_{t}(x) \leq V_{t}(y)$.

\end{sublemma}
\begin{proof}
Lets assume that for $V_{t}(y)$ minima occurs at $\rho(t)=C_y$. Since $C_y\geq \frac{y}{T}$, $C_y\geq \frac{x}{T}$. Thus, for $V_{t}(x)$ we can always pick $\rho(t)=C_y$, and, $V_{t}(x)$ is upper bounded by $V_{t}(y)$. Sub lemma proved.
\end{proof}

\begin{sublemma}
$\forall t \leq n$, if $V_{t+1}(\cdot)$ is convex, then $W_{t}(\cdot)$ is convex.
\label{sublemma:model1_helper}
\end{sublemma}
\begin{proof}

Lets define $h_a(\cdot)$ as follows,
\begin{align}
    h_a(x) = T \cdot \left(\frac{x}{a}-1\right)^+.
\end{align}
The functions $h_a(\cdot)$ are convex $\forall a \in \mathbb{R}^+$. Rewriting Eq.~\eqref{eq:model1:w_def_round},
\begin{align}
    W_{t}(x) &= ( w \cdot E[q(t)|\rho(t)=x] + E[U(t)|\rho(t)=x]) + \int_{0}^{\infty}\left(f(a) \cdot V_{t+1}(h_a(x)) \cdot da\right), & \forall t \leq n.\nonumber\\
\end{align}
Since $h_a(\cdot)$ is convex, and $V_{t+1}(\cdot)$ in convex and non-decreasing, $V_{t+1} \circ h_a(\cdot)$ is convex. Additionally $E[q(t)|\rho(t)=x]$ and $E[U(t)|\rho(t)=x]$ are also convex in $x$ (Lemma \ref{lemma:model1:convexity_of_bound} and Eq.~\eqref{eq:model1:proof_perfbound:single_round_givenrt}). Since summation of convex functions is convex, $W_{t}(\cdot)$ is convex. Sub lemma proved.
\end{proof}

The last three Sub lemmas, prove that $\forall t \leq n$ if $V_{t+1}(\cdot)$ is convex then $V_t(\cdot)$ is convex. Since $V_{n+1}(\cdot)$ is convex by definition, Lemma proved. 
\end{proof}

\section{PMIF Model Proofs}

This section includes proofs from main body for the PMIF model (\S\ref{s:model2}). Assuming the terminology established  \S\ref{s:setup} and \S\ref{s:model2}.

\if 0
\subsection{Proof of Proposition~\ref{prop:model2:achievable_bound}}
\label{app:model2:achievable bound}
\begin{proof}

First, we will show that if the Proposition condition holds then with the policy $\rho^{p}(t) = C^p$, $Q(t)  \leq T \cdot C^p \cdot Pred(t)$ $\forall t \in \mathbb{N}$. We will prove this using induction.

The base case holds because $Q(0) = 0$.
Lets assume that $Q(t - 1) 
 \leq T \cdot C^p \cdot Pred(t-1)$, then we can pick a non negative $s(t)$, such that $\rho^{p}(t) = C^p$. Then in the next time step $t$,
 \begin{align}
     Q(t) &= (Q(t-1) + s(t) \cdot T- \mu(t) \cdot T)^+,\nonumber\\
          &= (T \cdot C^p \cdot Pred(t - 1) - T \cdot \mu(t))^+,\nonumber\\
          &=  T \cdot Pred(t) \cdot C^p \cdot (\frac{Pred(t-1)}{Pred(t)} - \frac{Pred(t-1) \cdot X^p_{t} }{Pred(t)})^+,\nonumber\\
          &\leq T \cdot Pred(t) \cdot C^p \cdot (\frac{Pred(t-1)}{Pred(t)} - \frac{Pred(t-1) \cdot X^p_{min} }{Pred(t) \cdot C^p})^+.
 \end{align}

 Now multiplying Eq.~\eqref{eq:prop:model2:achievable_bound} by $\frac{Pred(t-1)}{Pred(t) \cdot C^p}$ we get 
 \begin{align}
     \frac{Pred(t-1)}{Pred(t)} - \frac{Pred(t-1) \cdot X^p_{min} }{Pred(t) \cdot C^p} &\leq 1,\nonumber\\
     \Rightarrow Q(t) &\leq T \cdot Pred(t) \cdot C^p.
 \end{align}
 

The above analysis also shows that if the Proposition condition does not hold then, $Q(t)$ will exceed $T \cdot C^p \cdot Pred(t)$ with non-zero probability. Consequently, the sender will not always be able follow $\rho^p(t+1) = C^p$ in time step $t+1$. Proposition proved.
  \end{proof}
  
 \fi
\subsection{Proof of Corollary~\ref{cor:model2:mdp}}
\label{app:model2:cor_mdp}
\begin{proof}
Define the optimal value function for our MDP as:
\begin{align}
    V^p\left(q,\mu, Pred\right) = \min_{\pi} E\left[\sum_{t=1}^{\infty}\gamma^{t-1} \Big( w \cdot q(t) + U(t)] \Big) \Bigg| q(0) = q, \mu(0) = \mu, Pred(0)=Pred, s(t) \sim \pi \right]. 
\end{align}

We only consider $f^p(\cdot)$ for which $V(q,\mu, Pred)$ exists. 

For ease of analysis, we introduce an additional variable $Q/Pred$ in the optimal value function as the state. Rewriting the equation above,

\begin{align}
    V^p\left(q,\mu, Pred, \frac{Q}{Pred}\right) = \min_{\pi} E\Bigg[&\sum_{t=1}^{\infty}\gamma^{t-1} \Big( w \cdot q(t) + U(t)] \Big) \Bigg|  \nonumber\\
    & q(0) = q, \mu(0) = \mu, Pred(0)=Pred, \frac{Q(0)}{Pred(0)} = \frac{Q}{Pred}, s(t) \sim \pi \Bigg]. 
\end{align}

The condition $s(t) \geq 0$ translates to $\rho^p(t)\geq\frac{Q(t-1)}{T \cdot Pred(t-1)}$. The optimal value function thus satisfies the following Bellman Equation:
\begin{align}
    V^p\left(q,\mu, Pred, \frac{Q}{Pred}\right) = \min_{\rho^p(1) \geq \frac{Q}{T \cdot Pred}} E \Bigg[& w\cdot q(1) + U(1) + \gamma V^p\left(q(1), \mu(1), Pred(1), \frac{Q(1)}{Pred(1)}\right) \Bigg| \nonumber\\
    & q(0) = q, \mu(0) = \mu, Pred(0)=Pred, \frac{Q(0)}{Pred(0)} = \frac{Q}{Pred} \Bigg]. 
\end{align}

\begin{lemma}
$V^p(q,\mu, Pred, \frac{Q}{Pred})$ is a convex function of $\frac{Q}{Pred}$ and does not depend on $q, \mu, Pred$. 
\label{lemma:model2:combined_mdp_lemma}
\end{lemma}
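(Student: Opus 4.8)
The plan is to mirror the proof of Lemma~\ref{lemma:model1:combined_mdp_lemma} for the MIF model, after reparametrizing the MDP so that its state collapses to the single quantity $Q/Pred$. First I would replace the action $s(t)\ge 0$ with the ``normalized load'' $\rho^p(t)$ from Eq.~\eqref{eq:prediction_rt}; since $\rho^p(t) = \tfrac{1}{T}\cdot\tfrac{Q(t-1)}{Pred(t-1)} + \tfrac{s(t)}{Pred(t-1)}$, the constraint $s(t)\ge 0$ is exactly $\rho^p(t)\ge \tfrac{1}{T}\cdot\tfrac{Q(t-1)}{Pred(t-1)}$, which depends on the previous state only through $\tfrac{Q(t-1)}{Pred(t-1)}$. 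By Eq.~\eqref{eq:prediction_qU_rt}, the per-step cost $w\cdot q(t)+U(t)$ is a function of $\rho^p(t)$ and the i.i.d.\ variable $X^p_t$ alone, so its conditional expectation given $\rho^p(t)$ does not involve $q,\mu,Pred$ or $Q/Pred$. The remaining ingredient is to check that the next value of $Q/Pred$ is likewise determined by $\rho^p(t)$ and fresh i.i.d.\ noise: using $Q(t) = T\cdot Pred(t-1)\big(\rho^p(t)-X^p_t\big)^+$ (from the queue recursion and Eq.~\eqref{eq:prediction_rt}) together with $Pred(t)=Pred(t-1)\cdot X^{pred}_t$ one gets
\begin{align}
\frac{Q(t)}{Pred(t)} = \frac{T}{X^{pred}_t}\big(\rho^p(t)-X^p_t\big)^+ ,
\label{eq:plan:qpred-recursion}
\end{align}
so the four-component MDP reduces to a one-dimensional MDP with state $x\triangleq Q/Pred$, admissible actions $\rho^p\ge x/T$, cost depending only on $\rho^p$, and transition rule~\eqref{eq:plan:qpred-recursion}. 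This immediately yields that $V^p$ depends only on $x$; formally I would argue it by backward induction on a finite horizon $n$ (as in Lemma~\ref{lemma:model1:round_value_function_independence}) and then take $n\to\infty$, exactly as in the MIF proof.

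For convexity I would again run the finite-horizon induction. Define $V^p_t(x)$ as the optimal discounted cost-to-go of the reduced MDP from state $x$ at step $t$, with $V^p_{n+1}\equiv 0$ (convex, non-decreasing), and the helper $W^p_t(\rho^p) = E\big[w\,q(t)+U(t)+\gamma\,V^p_{t+1}\big(\tfrac{T}{X^{pred}_t}(\rho^p-X^p_t)^+\big)\big]$ so that $V^p_t(x)=\min_{\rho^p\ge x/T}W^p_t(\rho^p)$. Assuming $V^p_{t+1}$ is convex and non-decreasing: for each fixed realization of $X^p_t,X^{pred}_t$ the map $\rho^p\mapsto \tfrac{T}{X^{pred}_t}(\rho^p-X^p_t)^+$ is convex and non-decreasing, hence its composition with $V^p_{t+1}$ is convex; the per-step term $E[w\,q(t)+U(t)\mid\rho^p(t)=\rho^p]$ is convex in $\rho^p$ by Lemma~\ref{lemma:model1:convexity_of_bound} and Eq.~\eqref{eq:model1:proof_perfbound:single_round_givenrt} (the PMIF expressions are identical with $f^p$ in place of $f$); and expectation over the i.i.d.\ noise preserves convexity, so $W^p_t$ is convex. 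Then the sub-lemmas established in Appendix~\ref{app:model1:combined_mdp_lemma} apply verbatim: $W^p_t$ convex $\Rightarrow$ $V^p_t(x)=\min_{\rho^p\ge x/T}W^p_t(\rho^p)$ is convex and non-decreasing in $x$, closing the induction. Letting $n\to\infty$ gives convexity of $V^p$ in $Q/Pred$.

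The main obstacle is bookkeeping rather than a new idea: one has to argue carefully that the reparametrized process is still a bona fide MDP — i.e.\ that choosing $\rho^p(t)$ as a (possibly randomized) function of the reduced state $Q(t-1)/Pred(t-1)$ is equivalent to choosing a causal $s(t)$, and that the factors $X^p_t$ and $X^{pred}_t$ are independent of the current reduced state and of $\rho^p(t)$. This follows from the causality assumption and the independence structure of the PMIF model, but it should be stated cleanly before the induction. The convexity-preservation step, the partial-minimization step, and the $n\to\infty$ passage (which relies on the standing assumption that $V^p$ exists) are then identical to the corresponding steps in the MIF proof and require no additional work. Once this lemma is in hand, Corollary~\ref{cor:model2:mdp} follows: the minimizer $\rho^{p*}$ of $W^p$ is independent of the state, and the optimal action is $\rho^p(t)=\max(\rho^{p*},\,\tfrac{1}{T}\cdot\tfrac{Q(t-1)}{Pred(t-1)})$, i.e.\ $s(t)=\big(\rho^{p*}\cdot Pred(t-1)-\tfrac{Q(t-1)}{T}\big)^+$ with $C^p(w,\gamma)=\rho^{p*}$.
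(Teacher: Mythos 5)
Your proposal is correct and follows essentially the same route as the paper: you reduce the state to $Q/Pred$ by observing that the cost and the next value of $Q(t)/Pred(t)$ depend only on $\rho^p(t)$ and the i.i.d.\ factors (your recursion $\tfrac{Q(t)}{Pred(t)}=\tfrac{T}{X^{pred}_t}(\rho^p(t)-X^p_t)^+$ is just the paper's $T(\rho^p(t)/X^p_t-1)^+\cdot X^p_t/X^{pred}_t$ rewritten), and then run the same finite-horizon backward induction with the helper $W^p_t$ and the convex, non-decreasing map $h_{a,b}(x)=T(x/a-1)^+\cdot a/b$ before letting $n\to\infty$. No gaps worth noting.
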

\begin{proof}
Again, we reconsider a variant of this MDP over a finite number of time steps $n$.
For the finite time steps version, we can define the optimal value function at the start of time step $t$ ($V^p_{t}(\cdot)$) as follows, 

\begin{align}
    V^p_{t}\left(q,\mu, Pred, \frac{Q}{Pred}\right) = \min_{\pi} E\Bigg[&\sum_{i=t}^{n}\gamma^{i-t} \Big( w \cdot q(i) + U(i)] \Big) \Bigg| \nonumber\\
    &q(t-1) = q, \mu(t-1) = \mu, Pred(t-1)=Pred, \frac{Q(t-1)}{Pred(t-1)}=\frac{Q}{Pred}, s(t) \sim \pi \Bigg]. 
\end{align}

We will show that $V^p_t(q, u, Pred, \frac{Q}{Pred})$ is independent of $q, \mu, pred$ (Lemma~\ref{lemma:model2:round_value_function_independence}) and convex as a function of $\frac{Q}{Pred}$ (Lemma~\ref{lemma:model2:mdp_convex}). We close the proof by considering this MDP over infinite horizon ($n \to \infty$). By definition, as $n \to \infty, V^p_{t}(q,\mu, Pred, \frac{Q}{Pred}) \to V^p(q,\mu, pred, \frac{Q}{Pred})$ and Lemma proved.
\end{proof}

The Bellman Equation for the finite time steps MDP is as follows,
\begin{align}
    V^p_{t}\left(q,\mu, Pred, \frac{Q}{Pred}\right) = \min_{\rho^p(t) \geq \frac{Q}{T \cdot Pred}} E \Bigg[ &w \cdot q(t) + U(t) + \gamma \cdot V^p_{t+1}\left(q(t),\mu(t), Pred(t), \frac{Q(t)}{Pred(t)}\right) \Bigg| \nonumber\\
    & q(t-1) = q, \mu(t-1) = \mu, \nonumber\\
    & Pred(t-1)=Pred, \frac{Q(t-1)}{Pred(t-1)}=\frac{Q}{Pred}\Bigg] \nonumber, & \forall t \leq n,\\
   V^p_{t}\left(q,\mu, Pred, \frac{Q}{Pred}\right) &= 0, & t=n+1.
   \label{eq:model2:conjecture}
\end{align}

\begin{lemma}
    $V^p_t(q, u,  pred, \frac{Q}{pred})$ is independent of $q, \mu, pred$, $\forall t \leq n$.
    \label{lemma:model2:round_value_function_independence}
\end{lemma}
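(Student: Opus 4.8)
The plan is to prove the statement by backward induction on $t$, from $t=n+1$ down to $t=1$, mirroring the argument for Lemma~\ref{lemma:model1:round_value_function_independence} in the MIF model, but with $Q(t-1)/Pred(t-1)$ playing the role that the queue state $q(t-1)$ played there. The base case $t=n+1$ is immediate, since $V^p_{n+1}\equiv 0$ is trivially independent of $q,\mu,Pred$.

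For the inductive step, assume $V^p_{t+1}(q,\mu,Pred,Q/Pred)$ depends only on its last argument. Inspect the right-hand side of the Bellman equation~\eqref{eq:model2:conjecture}. First, by Eq.~\eqref{eq:prediction_qU_rt} the instantaneous terms $q(t)=T(\rho^p(t)/X^p_t-1)^+$ and $U(t)=(1-\rho^p(t)/X^p_t)^+$ depend only on the action $\rho^p(t)$ and the i.i.d. factor $X^p_t$. Second -- and this is the key computation -- the next-step auxiliary state simplifies cleanly: using $Q(t)=\mu(t)\,q(t)=Pred(t-1)X^p_t\cdot T(\rho^p(t)/X^p_t-1)^+$ together with $Pred(t)=Pred(t-1)X^{pred}_t$ and positivity of $X^p_t$, one gets
\begin{align}
\frac{Q(t)}{Pred(t)} \;=\; \frac{T\,(\rho^p(t)-X^p_t)^+}{X^{pred}_t},
\end{align}
which depends only on $\rho^p(t)$ and the i.i.d. factors $(X^p_t,X^{pred}_t)$, and not on $q(t-1),\mu(t-1),Pred(t-1)$ directly. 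By the induction hypothesis, $V^p_{t+1}$ evaluated at the next state is thus a function of $\rho^p(t)$ and $(X^p_t,X^{pred}_t)$ only.

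Consequently, once we fix the action $\rho^p(t)=\rho$, the conditional expectation $E[\,w\cdot q(t)+U(t)+\gamma V^p_{t+1}(\cdot)\,]$ becomes a deterministic function of $\rho$ alone, because the pair $(X^p_t,X^{pred}_t)$ has a fixed distribution independent of the current state. The only way the state enters the minimization is through the feasibility constraint $s(t)\ge 0$, which by Eq.~\eqref{eq:prediction_rt} is equivalent to $\rho^p(t)\ge Q/(T\cdot Pred)$ -- an inequality expressed purely in terms of $Q/Pred$. Hence $V^p_t(q,\mu,Pred,Q/Pred)=\min_{\rho\ge Q/(T\cdot Pred)}(\text{function of }\rho)$ depends only on $Q/Pred$, completing the induction.

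I expect the only delicate part to be the bookkeeping around the change of state variable: carefully verifying the equivalence $s(t)\ge0 \iff \rho^p(t)\ge Q(t-1)/(T\cdot Pred(t-1))$ from Eq.~\eqref{eq:prediction_rt}, and correctly handling the $(\cdot)^+$ truncations in the queue-evolution identity above (in particular, that $Q(t)/\mu(t)=q(t)$ and $X^p_t(\rho^p(t)/X^p_t-1)^+=(\rho^p(t)-X^p_t)^+$). The existence of the value functions is assumed, as stated in the main text. This lemma is exactly the ingredient that lets the value function collapse to a one-dimensional (convex) function of $Q/Pred$, from which -- as in the MIF case -- the threshold form of the optimal control law in Corollary~\ref{cor:model2:mdp} follows.
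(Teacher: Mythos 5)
Your proposal is correct and follows essentially the same argument as the paper: backward induction with the key observation that $\frac{Q(t)}{Pred(t)} = T\left(\frac{\rho^p(t)}{X^p_t}-1\right)^+\frac{X^p_t}{X^{pred}_t}$ (your simplified form $\frac{T(\rho^p(t)-X^p_t)^+}{X^{pred}_t}$ is the same quantity since $X^p_t>0$) depends only on the action $\rho^p(t)$ and the i.i.d.\ factors, while the constraint $s(t)\geq 0$ enters only through $\rho^p(t)\geq \frac{Q}{T\cdot Pred}$. Your write-up is, if anything, slightly more explicit than the paper's about the constraint equivalence and the handling of the $(\cdot)^+$ truncation, but the route is identical.
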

\begin{proof}
Now, 
\begin{align}
    \frac{Q(t)}{Pred(t)}=  T \cdot \left(\frac{\rho^p(t)}{X^p_t} - 1\right)^+  \cdot \frac{X^p_t}{X^{pred}_t}
\end{align}

In the Eq.~\eqref{eq:model2:conjecture}, the first two terms on the right hand side ($w \cdot q(t)$ and $U(t)$) depend only on the value of $\rho^p(t)$. Since $Q(t)/Pred(t)$ also only depends on $\rho^p(t)$, this implies that if $V^p_{t+1}(q(t), \mu(t), Pred(t), Q(t)/Pred(t)$ is only a function of $Q(t)/Pred(t)$, then $V^p_{t}(q, \mu)$ is also only a function of $q$. Since $V^p_{n+1}(\cdot)$ is independent of $q(n), \mu(n), Pred(n)$, then, $\forall t \leq n$ ,we can describe the function $V^p_{t}(q, \mu, Pred, \frac{Q}{Pred})$ using only $Q/Pred$. Lemma proved.

\end{proof}

Using the above Lemma, we can rewrite Eq.~\eqref{eq:model2:conjecture},  

\begin{align}
    V^p_{t}\left(\frac{Q}{Pred}\right) &= \min_{\rho^p(t) \geq \frac{Q}{T \cdot Pred}} E\left[ w \cdot q(t) + U(t) + \gamma \cdot V_{t+1}\left(\frac{Q(t)}{Pred(t)}\right) \Bigg| \frac{Q(t-1)}{Pred(t-1)} = \frac{Q}{Pred}\right], & \forall t \leq n.
    \label{eq:model2:v_def}
\end{align}

\begin{lemma}
$\forall t \leq n$, both $V^p_{t}(\cdot)$ is convex.
\label{lemma:model2:mdp_convex}
\end{lemma}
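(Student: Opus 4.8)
The plan is to prove the lemma by backward induction on $t$, mirroring the proof of Lemma~\ref{lemma:model1:mdp_convex}, and carrying along the auxiliary property that $V^p_t(\cdot)$ is non-decreasing. The base case $t=n+1$ is immediate since $V^p_{n+1}(\cdot)\equiv 0$ is both convex and non-decreasing. For the inductive step, assume $V^p_{t+1}(\cdot)$ is convex and non-decreasing. As before, I would introduce the helper function
\begin{align}
    W^p_{t}(\rho) &= E\left[ w \cdot q(t) + U(t) + \gamma \cdot V^p_{t+1}\!\left(\tfrac{Q(t)}{Pred(t)}\right) \;\middle|\; \rho^p(t) = \rho\right], & V^p_{t}\!\left(\tfrac{Q}{Pred}\right) &= \min_{\rho \geq \frac{Q}{T \cdot Pred}} W^p_{t}(\rho),
\end{align}
and establish three sub-claims exactly as in the MIF case: (i) $V^p_t$ is non-decreasing, because enlarging the argument $Q/Pred$ only shrinks the feasible set $\{\rho \geq Q/(T\cdot Pred)\}$, so the minimum can only increase; (ii) if $W^p_t$ is convex then $V^p_t$ is convex, via the same clipping device: if $\rho=C$ minimizes $W^p_t$, then $V^p_t(x)=W^p_t(h(x))$ with $h(x)=\max(x/T,C)$ convex, $h(x)\geq C$ everywhere, and $W^p_t$ non-decreasing on $[C,\infty)$, so the composition is convex; and (iii) if $V^p_{t+1}$ is convex and non-decreasing, then $W^p_t$ is convex. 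Chaining (iii)$\Rightarrow$(ii) together with (i) gives the inductive step, and letting $n\to\infty$ then transfers convexity to $V^p(\cdot)$, as invoked in Lemma~\ref{lemma:model2:combined_mdp_lemma}.

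The heart of the argument is sub-claim (iii), and the only ingredient that differs from the MIF model is the transition map for the reduced state. Using $\mu(t)=Pred(t-1)X^p_t$ and $Pred(t)=Pred(t-1)X^{pred}_t$, one gets
\begin{align}
    \frac{Q(t)}{Pred(t)} = T\left(\frac{\rho^p(t)}{X^p_t} - 1\right)^{+}\frac{X^p_t}{X^{pred}_t} = \frac{T}{X^{pred}_t}\left(\rho^p(t) - X^p_t\right)^{+}.
\end{align}
For every fixed realization $a=X^p_t$ and $c=X^{pred}_t$, the map $\rho \mapsto \tfrac{T}{c}(\rho-a)^{+}$ is a nonnegative scalar multiple of the convex function $(\rho-a)^{+}$, hence convex in $\rho$; composing it with the convex, non-decreasing $V^p_{t+1}$ preserves convexity in $\rho$. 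Because $X^p_t$ and $X^{pred}_t$ are independent of $\rho^p(t)$ (by causality), taking the expectation over the pair $(X^p_t,X^{pred}_t)$ of these convex functions again yields a convex function of $\rho$. Finally, the terms $w\,E[q(t)\mid \rho^p(t)=\rho] + E[U(t)\mid \rho^p(t)=\rho]$ are convex in $\rho$ by Eq.~\eqref{eq:prediction_qU_rt} (the same conditioning-then-averaging computation used for Lemma~\ref{lemma:model1:convexity_of_bound}), and a sum of convex functions is convex, so $W^p_t(\cdot)$ is convex.

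The main obstacle is essentially bookkeeping rather than a new idea: one must be careful that the state can legitimately be collapsed to the single variable $Q/Pred$ (this is exactly Lemma~\ref{lemma:model2:round_value_function_independence}), and that sub-claim (iii) only ever needs convexity in $\rho^p(t)$ for each fixed value of the random factors, so that the expectation step is valid. Once the identity $Q(t)/Pred(t)=\tfrac{T}{X^{pred}_t}(\rho^p(t)-X^p_t)^{+}$ is in hand, the remainder of the argument is a line-for-line adaptation of the MIF proof, so I do not expect any genuinely new difficulty.
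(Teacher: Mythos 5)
Your proposal is correct and follows essentially the same route as the paper: the paper likewise reduces the state to $Q/Pred$, defines the helper $W^p_t$, and reuses the three MIF sub-lemmas, with the only new ingredient being the composition with the map $h_{a,b}(x)=T\left(\frac{x}{a}-1\right)^{+}\cdot\frac{a}{b}$, which is exactly your identity $\frac{Q(t)}{Pred(t)}=\frac{T}{X^{pred}_t}\left(\rho^p(t)-X^p_t\right)^{+}$ with $a=X^p_t$, $b=X^{pred}_t$. No substantive difference.
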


\begin{proof}
To help our analysis, we define a new helper function  $W_{t}(
\cdot)$ as follows,
\begin{align}
    W^p_{t}(\rho^p) &= E\left[ w \cdot q(t) + U(t) + \gamma \cdot V^p_{t+1}\left(\frac{Q(t)}{Pred(t)}\right) \Big| \rho^p(t) = \rho^p\right] \nonumber, & \forall t \leq n,\nonumber\\
    V^p_{t}\left(\frac{Q}{Pred}\right) &= \min_{\rho^p \geq \frac{Q}{T \cdot Pred}} (W^p_{t}(\rho^p)), & \forall t \leq n.\\
\end{align}

Expanding $V^p_{t+1}(\frac{Q(t)}{Pred(t)})$ in the equation above, we can rewrite $W^p_{t}$ as follows,

\begin{align}
    W^p_{t}(\rho^p) = w \cdot E[&q(t)|\rho^p(t)=\rho^p] + E[U(t)|\rho^p(t)=\rho^p]\nonumber\\ 
    &+ \gamma \cdot \int_{0}^{\infty}\int_{0}^{\infty}\Big(f^p(a) \cdot f^{pred}(b) \cdot V^p_{t+1}\left(T \cdot \left(\frac{\rho^p}{a} - 1\right)^+ \cdot \frac{a}{b}\right) \cdot da \cdot db\Big)\nonumber &, \forall t \leq n\nonumber\\
\end{align}

The remainder of proof for this lemma is similar to proof of Lemma~\ref{lemma:model1:mdp_convex}, except we need to redefine $h_a(\cdot)$ from Sub lemma~\ref{sublemma:model1_helper} as follows.
\begin{align}
    h_{a,b}(x) = T \cdot \left(\frac{x}{a}-1\right)^+ \cdot \frac{a}{b}
\end{align}
$V^p_{t+1} \circ h_{a,b}$ is also convex. 
\end{proof}

To close the proof of the corollary, we exploit the fact that $q(t), U(t)$ and $\frac{Q(1)}{Pred(1)}$ only depend on the value of $\rho^p(t)$, and define a helper function $W(\cdot)$ as follows,
\begin{align}
    W(\rho^p) &= E\left[ w \cdot q(1) + U(1) + \gamma \cdot V^p\left(\frac{Q(1)}{Pred(1)}\right) \Bigg| \rho^p(1) = \rho^p\right] \nonumber\\
    V^p\left(\frac{Q}{Pred}\right) &= \min_{\rho^p \geq \frac{Q}{T \cdot Pred}} (W^p(\rho^p)).
    \label{eq:model2:mdp_helper_def}
\end{align}
In other words, $W^p(\rho^p)$ is the optimal value function restricting the first action $\rho^p(1)$ to $\rho^p$.

\begin{lemma}
    $W^p(\rho^p)$ is a convex function in $\rho^p$.
\end{lemma}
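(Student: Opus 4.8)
The plan is to transcribe the argument of Sub lemma~\ref{sublemma:model1_helper} from the MIF analysis, carrying through the extra factor introduced by the prediction dynamics. First I would make the dependence of each summand on $\rho^p$ explicit. Using the PMIF transitions, $q(1) = T\cdot(\rho^p/X^p_1 - 1)^+$, $U(1) = (1 - \rho^p/X^p_1)^+$, and $Q(1)/Pred(1) = T\cdot(\rho^p/X^p_1 - 1)^+\cdot (X^p_1/X^{pred}_1)$; conditioning on $\rho^p(1)=\rho^p$ and averaging over the independent variables $X^p_1, X^{pred}_1$ yields
\begin{align}
    W^p(\rho^p) &= w\cdot E[q(1)\mid\rho^p(1)=\rho^p] + E[U(1)\mid\rho^p(1)=\rho^p] \nonumber\\
    &\quad + \gamma \int_{0}^{\infty}\!\!\int_{0}^{\infty} f^p(a)\, f^{pred}(b)\, V^p\big(h_{a,b}(\rho^p)\big)\, da\, db,
\end{align}
where $h_{a,b}(x) = T\cdot(x/a - 1)^+\cdot (a/b)$, exactly the helper function already introduced in the proof of Corollary~\ref{cor:model2:mdp}.

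Next I would check the convexity ingredients one at a time. Each $h_{a,b}$ is a nonnegative constant multiple of $x\mapsto (x/a-1)^+$, hence convex and non-decreasing on $\mathbb{R}^+$. By Lemma~\ref{lemma:model2:combined_mdp_lemma}, $V^p$ is convex; it is also non-decreasing, since for $x\le y$ any first action admissible from state $Q/Pred = y$ (i.e.\ with $\rho^p(1)\ge y/(T\cdot Pred)$) is also admissible from $x$, so $V^p(x)\le V^p(y)$ — the same monotonicity argument used in the MIF proof. Therefore $V^p\circ h_{a,b}$ is convex for every $a,b$, because composing a convex non-decreasing function with a convex function preserves convexity. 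Finally, $E[q(1)\mid\rho^p(1)=\cdot]$ and $E[U(1)\mid\rho^p(1)=\cdot]$ are convex by the computation in Eq.~\eqref{eq:model1:proof_perfbound:single_round_givenrt} (with $f^p$ in place of $f$) together with Lemma~\ref{lemma:model1:convexity_of_bound}.

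To finish, I would assemble the pieces: $W^p$ is a nonnegative-weighted combination — a weighted sum of the two convex expectation terms plus an integral against the density $f^p(a)f^{pred}(b)$ of the convex functions $V^p\circ h_{a,b}$ — and any such combination of convex functions is convex, which proves the lemma. The one step I would be most careful about is the monotonicity of $V^p$ and the legitimacy of interchanging expectation with the $(\cdot)^+$/composition (finiteness of $V^p$, which we have already assumed, handles this); everything else is a routine transcription of the MIF argument with the additional factor $a/b$ absorbed into $h_{a,b}$.
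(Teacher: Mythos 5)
Your proposal is correct and follows essentially the same route as the paper: the paper's proof simply invokes the MIF argument (Sub lemma~\ref{sublemma:model1_helper}) with the helper function redefined as $h_{a,b}(x)=T\cdot(x/a-1)^+\cdot(a/b)$, which is exactly the transcription you carry out, including the convexity of the conditional expectations, the convexity and monotonicity of $V^p$, and closure of convexity under composition and nonnegative mixing. Nothing is missing; you have merely made explicit the steps the paper leaves implicit.
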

\begin{proof}
The proof directly follows from proof of Sub lemma~\ref{sublemma:model1_helper} in Appendix~\ref{app:model1:combined_mdp_lemma}.
\end{proof}

Let the minimum value of $W^p(\cdot)$ occur at $\rho^{p*}$. Then, the convexity of $W^p(\cdot)$ combined with Eq.~\eqref{eq:model2:mdp_helper_def} establishes that the optimal control law is of the form $\rho^p(t) = max(\rho^{p*}, \frac{Q(t-1)}{T \cdot Pred(t-1)})$, where $\rho^{p*}$ depends on $w$ and $\gamma$. Defining $C^p(w, \gamma) = \rho^{p*}$, we get the optimal control law from the corollary.

\end{proof}

 \section{SMF Model Proofs}

This section includes proofs from main body for the SMF model (\S\ref{s:model3}). Assuming the terminology established  \S\ref{s:setup} and \S\ref{s:model3}.

\subsection{Proof of Proposition~\ref{prop:model3:compute_da}}
\label{app:prop:compute_da}
\begin{proof}

To prove this Proposition, we will first show that if $y=g^A(x)$, then $g^{f^i}{'}(x^i) = g^{f^j}{'}(x^j), \forall i,j \in \textbf{S}$. We will prove this by contradiction. Lets assume, $g^{f^i}{'}(x^i) < g^{f^j} {'}(x^j)$. Then, we can pick a new point $(x_1, y_1)$ ($x_1 = \sum_{k \in \textbf{S}} \lambda(i) \cdot x_1^k, y_1 = \sum_{k \in \textbf{S}} \lambda(i) \cdot y_1^k$) such that
\begin{align}
    y_1^k &= g^{f^k}(x_1^k), &\forall k \in \textbf{S},\nonumber\\
    x_1^k &= x^k, & \forall k \in \textbf{S} - \{i,j\},\nonumber\\
    \Rightarrow y_1^k &= g^{f^k}(x^k), &\forall k \in \textbf{S}-\{i,j\},\nonumber\\
    x_1^i &= x^i + \frac{\epsilon}{\lambda(i)}, & \nonumber\\
    \Rightarrow y_1^i &= y^i + \frac{\epsilon}{\lambda(i)}g^{f^i}{'}(x^i), & if \epsilon \to 0 ,\nonumber\\
    x_1^j &= x^j - \frac{\epsilon}{\lambda(j)}, &\nonumber\\
    \Rightarrow y_1^j &= y^j - \frac{\epsilon}{\lambda(j)}g^{f^j}{'}(x^j), & if \epsilon \to 0.
\end{align}

If $\epsilon \to 0$, then $(x_1, y_1)$ can be given by
\begin{align}
    x_1 &= \left(\sum_{k \in \textbf{S}}\lambda(k) \cdot x^k\right) + \lambda(i) \cdot \frac{\epsilon}{\lambda(i)} - \lambda(j) \cdot \frac{\epsilon}{\lambda(j)}, \nonumber\\
       &= x, \nonumber\\
    y_1 &= \left(\sum_{k \in \textbf{S}}\lambda(k) \cdot g^i(x^i)\right) + \lambda(i) \cdot \frac{\epsilon}{\lambda(i)} \cdot g^{f^i}{'}(x^i) - \lambda(j) \cdot \frac{\epsilon}{\lambda(j)} \cdot g^{f^j}{'}(x^j),\nonumber\\
     &= y + \epsilon \cdot (g^{f^i}{'}(x^i) - g^{f^j}{'}(x^j)).
\end{align}
Since $x_1 = x$ and $y_1 < y$, the point $(x,y)$ cannot be on the boundary. Contradiction.

Finally, we will show that if the point $(x,y)$ satisfies the condition $g^{f^k}{'}(x^k) = C \forall k \in \textbf{S}$, then it will be on the boundary. To do this we will show that there does not exist another point $(x_1, y_1) (x_1 = \sum_{k \in \textbf{S}} \lambda(i) \cdot x_1^k, y_1 = \sum_{k \in \textbf{S}} \lambda(i) \cdot g^{f^k}(x^k_1)$) which satisfies the condition $g^{f^k}{'}(x_1^k) = C_1$, $\forall k \in \textbf{S}$ and $x=x_1$ but $y_1<y$. Lets assume $C < C1$. Since $g^{f^k}$ in convex with increasing slope,
\begin{align}
    g^{f^k}{'}(x^k) &< g^{f^k}{'}(x_1^k), &\forall k \in \textbf{S},\nonumber\\
    \Rightarrow x^k &< x_1^k, &\forall k \in \textbf{S},\nonumber\\
    \Rightarrow x &< x_1. &
\end{align}
Since $x<x_1$, we have a contradiction. We can show a similar contradiction if $C>C1$ (we will get $x>x_1$). For $C=C_1$, since $g^{f^k}(\cdot)$ is convex, if $x=x_1$, then $y=y_1$. This is because
\begin{align}
    g^{f^k}(x_1^k)  &= g^{f^k}(x^k + (x_1^k - x^k)),& \forall k \in \textbf{S},\nonumber\\
          &= g^{f^k}(x^k) + g^{f^k}{'} \cdot (x_1^k - x^k),& \forall k \in \textbf{S},\nonumber\\
          &= g^{f^k}(x^k) + C \cdot (x_1^k - x^k),& \forall k \in \textbf{S},\nonumber\\
    y_1 &= \sum_{k \in \textbf{S}}g^{f^k}(x^k_1) =y + C \cdot \sum_{k \in \textbf{S}}(x_1^k - x^k), &\nonumber\\
    \Rightarrow y_1 &= y. &
\end{align}
Proposition proved!
\end{proof}
\subsection{Proof of Proposition~\ref{prop:model3:achievable_bound}}
\label{app:model3:achievable bound}
\begin{proof}
First, we will show that if the Theorem condition holds then with the policy $\rho(t) = C^A(S(t-1))$, $Q(t)  \leq T \cdot C^A(S(t)) \cdot \mu(t)$ $\forall t \in \mathbb{N}, \forall S(t-1) \in \textbf{S}$. We will prove the Proposition using induction

Base case holds as Q(0) = 0. Lets assume that $Q(t - 1) 
 \leq T \cdot C^A(S(t-1)) \cdot \mu(t-1)$ $\forall S(t-1) \in \textbf{S}$, then $\forall S(t) \in \textbf{S}$, we can pick a non negative $s(t)$, such that $\rho(t) = C^A(S(t-1))$. Then in the next time step $t$,
 \begin{align}
     Q(t) &= (Q(t-1) + s(t) \cdot T- \mu(t) \cdot T)^+,\nonumber\\
          &= (T \cdot C^A(S(t-1)) \cdot \mu(t - 1) - T \cdot \mu(t))^+,\nonumber\\
          &=  T \cdot C^A(S(t)) \cdot \mu(t) \cdot \left(\frac{C^A(S(t-1))}{C^A(S(t)) \cdot X^{S(t-1)}_t} - \frac{1}{C^A(S(t))}\right)^+, \nonumber\\
          &\leq T \cdot C^A(S(t)) \cdot \mu(t) \cdot \left(\frac{C^A(S(t-1))}{C^A(S(t)) \cdot X^{S(t-1)}_{min}} - \frac{1}{C^A(S(t))}\right)^+.
 \end{align}

Rewriting the Proposition condition using $k_1 = S(t-1)$ and $k_2 = S(t)$,
 \begin{align}
     \frac{C^A(S(t-1))}{C^A(S(t)) \cdot X^{S(t-1)}_{min}} - \frac{1}{C^A(S(t))} &\leq 1, \nonumber\\
     \Rightarrow Q(t) &\leq T \cdot \mu(t) \cdot C^A(S(t)).
 \end{align}

The above analysis also shows that if the Proposition condition does not hold then, $Q(t)$ will exceed $T \cdot C^A(S(t)) \cdot \mu(t)$ with non-zero probability. Consequently, the sender will not always be able follow $\rho(t+1) = C^A(S(t))$ in time step $t+1$. Proposition proved. 
 
 \end{proof}
 
 \subsection{Proof of Corollary~\ref{cor:model3:mdp}}
 \label{app:model3:cor_mdp}
 \begin{proof}
 Define the optimal value function for our MDP as:
\begin{align}
    V^A(q,\mu, k) = \min_{\pi} E\left[\sum_{t=1}^{\infty}\gamma^{t-1} \Big( w \cdot q(t) + U(t)] \Big) \Big| q(0) = q, \mu(0) = \mu, S(0)=k, s(t) \sim \pi \right]. 
\end{align}

We only consider $\cup_{k \in \textbf{S}}f^k(\cdot)$ such that $V^A(q,\mu, k)$ exists. 

The optimal value function satisfies the following Bellman Equation:
\begin{align}
    V^A(q,\mu, k) = \min_{\rho(1) \geq \frac{q}{T}} E \left[ w\cdot q(1) + U(1) + \gamma V^A(q(1), \mu(1), S(1)) \Big| q(0) = q, \mu(0) = \mu, S(0)=k, \right]. 
\end{align}

\begin{lemma}
$V^A(q,\mu, k)$ is a convex function of $q$ and does not depend on $\mu$. 
\label{lemma:model3:combined_mdp_lemma}
\end{lemma}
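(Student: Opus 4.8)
The plan is to adapt, essentially verbatim, the argument used for Lemma~\ref{lemma:model1:combined_mdp_lemma} in Appendix~\ref{app:model1:combined_mdp_lemma}, now carrying the extra link-state coordinate $k$. First I would introduce the finite-horizon value function
\[
V^A_t(q,\mu,k) = \min_\pi E\!\left[\sum_{i=t}^{n}\gamma^{i-t}\big(w\cdot q(i)+U(i)\big)\;\middle|\;q(t-1)=q,\ \mu(t-1)=\mu,\ S(t-1)=k,\ s(t)\sim\pi\right],
\]
with $V^A_{n+1}\equiv 0$, prove the claim for every finite horizon $n$ by backward induction on $t$, and then let $n\to\infty$ (as in the MIF case, this is legitimate because we restrict to transition matrices $\mathbb{A}$ for which $V^A$ is finite, and $V^A_t\to V^A$ by definition). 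The inductive hypothesis must be stated for the whole family at once: for all $k\in\textbf{S}$, $V^A_{t+1}(q,\mu,k)$ is independent of $\mu$ and is convex and non-decreasing in $q$.

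For the inductive step I would write the Bellman equation as in \S\ref{ss:model3:mdp}, replacing the constraint $s(t)\ge 0$ by $\rho(t)\ge q/T$ via Eq.~\eqref{eq:model1:proof_perfbound:rt}, and using Eq.~\eqref{eq:model3:proof_perfbound:generic_markov_system} to express $q(t)$, $U(t)$ and the one-step cost purely as functions of $\rho(t)$ and the factor $X^{k}_t$. Conditioned on $S(t-1)=k$, the transition draws the pair $(X^{k}_t,S(t))$ from a fixed joint law, the next queue state is $q(t)=T(\rho(t)/X^{k}_t-1)^+$, the next capacity is $\mu\cdot X^{k}_t$, and the cost is $wT(\rho(t)/X^{k}_t-1)^+ + (1-\rho(t)/X^{k}_t)^+$. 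Since by hypothesis $V^A_{t+1}$ does not depend on its $\mu$-argument, the value $\mu\cdot X^{k}_t$ is irrelevant, so the right-hand side of the Bellman equation depends on the current state only through the constraint $\rho(t)\ge q/T$ and through $k$ (which fixes the distribution of $(X^{k}_t,S(t))$). Hence $V^A_t(q,\mu,k)$ is independent of $\mu$, and I may write it as $V^A_t(q,k)$.

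Convexity in $q$ then follows exactly as in the MIF proof. Define the helper $W^A_t(\rho,k) = E\big[wT(\rho/X^{k}_t-1)^+ + (1-\rho/X^{k}_t)^+ + \gamma\,V^A_{t+1}(T(\rho/X^{k}_t-1)^+,S(t))\,\big|\,S(t-1)=k\big]$, so that $V^A_t(q,k)=\min_{\rho\ge q/T}W^A_t(\rho,k)$. For each fixed realization $a$ of $X^{k}_t$ the map $h_a(\rho)=T(\rho/a-1)^+$ is convex, and since $V^A_{t+1}(\cdot,S(t))$ is convex and non-decreasing by the inductive hypothesis, $V^A_{t+1}(h_a(\rho),S(t))$ is convex in $\rho$; taking expectation over $(X^{k}_t,S(t))$ preserves convexity, and the remaining terms $E[q(t)\mid\rho(t)=\rho]$ and $E[U(t)\mid\rho(t)=\rho]$ are convex in $\rho$ by Lemma~\ref{lemma:model1:convexity_of_bound} and Eq.~\eqref{eq:model1:proof_perfbound:single_round_givenrt}. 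Thus $W^A_t(\cdot,k)$ is convex, and the ``$h(\cdot)$-composition'' argument from the Sub Lemmas in Appendix~\ref{app:model1:combined_mdp_lemma} shows $V^A_t(\cdot,k)$ is convex and non-decreasing, closing the induction; the base case $V^A_{n+1}\equiv 0$ is trivially convex, non-decreasing and $\mu$-independent.

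The main obstacle is bookkeeping rather than a new idea: I must carry the inductive hypothesis over the entire family $\{V^A_{t+1}(\cdot,\cdot,k)\}_{k\in\textbf{S}}$ so that, after taking the expectation over the random next state $S(t)=Z^{k}_t$, the convexity and $\mu$-independence conclusions still hold for every $k$. The subtle point is that $S(t)$ may be correlated with the capacity factor $X^{k}_t$, but this causes no trouble because both enter only through a fixed conditional joint distribution given $S(t-1)=k$ and never through $\mu$, $q$, $Q$, or $s$ — this is precisely where assumption~(1) (link-state transitions are independent of the controller's decisions) is used. A secondary point of care, handled as in the MIF case, is the interchange of $\min$ and limit when passing from $V^A_t$ to $V^A$.
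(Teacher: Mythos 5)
Your proposal is correct and follows essentially the same route as the paper: reduce to the finite-horizon value functions $V^A_t$, run a backward induction carrying the hypothesis (convexity in $q$, monotonicity, and $\mu$-independence) jointly over all $k \in \textbf{S}$, reuse the MIF-model helper-function and $h_a$-composition arguments, and pass to the limit $n \to \infty$. The extra care you take about the joint law of $(X^{k}_t, S(t))$ and where assumption (1) enters is detail the paper leaves implicit, not a departure in method.
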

\begin{proof}
The proof for this lemma is analogous to proof of Lemma~\ref{lemma:model1:combined_mdp_lemma}. Again, we can define the optimal value function in time step $t$ $V^A_t(\cdot)$ for the finite time steps version of the MDP. The Bellman Equation for $V^A_t(\cdot)$ is as follows,

\begin{align}
    V^A_t(q,\mu, k) = \min_{\rho(t) \geq \frac{q}{T}} E\Bigg[\Big(&w \cdot q(t) +U(t) + \gamma \cdot \sum_{S(t) \in \textbf{S}} P^A(S(t)|S(t-1)) \cdot V^A_{t+1}(q(t),\mu(t), S(t))\Big) \Bigg| \nonumber\\
    & q(t-1) = q, \mu(t-1)=\mu, S(t-1)=k\Bigg],  & \forall t \leq n, \forall k \in \textbf(S)\nonumber\\
   V^A_{t}(q,\mu, k) &= 0, & t=n+1, \forall k \in \textbf{S}.
\end{align}

Using the proof of Lemma~\ref{lemma:model1:mdp_convex}, we can show that the $V^A_t(q, \mu, k)$ is a convex function of $q$ and does not depend on $\mu$. The key change in the induction step is that if $V^A_{t+1}(q, \mu, k)$ is convex in $q, \forall k \in \textbf{S}$, then, $V^A_{t}(q, \mu, k)$ is convex in $q, \forall k \in \textbf{S}$.

Consequently,  $V^A(q,\mu, k)$ is a convex function of $q$ and does not depend on $\mu$.
\end{proof}

 To close the proof of the corollary, we exploit the fact that $q(t),U(t)$ and $S(t)$ only depend on the value of $\rho(t) and S(t-1)$ and define a helper function $W^A(\cdot)$ as follows,
\begin{align}
    W^A(\rho, k) &= E\left[ w \cdot q(1) + U(1) + \gamma \cdot V^A(q(1), S(1)) \big| \rho(1) = \rho, S(0)=k\right] \nonumber\\
    V^A(q, k) &= \min_{\rho \geq \frac{q}{T}} (W^A(\rho, k)).
    \label{eq:model3:mdp_helper_def}
\end{align}
In other words, $W^A(\rho, k)$ is the optimal value function restricting the first action $\rho(1)$ to $\rho$ and link state $S(0)$ to $k$.

\begin{lemma}
    $\forall k \in \textbf{S}, W^A(\rho, k)$ is a convex function in $\rho$.
\end{lemma}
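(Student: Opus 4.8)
The plan is to copy the proof of Sub lemma~\ref{sublemma:model1_helper} almost verbatim, replacing the single-step value function $V_{t+1}(\cdot)$ by the state-indexed value function $V^A(\cdot, k')$ and threading the link state through the argument. Before that I would record the two structural facts about $V^A$ that the composition argument needs. By Lemma~\ref{lemma:model3:combined_mdp_lemma}, $V^A(q, \mu, k)$ is convex in $q$ and independent of $\mu$, so I write it as $V^A(q, k)$. I also need $V^A(\cdot, k)$ to be non-decreasing in $q$; this is immediate from its definition $V^A(q, k) = \min_{\rho \geq q/T} W^A(\rho, k)$, since increasing $q$ only shrinks the feasible set $\{\rho : \rho \geq q/T\}$ and hence cannot decrease the minimum. (In the finite-horizon version used in Appendix~\ref{app:model1:combined_mdp_lemma} one simply carries ``convex and non-decreasing in $q$ for every $k \in \textbf{S}$'' as the induction hypothesis and lets $n \to \infty$.)

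Next I would expand $W^A(\rho, k)$ by conditioning on $S(0) = k$. By Eq.~\eqref{eq:model3:proof_perfbound:generic_markov_system}, $q(1) = T(\rho/X_1^k - 1)^+$ and $U(1) = (1 - \rho/X_1^k)^+$, while the next state $S(1)$ has distribution $H^k$; note that the pair $(X_1^k, S(1))$ has a fixed joint law given $S(0) = k$ because $\mu(1)$ is a deterministic function of $S(1)$. Reusing the helper $h_a(x) = T(x/a - 1)^+$ from Sub lemma~\ref{sublemma:model1_helper}, this gives
\begin{align}
    W^A(\rho, k) = E\left[ w \cdot h_{X_1^k}(\rho) + \left(1 - \frac{\rho}{X_1^k}\right)^+ + \gamma \cdot V^A\left(h_{X_1^k}(\rho), S(1)\right) \;\middle|\; S(0) = k \right].
\end{align}

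It then suffices to show the integrand is convex in $\rho$ for each realization of $(X_1^k, S(1))$, since expectation preserves convexity. For a fixed value $a$ of $X_1^k$, the map $h_a$ is convex, so $w \cdot h_a(\rho)$ is convex, and $(1 - \rho/a)^+$ is convex; under expectation these two terms reproduce $w \cdot E[q(1)\mid\rho(1)=\rho, S(0)=k] + E[U(1)\mid\rho(1)=\rho, S(0)=k]$, whose convexity is also the content of Lemma~\ref{lemma:model1:convexity_of_bound} applied with PDF $f^k$ (cf. Eq.~\eqref{eq:model1:proof_perfbound:single_round_givenrt}). For the last term, $h_a$ is convex and non-decreasing and $V^A(\cdot, k')$ is convex and non-decreasing, so the composition $\rho \mapsto V^A(h_a(\rho), k')$ is convex. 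A non-negative combination of convex functions is convex, so $W^A(\cdot, k)$ is convex; since $k$ was arbitrary the lemma follows.

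I expect the only point needing real care — the main obstacle — to be the non-decreasing property of $V^A$: the composition step fails without it, and Lemma~\ref{lemma:model3:combined_mdp_lemma} as stated asserts only convexity, so I would either strengthen its finite-horizon induction to also propagate monotonicity (trivial, by the shrinking-feasible-set argument above) or observe that monotonicity of the infinite-horizon $V^A$ follows directly from its $\min_{\rho \geq q/T}$ form. A secondary, purely bookkeeping subtlety is that $X_1^k$ enters both the instantaneous cost ($q(1)$, $U(1)$) and the transition to $S(1)$, so they are correlated; this is harmless because we only invoke convexity of the integrand pointwise in the realization and then use that expectation preserves convexity, exactly as in the MIF proof where the same $X_t$ appears in $q(t)$, $U(t)$, and $V_{t+1}(q(t))$.
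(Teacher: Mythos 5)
Your proposal is correct and is essentially the paper's intended argument: the paper proves this lemma by declaring it analogous to Sub lemma~\ref{sublemma:model1_helper}, i.e.\ expand $W^A(\cdot,k)$ with the helper $h_a(x)=T(x/a-1)^+$, use convexity of the per-step cost terms (Lemma~\ref{lemma:model1:convexity_of_bound} with PDF $f^k$), and compose the convex, non-decreasing value function with $h_a$ before taking expectations. Your explicit handling of the monotonicity of $V^A(\cdot,k)$ and of the joint law of $(X_1^k,S(1))$ just makes visible the same ingredients the MIF appendix supplies (its non-decreasing sub lemma and pointwise-convexity-then-expectation step), so there is no substantive difference.
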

\begin{proof}
The proof is analogous to proof of Sub lemma~\ref{sublemma:model1_helper} in Appendix~\ref{app:model1:combined_mdp_lemma}.
\end{proof}

Given a $k$, let the minimum value of $W^A(x, k)$ occur at $\rho^{k*}$. Then, the convexity of $W^A(\cdot)$ combined with Eq.~\eqref{eq:model3:mdp_helper_def} establishes that the optimal control law is of the form $\rho(t)|(S(t-1)=k) = max(\rho^{k*}, \frac{q(t-1)}{T})$, where $\rho^{k*}$ depends on $w$ and $\gamma$. Defining $C^A(w, \gamma, k) = \rho^{k*}$, we get the optimal control law from the corollary.

 \end{proof}
 
\section{Stability in Continuous-time domain}
\label{app:disc:stability}

 Let $\mu_{I}(t), s_{I}(t), q(t)_{I}(t), Q_{I}(t)$ be the instantaneous link capacity, sending rate, queuing delay and queue size respectively. Assuming $\mu_{I}(t) = \mu$ for $t > t_0$, then for $t > t_0 + T$, the optimal control law in all our models will be of the form,
\begin{align}
    s_{I}(t) = \left(C  - \frac{Q_{I}(t-T)}{T}\right)^+,
    \label{eq:disc:stability_stdef}
\end{align}
where $C$ is a positive constant depending on the $\mu$ and the model.

\begin{proposition}
    For a single bottleneck time-varying link, the control law from Eq.~\eqref{eq:disc:stability_stdef} is globally asymptotically stable, $\forall C \in \mathbb{R}^+$.
    \label{thm:disc:stability}
\end{proposition}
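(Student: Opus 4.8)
The plan is to analyze the closed-loop fluid dynamics as a scalar delay-differential equation (DDE) in the queue length. For $t>t_0+T$ the link rate is the constant $\mu$ and the sender applies $s_I(t)=\big(C-\tfrac{Q_I(t-T)}{T}\big)^+$, so the queue obeys $\dot Q_I(t)=\big(C-\tfrac{Q_I(t-T)}{T}\big)^+-\mu$ whenever $Q_I(t)>0$, and $\dot Q_I(t)=\big(\big(C-\tfrac{Q_I(t-T)}{T}\big)^+-\mu\big)^+$ when $Q_I(t)=0$ (the reflecting boundary keeping the fluid queue nonnegative). The unique equilibrium is $Q^\star=T(C-\mu)$, $s^\star=\mu$, $q^\star=T(C-\mu)/\mu$ when $C>\mu$, and $Q^\star=0$, $s^\star=C$, $q^\star=0$ when $C\le\mu$; in both cases $s^\star,q^\star$ depend only on the parameters $C$ and $\mu$. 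Since $s_I(t)$ and $q_I(t)=Q_I(t)/\mu$ are continuous functions of $Q_I(t-T)$ and $Q_I(t)$, it suffices to show that for every bounded initial history on $[t_0,t_0+T]$ the solution satisfies $Q_I(t)\to Q^\star$, together with Lyapunov stability of $Q^\star$.

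First I would dispose of the case $C\le\mu$: here $s_I(t)\le C\le\mu$ always, so $\dot Q_I(t)\le C-\mu\le0$ whenever $Q_I(t)>0$; thus $Q_I$ is nonincreasing, reaches $0$ in finite time when $C<\mu$, and when $C=\mu$ converges to a limit $L$ which a translation/invariance argument forces to be $0$ (if $L>0$ then $\dot Q_I\to (C-L/T)^+-C<0$, a contradiction). Next, for the general case I would establish that $Q_I$ stays bounded. This follows from a first-exceedance argument: with $\bar M:=\max\{\sup_{[t_0,t_0+T]}Q_I,\ 2T(C-\mu)^+\}$, if $Q_I$ ever reaches $\bar M$ at a first time $t^\star$ then $Q_I(s)\le\bar M$ for all $s\le t^\star$ and, since $\dot Q_I\le C-\mu$ everywhere, $Q_I(t^\star)=Q_I(t^\star-T)+\int_{t^\star-T}^{t^\star}\dot Q_I\le Q_I(t^\star-T)+(C-\mu)T$; combined with $\dot Q_I(t^\star)\ge0$ (forcing $Q_I(t^\star-T)\le T(C-\mu)$) this gives $Q_I(t^\star)\le 2T(C-\mu)$, with the inequality strict because $\dot Q_I$ cannot equal $C-\mu$ over a whole interval (that would require $Q_I\equiv0$ on an interval where $\dot Q_I=C-\mu>0$). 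Hence $Q_I$ lives in a compact absorbing set; a symmetric peak/trough argument confines the $\omega$-limit to a band around $Q^\star$ contained in $[\max\{0,Q^\star-\mu T\},\,2Q^\star]$.

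The core is convergence to $Q^\star$ in the case $C>\mu$. Since $0<Q^\star=T(C-\mu)<TC$ is interior, a neighborhood of $Q^\star$ lies in the region where both the control clip and the queue floor are inactive, where $e(t):=Q_I(t)-Q^\star$ obeys the linear DDE $\dot e(t)=-\tfrac1T e(t-T)$ — the classical equation $\dot x=-ax(t-\tau)$ with $a\tau=1<\pi/2$, hence asymptotically stable; this already yields local Lyapunov stability of $Q^\star$. For global convergence I would fold the control clip into the loop nonlinearity: writing $u=\psi(y)$, $y=-\tfrac1T e(t-T)$, with $\psi(y)=(\mu+y)^+-\mu=\max\{y,-\mu\}$, the linear part is $G(s)=-e^{-sT}/(sT)$ and $\psi$ is a sector-$[0,1]$ nonlinearity (tightened to $[\kappa,1]$, $\kappa>0$, once $y$ is confined to the compact absorbing set). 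The circle criterion then applies because
\begin{equation}
\mathrm{Re}\big[\,1+G(j\omega)\,\big]=1+\frac{\sin(\omega T)}{\omega T}\ \ge\ 1-0.2172\ldots\ >\ 0\qquad\text{for all }\omega,
\end{equation}
which is precisely where the matched gain $1/T$ and delay $T$ of the optimal law pay off; this yields a dissipation inequality ($e$ square-integrable, $e(t)\to0$). The remaining one-sided effect, the queue floor $Q_I\ge0$, is a nonexpansive projection onto the convex set $\{Q_I\ge0\}$ which contains $Q^\star$, so it can only decrease the distance to equilibrium relative to the already-convergent constraint-free flow; combined with a LaSalle-type invariance argument on the compact absorbing set this upgrades square-integrability to $Q_I(t)\to Q^\star$ and pins the limit at $Q^\star$.

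The main obstacle I anticipate is precisely this interaction between the borderline delay ($a\tau=1$ is outside the small-delay regime where a quadratic Lyapunov–Krasovskii functional works trivially) and the two saturations. The circle-criterion route above sidesteps the delicate Lyapunov construction for the clip, but rigorously handling the queue-floor projection and then converting an $L^2$/dissipation estimate into genuine pointwise convergence (ruling out persistent small oscillations) is the step that needs care; the cleanest backup, should the projection argument prove awkward, is a direct "cycle" estimate showing the oscillation amplitude of $Q_I$ about $Q^\star$ contracts by a fixed factor between successive sign changes of $e$, which would close the argument by itself.
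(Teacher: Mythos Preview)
Your reduction to a scalar delay-differential equation in the queue and the separate treatment of $C\le\mu$ are sound, and your attention to the reflecting boundary at $Q_I=0$ is in fact more careful than the paper, which explicitly ``ignores the boundary condition'' and proceeds. But the main line you take for $C>\mu$ is substantially heavier than needed, and the circle-criterion step has a loose end: your plant $G(s)=-e^{-sT}/(sT)$ has a pole at the origin, so the classical circle criterion for sector $[0,1]$ does not apply directly (the zero-gain member of the sector gives a non-asymptotically-stable loop). Tightening to $[\kappa,1]$ on the absorbing set is the right instinct, but then you must redo the frequency-domain test for the shifted sector and justify the circle criterion in the infinite-dimensional delay setting---all feasible, but not free. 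The projection/$L^2\Rightarrow$ pointwise step you flag is a second place where work remains.

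The paper's proof sidesteps the whole frequency-domain apparatus. After the same centering (it works with $q_I=Q_I/\mu$ and sets $x(t)=q_I(t)-T(C/\mu-1)$), the closed loop reads
\[
\dot{x}(t)=-g\bigl(x(t-T)\bigr),\qquad g(u)=\min\!\Bigl(\frac{u}{T},\,1\Bigr),
\]
and then it simply invokes Yorke's 1970 result (Corollary~3.1) on global asymptotic stability of scalar DDEs of exactly this form: it suffices that $g$ be continuous, $u\,g(u)>0$ for $u\neq0$, and $u\,g(u)<\alpha u^{2}$ for some $\alpha$ with $\alpha T<\tfrac{3}{2}$. Here any $\alpha\in(1/T,\,3/(2T))$ works, and the proof is complete in a few lines. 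Your ``cycle'' backup---contracting oscillation amplitude across successive sign changes---is essentially the mechanism underlying Yorke's theorem, so citing that result is the clean shortcut you are reaching for.
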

 \begin{proof}
To prove this Proposition, we leverage the proof of stability for the ABC control law (see Appendix C in~\cite{abc}). Similar to ABC, ignoring the boundary condition (${q}_{I}(t)$ must be $\geq 0$), we can describe the rate of change of queuing delay as follows,
\begin{align}
    \dot{q}_{I}(t) &=  \frac{s_{I}(t) - \mu}{\mu}, \nonumber\\
    \dot{q}_{I}(t) &= \left(\frac{C}{\mu} - \frac{q_{I}(t-T)}{T}\right)^+ - 1,\nonumber\\
    \dot{q}_{I}(t) &= max\left(\left(\left(\frac{C}{\mu} - 1\right) - \frac{q_{I}(t-T)}{T}\right), - 1\right).
\end{align}

Lets define $x(t) = q_{I}(t) - T \cdot (\frac{C}{\mu} - 1)$, then,
\begin{align}
    \dot{x}(t) = max\left(-\frac{x(t-T)}{T}, - 1\right),\nonumber\\
    \dot{x}(t) = -min\left(\frac{x(t-T)}{T}, 1\right),\nonumber\\
    \dot{x}(t) = -g(x(t-T)),
\end{align}
where $g(u) = min(\frac{u}{T}, 1)$.

In ~\cite{yorke1970asymptotic} (Corollary 3.1), Yorke established that delay-differential equations of this type are globally asymptotically stable (i.e., ${x}(t) \to 0$ as $t \to \infty$ irrespective of the initial condition), if the following conditions are met:
\begin{enumerate}
    \item \textbf{H$_{1}$:} g is continuous.
    \item \textbf{H$_{2}$:}  There exists some $\alpha$, s.t. $\alpha \cdot u^2 > ug(u) > 0$ for all $u \neq 0$.
    \item \textbf {H$_{3}$:} $\alpha \cdot T < \frac{3}{2}$.
\end{enumerate}

The function $g(\cdot)$ trivially satisfies \textbf{H$_{1}$}. $\alpha \in (\frac{1}{T}, \frac{3}{2 \cdot T})$ satisfies both \textbf{H$_2$} and \textbf{H$_3$}.

\end{proof}
\end{document}